\pdfoutput=1
\documentclass[letterpaper,11pt]{article}
\RequirePackage{silence}
\RequirePackage{float}
\WarningFilter{remreset}{The remreset package}
\usepackage[ascii]{inputenc}
\usepackage[bookmarksnumbered,hypertexnames=false,colorlinks,colorlinks=true,linkcolor=blue,urlcolor=black,citecolor=blue,anchorcolor=green]{hyperref}
\usepackage{bbm,braket,microtype,mathrsfs,amsmath,amssymb,color,amsthm,mathtools,fullpage,graphicx,enumitem,ae,aecompl,bm,thm-restate}
\usepackage[T1]{fontenc}
\usepackage[USenglish]{babel}
\usepackage[capitalize]{cleveref}
\usepackage{silence}\WarningFilter{revtex4-1}{Repair the float}
\usepackage{newpxtext}
\usepackage[euler-digits,euler-hat-accent]{eulervm}
\allowdisplaybreaks
\crefname{algo}{Algorithm}{Algorithms}\crefname{fig}{Figure}{Figures}
\newtheorem{thm}{Theorem}[section]\crefname{thm}{Theorem}{Theorems}
\newtheorem{lem}[thm]{Lemma}\crefname{lem}{Lemma}{Lemmas}
\newtheorem{prp}[thm]{Proposition}\crefname{prp}{Proposition}{Propositions}
\newtheorem{rem}[thm]{Remark}\crefname{rem}{Remark}{Remarks}
\newtheorem{cor}[thm]{Corollary}\crefname{cor}{Corollary}{Corollaries}
\newtheorem{dfn}[thm]{Definition}\crefname{dfn}{Definition}{Definitions}
\newtheorem{exa}[thm]{Example}\crefname{exa}{Example}{Examples}
\newtheorem*{thm*}{Theorem}\crefname{thm}{Theorem}{Theorems}
\newtheorem*{lem*}{Lemma}\crefname{lem}{Lemma}{Lemmas}
\numberwithin{equation}{section}
\newcommand{\CC}{\mathbb C}
\newcommand{\RR}{\mathbb R}
\newcommand{\ZZ}{\mathbb Z}
\newcommand{\NN}{\mathbb N}
\newcommand{\EE}{\mathbb E}
\newcommand{\FF}{\mathbb F}
\newcommand{\ot}{\otimes}
\newcommand{\op}{\oplus}
\newcommand{\eps}{\varepsilon}
\newcommand{\id}{\mathbbm 1}

\renewcommand{\vec}{\mathbf}

\DeclareMathOperator{\Sym}{Sym}
\DeclareMathOperator{\Alt}{Alt}
\DeclareMathOperator{\rank}{rank}
\DeclareMathOperator{\ran}{ran}

\DeclareMathOperator{\vecmap}{vec}
\DeclareMathOperator{\tr}{tr}
\DeclareMathOperator{\sn}{sn}
\DeclareMathOperator{\CSS}{CSS}
\DeclareMathOperator{\Span}{span}
\DeclareMathOperator{\Cliff}{Cliff}
\DeclareMathOperator{\Sp}{Sp}
\DeclareMathOperator{\Stab}{Stab}
\DeclareMathOperator{\Iso}{Iso}
\urlstyle{same}
\floatstyle{boxed}\newfloat{algo}{htbp}{alg}\floatname{algo}{Algorithm~}

\begin{document}

\title{Schur-Weyl duality for the Clifford group with applications:\\
Property testing, a robust Hudson theorem, and de Finetti representations}
\author{ \hspace{0.35cm}David Gross%
\thanks{Institute for Theoretical Physics, University of Cologne, and
Centre for Engineered Quantum Systems, School of Physics, The University of Sydney,
\href{mailto:david.gross@thp.uni-koeln.de}{\texttt{david.gross@thp.uni-koeln.de}}.
\;$^\dagger$Stanford Institute for Theoretical Physics, Stanford University, and Kavli Institute for Theoretical Physics, \href{mailto:nezami@stanford.edu}{\texttt{nezami@stanford.edu}}.
\;$^\ddagger$Korteweg-de Vries Institute for Mathematics, Institute of Theoretical Physics, and Institute for Logic, Language and Computation, QuSoft, University of Amsterdam, \href{mailto:m.walter@uva.nl}{\texttt{m.walter@uva.nl}}.
}~
\qquad\,\, Sepehr Nezami$^\dagger$\qquad\, Michael Walter$^{\ddagger,\dagger}$}
\date{}
\maketitle
\thispagestyle{empty}
\begin{abstract}
Schur-Weyl duality is a ubiquitous tool in quantum information.
At its heart is the statement that the space of operators that commute with the $t$-fold tensor powers $U^{\otimes t}$ of all unitaries~$U\in U(d)$ is spanned by the permutations of the $t$ tensor factors.
In this work, we describe a similar duality theory for tensor powers of \emph{Clifford unitaries}.
The Clifford group is a central object in many subfields of quantum information, most prominently in the theory of fault-tolerance.
The duality theory has a simple and clean description in terms of finite geometries.
We demonstrate its effectiveness in several applications:
\begin{itemize}
\item We resolve an open problem in \emph{quantum property testing} by showing that ``stabilizerness'' is efficiently testable:
There is a protocol that, given access to six copies of an unknown state, can determine whether it is a stabilizer state, or whether it is far away from the set of stabilizer states.
We give a related
membership test for the Clifford group.
\item We find that tensor powers of stabilizer states have an increased symmetry group.
Conversely, we provide corresponding \emph{de Finetti theorems}, showing that the reductions of arbitrary states with this symmetry are well-approximated by mixtures of stabilizer tensor powers (in some cases, exponentially well).
\item
We show that the distance of a pure state to the set of stabilizers can be lower-bounded in terms of the sum-negativity of its Wigner function.
This gives a new quantitative meaning to the sum-negativity (and the related \emph{mana}) -- a measure relevant to fault-tolerant quantum computation.
The result constitutes a \emph{robust} generalization of the \emph{discrete Hudson theorem}.
\item We show that complex projective designs of arbitrary order can be obtained from a finite number (independent of the number of qudits) of Clifford orbits.
To prove this result, we give explicit formulas for arbitrary moments of random stabilizer states.
\end{itemize}
\end{abstract}
\clearpage
\tableofcontents
\clearpage
\setcounter{page}{1}

\section{Introduction}\label{sec:intro}

\subsection{Background}

\paragraph{Schur-Weyl duality.}
To motivate the symmetry this work is based on, we start by considering two types of problems that have frequently appeared in quantum information theory.
First, assume that we have access to $t$ copies $\rho^{\otimes t}$ of an unknown quantum state $\rho$ on $\CC^d$, and that we are interested in some property of $\rho$'s eigenvalues (for example its entropy).
Clearly, then, the problem has a $U^{\otimes t}$-symmetry in the sense that the inputs $\rho^{\otimes t}$ and
\begin{equation*}
  U^{\otimes t} (\rho^{\otimes t}) {U^\dagger}^{\otimes t}
\end{equation*}
represent equivalent properties.
It thus makes sense to design a procedure that shares the $U^{\otimes t}$-symmetry, and indeed the resulting procedure has been shown to be optimal for estimating the eigenvalues~\cite{keyl2001estimating,hayashi2002quantum,christandl2006spectra,christandl2007nonzero,odonnell2015quantum}.
Moreover, consider \emph{quantum state tomography}, the task of estimating the entire quantum state $\rho$.
Essentially optimal estimators can be constructed by first estimating the eigenvalues and then the eigenbasis~\cite{odonnell2016efficient,odonnell2017efficient,haah2017sample}, crucially using the structure of $U^{\ot t}$ in each step.
There are many further problems in quantum information where this symmetry can be exploited -- for example in quantum Shannon theory, where optimal rates mostly depend only on the eigenvalues of the quantum state~\cite{hayashi2002quantum,harrow2005applications}.

Second, studying the properties of a Haar-random state vector $\ket\psi$ has proven to be extremely fruitful~\cite{hayden2006aspects,hastings2008counterexample}.
Instead of working with the full distribution, it is often sufficient to exploit information about the statistical \emph{moments} of the random matrix $\ket\psi\bra\psi$.
The $t$-th moment is described by the expected value of the $t$-th tensor power of the random matrix:
\begin{equation}\label{eq:t-th moment Haar}
  M_t = \EE_{\psi \text{ Haar}} [ (\ket\psi\bra\psi)^{\otimes t} ].
\end{equation}
Again, $M_t$ is invariant under conjugation by $U^{\otimes t}$, $U\in U(\CC^d)$.

The importance of Schur-Weyl duality in quantum information stems from the fact that it allows one to characterize the set of $U^{\otimes t}$-invariant operators on $(\CC^d)^{\otimes t}$.
Indeed, it implies that any such operator can be expressed as the linear combination of matrices $r_\pi$, $\pi\in S_t$ that act by permuting the tensor factors:
\begin{equation}\label{eq:perm_factors}
  r_\pi \, (\ket{\psi_1} \otimes \dots \otimes \ket{\psi_t}) = \ket{\psi_{\pi_1}} \otimes  \dots \otimes \ket{\psi_{\pi_t}}.
\end{equation}

\paragraph{Clifford group and stabilizer states.}
Arguably, the subgroup of the full unitary group that is most important to quantum information is the Clifford group.
The Clifford group and the closely related concept of stabilizer states and stabilizer codes feature centrally in fault-tolerant quantum computing, quantum coding in general, randomized benchmarking, measurement-based quantum computing, and many other subfields of quantum information.

To introduce the Clifford group, we first recall the definition of the set of \emph{Pauli operators}.
For a \emph{qudit} ($d$-dimensional system), they are defined by their action on a some basis $\{\ket{q}\}_{q=0}^{d-1}$ via
\begin{align*}
  X \ket q = \ket{q+1}, \quad Z \ket q = e^{2\pi i q/d} \ket q.
\end{align*}
For $n$ qudits, the Pauli group is defined as the finite group generated by the Pauli operators on each qudit.
The \emph{Clifford group} now is the natural symmetry group of the Pauli group.
That is, a unitary $U$ is Clifford if, for any Pauli operator $P$, $UPU^\dagger$ is again in the Pauli group.
Ignoring overall phases, the Clifford group is a finite group, which is intimately connected to the \emph{metaplectic representation} of the discrete \emph{symplectic group} (see, e.g.,~\cite{gross2006hudson}).
Closely related to the Clifford group is the set of \emph{stabilizer states}.
These are the states that can be obtained by acting on a basis vector $\ket{0\dots0}$ by arbitrary Clifford unitaries.

As before, there are many natural problems that are invariant under $U^{\otimes t}$, for $U$ a Clifford unitary.
Two examples we will discuss are: (1) Given access to $\psi^{\otimes t}$, decide whether $\psi$ is a stabilizer state; (2) What are the $t$-th moments of a random stabilizer state $\psi$?

\paragraph{Randomized constructions.}
Another motivation arises from randomized constructions. Unitaries and states drawn from the Haar measure appear in many situations, including in quantum cryptography, coding, and data hiding~\cite{hayden2006aspects}.
While randomized constructions are often near-optimal and frequently out-perform all known deterministic constructions, they have the drawback that generic quantum states cannot be efficiently prepared.

This contrasts with random Clifford unitaries and random stabilizer states, both of which can be efficiently realized (they require at most $O(n^2)$ gates to implement in a quantum circuit)~\cite{aaronson2004improved}.
They have therefore repeatedly been suggested as ``drop-in replacements'' for their Haar-measure analogues.
Examples include randomized benchmarking~\cite{magesan2011scalable,helsen2017multiqubit}, low-rank recovery~\cite{kueng2016low}, and tensor networks in the context of holography~\cite{hayden2016holographic,nezami2016multipartite}.
All these applications require information about the moments (in the sense of \cref{eq:t-th moment Haar}) of random stabilizer states, which they all obtain from representation-theoretic data.
To date, this representation theory and the associated stabilizer moments are understood only up to order $t=4$~\cite{zhu2016clifford,helsen2016representations,nezami2016multipartite}.
This contrasts with the Haar-random case, where Schur-Weyl duality gives this information for arbitrary orders $t$.
Making analogous techniques available for the Clifford case was one important motivation for this work.
Higher moments will generally lead to tighter performance bounds in randomized constructions, and are strictly required for some applications, like the \emph{stabilizer testing problem} resolved here.

\subsection{Schur-Weyl duality for the Clifford group}\label{subsec:schur weyl intro}
We start with an explicit description of the commutant of tensor powers of Clifford unitaries.
While such a description has not yet appeared in the quantum information literature, we emphasize that some of the key results can already be deduced from work by Nebe, Rains, Sloane and colleagues on invariants of self-dual codes (see the excellent monograph~\cite{nebe2006self}).
Also, in representation theory, there is a separate stream of closely related work regarding the structure of the oscillator representation and attempts to develop a Howe duality theory over finite fields, which is still an open problem (see, e.g.,~\cite{howe1973invariant,gurevich2016small} and references therein).
We discovered the approach presented below independently, starting from our results in~\cite[App.~C]{nezami2016multipartite} for third tensor powers.
Our proofs differ fundamentally from the preceding works in that they rely on the phase space formalism of finite-dimensional quantum mechanics, which offers additional insight.

To construct the commutant, start with the permutations $r_\pi$ on $(\CC^d)^{\otimes t}$ of Eq.~(\ref{eq:perm_factors}).
We assume for now that $\CC^d$ is the Hilbert space of a single qudit with ``computational basis'' $\{\ket{x}\}_{x\in\ZZ_d}$ labeled by elements in $\ZZ_d = \ZZ/d \ZZ$ (this is anyway required for defining the Pauli and the Clifford group).
Basis elements $\ket{\vec x}=\ket{x_1}\otimes\dots\otimes \ket{x_t}$ of $(\CC^d)^{\otimes t}$ are then labeled by vectors $\vec x \in \ZZ_d^t$.
In this language:
\begin{equation}\label{eq:pi_T}
  r_\pi
  =
  \sum_{\vec y\in\ZZ_d^t} \ket{\pi(\vec y)}\bra{\vec y}
  =
  \sum_{(\vec x, \vec y)\in T_\pi} \ket{\vec x}\bra{\vec y},
\end{equation}
where $T_\pi=\{ (\pi(\vec y), \vec y) : y\in\ZZ_d^t\}$ and $\pi$ permutes the components of $\vec y$.
Because the Clifford group is a subgroup of the unitaries, the commutant is in general strictly larger.
We thus have to add further operators to the $r_\pi$'s in order to find a complete set.

The central message of this section is that, surprisingly, a minor modification of (\ref{eq:pi_T}) suffices!
Indeed, for any subspace $T$ of $\ZZ_d^t \op \ZZ_d^t$ define
\begin{align*}
  r(T) = \sum_{(\vec x,\vec y)\in T} \ket{\vec x}\bra{\vec y}.
\end{align*}
We also consider the $n$-fold tensor power $R(T)\coloneqq r(T)^{\ot n}$, which is an operator on
$
((\CC^d)^{\ot t})^{\ot n} \cong (\CC^d)^{\ot t n} \cong ((\CC^d)^{\ot n})^{\ot t}
$.

We now single out subspaces that satisfy certain geometric properties.
Reflecting a well-known difference between even and odd dimensions in the stabilizer formalism, we define $D=d$ if $d$ is odd, and $D=2d$ if $d$ is even.

\begin{restatable*}[$\Sigma_{t,t}$]{dfn}{restateDfnLagrangianStochastic}\label{dfn:lagrangian stochastic}
Consider the quadratic form $\mathfrak q\colon \ZZ_d^{2t} \to \ZZ_D$ defined by $\mathfrak q(\vec x,\vec y) \coloneqq \vec x\cdot\vec x - \vec y\cdot\vec y$.%
\footnote{Note that for $x\in\ZZ_d$, $x^2$ is well-defined modulo~$D$.}
We denote by $\Sigma_{t,t}(d)$ the set of subspaces $T\subseteq\ZZ_d^{2t}$ satisfying the following properties:
\begin{enumerate}
\item
$T$ is \emph{totally $\mathfrak q$-isotropic}: i.e.,
$\vec x\cdot\vec x = \vec y\cdot\vec y \pmod D$ for all $(\vec x,\vec y)\in T$.
\item $T$ has dimension~$t$ (the maximal possible dimension).
\item $T$ is \emph{stochastic}: $\vec 1_{2t} = (1,\dots,1) \in T$.
\end{enumerate}
We will summarize the first two conditions by saying that $T$ is \emph{Lagrangian}.
Thus, we will call $\Sigma_{t,t}(d)$ the set of \emph{stochastic Lagrangian subspaces}.
\end{restatable*}

Our first main result is the following theorem, which states that the operators $R(T)$ obtained from these subspaces are a basis of the commutant:

\begin{restatable*}[Commutant of Clifford tensor powers]{thm}{restateThmCommutant}\label{thm:commutant}
  Let $d$ be a prime and $n\geq t-1$.
  Then the operators $R(T)=r(T)^{\ot n}$ for $T \in \Sigma_{t,t}(d)$ are $\prod_{k=0}^{t-2} (d^k+1)$ many linearly independent operators that span the commutant of the $t$-th tensor power action of the Clifford group for $n$ qudits.
\end{restatable*}
\begin{proof}[Proof sketch]
We use the phase space formalism of finite-dimensional quantum mechanics developed in~\cite{wootters1987wigner,appleby2005sic,gross2006hudson,gross2008quantum,beaudrap2013linearized}.
In particular, Clifford unitaries have a simple description on phase space: they act by affine symplectic transformations.

We use this structure to give a concise proof that the operators $R(T)$ commute with $U^{\ot t}$ for any Clifford unitary.
The linear independence is not hard, so it remains to argue that the number of subspaces equals the dimension of the commutant.
We show this by a careful counting argument.
We first compute the number of stochastic Lagrangian subspaces.
Employing the fundamental Witt's theorem, we find recursive relations for the dimension of commutant of the Clifford group.
We solve this recursion using Gaussian binomial identities (the result is a generalization of
\cite[(8)--(10)]{zhu2015multiqubit}) and find that the cardinalities match, concluding the proof.
\end{proof}

There is a rich structure associated with the objects appearing in this theorem:
It is easy to see that the spaces $T_\pi=\{(\pi(\vec y),\vec y)\}$ that give rise to the commutant of $U(d)$ appear as special cases above.
For general $d$ and $t$, not all $R(T)$'s are invertible.
In particular, for some $T$'s, $R(T)$ is proportional to the projection onto a stabilizer code.
This way, one can e.g., recover the code that has been used to describe the irreps contained in the 4th tensor power of the Clifford group in~\cite{zhu2016clifford}. 
The set of invertible $R(T)$'s are associated with spaces $T$ of the form $(A\vec y, \vec y)$, for $A$ that are elements of a certain ``stochastic orthogonal'' group $O_t(d)$.
This group is of interest to the formulation of modular Howe duality~\cite{gurevich2016small}, and underlies several of our applications below.

Remarkably, the size of the commutant \emph{stabilizes} as soon as $n\geq t-1$.
That is, just like the symmetric group in Schur-Weyl duality, the set that parametrizes the commutant of the Clifford tensor powers is independent of the number $n$ of qu$d$its.
The fact that the operators~$R(T)=r(T)^{\ot n}$ are themselves tensor powers facilitates possible physical implementations.
This, once more, generalizes a property of the symmetric group in Schur-Weyl duality.

To find novel applications of this theory, it is helpful to identify a set of non-trivial $T$'s that afford an intuitive interpretation.
Several of our multi-qubit results presented below are based on spaces with elements~$(\bar\pi \vec y, \vec y)$, where $\bar\pi$ is what we refer to as an \emph{anti-permutation}.
An anti-permutation is simply the binary complement of a permutation matrix.
Formally, $\bar\pi = \vec 1_t \vec 1_t^T - \pi$, where $\vec 1_t=(1,\dots, 1)$ contains $t$ ones, and $\pi\in S_t$.
Its operator representation is particularly straightforward.
The $n$-qubit anti-identity, e.g., acts by
\begin{equation}\label{eq:anti identity}
  R(\bar\id) = 2^{-n} \left( I^{\otimes t} + X^{\otimes t} + Y^{\otimes t} + Z^{\otimes t} \right)^{\otimes n},
\end{equation}
which greatly facilitates the analysis (cf.\ \cref{eq:anti identity six,dfn:anti-permutation}).

\subsection{Quantum property testing: stabilizer testing}\label{subsec:stab test intro}
The theory of \emph{quantum property testing} asks which properties of a ``black box'' many-body quantum system can be learned efficiently---in particular without having to resort to costly full tomography~\cite{buhrman2003quantum,buhrman2008quantum,montanaro2013survey}.
A prototypical example of a testable property is \emph{purity}.
Indeed, given access to two copies $\rho\otimes\rho$ of an unknown quantum state $\rho$, the so-called \emph{swap test} provides for a simple protocol that accepts with certainty if $\rho=\ket\psi\bra\psi$ is pure, and rejects with probability~$\Theta(1/\eps^2)$ if $\rho$ is $\eps$-far away from the set of pure states in trace distance.
The test is \emph{perfectly complete} in the sense that it has a type-I error rate of zero (pure states are accepted with certainty); it requires a number of copies (two) that is independent of the dimension. It is also \emph{transversal} in the sense that if $\rho$ acts on $n$ qubits, all operations are required to be coherent only across the two copies, and factorize w.r.t.\ the $n$ qubits.

An open problem in this theory was whether \emph{stabilizerness} and \emph{Cliffordness} are testable properties of, respectively, states and unitaries~\cite{montanaro2013survey}.
Both properties are clearly Clifford-invariant---so by the arguments presented in the introduction, it makes sense to search for tests in the commutant of the Clifford group.
It is known that 2nd and 3rd moments of random stabilizer states are identical to the moments of Haar-random states~\cite{zhu2015multiqubit,kueng2015qubit,webb2016clifford}.
This implies that three copies of a state are not sufficient to test for stabilizerness, and the results of~\cite{zhu2016clifford} can be used to show that four copies are also insufficient for a dimension-independent theory.

\paragraph{Prior work.}
Prior to our results, the best known algorithms for stabilizer testing required a number of copies that scaled linearly with $n$, the number of qubits.
Indeed, these algorithms proceeded by attempting to \emph{identify} the stabilizer state, which necessarily requires $\Omega(n)$ copies by the Holevo bound~\cite{aaronson2008identifying,montanaro2017learning,zhao2016fast,krovi2008efficient}.
However, the existence of tests that require only a constant number of copies has been an important open question~\cite{montanaro2013survey}.
We note that the stabilizer testing problem asks whether a given state is \emph{any} stabilizer state -- which is distinct from the problem of verifying whether it equals some \emph{fixed} stabilizer state~\cite{hayashi2015verifiable}.

\paragraph{Our results.}
We show that for $n$ qudits $O(1)$ copies suffice to give an \emph{efficient}, \emph{perfectly complete}, \emph{dimension-independent}, and \emph{transversal} test.
For example, for qubits ($d=2$) our test requires only $6$ copies of the state to achieve a power independent of $n$ (\cref{algo:qubits}).
It requires coherent operations on only two qubits at a time, which means in particular that it can be implemented given a source that creates two copies of a fixed state at a time (\cref{fig:circuit qubits}).

First, we consider the problem for qubits.
Here our protocol affords an intuitive description using a new primitive which we call \emph{Bell difference sampling}.
Then we proceed to the general case and discuss the connection to the commutant of the Clifford group described in \cref{subsec:schur weyl intro}.

\subsubsection*{Qubits: Bell difference sampling}

We start with an intuitive motivation of the test.
Let $\ket\psi\bra\psi=2^{-n/2} \sum_{\vec a} c_{\vec a} W_{\vec a}$ be its expansion w.r.t.\ the Weyl operators $W_{\vec a}$ (which are just the Pauli operators labeled in the usual way by bitstrings $\vec a\in\ZZ_2^{2n}$, cf.\ \cref{sec:preliminaries}).
Now measure two copies of $\psi$ in the \emph{Bell basis} $\ket{W_{\vec x}}$ defined by applying the Weyl operators to the maximally entangled state, i.e., $\ket{W_{\vec x}} = (W_{\vec x} \ot I) \ket{\Phi^+}$ where $\ket{\Phi^+} = 2^{-n/2}\sum_{\vec q} \ket{\vec q,\vec q}$.
If $\psi$ is real in the computational basis then it is not hard to see that the measurement outcome is distributed according to the probability distribution $p_\psi(\vec a) = \lvert c_{\vec a}\rvert^2$.
This is known as Bell sampling~\cite{montanaro2017learning,zhao2016fast}.
Now stabilizer states are distinguished by the fact that they are eigenvectors of all Weyl operators $W_{\vec a}$ for which $\lvert c_{\vec a}\rvert^2\neq 0$ (these are its \emph{stabilizer group}).
This suggests using Bell sampling to obtain some $\vec a$, then measuring $W_{\vec a}$ twice on two fresh copies, and accepting $\psi$ as a stabilizer if the same eigenvalue is obtained twice.

\begin{algo}[t]
\raggedright
\textbf{Input:} Six copies of an unknown multi-qubit quantum state ($\psi^{\ot 6}$). \\[.2cm]
\begin{enumerate}
\item Perform \emph{Bell difference sampling}: That is, Bell sample twice (on two independent copies of $\psi^{\ot 2}$), with outcomes $\vec x$, $\vec y$, and set $\vec a = \vec x - \vec y$. 
\item Measure the Weyl operator $W_{\vec a}$ twice (on two independent copies of $\psi$). Accept iff the outcomes agree.
\end{enumerate}
\caption{Algorithm for testing whether an unknown multi-qubit state is a stabilizer state.}\label{algo:qubits}
\end{algo}

While we show that this works for real state vectors, Bell sampling unfortunately does not extend to complex state vectors.
To overcome this challenge, we introduce a new primitive:


\begin{restatable*}[Bell difference sampling]{dfn}{restateDfnBellDifferenceSampling}\label{def:bell difference sampling}
We define \emph{Bell difference sampling} as performing Bell sampling twice and subtracting (adding) the results from each other (modulo two).
In other words, it is the projective measurement on four copies of a state, $\psi^{\ot4}\in((\CC^2)^{\ot n})^{\ot4}$, given by
\begin{align*}
  \Pi_{\vec a}
= \sum_{\vec x} \ket{W_{\vec x}}\bra{W_{\vec x}} \ot \ket{W_{\vec x+\vec a}}\bra{W_{\vec x+\vec a}}.
\end{align*}
\end{restatable*}

For stabilizer states (whether real or complex) it is easy to see that Bell difference sampling will always sample an element $\vec a$ corresponding to a Weyl operator $W_{\vec a}$ in its stabilizer group.
What is rather less obvious is that, even for arbitrary quantum states, Bell difference sampling still has a useful interpretation.
The following theorem shows that this is indeed the case: it amounts to sampling from the probability distribution $p_\psi$ twice and taking the difference.

\begin{restatable*}[Bell difference sampling]{thm}{restateThmBellDifferenceSamplingQubits}\label{thm:bell difference sampling qubits}
  Let $\psi$ be an arbitrary pure state of $n$ qubits.
  Then:
  \begin{align*}
    \tr\left[\Pi_{\vec a} \psi^{\ot 4}\right] = \sum_{\vec x} p_\psi(\vec x) p_\psi(\vec x+\vec a).
  \end{align*}
  If $\psi$ is a stabilizer state, say $\ket S\bra S$, then this is equal to $p_S(\vec a)$ from \cref{eq:characteristic distribution stabilizer}.
\end{restatable*}

Using Bell difference sampling as a primitive, we obtain the natural \cref{algo:qubits}.

\begin{restatable*}[Stabilizer testing for qubits]{thm}{restateThmMainQubits}\label{thm:main qubits}
  Let $\psi$ be a pure state of $n$ qubits.
  If $\psi$ is a stabilizer state then \cref{algo:qubits} accepts with certainty, $p_\text{accept}=1$.
  On the other hand, if $\max_S \lvert\braket{S|\psi}\rvert^2\leq1-\eps^2$ then $p_\text{accept}\leq1-\eps^2/4$.
\end{restatable*}
\begin{proof}[Proof sketch]
We want to show that if the success probability, $p_{\text{accept}}$, is close to one then $\psi$ has high overlap with a stabilizer state.
The proof proceeds in two steps.
First, we analyze the success probability and show that if $p_{\text{accept}} \approx 1$ then $p_\psi(\vec a)$ is typically close to its maximum possible value $2^{-n}$.
Next, we use Markov's inequality to find a large set of $\vec a$ where $p_\psi(\vec a)>\frac12 2^{-n}$.
Using a version of uncertainty principle (see~\cref{fig:bloch plane}), we show that the corresponding Weyl operators $W_{\vec a}$ necessarily commute, and therefore form a stabilizer subgroup.
This finally means that our initial state must have a large overlap with a corresponding stabilizer state.
\end{proof}

\begin{figure}
\begin{center}
\includegraphics[height=4cm]{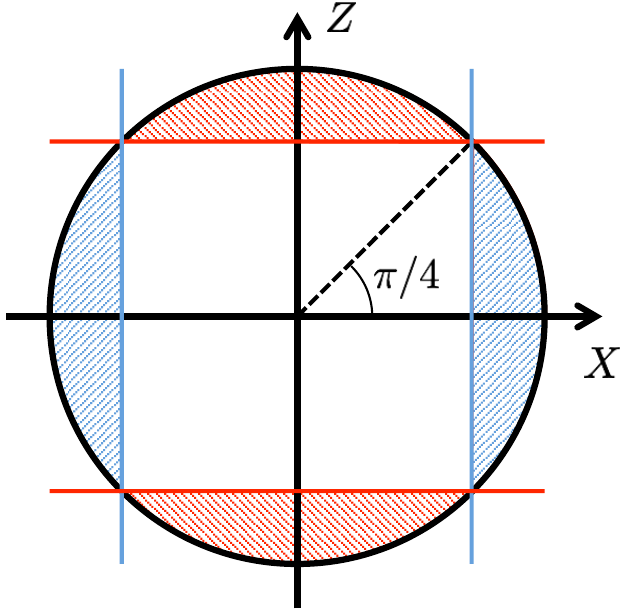}
\caption{The $X$-$Z$-plane of the Bloch sphere.
The area shaded in red indicates the projection of those states $\rho$ for which $\lvert\tr Z \rho\rvert>\sin\frac\pi4=\frac1{\sqrt 2}$.
Likewise, the blue area correspond to the states with $\lvert\tr X \rho\rvert>\frac1{\sqrt 2}$.
As the two areas do not intersect, these two conditions cannot be simultaneously satisfied.
This is a manifestation of the uncertainty principle.}
\label{fig:bloch plane}
\end{center}
\end{figure}

\cref{thm:main qubits} solves the stabilizer testing conjecture for qubits.
It also implies a number of interesting corollaries. 
E.g., it directly follows that one can also test Cliffordness of a unitary efficiently, without given black-box access to the inverse as in~\cite{low2009learning,wang2011property}; this resolves another open problem from~\cite{montanaro2013survey}.
From a structural point of view, it shows that the Clifford group is the solution, within $U(2^n)$, of a set of polynomial equations of order $6$.
Our result is optimal in the sense that there exist no perfectly complete tests for fewer than six copies that achieve statistical power independent of the number of qubits (see \cref{sec:moments} and~\cite{rajamsc}).

\begin{figure}
\begin{center}
\includegraphics[width=9cm]{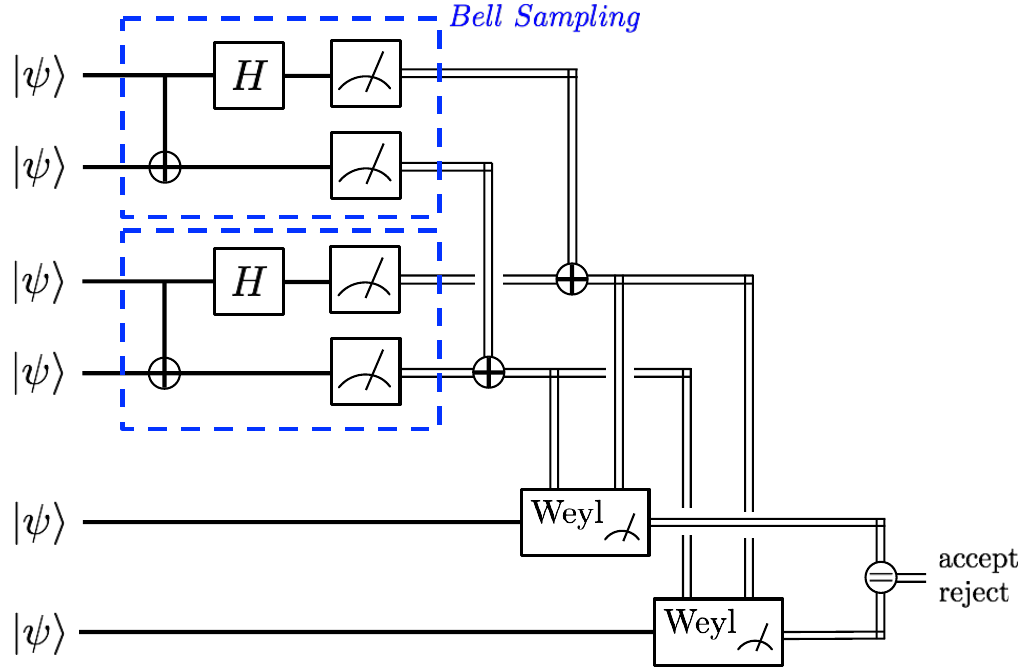}
\caption{
Quantum circuit implementing \cref{algo:qubits} for qubit stabilizer testing.
Inside the blue blocks: The quantum gates denote the controlled-NOT and the Hadamard gate, respectively; the measurements are in the $n$-qubit computational basis.
Outside the blue blocks:
Double lines represent classical information.
The ``$\oplus$''-operation is addition modulo two.
The boxes labeled ``Weyl'' perform a two-outcome measurement with respect to the eigenspaces of $W_{\vec a}$, where $\vec a$ is determined by classical inputs.
For $n$ qubits, the circuit is fully transversal in the sense that all operations
are required to be coherent only across two copies, and factorize with respect to the $n$ qubits.}\label{fig:circuit qubits}
\end{center}
\end{figure}

\subsubsection*{Qudits}
A careful analysis of the measurement of \cref{algo:qubits} shows that it is equivalent to a projective measurement of the form $\Pi_{\text{accept}}=\frac12 \left( I + V \right)$, where $V$ is the following Hermitian unitary operator:
\begin{equation}\label{eq:V for qubits intro}
  V  =2^{-n} \sum_{\vec x} W_{\vec x}^{\ot 6}. 
\end{equation}
It can be readily seen that the operator~\cref{eq:V for qubits intro} commutes with tensor powers of Clifford unitaries.

In fact, as discussed earlier, it is natural to approach the stabilizer testing problem by measuring operators in the commutant of the Clifford group.
Since the stabilizer states are a single orbit under the Clifford group, any such measurement by design will have the same level of significance on all stabilizer states.

\Cref{eq:V for qubits intro} and corresponding measurement have a clear generalization to arbitrary qudits.
Let $d\geq2$ and consider the operators
\begin{align}\label{eq:V for qudits intro}
 \Pi_{s,\text{accept}} = \frac12(I+V_s)\quad \text{where}  \quad   V_s = d^{-n} \sum_{\vec x} (W_{\vec x} \ot W_{\vec x}^\dagger)^{\ot s}.
\end{align}
One can see that if $(d,s)=1$, $V_s$ is a Hermitian unitary and so $\Pi_{s,\text{accept}}$ is a projector.
We now state our general stabilizer testing result:

\begin{restatable*}[Stabilizer testing for qu$d$its]{thm}{restateThmMainQudits}\label{thm:main qudits}
Let $d\geq2$ and choose $s\geq2$ such that $(d,s)=1$.
Let $\psi$ be a pure state of $n$ qu$d$its and denote by $p_\text{accept}=\tr[\psi^{\ot2s}\Pi_{s,\text{accept}}]$ the probability that the POVM element~$\Pi_{s,\text{accept}}$ accepts given $2s$ copies of $\psi$.
If $\psi$ is a stabilizer state then it accepts with certainty, $p_\text{accept}=1$.
On the other hand, if $\max_S\lvert\braket{S|\psi}\rvert^2\leq1-\eps^2$ then $p_\text{accept}\leq1-C_{d,s}\eps^2$, where $C_{d,s} = (1 - (1 - 1/4d^2)^{s-1})/2$.
\end{restatable*}

The proof proceeds similarly to the one of~\cref{thm:main qubits}.
Again, an uncertainty relation for Weyl operators plays an important role.
We record it since it may be of independent interest:

\begin{restatable*}[Uncertainty relation]{lem}{restateLemQuditUncertainty}\label{lem:qudit uncertainty}
  Let $\delta = 1/2d$ and $\psi$ a pure state such that $\lvert\tr[\psi W_{\vec x}]\rvert^2 > 1-\delta^2$ and $\lvert\tr[\psi W_{\vec y}]\rvert^2 > 1-\delta^2$.
  Then $W_{\vec x}$ and $W_{\vec y}$ must commute.
\end{restatable*}

We also study the \emph{minimal} number of copies required to distinguish stabilizer states from non-stabilizer states in such a way that the power of the statistical test does not decrease with the number of qubits.
Since the stabilizer states share the same second moments with uniformly random states (see \cref{subsec:higher intro} below for more detail), one can see that any such test requires at least three copies.
Our next result shows that this is sufficient at least when $d\equiv1,5\pmod6$.
For this, consider the POVM element
\begin{align*}
  \Pi_{\text{accept}} = \frac12 (I+V) \quad\text{where} \quad V\coloneqq d^{-n} \sum_{\vec x} A_{\vec x}^{\ot 3}.
\end{align*}

\begin{restatable*}[Stabilizer testing from three copies]{thm}{restateThmMainThreeCopies}\label{thm:main three copies}
Let $d\equiv1,5\pmod6$ and $\psi$ a pure state of $n$ qu$d$its.
Denote by $p_\text{accept}=\tr[\psi^{\ot3}\Pi_\text{accept}]$ the probability that the POVM element~$\Pi_\text{accept}$ accepts given three copies of $\psi$.
If $\psi$ is a stabilizer state then it accepts with certainty, $p_\text{accept}=1$.
On the other hand, if $\max_S\lvert\braket{S|\psi}\rvert^2\leq1-\eps^2$ then $p_\text{accept}\leq1-\eps^2/16d^2$.
\end{restatable*}

The operators $A_{\vec x}$ are known as phase-space point operators~\cite{gross2006hudson}, which are defined by a (symplectic) Fourier transform of the Weyl operator basis $W_{\vec a}$ (with respect to the index $\vec a$).
Again, the test corresponds to a particular element of the commutant, and to establish \cref{thm:main three copies} we also need another uncertainty relation, this time for phase-space point operators. 

\begin{restatable*}{lem}{restateLemPointOpUncertainty}\label{lem:point op uncertainty}
Let $d$ be an odd integer and $\psi$ a pure state of $n$ qu$d$its.
Suppose that $\tr[\psi A_{\vec x}],\tr[\psi A_{\vec y}]$, $\tr[\psi A_{\vec z}] > \sqrt{1-1/2d^2}$.
Then $[\vec z-\vec x,\vec y-\vec x]=0$, i.e., $W_{\vec z-\vec x}$ and $W_{\vec y-\vec x}$ must commute.
\end{restatable*}

Lastly, we derive an explicit prescription for the minimal test that is perfectly complete, i.e., detects all stabilizer states with certainty.
Here we use the full power of the algebraic theory.
We assume that $d$ is a prime.

\begin{restatable*}[$O_t$]{dfn}{restateDfnOt}\label{dfn:O_t}
Consider the quadratic form $q\colon \ZZ_d^t \to \ZZ_D$ defined by $q(\vec x) \coloneqq \vec x\cdot\vec x$.%
\footnote{Recall that for $x\in\ZZ_d$, $x^2$ is well-defined modulo~$D$.}
We define~$O_t(d)$ as the group of $t\times t$-matrices $O$ with entries in $\ZZ_d$ that satisfy the following properties:
\begin{enumerate}
  \item $O$ is a \emph{$q$-isometry}: i.e., $O\vec x\cdot O\vec x=\vec x\cdot\vec x\pmod D$ for all $\vec x\in\ZZ_d^t$.
  \item $O$ is \emph{stochastic}: $O \vec 1_t = \vec 1_t \pmod d$.
\end{enumerate}
We will refer to $O_t(d)$ as the \emph{stochastic orthogonal group}; its elements will be called \emph{stochastic isometries}.
\end{restatable*}

\noindent
Equivalently, $O_t(d)$ is the group of $t\times t$-matrices~$O$ that are orthogonal in the ordinary sense (i.e., $O^T O = I \bmod d$) and such that the sum of elements in each row is equal to 1 (mod $D$). See \cref{rem:O_t} for more details.

Note that the subspace $T_O\coloneqq \{(O\vec y,\vec y)\,:\, \vec y \in \ZZ_d^t\}$ is a stochastic Lagrangian subspace in~$\Sigma_{t,t}(d)$ (as defined above in \cref{dfn:lagrangian stochastic}), and so we obtain a corresponding operator in the commutant, which we abbreviate by $R(O)=R(T_O)$.
It is easy to see that the operators $R(O)$ define a representation of the group $O_t(d)$, so
\[ \Pi^{\min}_t \coloneqq \frac1 {\lvert O_t(d) \rvert }\sum_{O\in O_t(d)} R(O) \]
is the projector onto the invariant subspace for this action.
Remarkably, not only do the $R(O)$ stabilize all stabilizer tensor powers~$\ket S^{\ot t}$ (\cref{eq:eigenvectors}), but $\Pi^{\min}_t$ is in fact the minimal perfectly complete test for stabilizer states:

\begin{restatable*}[Minimal stabilizer test with perfect completeness]{thm}{restateThmMinimalTest}\label{thm:minimal test}
Let $d$ be a prime and $n,t\geq1$.
Then the projector $\Pi^{\min}_t$ is the orthogonal projector onto $\Span~\{\ket S^{\ot t}\, : \, \ket S\bra S \in \Stab(n,d) \}$.
\end{restatable*}

Are there any other tensor power states in the support of~$\Pi^{\min}_t$?
For every~$d\geq2$, we have proved above there exists some~$t\geq3$ such that stabilizer testing is possible using~$t$ copies.
Since the accepting POVM element is in each case the projector onto the invariant subspace of an element in~$O_t(d)$ (e.g., the anti-identity for $d=2$ and $t=6$), it follows that in this case the only tensor power states contained in the support of~$\Pi^{\min}_t$ are tensor powers of stabilizer states!

\subsection{De Finetti theorems for stabilizer symmetries}\label{subsec:de finetti intro}
Quantum de Finetti theorems provide versatile tools for the study of correlations in quantum states with permutation symmetry.
They have found many important applications, from quantifying the monogamy of entanglement to proving security for quantum key distribution protocols, where de Finetti theorems allow to reduce general attacks to collective attacks
~\cite{renner2008security}.
By now, several variants and generalizations are known~\cite{stormer1969symmetric,hudson1976locally,raggio1989quantum,petz1990finetti,caves2002unknown,konig2005finetti,dcruz2007finite,christandl2007one,renner2007symmetry,navascues2009power,koenig2009most, brandao2011quasipolynomial,brandao2013quantum,brandao2017quantum,brandao2016mathematics}.
Generally speaking, de Finetti theorems state that when~$\rho$ is a quantum state on~$(\CC^\ell)^{\ot t}$ that commutes with all permutations
(i.e., $[r_\pi, \rho]=0$ for all $\pi \in S_t$, where $r_\pi$ are the permutation operators defined in \cref{eq:perm_factors})
then its reduced density operators~$\rho_{1\dots{}s}=\tr_{s+1\dots{}t}[\rho]$ are well-approximated by convex mixtures of i.i.d.\ states in some suitable sense if $s\ll t$.
E.g., for any such $\rho$ there exists a probability measure~$d\mu$ on the space of mixed states on $\CC^\ell$ such that~\cite{christandl2007one},
\begin{align}\tag{\ref{eq:christandl mixed de finetti}}
  \frac12 \left\lVert \rho_{1\dots{}s} - \int d\mu(\sigma) \sigma^{\ot s} \right\rVert_1 \leq 2\frac{\ell^2s}{t}.
\end{align}

Using the techniques developed for stabilizer testing, we prove two new versions of the quantum de Finetti theorem \emph{adapted to the symmetries inherent in stabilizer states}.
A key insight from the preceding section was that any stabilizer tensor power $\ket S^{\ot t}$ is stabilized not only by the permutations, but by the larger group~$O_t(d)$.
This group contains includes in general many more elements, for example, the anti-identity~\eqref{eq:anti identity} for the case of qubits.
Our de Finetti theorems for stabilizer states show that if we consider \emph{arbitrary} states $\rho$ on $((\CC^d)^{\ot n})^{\ot t}$ that show symmetries of this kind, then the conclusions of the de Finetti theorem can be strengthened.
In this case, the reduced states can be well-approximated by convex mixtures of tensor powers of stabilizer states in $\Stab(n,d)$ (rather than of general pure states in $(\CC^d)^{\ot n}$).

Our first de Finetti theorem shows that the enlarged symmetry provided by the stochastic orthogonal group ensures that the approximation is \emph{exponentially} good in the number of traced-out subsystems.
This is remarkable, since the ordinary permutation symmetry-based de~Finetti theorem achieves exponential convergence only if the form of allowed states is relaxed to include ``almost product states''~\cite{renner2007symmetry} or ``high weight vectors'' (as opposed to \emph{highest} weight vectors)~\cite{koenig2009most}.
Such a relaxation is, in fact, already necessary for classical distributions~\cite{diaconis1980finite}.
In detail:

\begin{restatable*}[Exponential stabilizer de Finetti theorem]{thm}{restateThmExpDeFinetti}\label{thm:exp de finetti}
Let $d$ be a prime and~$\rho$ a quantum state on~$((\CC^d)^{\ot n})^{\ot t}$ that commutes with the action of~$O_t(d) \supseteq S_t$.
Let $1\leq s\leq t$.
Then there exists a probability distribution~$p$ on the (finite) set of mixed stabilizer states%
\footnote{A mixed stabilizer state is a maximally mixed state on a stabilizer code (see \cref{subsec:stabs}).}
of $n$~qudits, such that
\begin{align*}
\frac12\left\lVert \rho_{1\dots{}s} - \sum_{\sigma_S} p(\sigma_S) \sigma_S^{\ot s} \right\rVert_1
\leq 2 d^{\frac12(2n+2)^2} d^{-\frac12(t-s)}.
\end{align*}
\end{restatable*}

Our theorem can be understood as a stabilizer version of the Gaussian Finetti theorems established in~\cite{leverrier2009quantum,leverrier2016supq}; cf.~\cite{diaconis1992finite}.
The latter have been successfully used to establish security of continuous-variable quantum key distribution (QKD) protocols which admit the required symmetries~\cite{leverrier2013security,leverrier2017security}.
Since the input states of entanglement-based QKD schemes~\cite{ekert1991quantum}, are usually taken to be powers of stabilizer states, they show the enlarged symmetry identified here---a fact that seems to have been overlooked so far.
It is this natural to study applications of our de~Finetti theorems to QKD security proofs---we will report results on this elsewhere.

We can also ask to which extent the conclusions of \cref{thm:exp de finetti} hold if we only slightly enlarge the symmetry group.
The following theorem shows that if we consider quantum states that commute with permutations as well as the anti-identity (but not necessarily other elements of $O_t(d)$) then we still get an approximation by mixtures of stabilizer tensor powers -- but now with a polynomially rather than exponentially small error:

\begin{restatable*}[Stabilizer de Finetti theorem for the anti-identity]{thm}{restateThmAntiDeFinetti}\label{thm:anti de finetti}
Let $\rho$ be a quantum state on~$((\CC^2)^{\ot n})^{\ot t}$ that commutes with all permutations as well as with the action of the anti-identity~\eqref{eq:anti identity} on some (and hence every) subsystem consisting of six $n$-qubit blocks.
Let $s<t$ be a multiple of six.
Then there exists a probability distribution~$p$ on the (finite) set of mixed stabilizer states of $n$~qubits, such that
\begin{align*}
\frac12\left\lVert \rho_{1\dots{}s} - \sum_{\sigma_S} p(\sigma_S) \sigma_S^{\ot s} \right\rVert_1
\leq 6 \sqrt 2 \cdot 2^n \sqrt{\frac st}.
\end{align*}
\end{restatable*}

While \cref{thm:anti de finetti} is stated here only for qubits, we believe that a similar result can be established in any prime dimension.

\subsection{Robust Hudson theorem}

Similar techniques can also be applied to pure states with a small amount of negativity in their phase space representation.
More precisely, recall that for odd~$d$ the \emph{Wigner function} of a quantum state $\psi$ is defined by $w_\psi(\vec x) = d^{-n} \tr[A_{\vec x} \psi]$, where the operators $A_{\vec x}$ are the phase-space point operators mentioned above.
The Wigner function is a quasi-probability distribution, i.e., $\sum_{\vec x} w_\psi(\vec x)=1$, but it can be negative.
This negativity plays an important role -- e.g., it is an obstruction to efficient classical simulability~\cite{veitch2012negative,mari2012positive} and witnesses the onset of contextuality~\cite{howard2014contextuality}.

In fact, pure stabilizer states are characterized by having a nonnegative Wigner function---this is the discrete Hudson theorem~\cite{gross2006hudson}.
Our next result shows that this characterization is robust, and that the robustness is independent of the system size (number of qudits).
The relevant quantity is the \emph{Wigner} or \emph{sum-negativity} $\sn(\psi) = \sum_{w_\psi(\vec x)<0} \lvert w_\psi(\vec x)\rvert$, i.e., the absolute sum of negative entries of the Wigner function.

\begin{restatable*}[Robust finite-dimensional Hudson theorem]{thm}{restateThmRobustHudson}\label{thm:robust hudson}
  Let $d$ be odd and $\psi$ a pure quantum state of $n$ qu$d$its.
  Then there exist a stabilizer state $\ket S$ such that
  $\lvert\braket{S|\psi}\rvert^2 \geq 1 - 9 d^2 \sn(\psi)$.
\end{restatable*}

Our theorem gives a new quantitative meaning to the sum-negativity, and thereby to the related \emph{mana}, a monotone from the resource theory of stabilizer states~\cite{veitch2014resource} that has attracted increasing attention in the theory of fault-tolerant quantum computation.

\subsection{Random stabilizers, higher moments and designs}\label{subsec:higher intro}

As mentioned to above, randomized constructions based on the Haar measure are often near-optimal, yet have the drawback that generic quantum states cannot be efficiently prepared.
In contrast, random stabilizer states can be efficiently implemented, and early on, it had been discovered that they reproduce the same second moments as the Haar measure!
More recently, there had been significant progress on the third and fourth moments~\cite{zhu2016clifford,helsen2016representations,nezami2016multipartite}, opening up several many applications where random Clifford unitaries and stabilizer states have successfully replaced the Haar measure~\cite{magesan2011scalable,helsen2017multiqubit,kueng2016low,hayden2016holographic,nezami2016multipartite}.
To go beyond, however, a general understanding of the statistical properties of random stabilizer states is required.

The theory presented in this paper implies general formulas for the $t$-th moments of stabilizer states.
For qudits,
\begin{equation}\tag{\ref{eq:avg stab}}
\EE_{S \text{ stabilizer}} \left[\ket S \bra S^{\ot t} \right] = \frac1{Z_{n,d,t}} \sum_{T\in \Sigma_{t,t}(d)} R(T),
\end{equation}
where $T$ ranges precisely over the maximal isotropic stochastic subspaces from \cref{thm:commutant}!

Recall that a \emph{(complex projective) $t$-design} is an ensemble of states $\{p_i,\ket{\psi_i}\}$ such that the average of any polynomial of degree $t$ identically matches the average of the same polynomial with respect to the Haar measure. In other words, a $t$-design satisfies
\begin{align}\label{eq:t-th moment intro haar}
\EE_{i \sim p} \left[ (\ket{\psi_i}\bra{\psi_i})^{\otimes t} \right]
= \EE_{\psi \text{ Haar}} \left[ (\ket\psi\bra\psi)^{\otimes t} \right]
= \frac 1 {N'_{d,t}} \sum_{\pi\in S_t} r_\pi,
\end{align}
where the right-hand side is the familiar formula for the maximally mixed state on the symmetric subspace in terms of the symmetrizer---an easy consequence of Schur-Weyl duality.
When the stabilizer states form a $t$-design ($t\leq3$ for qubits, $t\leq2$ otherwise), \cref{eq:avg stab} reduces to \cref{eq:t-th moment intro haar}.
\Cref{eq:avg stab} unifies and generalizes all previously known results~\cite{zhu2015multiqubit,kueng2015qubit,webb2016clifford,zhu2016clifford,helsen2016representations,nezami2016multipartite}.

Importantly, however, our formula allows us to compute an arbitrary $t$-th moment even when the stabilizer states deviate significantly from being a $t$-design.
In fact, we demonstrate the power of the formula by using it to establish that following remarkable fact:
Even when the stabilizer states (a single Clifford orbit) fail to be a $t$-design, we can obtain $t$-designs by taking a finitely many Clifford orbits with appropriately chosen weights:

\begin{restatable*}{thm}{restateThmOrbitDesign}\label{thm:orbit design}
Let $d$ be a prime and $n\geq t-1$.
Then there exists an ensemble $\{p_i,\Psi_i\}_{i=1}^{M_{t,d}}$ of fiducial states in $(\CC^d)^{\ot n}$ such that:
\begin{align*}
\EE_{i \sim p} \EE_{U \text{ Clifford}}\!\!\left[\left(U\ket{\Psi_i}\bra{\Psi_i}U^\dagger\right)^{\ot t}\right]
= \EE_{\Psi \text{ Haar}}\!\!\left[\ket\Psi\bra\Psi^{\ot t}\right]
\end{align*}
\end{restatable*}

That is, the corresponding ensemble of Clifford orbits is a complex projective $t$-design.
Importantly, the number of fiducial states does \emph{not} depend on the number of qudits~$n$.

\section{Preliminaries}\label{sec:preliminaries}

\subsection{Pauli and Clifford group}
Let $d\geq2$ be an arbitrary integer.
We first consider a single qu$d$it with computational basis vectors $\ket q$, where $q\in\{0,\dots,d-1\}$ or $q\in\ZZ_d=\ZZ/d\ZZ$.
We define unitary shift and boost operators
\[ X \ket q = \ket{q+1}, \quad Z \ket q = \omega^q \ket q, \]
where $\omega=e^{2\pi i q/d}$.

The algebra of shift and boost operators differs slightly depending on whether $d$ is even or odd.
For uniform treatment, one introduces $\tau=(-1)^d e^{i\pi/d}=e^{i\pi(d^2+1)/d}$.
Note that $\tau^2=\omega$.
Let $D$ denote the order of $\tau$. Then $D=2d$ if $d$ is even, but $D=d$ if $d$ is odd (indeed, in this case $\tau=\omega^{2^{-1}}$, where $2^{-1}$ denotes the multiplicative inverse of $2$ mod $d$).
Then $Y \coloneqq \tau X^\dagger Z^\dagger$ is such that $XYZ=\tau I$, generalizing the commutation relation of the usual Pauli operators for qubits (where $\tau=i$).
For a single qudit, the Pauli group is generated by $X,Y,Z$ or, equivalently, by $\tau I,X,Z$.

For $n$ qudits, the Hilbert space is $\mathcal H_n = (\CC^d)^{\ot n}$, with computational basis vectors $\ket{\vec q}=\ket{q_1,\dots,q_n}$, and the \emph{Pauli group} $\mathcal P_n$ is the group generated by the tensor product of $I,X,Y,Z$ acting on each of the $n$~qu$d$its.

The  Clifford group $\Cliff(n,d)$ is defined as normalizer of the Pauli group in the unitary group, modulo phases.
That is, it consists of all unitary operators $U$ that $U \mathcal P_n U^\dagger \subseteq \mathcal P_n$, up to phases.
For qubits, the Clifford group is generated by the phase gate~$P=\begin{psmallmatrix}1&0\\0&i\end{psmallmatrix}$, the Hadamard gate~$H=\frac1{\sqrt2}\begin{psmallmatrix}1 & 1\\1 & -1\end{psmallmatrix}$, and the controlled-NOT gate.

\subsection{Weyl operators and characteristic function}\label{subsec:weyl ops}
At this point it is useful to recall the phase space picture of finite-dimensional quantum mechanics developed in~\cite{wootters1987wigner,appleby2005sic,gross2006hudson,gross2008quantum,beaudrap2013linearized}, which is analogous to the phase space formalism for continuous-variable systems used, e.g., in quantum optics~\cite{schleich2011quantum}.
For $\vec x=(\vec p,\vec q)\in\ZZ^{2n}$, define the \emph{Weyl operator}
\begin{equation}\label{eq:weyl operator}
  W_{\vec x} = W_{\vec p, \vec q} = \tau^{-\vec p\cdot \vec q} (Z^{p_1} X^{q_1})\otimes\cdots\otimes (Z^{p_n} X^{q_n}).
\end{equation}
Clearly, each Weyl operator is an element of the Pauli group.
Conversely, each element of the Pauli group is equal to a Weyl operator up to a phase that is a power of~$\tau$.
It is not hard to see that the Weyl operators themselves only depend on $\vec x$ modulo $D$ (which we recall is $2d$ or $d$, depending on whether $d$ is even or odd).
Indeed,
\begin{equation}\label{eq:weyl operator mod d}
  W_{\vec x+d\vec z} = (-1)^{(d+1)[\vec x,\vec z]} W_{\vec x},
\end{equation}
where we have introduced the $\ZZ$-valued \emph{symplectic form} on $\ZZ^{2n}$
\begin{align}\label{eq:symplectic form}
[\vec x,\vec y]
= [(\vec p,\vec q),(\vec p',\vec q')]
= \vec p \cdot \vec q' - \vec q \cdot \vec p'.
\end{align}
We will often use the symplectic form in situations where $\vec x$, $\vec y$ are elements of $\ZZ_d^{2n}$ or $\ZZ_D^{2n}$, and interpret $[\vec x,\vec y]$ accordingly.
For example,
\begin{equation}\label{eq:projective group law}
W_{\vec x} W_{\vec y} = \tau^{[\vec x,\vec y]} W_{\vec x+\vec y}
\end{equation}
for all $\vec x$, $\vec y\in\ZZ^{2n}_D$.
This implies that in particular
\begin{equation}\label{eq:commutation law}
W_{\vec x} W_{\vec y} = \omega^{[\vec x,\vec y]} W_{\vec y} W_{\vec x}.
\end{equation}
Thus the commutation relations between Weyl operators only depend on $\vec x, \vec y$ mod $d$.
In this sense, $\mathcal V_n=\{0,\dots,d-1\}^{2n}$ is the natural classical phase space associated with the Hilbert space~$\mathcal H_n=(\CC^d)^{\ot n}$.
We will often write $W_{\vec x}$ for $\vec x\in\ZZ_d^{2n}$, identifying $\ZZ_d^{2n} \cong \mathcal V_n$ in the standard way.

Note that $\tr[W_{\vec x}]\neq0$ if and only if $W_{\vec x}$ is a scalar multiple of the identity (necessarily $\pm I$), that is, if and only if $\vec x\equiv \vec 0 \pmod d$.
Together with \cref{eq:projective group law}, it follows that the re-scaled Weyl operators $\{d^{-n/2} W_{\vec x}\}$ for $\vec x\in V_n$ form an orthonormal basis with respect to the Hilbert-Schmidt inner product $\braket{A,B} = \tr[A^\dagger B]$.
In particular, any operator $B$ on $\mathcal H_n$ can be expanded in the form
$B = d^{-n/2} \sum_{\vec x\in \mathcal V_n} c_B(\vec x) W_{\vec x}$.
The expansion coefficients~$c_B(\vec x)$ together define the \emph{characteristic function}~$c_B: \mathcal V_n\to\CC$ of the operator~$B$,
\begin{equation}\label{eq:characteristic function}
c_B(\vec x) = d^{-n/2} \tr[W_{\vec x}^\dagger B],
\end{equation}
and we have Parseval's identity
\begin{equation}\label{eq:parseval}
\tr[A^\dagger B] = \sum_{\vec x\in\mathcal V_n} \overline c_A(x) c_B(x).
\end{equation}

By definition, if $U$ is a Clifford unitary then, for every~$\vec x\in\mathcal V_n$, $U W_{\vec x} U^\dagger$ is proportional to a Weyl operator $W_{\vec x'}$, where we can take $\vec x'\in\mathcal V_n$ in view of \cref{eq:weyl operator mod d}.
Since conjugation preserves the commutation relations, this action has substantially more structure.
In particular, the mapping $\vec x \mapsto \vec x'$ is implemented by an element of the symplectic group~$\Sp(2n,d)$, i.e., a linear transformation of $\ZZ_d^{2n}$ that preserves the symplectic form~\eqref{eq:symplectic form}.
The following facts are well-known in the literature (e.g.~\cite{appleby2005sic,gross2006hudson,beaudrap2013linearized,zhu2015multiqubit}).

\begin{lem}\label{lem:cliff basics}
For any prime $d$ and any $n\in\NN$, the following holds:
\begin{enumerate}
\item
For each $U\in\Cliff(n,d)$, there is a $\Gamma\in\Sp(2n,d)$ and a function $f\colon \ZZ_d^{2n} \to \ZZ_d$ such that
\begin{align}\label{eq:cliff action prp}
  U W_{\vec x} U^\dagger = \omega^{f(\vec x)} W_{\Gamma \vec x}
  \qquad \forall\,\vec x \in\ZZ_d^{2n}.
\end{align}

\item
Conversely, for each $\Gamma\in\Sp(2n,d)$, there is a $U\in\Cliff(n,d)$ and a phase function $f\colon \ZZ_d^{2n} \to \ZZ_d$ such that \cref{eq:cliff action prp} holds.
If $d$ is odd, one can choose $U$ such that $f\equiv0$.

\item
The quotient of the Clifford group by Weyl operators and phases is isomorphic to $\Sp(2n,d)$.
\end{enumerate}
\end{lem}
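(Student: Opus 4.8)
The plan is to prove the three parts in order, using only basic structure of the Pauli group (orthogonality of Weyl operators, $W_{\vec x}^d=I$, irreducibility) together with the generation theory of $\Sp(2n,d)$. For part~1, fix $U\in\Cliff(n,d)$. Since $W_{\vec x}\in\mathcal P_n$ we have $U W_{\vec x}U^\dagger=\phi(\vec x)\,W_{g(\vec x)}$ for some phase $\phi(\vec x)$ and some $g(\vec x)\in\ZZ_d^{2n}$, well-defined modulo $d$ by \eqref{eq:weyl operator mod d}. Applying $U(\cdot)U^\dagger$ to $W_{\vec x}W_{\vec y}=\tau^{[\vec x,\vec y]}W_{\vec x+\vec y}$ and using that distinct Weyl operators (mod $d$) are linearly independent gives $g(\vec x+\vec y)\equiv g(\vec x)+g(\vec y)\pmod d$; hence $g$ descends to a $\ZZ_d$-linear map $\Gamma$ on $\ZZ_d^{2n}$, and \eqref{eq:commutation law} forces $[\Gamma\vec x,\Gamma\vec y]=[\vec x,\vec y]$, so $\Gamma\in\Sp(2n,d)$. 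Finally $W_{\vec x}^d=I$ (a one-line single-qudit computation), so $\phi(\vec x)^d=1$, i.e.\ $\phi(\vec x)$ is a power of $\omega$; absorbing the residual representative-dependent sign (harmless since $\omega^{d/2}=-1$ when $d$ is even) yields a well-defined $f\colon\ZZ_d^{2n}\to\ZZ_d$ with $\phi(\vec x)=\omega^{f(\vec x)}$, which is \eqref{eq:cliff action prp}.

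For part~2 I would reduce to a generating set of $\Sp(2n,d)$ and exhibit an explicit Clifford lift of each generator. A convenient set consists of: the basis changes $(\vec p,\vec q)\mapsto((A^{-1})^T\vec p,\,A\vec q)$ for $A\in\mathrm{GL}(n,d)$, lifted by $\ket{\vec q}\mapsto\ket{A\vec q}$; the shears $(\vec p,\vec q)\mapsto(\vec p+B\vec q,\,\vec q)$ with $B=B^T$, lifted by the quadratic phase gate $\ket{\vec q}\mapsto\tau^{\vec q^T B\vec q}\ket{\vec q}$; and the partial Fourier transforms $\ket{\vec q}\mapsto d^{-1/2}\sum_{\vec r}\omega^{\vec q\cdot\vec r}\ket{\vec r}$ (on a subset of the qudits). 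A standard Bruhat-style or symplectic-transvection argument shows these generate $\Sp(2n,d)$, uniformly for $d=2$ and $d$ an odd prime, and each lift visibly normalizes $\mathcal P_n$, so composing them produces the desired Clifford $U$. When $d$ is odd we have $D=d$, so $W_{\Gamma\vec x}$ is unambiguous; each generator above then has a lift whose conjugation sends $W_{\vec x}$ to $W_{\Gamma\vec x}$ \emph{exactly} (for the shear one uses $\tfrac12\in\ZZ_d$), and $f\equiv0$ is preserved under composition because $U_1U_2W_{\vec x}(U_1U_2)^\dagger=U_1W_{\Gamma_2\vec x}U_1^\dagger=W_{\Gamma_1\Gamma_2\vec x}$. (Alternatively one can run this part conceptually via the finite Stone--von Neumann theorem: the Pauli group has a unique irrep with central character $\omega$, precomposing it with the Heisenberg-group automorphism induced by $\Gamma$ yields an equivalent irrep, and any intertwiner is the sought Clifford unitary.)

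Part~3 follows by assembling parts~1 and~2. The assignment $\Phi\colon U\mapsto\Gamma$ from part~1 is well-defined, since $\Gamma$ is uniquely pinned down by \eqref{eq:cliff action prp} (linear independence of Weyl operators once more), and it is a homomorphism because conjugations compose; part~2 shows it is onto $\Sp(2n,d)$. For the kernel, suppose $UW_{\vec x}U^\dagger=\omega^{f(\vec x)}W_{\vec x}$ for all $\vec x$; then $f$ is $\ZZ_d$-linear, so $f(\vec x)=[\vec c,\vec x]$ for a unique $\vec c$ by nondegeneracy of the symplectic form, whence $W_{\vec c}^\dagger U$ commutes with every Weyl operator and, since the Weyl operators span $\mathrm{End}(\mathcal H_n)$, must be a scalar. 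Thus $U$ is a scalar multiple of a Weyl operator, and conversely every such operator lies in the kernel. Hence $\ker\Phi$ is precisely the subgroup generated by the Weyl operators and the scalars, and $\Cliff(n,d)$ modulo that subgroup is isomorphic to $\Sp(2n,d)$.

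The step I expect to be the main obstacle is part~2: checking carefully that the chosen family actually generates $\Sp(2n,d)$ (the Bruhat/transvection bookkeeping, handled uniformly over $d=2$ and odd $d$) and that the explicit lifts conjugate Weyl operators with exactly the claimed phase, in particular establishing $f\equiv0$ for odd $d$, where one must track the $\tau$-cocycle through the quadratic phase gate and the Fourier transform. The Stone--von Neumann route trades this for setting up the finite Stone--von Neumann theorem and verifying that $\Gamma$ lifts to a center-fixing automorphism of the finite Heisenberg group; there the only slightly delicate point is again $d$ even, where $\tau=i$ enters the group law.
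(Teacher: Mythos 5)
The paper offers no proof of this lemma: it is stated as a collection of well-known facts with pointers to the literature, so there is no in-paper argument to compare yours against. Your proposal is a correct, self-contained proof along the standard lines of those references. Parts~1 and~3 are complete as written: conjugating the projective relations $W_{\vec x}W_{\vec y}=\tau^{[\vec x,\vec y]}W_{\vec x+\vec y}$ and $W_{\vec x}W_{\vec y}=\omega^{[\vec x,\vec y]}W_{\vec y}W_{\vec x}$ and invoking orthogonality of the Weyl basis does force $g$ to be additive and symplectic; $W_{\vec x}^d=I$ (which indeed holds for the canonical representatives, including $d=2$) forces the phase to be a power of $\omega$, with the representative-dependent signs from \eqref{eq:weyl operator mod d} absorbed as you note since $-1=\omega^{d/2}$ for $d$ even; and in part~3 additivity of $f$ plus nondegeneracy of the symplectic form over the field $\ZZ_d$ correctly identifies the kernel as phases times Weyl operators. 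The one place you lean on an unproved external input is part~2, namely that the $\mathrm{GL}(n,d)$ basis changes, symmetric shears, and partial Fourier transforms generate $\Sp(2n,d)$; this is standard (transvections or Bruhat decomposition over a field) but is precisely the bookkeeping one would have to supply to make the argument fully self-contained, and your own identification of it as the main obstacle is accurate. The check that each lift conjugates Weyl operators with trivial phase when $d$ is odd, so that $f\equiv0$ survives composition, is a finite computation that works out as you describe; the Stone--von Neumann alternative is equally valid and is closer in spirit to how the cited references handle the even case.
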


\noindent
Below, we will frequently assume that a correspondence~$\Gamma \mapsto U_\Gamma$ has been fixed.

\subsection{Wigner function and phase space point operators}\label{subsec:phase space}
It is also useful to consider the symplectic Fourier transform, which for any function $f\colon\mathcal V_n\to\CC$ is defined as
\begin{equation}\label{eq:symplectic fourier transform}
  \hat f(\vec x) = d^{-n} \sum_{\vec y} \omega^{-[\vec x,\vec y]} f(\vec y).
\end{equation}
The transformation $f \mapsto \hat f$ is unitary, i.e., we have Parseval's identity:
$\sum_{\vec x} \overline{\hat f(\vec x)} \hat g(\vec x) = \sum_{\vec y} \overline{f(\vec y)} g(\vec y)$.

The Fourier transform of the characteristic function is (up to normalization) known as the \emph{Wigner function}~\cite{wootters1987wigner} $w_B\colon \mathcal V_n\to\CC$, defined by
\begin{equation}\label{eq:wigner function}
w_B(\vec x)
= d^{-n/2} \hat c_B(\vec x)
= d^{-3n/2} \sum_{\vec y} \omega^{-[\vec x,\vec y]} c_B(\vec y)
= d^{-2n} \sum_{\vec y} \omega^{-[\vec x,\vec y]}\tr[W_{\vec y}^\dagger B]
= d^{-n} \tr[A_{\vec x} B],
\end{equation}
where we have introduced the \emph{phase-space point operators}
\begin{align}\label{eq:phase space point op definition}
A_{\vec x} = d^{-n} \sum_{\vec y} \omega^{-[\vec x,\vec y]} W_{\vec y}^\dagger.
\end{align}
The operators $\{A_{\vec x}\}$ form an orthogonal basis of the space of all operators,
$\tr[A_{\vec x}^\dagger A_{\vec y}] = d^n \delta_{\vec x,\vec y}$,
so the Wigner function can be seen as the set of coefficients of an operator as expanded in this basis, $B = \sum_{\vec x} w_B (\vec x) A_{\vec x}^\dagger$.
Moreover, the Wigner function of a quantum state is a quasiprobability distribution in the sense that $\sum_{\vec x} w_\rho(\vec x)=1$.

For odd $d$ the Wigner function is particularly well-behaved.
For one, the phase-space point operators are Hermitian (this is also true for qubits) and they square to the identity (so the eigenvalues are~$\pm1$ and in particular $\lVert A_{\vec x}\rVert=1$).
This means that the Wigner function of a quantum state is real and $-d^{-n} \leq w_\psi(\vec x) \leq d^{-n}$.
The phase-space point operators satisfy the following important identity:
\begin{align}\label{eq:three points}
  A_{\vec x} A_{\vec y} A_{\vec z} = \omega^{2 [\vec z-\vec x,\vec y-\vec x]} A_{\vec x-\vec y +\vec z}
\end{align}
Moreover, (only) for odd $d$ does the Wigner function transforms covariantly with respect to the Clifford group.
Here, the Clifford operators can (up to overall phase) be parametrized by an affine symplectic transformation, i.e., by a symplectic matrix $\Gamma\in\Sp(2n,d)$ and a vector $\vec b\in\ZZ_d^{2n}$.
Then $U=W_{\vec v}\,\mu_\Gamma$ is in $\Cliff(n,d)$, where $\mu_\Gamma$ is the so-called \emph{metaplectic} representation of the symplectic group (see, e.g.,~\cite{gross2006hudson}), and the conjugation action of $U$ on phase-space point operators is given by
\begin{equation}\label{eq:clifford phase space}
U A_{\vec x} U^\dagger = A_{\Gamma \vec x + \vec b}.
\end{equation}
In particular, the Weyl operators induce a translation in phase space.

\subsection{Stabilizer groups, codes, and states}\label{subsec:stabs}
We now give uniform account of the stabilizer formalism~\cite{gottesman1997stabilizer,gottesman1999heisenberg} for qu$d$its.
Stabilizer states are commonly defined in terms of the Pauli group in the following way:
Consider a subgroup of the Pauli group $S\subseteq \mathcal P_n$ that does not contain any (nontrivial) multiple of the identity operator.
Then the operator
\begin{equation}\label{eq:projection onto stabilizer code}
P_S = \frac 1 {\lvert S\rvert} \sum_{P \in S} P
\end{equation}
is an orthogonal projection onto a subspace $V_S \subseteq \mathcal H_n$ of dimension $d^n / \lvert S\rvert$.
We say that $V_S$ is the \emph{stabilizer code} associated with the \emph{stabilizer group} $S$.
If $\lvert S\rvert = d^n$ then this code is spanned by a single state, called a \emph{(pure) stabilizer state} and denoted by $\ket S\bra S$.
It is given precisely by \cref{eq:projection onto stabilizer code}.
In other words, a stabilizer state $\ket S$ is the unique $+1$ eigenvector (up to scalars) of all the Pauli operators in $S$,
\[ P \ket S = \ket S \quad (\forall P\in S). \]
In the following we will mostly be talking about stabilizer groups that determine a pure state.
We denote the (finite) set of pure stabilizer states in $(\CC^d)^{\ot n}$ by $\Stab(n,d)$.

In order to connect the stabilizer formalism to the phase space picture, we observe that the stabilizer group can be written in the form
\begin{equation}\label{eq:normal form stabilizer group}
S = \{ \omega^{f(\vec x)} W_{\vec x} : \vec x \in M \},
\end{equation}
for some subset $M\subseteq \mathcal V_n$ and some function $f\colon M\to\ZZ_d$.
The two pieces of data determine the stabilizer state uniquely. Indeed, $\ket S$ can be characterized by demanding that
\[ W_{\vec x} \ket S = \omega^{-f(\vec x)} \ket S \quad (\forall P\in S). \]
Moreover, it is not hard to verify that $M$ is closed under addition (because $S$ is a group) and that $[\vec x,\vec y]=0$ for any two elements $\vec x,\vec y\in M$.
Thus, $M$ is a totally isotropic submodule of the phase space $\mathcal V_n$ (which itself can be thought of as a $\ZZ_d$-module).
For simplicity, we will usually say \emph{subspace} instead of submodule, although the latter terminology is more appropriate for non-prime $d$.
Moreover, $\lvert M\rvert=d^n$, which is the maximal possible cardinality of any such subspace -- one often says that $M$ is a Lagrangian subspace and it holds that $M=M^\perp$, where $M^\perp=\{\vec y \in \mathcal V_n | [\vec x,\vec y]=0\;\forall \vec x\in M\}$.
See, e.g.,~\cite{gross2006hudson,gross2013stabilizer} for further detail on this symplectic point of view.

Conversely, suppose that $M$ is a Lagrangian subspace of $\mathcal V_n$.
Then there always exist functions $f$ such that $\{ \omega^{f(\vec x)} W_{\vec x} \}_{\vec x\in M}$ is a stabilizer group; we will denote the corresponding stabilizer states by $\ket{M,f}$.
Any other such function $f$ can be obtained by replacing~$f$ by $g=f+\delta$, where $\delta \colon M\to\ZZ_d$ is a $\ZZ_d$-linear function.
We can always write $\delta(\vec x)=[\vec z,\vec x]$; then $\ket{M,g} = W_{\vec z} \ket{M,f}$.
In this way, $M$ parametrizes an orthonormal basis of $\mathcal H_n$ worth of stabilizer states.
In particular, any state that is a simultaneous eigenvector of the $\{W_{\vec x}\}_{\vec x\in M}$ is necessarily a stabilizer state.
It is not hard to verify that the quantum channel that implements the projective measurement in this \emph{stabilizer basis} $\{\ket{M,f}\}_f$ is given by
\begin{equation}\label{eq:measurement in stabilizer basis}
  \Lambda_M[\rho]
= \sum_f \ket{M,f}\braket{M,f|\rho|M,f}\bra{M,f}
= d^{-n} \sum_{\vec x \in M} W_{\vec x} \rho W_{\vec x}^\dagger.
\end{equation}

The fact that any stabilizer state can be parametrized as $\ket S=\ket{M,f}$ will be of fundamental importance to our investigations.
As a first consequence, we note that \cref{eq:projection onto stabilizer code,eq:normal form stabilizer group} imply that
$\ket S\bra S
= d^{-n} \sum_{\vec x\in M} \omega^{f(\vec x)} W_{\vec x}$.
This shows that the characteristic function is given by
\begin{equation}\label{eq:characteristic function stabilizer}
  c_S(\vec x) = \begin{cases}
  d^{-n/2} \omega^{f(\vec x)} & \text{ if } \vec x \in M, \\
  0 & \text{ otherwise,}
  \end{cases}
\end{equation}
i.e., it is supported precisely on the set $M$.

For odd $d$ the phase is a linear function, so it can be written as $f(\vec x)=[\vec a,\vec x]$ for some suitable vector~$\vec a$ (e.g.,~\cite[App.~C]{gross2006hudson}).
This means that the Wigner functions of stabilizer states have the following form~\cite{gross2013stabilizer}:
\begin{equation}\label{eq:wigner function odd stabilizer}
  w_S(\vec x)
= d^{-3n/2} \sum_{\vec y \in M} \omega^{-[\vec x,\vec y]} d^{-n/2} \omega^{[\vec a,\vec y]}
= \begin{cases}
  d^{-n} & \text{ if } \vec x \in \vec a + M, \\
  0 & \text{ otherwise}
  \end{cases}
\end{equation}
(using that $M=M^\perp$ for a pure stabilizer state).
In particular, the Wigner function is non-negative.
The finite-dimensional Hudson theorem asserts that, for pure states, the converse is also true~\cite{gross2006hudson}.
In \cref{sec:hudson} we will prove a robust version of this result.

\section{Testing stabilizer states}\label{sec:stabilizer testing}
Given two copies of an unknown pure state $\psi=\ket\psi\bra\psi$ on $\mathcal H_n$, it is easy to verify using phase estimation whether $\ket\psi$ is an eigenvector of a given Weyl operator $W_{\vec x}$.
In particular, if $W_{\vec x}$ is Hermitian then we simply measure twice and compare the result.
The probability of obtaining the same outcome is
\begin{align}\label{eq:same outcome}
 \tr\left[\psi^{\ot 2} \frac {I + W_{\vec x} \ot W_{\vec x}^\dagger} 2\right]
= \frac12 \left( 1 + d^n \lvert c_\psi(\vec x)\rvert^2 \right),
\end{align}
where we recall that $c_\psi$ denotes the characteristic function defined in \cref{eq:characteristic function}.

To turn this idea into an algorithm for testing whether $\psi$ is a stabilizer state we need a way of generating good candidate Weyl operators.
For this we note that, since $\psi$ is a \emph{pure} quantum state,
\begin{equation}\label{eq:p distribution}
p_\psi(\vec x)
= \lvert c_\psi(\vec x)\rvert^2
= d^{-n} \lvert\braket{\psi|W_{\vec x}|\psi}\rvert^2
= d^{-n} \tr[\psi W_{\vec x} \psi W_{\vec x}^\dagger]
\end{equation}
is a probability distribution on the phase space $\mathcal V_n$.
This follows directly from \cref{eq:parseval}.
We call $p_\psi$ the \emph{characteristic distribution} of $\psi$.

Now, if $\ket\psi=\ket S=\ket{M,f}$ is a stabilizer state then \cref{eq:characteristic function stabilizer} implies that $p_\psi$ is simply the uniform distribution on the subset $M\subseteq \mathcal V_n$:
\begin{equation}\label{eq:characteristic distribution stabilizer}
  p_S(\vec x) = \begin{cases}
  d^{-n} & \text{ if } \vec x \in M, \\
  0 & \text{ otherwise.}
  \end{cases}
\end{equation}
Note that $p_S$ is maximally sparse in the case of pure stabilizer states, since it always holds true that $0\leq p_\psi(x)\leq d^{-n}$.
Therefore, if we sample from the characteristic distribution of a stabilizer state then \cref{eq:characteristic distribution stabilizer} shows that we would with certainty obtain the label of a Weyl operator for which $\ket\psi$ is an eigenvector.

Importantly, the converse of this statement is also true:
Suppose that $\ket\psi$ is an eigenvector of all Weyl operators $W_{\vec x}$ for $\vec x$ in the support of the characteristic distribution (i.e., $p_\psi(\vec x)>0$).
Since $p_\psi(x)\leq d^{-n}$, the support of $p_\psi$ contains at least $d^n$ points.
Thus if $\ket\psi$ is an eigenvector of all these Weyl operators then the support must be exactly of cardinality $d^n$ and so $\ket\psi$ is a stabilizer state.
This suggests the following algorithm:
\begin{enumerate}
\item Sample from the characteristic distribution of $\psi$. Denote the result $\vec x$.
\item Measure the corresponding Weyl operator $W_{\vec x}$ twice and accept if the result is the same.
\end{enumerate}
By the preceding discussion, this test will accept if and only if the state is a stabilizer state.
But how do we go about sampling from the characteristic distribution?

\subsection{Qubit stabilizer testing and Bell difference sampling}\label{subsec:bell diff}
When the wave function $\ket\psi$ is real in the computational basis then sampling from the characteristic distribution can be achieved by \emph{Bell sampling}, introduced for qubits in~\cite{montanaro2017learning} (cf.~\cite{zhao2016fast}).
Bell sampling amounts to performing a basis measurement in the basis obtained by applying the Weyl operators to a fixed maximally entangled state, $\ket{W_{\vec x}} = (W_{\vec x} \ot I) \ket{\Phi^+}$.
Since the Weyl operators are orthogonal, $\ket{W_{\vec x}}$ is an orthonormal basis of the doubled Hilbert space $\mathcal H_n \ot \mathcal H_n$.
Using the transpose trick,
\begin{align}\label{eq:bell sampling}
  \left\lvert\braket{W_{\vec x} | \psi^{\ot 2}}\right\rvert^2
= d^{-n} \lvert\braket{\psi|W_{\vec x}|\bar\psi}\rvert^2
\end{align}
In case the wave function is real, \cref{eq:bell sampling} is exactly equal to $p_\psi(\vec x)$; Bell sampling therefore allows us to implement step (1) above given two copies of the unknown state $\psi$.

In general, however, the transformation $\psi \mapsto \overline\psi=\psi^T$ \emph{cannot} be implemented by a physical process, since the transpose map is well-known not to be completely positive.
Thus we need a new idea to treat the general case where the wave function can be complex.

We start with the observation that if $\psi$ is a stabilizer state then so is $\overline\psi$.
Indeed, $\overline{W_{\vec p,\vec q}} = (-1)^{(d+1)(\vec p\cdot \vec q)} W_{J(p,q)}$, where $J$ is the involution~\cite{appleby2005sic}
\begin{equation*}
  J \colon \mathcal V_n \to \mathcal V_n, \quad (p,q) \mapsto ((-p) \bmod d,q)
\end{equation*}
on phase space (note that the phase is trivial when $d$ is odd and so always well-defined mod $d$).
On the other hand, $\overline{\omega^{f(\vec x)}} = \omega^{-f(\vec x)}$.
It follows that if $\ket\psi=\ket{M,f}$ then $\ket{\overline\psi}=\ket{J(M),g}$, where $\omega^{g(\vec x)}=\omega^{-f(\vec x)} (-1)^{(d+1)(\vec p \cdot \vec q)}$ (again, this is well-defined for any $d$).

For qubits ($d=2$), the involution $J$ is trivial.
This means that if $\psi$ is a stabilizer state then $\psi$ and $\bar\psi$ are characterized by the same subspace $M$, but possibly different phases.
We saw above that (only) in this case there exists a Weyl operator $W_{\vec z}$ such that $\ket{\bar\psi} = W_{\vec z} \ket{\psi}$.
As a consequence, if we perform Bell sampling on $\ket\psi \ot \ket\psi$ then, from \cref{eq:bell sampling},
\begin{align*}
  \left\lvert\braket{W_{\vec x} | \psi^{\ot 2}}\right\rvert^2
= d^{-n} \lvert\braket{\psi|W_{\vec x + \vec z}|\psi}\rvert^2
= p_\psi(\vec x+\vec z).
\end{align*}
Of course, $\vec z$ is an unknown vector that depends on the stabilizer state $\psi$.
But since $\vec z$ depends \emph{only} on the stabilizer state $\psi$, it is clear that we may Bell sample \emph{twice} and take the difference of the result in order to obtain a uniform sample $\vec a$ from the subspace $M$.
Formally:

\restateDfnBellDifferenceSampling

It is not obvious that Bell difference sampling should be meaningful for non-stabilizer quantum states $\psi$.
The following theorem shows that it has a natural interpretation for general states:

\restateThmBellDifferenceSamplingQubits

The proof of \cref{thm:bell difference sampling qubits} uses the symplectic Fourier transform defined in \cref{eq:symplectic fourier transform}.
 Remarkably, the characteristic \emph{distribution} of any pure state is left invariant by the Fourier transform:
\begin{equation}\label{eq:pure state fourier invariance}
\begin{aligned}
  \widehat p_\psi(\vec a)
&= 2^{-n} \sum_{\vec x} (-1)^{[\vec a,\vec x]} c_\psi(\vec x) c_\psi(\vec x)
= 2^{-n} \sum_{\vec x} c_\psi(\vec x) c_{W_{\vec a} \psi W_{\vec a}}(\vec x) \\
&= 2^{-n} \tr[\psi W_{\vec a} \psi W_{\vec a}]
= p_\psi(\vec a),
\end{aligned}
\end{equation}
where the third step is \cref{eq:parseval} (note that for qubits the characteristic function is real).

We now give the proof of \cref{thm:bell difference sampling qubits}:

\begin{proof}[Proof of \cref{thm:bell difference sampling qubits}]
We start with the observation that
$\Pi_{\vec a} = (I \ot I \ot I \ot W_{\vec a}) \Pi_{\vec 0} (I \ot I \ot I \ot W_{\vec a})$.
On the other hand, it is easy to verify that
\begin{equation}\label{eq:projection Pi_0}
  \Pi_{\vec 0} = \frac1{2^{2n}} \sum_{\vec x} W_{\vec x}^{\ot 4}
\end{equation}
(i.e., it is the projection onto a stabilizer code of dimension $2^{2n}$, which played an important role in~\cite{zhu2016clifford}, and Bell difference sampling achieves precisely the syndrome measurement for this code).
It follows that
\begin{equation}\label{eq:Pi_a}
\Pi_{\vec a} = \frac1{2^{2n}} \sum_{\vec x} (-1)^{[\vec a,\vec x]} W_{\vec x}^{\ot 4}
\end{equation}
and so
\begin{align*}
\tr\left[\Pi_{\vec a} \psi^{\ot 4}\right]
&= \frac1{2^{2n}} \sum_{\vec x} (-1)^{[\vec a,\vec x]} \tr\left[ W_{\vec x}^{\ot 4} \psi^{\ot 4} \right]
= \sum_{\vec x} (-1)^{[\vec a,\vec x]} p_\psi(\vec x) p_\psi(\vec x) \\
&= \sum_{\vec x} \hat p_\psi(\vec x) \hat p_\psi(\vec x+\vec a)
= \sum_{\vec x} p_\psi(\vec x) p_\psi(\vec x+\vec a);
\end{align*}
the third equality is the unitarity of the Fourier transform, which also maps modulations to translations, and in the last step we used \cref{eq:pure state fourier invariance}, namely that the characteristic distribution of a pure state is left invariant by the Fourier transform.
\end{proof}

\Cref{thm:bell difference sampling qubits} motivates \cref{algo:qubits} as a natural algorithm for testing whether a multi-qubit state is a stabilizer state.
The following theorem shows that stabilizer states are the only states that are accepted with certainty, and it quantifies this observation in a dimension-independent way:

\restateThmMainQubits

\noindent
The converse bound of \cref{thm:main qubits} can be stated equivalently as
\begin{align}\label{eq:saner quantitative bound}
  \max_S \lvert\braket{S|\psi}\rvert^2 \geq 4p_{\text{accept}}-3.
\end{align}

\begin{proof}
According to \cref{thm:bell difference sampling qubits}, step~1 of the algorithm samples elements $\vec a$ with probability $q(\vec a)=\sum_{\vec x} p_\psi(\vec x) p_\psi(\vec x+\vec a)$.

Let us first discuss the case that $\psi$ is a stabilizer state, say $\ket\psi=\ket{M,f}$.
Since $p_\psi(\vec x)$ is the uniform distribution over $M$, which is a \emph{subspace}, it holds that $q(\vec a) = p_\psi(\vec a)$, since, for $x\in M$, $x+a\in M$ if and only if $a\in M$.
But this means that $\vec a\in M$ with certainty.
Thus, $\ket\psi$ is an eigenvector of the corresponding Weyl operator $W_{\vec a}$ and step~2 of the test always accepts.

We now consider the case that $\psi$ is a general pure state.
Our goal will be to show that if \cref{algo:qubits} succeeds with high probability then there must exist a stabilizer state with high overlap with $\psi$.
According to \cref{eq:same outcome}, the probability of acceptance is given by
\begin{align*}
  p_\text{accept}
= \frac12 \sum_{\vec a} q(\vec a) \left( 1 + 2^n p_\psi(\vec a) \right)
\end{align*}
where we recall that $q(\vec a)=\sum_{\vec x} p_\psi(\vec x) p_\psi(\vec x+\vec a)$.
Thus, by the Cauchy-Schwarz inequality,
\begin{equation}\label{eq:qubit acceptance}
\begin{aligned}
p_\text{accept}
&= \frac12 \sum_{\vec x} p_\psi(\vec x) \left( 1 + 2^n \sum_{\vec a} p_\psi(\vec x+\vec a) p_\psi(\vec a) \right)
\leq \frac12 \sum_{\vec x} p_\psi(\vec x) \left( 1 + 2^n \sum_{\vec a} p_\psi(\vec a)^2 \right) \\
&= \frac12 \left( 1 + 2^n \sum_{\vec a} p_\psi(\vec a)^2 \right)
= \frac12 \sum_{\vec a} p_\psi(\vec a) \left( 1 + 2^n p_\psi(\vec a) \right),
\end{aligned}
\end{equation}
where we have also used the fact that $p_\psi$ is a probability distribution.
Intuitively, this bound shows that if our test accepts with high probability then $p_\psi(\vec a) \approx 2^{-n}$ with high probability.
Indeed, let us consider
\[ M_0 \coloneqq \{ \vec a\in \mathcal V_n : 2^n p_\psi(\vec a) > 1/2 \}. \]
Then Markov's inequality (which can be applied since it is always true that $p_\psi\leq2^{-n}$) asserts that
\begin{equation}\label{eq:qubit markov}
  \sum_{\vec a \in M_0} p_\psi(\vec a)
\geq 1 - 2 \sum_{\vec a} p_\psi(\vec a) \left( 1 - 2^n p_\psi(\vec a) \right)
= 1 - 4\left(1 - p_\text{accept}\right).
\end{equation}
The choice of threshold $1/2$ in the definition of $M_0$ ensures that the Weyl operators corresponding to any two points $\vec a,\vec b\in M_0$ commute.
To see, we use that any pair of \emph{anti}commuting $W_{\vec a},W_{\vec b}$ can by a base change be mapped onto the Pauli operators $X$,$Z$; it can then verified on the Bloch sphere that there exists no qubit state $\rho$ such that both $\tr[\rho X]^2>1/2$ and $\tr[\rho Z]^2>1/2$ (see \cref{fig:bloch plane} for a graphical proof).

Let us now extend the set $M_0$ to some maximal set $M$ such that the corresponding Weyl operators commute.
Then $M$ is automatically a Lagrangian subspace, of dimension~$n$.%
\footnote{It is natural to ask whether the subspace~$M$ is uniquely determined by $M_0$.
This is the case when, e.g., $p_\text{accept}>7/8$.
Indeed, in this case, \cref{eq:qubit markov} implies that
$2^{-n} \lvert M_0\rvert \geq \sum_{\vec a \in M_0} p_\psi(\vec a) > 1 - 4\left(1 - 7/8\right) = 1/2$, so $\lvert M_0\rvert > 2^{n-1}$.
It follows that~$M_0$ spans an $n$-dimensional subspace which is necessarily contained in, and hence equal to, $M$.}
As discussed in \cref{subsec:stabs}, it determines a whole basis of stabilizer states, $\{\ket{M,f}\}_f$.
Thus:
\begin{align*}
\max_S \lvert\braket{S|\psi}\rvert^2
&\geq \max_f \braket{M,f|\psi|M,f}
\geq \sum_f \braket{M,f|\psi|M,f}^2
= \tr\left[\Lambda_M[\psi]^2\right] \\
&= 2^{-2n} \sum_{\vec x,\vec y\in M} \tr\left[\psi W_{\vec x}^\dagger W_{\vec y} \psi (W_{\vec x}^\dagger W_{\vec y})^\dagger \right]
= 2^{-n} \sum_{\vec x\in M} \tr\left[\psi W_{\vec x} \psi W_{\vec x}^\dagger \right] \\
&= \sum_{\vec x\in M} p_\psi(\vec x)
\geq \sum_{\vec x\in M_0} p_\psi(\vec x)
\geq 1 - 4\left(1 - p_\text{accept}\right)
\end{align*}
where we used \cref{eq:measurement in stabilizer basis} for the measurement $\Lambda_M$ in the stabilizer basis; the last bound is \cref{eq:qubit markov}.
In particular, if $\max_S \lvert\braket{S|\psi}\rvert^2 \leq 1-\eps^2$ then $p_\text{accept} \leq 1-\eps^2/4$.
\end{proof}

Our theorem has the following consequence for quantum property testing, resolving an open question first raised by Montanaro and de Wolf~\cite[Question 7]{montanaro2013survey}.

\begin{cor}
  Let $\psi$ be a pure state of $n$ qubits and let $\eps>0$.
  Then there exists a quantum algorithm that, given $O(1/\eps^2)$ copies of $\psi$, accepts any stabilizer state (it is \emph{perfectly complete}), while it rejects states such that $\max_S \lvert\braket{S|\psi}\rvert^2\leq1-\eps^2$ with probability at least $2/3$.
\end{cor}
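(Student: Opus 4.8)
The plan is to amplify \cref{algo:qubits} by independent sequential repetition. \Cref{thm:main qubits} guarantees that a single run of \cref{algo:qubits} consumes six copies of $\psi$, accepts every stabilizer state with certainty, and accepts any $\psi$ with $\max_S\lvert\braket{S|\psi}\rvert^2\leq1-\eps^2$ with probability at most $1-\eps^2/4$. The algorithm I would propose is therefore: run \cref{algo:qubits} independently $k$ times, each time on a fresh block of six copies of $\psi$, and accept if and only if all $k$ runs accept.

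Perfect completeness is immediate: if $\psi$ is a stabilizer state, each of the $k$ runs accepts with certainty, hence so does the overall test. For soundness, suppose $\max_S\lvert\braket{S|\psi}\rvert^2\leq1-\eps^2$. Since the runs act on disjoint blocks of the i.i.d.\ copies, their outcomes are independent, and each accepts with probability at most $1-\eps^2/4$; hence the probability that all $k$ runs accept is at most $(1-\eps^2/4)^k\leq e^{-k\eps^2/4}$. Choosing $k=\lceil 4\ln 3/\eps^2\rceil$ makes this at most $1/3$, so the test rejects with probability at least $2/3$. The total number of copies used is $6k=O(1/\eps^2)$.

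I do not expect any genuine obstacle here, since this is a routine success-probability amplification; the only points worth double-checking are the legitimacy of treating the runs as independent (which holds because the copies are i.i.d.\ and the blocks are disjoint) and the bookkeeping of constants, where any $k=\Theta(1/\eps^2)$ with a large enough constant suffices. One could alternatively phrase the soundness step via the equivalent bound \cref{eq:saner quantitative bound} if a cleaner constant in the final estimate is desired.
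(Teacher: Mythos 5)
Your amplification argument is correct and is exactly the (implicit) route the paper takes: the corollary is stated as a direct consequence of \cref{thm:main qubits}, obtained by running \cref{algo:qubits} on $k=\Theta(1/\eps^2)$ disjoint blocks of six copies and accepting only if all runs accept. The independence of the runs and the constant bookkeeping are handled correctly, so there is nothing to add.
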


Before our result, the best known algorithms required a number of copies that scaled linearly with $n$, the number of qubits.
Indeed, these algorithms proceeded by attempting to \emph{identify} the stabilizer state, which requires $\Omega(n)$ copies by the Holevo bound~\cite{aaronson2008identifying,montanaro2017learning,zhao2016fast}.
Moreover, our algorithm is manifestly efficient (see the circuit in \cref{fig:circuit qubits}).

\begin{rem}
For multi-qubit states $\psi$ that are \emph{real} in the computational basis, we can replace step~1 of the algorithm by a single Bell sampling, which in this case directly samples from the characteristic distribution $p_\psi$ (see \cref{eq:bell sampling}).
The resulting algorithm operators on \emph{four} copies of $\psi$ and achieves the same guarantees as \cref{thm:bell difference sampling qubits}.
\end{rem}

\begin{rem}
The scaling in \cref{thm:bell difference sampling qubits} is optimal.
Indeed, it is known that distinguishing any fixed pair of states $\ket\psi,\ket\phi$ with $\lvert\braket{\psi|\phi}\rvert^2=1-\eps^2$ requires $\Omega(1/\eps^2)$ copies~\cite{montanaro2013survey}.
In particular, this lower bound holds if we choose $\ket\psi$ to be a stabilizer state and $\ket\phi$ a state that is $\eps$-far away from being a stabilizer state, in which case our \cref{algo:qubits} is applicable.
\end{rem}

\begin{rem}[Clifford testing]
It follows from \cref{thm:main qubits} that we can also test whether a given unitary $U$ is in the Clifford group or not (without given access to $U^\dagger$).
This resolves another open question in the survey of Montanaro and de Wolf~\cite[Question 9]{montanaro2013survey}.

Indeed, given black-box access to $U$ alone we can create the Choi state $\ket{U} \coloneqq (U \ot I) \ket{\Phi^+}$, which is a stabilizer state if and only if $U$ is a Clifford unitary.
Moreover, the ``average case'' distance measure used in the literature for quantum property testing of unitaries is precisely equal to trace distance between the corresponding Choi states~\cite[Section~5.1.1]{montanaro2013survey}.
Thus, by first creating the Choi state and then running our \cref{algo:qubits} we can efficiently test whether a given unitary $U$ is a Clifford unitary.
\end{rem}

It is instructive to write down the accepting POVM element for \cref{algo:qubits}.
From \cref{eq:Pi_a,eq:same outcome}, we find that it is given by
\begin{align}\label{eq:accepting povm}
\Pi_\text{accept}
= \sum_{\vec a} \Pi_{\vec a} \ot \frac {I + W_{\vec a} \ot W_{\vec a}^\dagger} 2
= \frac12 \left( I + U \right),
\end{align}
where we have introduced the unitary
\begin{align*}
U
= \frac1{2^{2n}} \sum_{\vec x,\vec a\in\mathcal V_n} (-1)^{[\vec a,\vec x]} W_{\vec x}^{\ot 4} \ot W_{\vec a}^{\ot 2}
= \biggl( \underbrace{\frac14 \sum_{\vec x,\vec a\in\mathcal V_1} (-1)^{[\vec a,\vec x]} W_{\vec x}^{\ot 4} \ot W_{\vec a}^{\ot 2}}_{=: u} \biggr)^{\ot n}.
\end{align*}
It is easy to verify that $U=u^{\ot n}$ is a Clifford unitary acting on the space $\mathcal H_n^{\ot 6} \cong \mathcal H_{6n}$ of $6n$ qubits.

For any pure state $\psi$, $\psi^{\ot n}$ is in the symmetric subspace, and so invariant under left and right-multiplication by permutations.
In particular, we obtain a test of the same goodness as \cref{thm:main qubits} if we replace $U$ by $V=U (I^{\ot 4} \ot \FF)$, where $\FF=R((1 2))$ denotes the operator that swaps (or flips) two blocks of $n$~qubits.
Since $\FF=2^{-n} \sum_{\vec b} W_{\vec b}^{\ot2}$,%
we obtain the formula
\begin{equation}\label{eq:V for qubits}
\begin{aligned}
  V
&= 2^{-3n} \sum_{\vec x,\vec a,\vec b} (-1)^{[\vec a,\vec x]} W_{\vec x}^{\ot 4} \ot (W_{\vec a} W_{\vec b})^{\ot 2}
= 2^{-3n} \sum_{\vec x,\vec a,\vec b} (-1)^{[\vec a,\vec x+\vec b]} W_{\vec x}^{\ot 4} \ot W_{\vec a+\vec b \bmod 2}^{\ot 2} \\
&= 2^{-3n} \sum_{\vec x,\vec a,\vec b} (-1)^{[\vec a,\vec x+\vec b]} W_{\vec x}^{\ot 4} \ot W_{\vec b}^{\ot 2}
= 2^{-n} \sum_{\vec x\in\mathcal V_n} W_{\vec x}^{\ot 6}
= \biggl( \underbrace{\frac12 \sum_{\vec x\in\mathcal V_1} W_{\vec x}^{\ot 6}}_{=: v} \biggr)^{\ot n}.
\end{aligned}
\end{equation}
Thus, we recognize that the unitary~$V=v^{\ot n}$ is precisely the action of the \emph{anti-identity}~\eqref{eq:anti identity} described in the introduction (for $t=6$):
\begin{align}\label{eq:anti identity six}
  V = R(\bar\id) = 2^{-n} \left( I^{\ot6} + X^{\ot6} + Y^{\ot6} + Z^{\ot6} \right)^{\otimes n}
\end{align}
See also \cref{rem:qubits four vs six}.
We discuss anti-permutations in more detail in \cref{dfn:anti-permutation}.

\Cref{eq:V for qubits} allows us to express the acceptance probability of \cref{algo:qubits} in an interesting way:
\begin{align}
\nonumber
p_\text{accept}
&= \tr\left[\psi^{\ot 6} \Pi_\text{accept}\right]
= \frac12\left( 1 + \tr\left[\psi^{\ot6} U\right] \right)
= \frac12\left( 1 + \tr\left[\psi^{\ot6} V\right] \right) \\
\nonumber
&= \frac12\left( 1 + 2^{-n} \sum_{\vec x} \tr\left[\psi^{\ot6} W_{\vec x}^{\ot6} \right] \right)
= \frac12\left( 1 + 2^{2n} \sum_{\vec x} c_\psi(\vec x)^6 \right)\\
\nonumber
&= \frac12\left( 1 + 2^{2n} \lVert c_\psi \rVert_{\ell_6}^6 \right)
= \frac12\left( 1 + 2^{2n} \lVert p_\psi \rVert_{\ell_3}^3 \right) \\
\label{eq:better qubit acceptance}
&= \frac12\left( 1 + 2^{2n} \sum_{\vec x} p_\psi(\vec x)^3 \right)
= \sum_{\vec x} p_\psi(\vec x) \frac12\left( 1 + 2^{2n} p_\psi(\vec x)^2 \right).
\end{align}
It is intuitive that the $\ell_p$-norms should appear, since stabilizer states can be characterized by having a maximally peaked characteristic function and distribution (\cref{eq:characteristic function stabilizer,eq:characteristic distribution stabilizer}).

In fact, the result of this calculation is plainly a strengthening of \cref{eq:qubit acceptance}, since $2^n p_\psi(\vec x)\leq1$.
If we follow the rest of the proof of \cref{thm:main qubits} then we obtain $p_\text{accept}\leq1-3\eps^2/8$, a slight improvement.
More importantly, though, this argument completely avoids the analysis of Bell difference sampling in \cref{thm:bell difference sampling qubits}.
This leads us towards an approach for testing general qu$d$it stabilizer states.

\subsection{Qudit stabilizer testing}\label{subsec:qudit stabilizer testing}

While Bell sampling can only be used for qubit systems, \cref{eq:V for qubits} has a clear generalization to arbitrary qu$d$its.
Let $d\geq2$ and consider the operator
\begin{align}\label{eq:V for qudits}
  V_s = d^{-n} \sum_{\vec x} (W_{\vec x} \ot W_{\vec x}^\dagger)^{\ot s}.
\end{align}
(For qubits, the Weyl operators are Hermitian and so $V_3$ is precisely \cref{eq:V for qubits}.)
Suppose we choose $s$ such that $V_s$ is a Hermitian unitary (we will momentarily see that this can always be done).
Then
\begin{align*}
\Pi_{s,\text{accept}} = \frac12(I+V_s)
\end{align*}
is a projection.
If we think of it as the accepting element of a binary POVM then
\begin{equation}\label{eq:qudit acceptance}
\begin{aligned}
p_\text{accept}
&= \tr[\psi^{\ot 2s} \Pi_{s,\text{accept}}]
= \frac12 \left( 1 + \tr[\psi^{\ot s} V_s] \right)
= \frac12 \left( 1 + d^{-n} \sum_{\vec x} \lvert\tr[\psi W_{\vec x}]\rvert^{2s} \right) \\
&= \frac12 \left( 1 + d^{(s-1)n} \sum_{\vec x} p^s_\psi(\vec x) \right)
= \sum_{\vec x} p_\psi(\vec x) \frac12 \left( 1 + d^{(s-1)n} p^{s-1}_\psi(\vec x) \right),
\end{aligned}
\end{equation}
which generalizes \cref{eq:better qubit acceptance}.

When is $V_s$ Hermitian and unitary?
It is always Hermitian, since $W_{\vec x} \ot W_{\vec x}^\dagger$ only depends on $\vec x$ modulo~$d$.
For unitarity we use \cref{eq:projective group law} and calculate
\begin{align*}
  V_s^2
&= d^{-2n} \sum_{\vec x,\vec y} (W_{\vec x} W_{\vec y} \ot W_{\vec x}^\dagger W_{\vec y}^\dagger)^{\ot s}
= d^{-2n} \sum_{\vec x,\vec y} \omega^{s[\vec x,\vec y]} (W_{\vec x + \vec y} \ot W_{-(\vec x+\vec y)})^{\ot s} \\
&= d^{-2n} \sum_{\vec x,\vec y} \omega^{s[\vec x,\vec y]} (W_{\vec x + \vec y \bmod d} \ot W_{-(\vec x+\vec y \bmod d)})^{\ot s}
= d^{-2n} \sum_{\vec z} \left( \sum_{\vec x} \omega^{s[\vec x,\vec z]} \right) (W_{\vec z} \ot W_{-\vec z})^{\ot s}.
\end{align*}
If $s$ is invertible modulo $d$ then $\omega^{s[-,\vec z]}$ is a nontrivial character for all $\vec z$, and so the inner sum simplifies to $d^{2n} \delta_{\vec z,\vec0}$.
It follows that $V_s^2 = I$, as desired.
We summarize:

\begin{lem}\label{lem:hermitian unitary}
Let $d\geq2$ and $s$ an integer that is invertible modulo $d$ (i.e., $(s,d)=1$).
Then $V_s$ is a Hermitian unitary.
\end{lem}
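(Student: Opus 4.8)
The plan is to verify the two assertions separately, both of which are essentially assembled from the displayed computations immediately preceding the statement together with the Weyl group law~\eqref{eq:projective group law}.

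For Hermiticity, I would argue as follows. Each summand $(W_{\vec x}\ot W_{\vec x}^\dagger)^{\ot s}$ of $V_s$ satisfies $\bigl((W_{\vec x}\ot W_{\vec x}^\dagger)^{\ot s}\bigr)^\dagger=(W_{\vec x}^\dagger\ot W_{\vec x})^{\ot s}$, and since $W_{-\vec x}$ and $W_{\vec x}^\dagger=W_{\vec x}^{-1}$ differ only by a phase that cancels in the tensor product, $W_{\vec x}^\dagger\ot W_{\vec x}=W_{-\vec x}\ot W_{-\vec x}^\dagger$. Re-indexing the sum by $\vec x\mapsto-\vec x\bmod d$ is legitimate because $W_{\vec x}\ot W_{\vec x}^\dagger$ depends only on $\vec x$ modulo~$d$ (the sign in~\eqref{eq:weyl operator mod d} appears on both tensor factors and squares to one), so the whole sum is invariant and $V_s^\dagger=V_s$.

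For unitarity, since $V_s$ is Hermitian it suffices to show $V_s^2=I$, which is exactly the computation carried out above: expand $V_s^2$ as a double sum over $\vec x,\vec y$, apply~\eqref{eq:projective group law} to combine $W_{\vec x}W_{\vec y}$ and $W_{\vec x}^\dagger W_{\vec y}^\dagger$, collect all the $\tau$-phases into a single factor $\omega^{s[\vec x,\vec y]}$, substitute $\vec z=\vec x+\vec y\bmod d$, and sum over $\vec x$ to obtain the character sum $\sum_{\vec x\in\mathcal V_n}\omega^{s[\vec x,\vec z]}$. The one point that genuinely requires the hypothesis $(s,d)=1$ — and the step I would be most careful about — is the claim that this character sum equals $d^{2n}\delta_{\vec z,\vec 0}$: because $s$ is a unit modulo~$d$, the $\ZZ_d$-linear functional $\vec x\mapsto s[\vec x,\vec z]$ has the same kernel as $\vec x\mapsto[\vec x,\vec z]$, and nondegeneracy of the symplectic form modulo~$d$ (valid for every $d$, even and odd, since the form is unimodular) forces this functional to be nontrivial unless $\vec z\equiv\vec 0\pmod d$; nontrivial $\ZZ_d$-characters sum to zero. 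Only the $\vec z=\vec 0$ term survives, contributing $d^{-2n}\cdot d^{2n}\,(W_{\vec 0}\ot W_{\vec 0}^\dagger)^{\ot s}=I$, which completes the argument. Everything else is bookkeeping of phases already done in the preceding display.
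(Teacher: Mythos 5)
Your proposal is correct and follows essentially the same route as the paper: Hermiticity via the fact that $W_{\vec x}\ot W_{\vec x}^\dagger$ depends only on $\vec x$ modulo $d$ (you just spell out the sign cancellation and re-indexing that the paper leaves implicit), and unitarity via the identical computation of $V_s^2$ using \cref{eq:projective group law} and the vanishing of the nontrivial character sum $\sum_{\vec x}\omega^{s[\vec x,\vec z]}$ for $\vec z\neq\vec 0$, which is exactly where the hypothesis $(s,d)=1$ enters in the paper as well. No gaps.
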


\begin{rem}[Qubits]\label{rem:qubits four vs six}
For qubits, the operator $V_s$ is a Hermitian unitary if and only if $s$ is odd.
E.g., for $s=1$ it is the unitary swap operator $\FF$ and for $s=3$ it is precisely \cref{eq:V for qubits} (the anti-identity), while for $s=2$ it is \emph{not} unitary but in fact \emph{proportional} to one of the POVM elements from Bell difference sampling.
Indeed, $V_2=2^n \Pi_0$ where $\Pi_0$ is the projection from \cref{eq:projection Pi_0}.
Thus $\lVert V_2\rVert=2^n$ and so we \emph{cannot} interpret the associated $\Pi_2$ as a POVM element.
This already partly explains why we had to resort to six copies to test stabilizerness.
\end{rem}

The second ingredient used to establish \cref{thm:main qubits} was an uncertainty principle for Weyl operators.
The following lemma supplies this for general $d$:

\restateLemQuditUncertainty
\begin{proof}
  Note that
  \[ \lVert W_{\vec x} \ket \psi - \ket\psi \braket{\psi | W_{\vec x} | \psi} \rVert < \delta \]
  and likewise for $W_{\vec y}$.
  By the triangle inequality,
  \begin{align*}
  &\quad \lVert W_{\vec x} W_{\vec y} \ket \psi - \ket\psi \braket{\psi | W_{\vec x} | \psi} \braket{\psi | W_{\vec y} | \psi} \rVert \\
  &\leq \lVert W_{\vec x} W_{\vec y} \ket \psi - W_{\vec x}\ket\psi \braket{\psi | W_{\vec y} | \psi} \rVert
  + \lVert W_{\vec x}\ket\psi \braket{\psi | W_{\vec y} | \psi} - \ket\psi \braket{\psi | W_{\vec x} | \psi} \braket{\psi | W_{\vec y} | \psi} \rVert \\
  &\leq \lVert W_{\vec x} \rVert  \lVert W_{\vec y} \ket \psi - \ket\psi \braket{\psi | W_{\vec y} | \psi} \rVert
  + \lVert W_{\vec x}\ket\psi - \ket\psi \braket{\psi | W_{\vec x} | \psi} \rVert  \braket{\psi | W_{\vec y} | \psi} \\
  &\leq \lVert W_{\vec y} \ket \psi - \ket\psi \braket{\psi | W_{\vec y} | \psi} \rVert
  + \lVert W_{\vec x}\ket\psi - \ket\psi \braket{\psi | W_{\vec x} | \psi} \rVert
  < 2\delta,
  \end{align*}
  but also
  \begin{align*}
  &\quad \lVert W_{\vec x} W_{\vec y} \ket \psi - \omega^{[x,y]} \ket\psi \braket{\psi | W_{\vec x} | \psi} \braket{\psi | W_{\vec y} | \psi} \rVert \\
  &= \lVert \omega^{[x,y]} W_{\vec y} W_{\vec x} \ket \psi - \omega^{[x,y]} \ket\psi \braket{\psi | W_{\vec y} | \psi} \braket{\psi | W_{\vec x} | \psi} \rVert
  < 2\delta.
  \end{align*}
  If we combine this with another triangle inequality, we obtain that
  \begin{align*}
  \lvert 1 - \omega^{[x,y]} \rvert = \lVert \omega^{[x,y]} \ket\psi - \ket\psi\rVert
  < \frac {4\delta}{\braket{\psi | W_{\vec x} | \psi} \braket{\psi | W_{\vec y} | \psi}}
  < \frac {4\delta}{1 - \delta^2}.
  \end{align*}
  Now suppose that $W_{\vec x}$ and $W_{\vec y}$ do not commute.
  Then $[x,y]\neq0$ and so
  \begin{align*}
    \lvert 1 - \omega^{[x,y]} \rvert
  \geq \lvert 1 - \omega \rvert
  = 2\sin(\pi/d)
  \geq \frac 4 d.
  \end{align*}
  Thus, $4/d < 4\delta/(1-\delta^2)$, which plainly contradicts our choice of $\delta$.
  This is the desired contradiction and we conclude that $W_{\vec x}$ and $W_{\vec y}$ commute.
\end{proof}

We now show that stabilizer testing can be done in arbitrary local dimension:

\restateThmMainQudits
\begin{proof}
If $\psi$ is a stabilizer state, say $\ket\psi=\ket{M,f}$, then $p_\psi(\vec x)$ is the uniform distribution on $M$, which has $d^n$ elements.
In view of \cref{eq:qudit acceptance},
\begin{align*}
  p_\text{accept}
= \sum_{\vec x} p_\psi(\vec x) \frac12 \left( 1 + d^{(s-1)n} p^{s-1}_\psi(\vec x) \right) = 1,
\end{align*}
so the test accepts with certainty.

Now suppose that $\psi$ is a general state.
Define
\begin{align*}
  M_0 \coloneqq \{ \vec x \in \mathcal V_n : d^n p_\psi(\vec x) > 1 - 1/4d^2 \}.
\end{align*}
By \cref{lem:qudit uncertainty}, the Weyl operators $W_{\vec x}$ for $\vec x\in M_0$ all commute.
We can thus extend $M_0$ to a maximal set $M$ with this property.
As in the proof of \cref{thm:main qubits}, we can bound
\begin{align*}
  \max_S \lvert\braket{S|\psi}\rvert^2 \geq \sum_{\vec x\in M_0} p_\psi(\vec x).
\end{align*}
But this probability can be bounded as before using the Markov inequality (but now for a $(s-1)$st moment):
\begin{align*}
\sum_{\vec x\in M_0} p_\psi(\vec x)
&= 1 - \smashoperator{\sum_{d^n p_\psi(\vec x) \leq 1 - 1/4d^2}} p_\psi(\vec x)
\geq 1 - \frac {\sum p_\psi(\vec x) \left( 1 - d^{(s-1)n} p_\psi^{s-1}(\vec x) \right)} {1 - (1 - 1/4d^2)^{s-1}} \\
&= 1 - \frac 2 {1 - (1 - 1/4d^2)^{s-1}} \left(1 - p_\text{accept}\right).
\end{align*}
The last equality is \cref{eq:qudit acceptance}.
This yields the desired bound.
\end{proof}

\begin{rem}
It is clear that $s=d+1$ is always a valid choice in \cref{thm:main qudits}.
This leads to $C_{d,s} \approx 1/8d$ for large $d$, but the resulting test involves gates that act on $2d+2$ qu$d$its at a time.
However, this choice of $s$ is in general rather pessimistic.
E.g., if $d$ is odd then we may always choose $s=2$, meaning that our test acts on four copies at a time.
\end{rem}

\begin{cor}
  Let $d\geq2$ and fix $s$ as in \cref{thm:main qudits}.
  Let $\psi$ be a pure state of $n$ qu$d$its and let $\eps>0$.
  Then there exists an quantum algorithm that, given $O(1/C_{d,s}\eps^2)$ copies of $\psi$, accepts any stabilizer state (it is \emph{perfectly complete}), while it rejects states such that $\max_S \lvert\braket{S|\psi}\rvert^2\leq1-\eps^2$ with probability at least $2/3$.
\end{cor}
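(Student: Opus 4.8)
This corollary is a routine amplification of the single-shot test of \cref{thm:main qudits}, and I would prove it as follows. Fix $s$ as in \cref{thm:main qudits}, so that $s$ depends only on~$d$ and the binary measurement $\{\Pi_{s,\text{accept}},\,I-\Pi_{s,\text{accept}}\}$ on $2s$ copies of~$\psi$ accepts every stabilizer state with certainty, while it accepts any~$\psi$ with $\max_S\lvert\braket{S|\psi}\rvert^2\le 1-\eps^2$ with probability at most $1-C_{d,s}\eps^2$. The algorithm runs this single-shot test $k$ times, each time on a fresh, independent batch of $2s$ copies of~$\psi$, and accepts if and only if all $k$ runs accept.

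\emph{Perfect completeness} is immediate: if $\psi$ is a stabilizer state, each of the $k$ runs accepts with probability one, hence so does their conjunction. For \emph{soundness}, suppose $\max_S\lvert\braket{S|\psi}\rvert^2\le 1-\eps^2$. Because the $k$ runs act on disjoint copies, the corresponding acceptance events are independent, so the probability that all of them accept is at most $(1-C_{d,s}\eps^2)^k\le e^{-kC_{d,s}\eps^2}$, which is at most $1/3$ as soon as $k\ge\lceil\ln 3/(C_{d,s}\eps^2)\rceil$. Choosing $k$ to be exactly this value, the total number of copies consumed is $2sk=O(1/(C_{d,s}\eps^2))$, since $s$ is a function of~$d$ alone.

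It remains to note that the procedure is implemented by an efficient quantum circuit. By \cref{lem:hermitian unitary}, $V_s$ is a Hermitian unitary, and by \cref{eq:V for qudits} it has the tensor-power form $V_s=v^{\ot n}$ for a fixed unitary~$v$ acting on $2s$ qudits. Hence the measurement $\Pi_{s,\text{accept}}=\tfrac12(I+V_s)$ can be realized by a Hadamard test: adjoin an ancilla qubit prepared in $\ket{+}$, apply the controlled-$V_s$ gate, and measure the ancilla in the Hadamard basis; the acceptance probability is then $\tr[\psi^{\ot 2s}\Pi_{s,\text{accept}}]$, as desired. Since the controlled-$V_s$ gate factors as a product $\prod_{j=1}^n$ of controlled-$v$ gates sharing the ancilla as control, each acting on only $2s+1=O(1)$ qudits, one run uses $O(n)$ elementary gates, and the full algorithm uses $O(n/(C_{d,s}\eps^2))$ gates. (For $d=2$, $s=3$ this reproduces the explicit circuit of \cref{fig:circuit qubits}.)

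There is essentially no technical obstacle here; the only points deserving (minor) care are that the $k$ repetitions must consume fresh, mutually independent copies of~$\psi$ so that their acceptance probabilities multiply, and the observation---already implicit in \cref{eq:V for qudits}---that the single-shot measurement inherits the tensor-power structure $V_s=v^{\ot n}$ and is therefore implementable with a circuit of size linear in~$n$.
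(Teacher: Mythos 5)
Your proposal is correct and matches the paper's (largely implicit) argument: the corollary is obtained by independent repetition of the single-shot test of \cref{thm:main qudits} on fresh batches of $2s$ copies, accepting only if every run accepts, and the efficiency claim follows exactly as in the paper from the tensor-power structure $V_s=v_s^{\ot n}$ (the paper phrases the implementation via phase estimation rather than a Hadamard test, but for a two-outcome measurement with respect to the eigenspaces of a Hermitian unitary these are the same construction). No gaps.
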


It is clear that the POVM measurement $\{\Pi_{s,\text{accept}},I-\Pi_{s,\text{accept}}\}$ can be implemented efficiently.
Using phase estimation, it suffices to argue that the controlled version of $V_s$ can be implemented efficiently.
But $V_s = v_s^{\ot n}$, so its controlled version is equal to a composition of $n$ controlled versions of $v_s$, each of which acts only on a constant (with respect to $n$) number of qu$d$its.
It follows that our stabilizer test for qu$d$its is efficient.

\medskip

It is instructive to compute the action of the unitary $V_s=v_s^{\ot n}$ more explicitly:
Let $\ket{\vec x}=\ket{\vec x_1,\dots,\vec x_{2s}}$ denote a computational basis vector of $\mathcal H_n^{\ot 2s}$.
Then, using \cref{eq:weyl operator},
\begin{align*}
V_s \ket{\vec x}
&= d^{-n} \sum_{\vec a\in\mathcal V_n} (W_{\vec a} \ot W_{\vec a}^\dagger)^{\otimes s} \ket{\vec x}
= d^{-n} \smashoperator{\sum_{\vec p,\vec q\in\ZZ_d^n}} \omega^{\vec p\cdot(s \vec q + \vec x_1 - \vec x_2 + \dots - \vec x_{2s})} \ket{\vec x_1 + \vec q, \vec x_2 - \vec q,\dots} \\
&= d^{-n} \smashoperator{\sum_{\vec p,\vec q\in\ZZ_d^n}} \omega^{s \vec p\cdot(\vec q + \bar{\vec x}_\text{odd} - \bar{\vec x}_\text{even})} \ket{\vec x_1 + \vec q, \vec x_2 - \vec q,\dots}
= \ket{\vec x_1 - \bar{\vec x}_\text{odd} + \bar{\vec x}_\text{even}, \vec x_2 + \bar{\vec x}_\text{odd} - \bar{\vec x}_\text{even},\dots},
\end{align*}
where $\bar{\vec x}_\text{even} = s^{-1} \sum_{k\text{ even}} \vec x_k$ and $\bar x_\text{odd}$ is defined analogously.
If we re-order the tensor factors so that the odd systems come first, followed by the even ones, we find that a basis vector $\ket{\vec x_\text{odd},\vec x_\text{even}}$ is mapped to $\ket{\vec x_\text{odd} - \bar{\vec x}_\text{odd} + \bar{\vec x}_\text{even},\vec x_\text{even} + \bar{\vec x}_\text{odd} - \bar{\vec x}_\text{even}}$.
Thus, $V_s$ is a unitary that permutes the computational basis vectors by ``swapping the mean'' of the even and the odd sites of the $2s$ many blocks of $n$ qu$d$its.

Here is one last reformulation that will be useful to connect to our algebraic results.
Let $\vec p_{2s} = (-1,1,\dots,-1,1) \in \ZZ_d^{2s}$ denote the `parity vector' that is $\pm1$ on even/odd sites, and consider the following $2s \times 2s$ matrix with entries in $\ZZ_d$:
\begin{equation}\label{eq:orthogonal and stochastic}
  \tilde\id =\id - s^{-1} \vec p_{2s} \vec p_{2s}^T
\end{equation}
Then we can write the action of $V_s$ as
\begin{align}\label{eq:V_s as R(T)}
  V_s \ket{\vec x} = \ket{\tilde\id (\vec x_1,\dots,\vec x_{2s})} = \ket{(\tilde\id \ot I_n)\vec x}.
\end{align}
It is easy to verify that $\tilde\id$ is a stochastic isometry (cf.~\cref{eq:antiperm 2} in \cref{subsec:examples}).
For qubits and $s=3$, $\tilde\id$ is just the matrix obtained by taking the $6\times6$ identity matrix and inverting each bit (the `anti-identity').
This gives a pleasant and insightful interpretation of \cref{eq:V for qubits}, as we will see in \cref{subsec:stabilizer testing revisited}.
Interestingly, the anti-identity has previously appeared in the classification of Clifford gates in~\cite{grier2016classification} (their $T_6$).

\section{Algebraic theory of Clifford tensor powers}\label{sec:algebra}
In this section, we present a general framework for studying the algebraic structure of stabilizer states and Clifford operators.
We start by describing the commutant of the tensor powers of the Clifford group, where we obtain results similar in flavor to the Schur-Weyl duality between the unitary group and the symmetric group.
Next, we apply this machinery to compute arbitrary moments of qu$d$it stabilizer states, and we describe how to construct $t$-designs of arbitrary order from weighted Clifford orbits.
Lastly, we return to the stabilizer testing problem and explain how our solution from \cref{sec:stabilizer testing} can be understood more systematically and generalized.
In particular, we find an optimal projection that characterizes the tensor powers of stabilizer states precisely.

Throughout this section we assume that $d$ is prime.

\subsection{Commutant of Clifford tensor powers}\label{sec:commcliff}
Schur-Weyl duality in its most fundamental form asserts that any operator on $(\CC^D)^{\ot t}$ that commutes with $U^{\ot t}$ for all unitaries $U\in U(D)$ is necessarily a linear combination of permutation operators.
Using the double commutant theorem, this implies at once that $(\CC^d)^{\ot t} = \bigoplus_\lambda V_{U(D),\lambda} \ot V_{S_t,\lambda}$, where the $V_{U(D),\lambda}$ and $V_{S_t,\lambda}$ are pairwise inequivalent irreducible representations of the unitary group $U(D)$ and of the symmetric group $S_t$, respectively.

The main result of this section is that the commutant of the tensor powers of the Clifford group can be completely described in terms of a natural generalization of permutation operators (see \cref{thm:commutant} below).
Mathematically, this generalization involves Lagrangian subspaces of a space equipped with a quadratic form.
Since stabilizer states can be described in terms of Lagrangian subspaces with respect to a symplectic form (\cref{sec:preliminaries}), this is reminiscent of Howe's classical duality between sympletic and orthogonal group actions.

To describe the result more precisely, let $T$ denote a subspace of $\ZZ_d^t \op \ZZ_d^t$.
We define a corresponding operator
\begin{align*}
  r(T) = \sum_{(\vec x,\vec y)\in T} \ket{\vec x}\bra{\vec y}
\end{align*}
on $(\CC^d)^{\ot t}$, where $\ket{\vec x} = \ket{x_1,x_2,\dots,x_t} \in (\CC^d)^{\ot t}$ denotes the computational basis vector associated with some $\vec x\in\ZZ_d^t$.
We also consider the $n$-fold tensor power
\begin{align*}
  R(T)\coloneqq r(T)^{\ot n},
\end{align*}
 which is an operator on
$((\CC^d)^{\ot t})^{\ot n} \cong (\CC^d)^{\ot t n} \cong ((\CC^d)^{\ot n})^{\ot t}.$
Both $r(T)$ and $R(T)$ are represented by real matrices in the computational basis.

\restateDfnLagrangianStochastic

\noindent See~\cite[App.~C]{nezami2016multipartite} for a complete list of the subspaces $\Sigma_{t,t}(d)$ for $t=3$, and \cref{subsec:examples} for examples.

In \cref{lem:commutant}, we will show that the operators~$R(T)$ are indeed in the commutant of $\Cliff(n,d)^{\ot t}$.
The proof is straightforward and elucidates the role of the three conditions in \cref{dfn:lagrangian stochastic} as well as the difference between even and odd~$d$.

\begin{rem}\label{rem:bilinear form}
Recall that a subspace $T$ is called \emph{totally isotropic} with respect to a quadratic form~$\mathfrak q$ if $\mathfrak q(\vec v) = 0$ for every $\vec v\in T$.
This explain our terminology in \cref{dfn:lagrangian stochastic}.

We can also consider the $\ZZ_d$-valued bilinear form $\mathfrak b((\vec x, \vec y),(\vec x',\vec y')) \coloneqq \vec x\cdot\vec x' - \vec y\cdot\vec y' \in \ZZ_d$.
By a straightforward calculation,
\begin{align}\label{eq:bilinear vs quadratic form}
  \mathfrak q(\vec v + \vec w) = \mathfrak q(\vec v) + \mathfrak q(\vec w) + 2 \mathfrak b(\vec v, \vec w) \pmod D
\end{align}
for all $\vec v$, $\vec w\in \ZZ_d^{2t}$.
Thus, $\mathfrak q$ is a $\ZZ_D$-valued quadratic form associated to the $\ZZ_d$-bilinear form~$\mathfrak b$ in the sense of~\cite{wood1993witt}.
Note that if $T$ is totally isotropic with respect to $\mathfrak q$ then \cref{eq:bilinear vs quadratic form} shows that $T$ is self-orthogonal, i.e., $T \subseteq T^\perp$, where
\begin{align*}
  T^\perp \coloneqq \{ \vec v \in \ZZ_d^{2t} \;:\; \mathfrak b(\vec v,\vec w) = 0 \quad \forall \vec w\in T \}.
\end{align*}
If $d$ is odd then $\mathfrak q(\vec v)=\mathfrak b(\vec v,\vec v)$, so any self-orthogonal subspace is automatically totally isotropic with respect to~$\mathfrak q$.

If $d=2$ then \cref{eq:bilinear vs quadratic form} implies that, for a self-orthogonal subspace, the set of isotropic vectors forms a subspace -- so we can check total isotropicity on a basis.
Moreover, for $d=2$, if $T$ is Lagrangian then it is automatically stochastic;
indeed, $\mathfrak b(\vec v, \vec 1_{2t}) = \mathfrak q(\vec v) \pmod 2$, so $\vec 1_{2t}$ is contained in any \emph{maximal} totally isotropic subspace.
\end{rem}

Our goal of this section it to prove the following theorem:

\restateThmCommutant

It is instructive to discuss a few key features of \cref{thm:commutant}.
First, we know that the permutation group on $t$ elements, $S_t$, is in the commutant of the Clifford group $\Cliff(n,d)$, because it is even in the commutant of the larger unitary group~$U(d^n)$.
Indeed, let $\pi \cdot \vec y = (y_{\pi^{-1}(1)},\dots,y_{\pi^{-1}(t)})$ denote the permutation action of~$S_t$ on $\ZZ_d^t$.
The one can see that, for any permutation $\pi\in S_t$, the subspace~$T_\pi = \{(\pi\cdot\vec y,\vec y) : \vec y\in\ZZ_d^t\}$ is Lagrangian and stochastic.
The corresponding operator $R(T_\pi)=r(T_\pi)^{\ot n}$ agrees precisely with the usual permutation action of $S_t$ on $((\CC^d)^{\ot n})^{\ot t}$. 
Accordingly, we may identify $S_t$ with a subset of $\Sigma_{t,t}(d)$.
We will see below in \cref{dfn:O_t} that the set of subspaces $T$ for which $R(T)$ is invertible forms a (in general, proper) subgroup that is (in general, strictly) larger than $S_t$.

Remarkably, \cref{thm:commutant} shows that the size of the commutant \emph{stabilizes} as soon as $n\geq t-1$.
That is, just like for the symmetric group in Schur-Weyl duality of $D$, the set $\Sigma_{t,t}(d)$ that parametrizes the commutant of the Clifford tensor powers is independent of~$n$, the number of qudits, provided that $n\geq t-1$.
This stabilization, along with the fact that the operators $R(T)=r(T)^{\ot n}$ are tensor powers, are highly useful properties in applications (e.g.,~\cite{nezami2016multipartite} and \cref{sec:moments,subsec:stabilizer testing revisited} below).

\begin{rem}
We believe that the results of Nebe et al~\cite{nebe2006self} show that the operators $R(T)$ span the commutant of $\Cliff(n,d)^{\ot t}$ for any value of $n$.
But we caution that if $n<t-1$ then the $R(T)$ are in general no longer linearly independent (e.g.,~\cite[eqs.~(9) and (10)]{zhu2015multiqubit}).
\end{rem}

\Cref{thm:commutant} will be established by combining a number of intermediate results of independent interest.
We first show that the operators~$R(T)$ are indeed in the commutant of $\Cliff(n,d)^{\ot t}$.

\begin{lem}\label{lem:commutant}
For every $T\in\Sigma_{t,t}(d)$ and for every $U\in\Cliff(n,d)$, we have that
$[R(T), U^{\ot t}] = [r(T)^{\ot n}, U^{\ot t}] = 0$.
\end{lem}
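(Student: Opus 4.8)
The plan is to reduce the statement, via a generating set of the Clifford group, to a short list of explicit one- and two-qudit computations, and then to dispatch each of them — with the three defining properties of $\Sigma_{t,t}(d)$ entering one at a time. The first observation is that the set of unitaries $U$ with $[R(T),U^{\ot t}]=0$ is a group (if $U,U'$ both commute with $R(T)$ then so do $UU'$ and $U^{-1}$, since $(\cdot)^{\ot t}$ is a homomorphism), and depends on $U$ only up to a global phase, since $[R(T),(e^{i\phi}U)^{\ot t}]=e^{it\phi}[R(T),U^{\ot t}]$. Hence it suffices to verify the claim on a generating set of $\Cliff(n,d)$ modulo phases. By \cref{lem:cliff basics} this quotient is $\ZZ_d^{2n}\rtimes\Sp(2n,d)$, and I will use the standard fact that it is generated by single-qudit operations — the Weyl operators $X_j,Z_j$, the Fourier gates $F_j$ and quadratic phase gates $P_j$ (the latter two generate $\Sp(2,d)\cong\mathrm{SL}(2,d)$ on each qudit) — together with the two-qudit gates $\mathrm{CNOT}_{jk}$.

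Next I would pass to the isomorphic picture $((\CC^d)^{\ot t})^{\ot n}$, in which $R(T)=r(T)^{\ot n}$ factorizes over the $n$ ``columns'', while $g^{\ot t}$ for a single-qudit gate $g$ on the $j$-th qudit acts on the $j$-th column only (as $g^{\ot t}$ on $(\CC^d)^{\ot t}$), and $\mathrm{CNOT}_{jk}^{\ot t}$ acts on columns $j$ and $k$ only. This is exactly where tensor-power structure of $R(T)$ is used. It then remains to prove, as operators on one or two copies of $(\CC^d)^{\ot t}$: (i) $[r(T),Z^{\ot t}]=[r(T),X^{\ot t}]=0$ (this handles all $W_b^{\ot t}$, since $W_b^{\ot t}$ is a phase times a product of powers of $Z^{\ot t}$ and $X^{\ot t}$); (ii) $[r(T),P^{\ot t}]=0$; (iii) $[r(T),F^{\ot t}]=0$; (iv) $[r(T)\ot r(T),\mathrm{CNOT}^{\ot t}]=0$.

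Each is a direct computation on the computational basis $\{\ket{\vec x}\}$, and the role of the three conditions becomes transparent. For (iv), $\mathrm{CNOT}^{\ot t}\colon\ket{\vec x}\ket{\vec y}\mapsto\ket{\vec x}\ket{\vec x+\vec y}$ is a linear map on $\ZZ_d^t\op\ZZ_d^t$, so it preserves the relation $T$ and the commutator vanishes using only that $T$ is a linear subspace (the same argument covers multiplication gates $\ket{\vec x}\mapsto\ket{a\vec x}$). Likewise $X^{\ot t}\colon\ket{\vec x}\mapsto\ket{\vec x+\vec 1_t}$ commutes with $r(T)$ precisely because $\vec 1_{2t}\in T$, i.e.\ by the stochastic condition. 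For (ii), $P^{\ot t}\ket{\vec x}=\tau^{\vec x\cdot\vec x}\ket{\vec x}$, so $[r(T),P^{\ot t}]=0$ is equivalent to $\tau^{\vec x\cdot\vec x}=\tau^{\vec y\cdot\vec y}$ for all $(\vec x,\vec y)\in T$, i.e.\ to $\vec x\cdot\vec x\equiv\vec y\cdot\vec y\pmod D$ — exactly total $\mathfrak q$-isotropy, with the even/odd dichotomy entering as the order $D$ of $\tau$. For $Z^{\ot t}\colon\ket{\vec x}\mapsto\omega^{\vec x\cdot\vec 1_t}\ket{\vec x}$, commutation with $r(T)$ amounts to $\vec x\cdot\vec 1_t\equiv\vec y\cdot\vec 1_t\pmod d$ for all $(\vec x,\vec y)\in T$, which follows from isotropy together with the stochastic condition by expanding $\mathfrak q$ along the line $(\vec x,\vec y)+\lambda\vec 1_{2t}$ and using that multiplication by $2$ is injective on $\ZZ_d$ (resp.\ that $2d\mid 2m$ implies $d\mid m$). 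Finally for (iii), $F^{\ot t}=d^{-t/2}\sum_{\vec x,\vec y}\omega^{\vec x\cdot\vec y}\ket{\vec y}\bra{\vec x}$, and both $F^{\ot t}r(T)$ and $r(T)F^{\ot t}$ reduce to character sums over cosets of $T$; matching their supports and phases uses the maximality of $T$ in the form $T=T^\perp$ with respect to the bilinear form $\mathfrak b$ of \cref{rem:bilinear form}.

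The only real difficulty is bookkeeping rather than any deep obstacle: keeping the two tensor orderings straight throughout the reduction, and treating the case $d=2$ — where one must remember that $x^2$ is read modulo $D=4$ and $\tau$ has order $4$, and where the metaplectic representatives may carry extra phases — on the same footing as odd $d$. Once the reduction to at-most-two-qudit generators is in place, each of the four verifications is a few lines, and together they make explicit which of the three conditions in \cref{dfn:lagrangian stochastic} is responsible for commutation with which class of Clifford gates.
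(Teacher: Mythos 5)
Your proposal is correct and follows essentially the same route as the paper's proof: reduce to a generating set of $\Cliff(n,d)$ modulo phases, use the tensor-power structure $R(T)=r(T)^{\ot n}$ to localize to one or two copies of $(\CC^d)^{\ot t}$, and verify commutation generator by generator, with isotropy handling the quadratic phases, stochasticity the translations, and $T=T^\perp$ the Fourier gate. The only (immaterial) difference is bookkeeping: the paper uses the generators $H$, $P$, $\operatorname{CADD}$ with $P$ carrying a linear phase for odd $d$ (so isotropy and stochasticity enter jointly there), whereas you split off $X^{\ot t}$ and $Z^{\ot t}$ as separate generators and use a purely quadratic phase gate.
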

\begin{proof}
Up to global phases, the Clifford group is generated by the following three operators, which are allowed to act on arbitrary qudits or pairs of qudits~\cite{gottesman1999heisenberg,farinholt2014ideal,nielsen2002universal}:
The Fourier transform (also known as the Hadamard gate for $d=2$),
\begin{align*}
  H=\frac1{\sqrt d}\sum_{a,b\in\ZZ_d} \omega^{ab} \ket a\bra b,
\end{align*}
the phase gate, which is defined as
\begin{align*}
  P=\sum_{a\in\ZZ_2} i^{a^2} \ket a\bra a \text{ for $d=2$,} \qquad
  P=\sum_{a\in\ZZ_d} \omega^{2^{-1}a(a-1)} \ket a\bra a \text{ for $d\neq2$,}
\end{align*}
(here we use that for $d=2$, $a^2$ is well-defined modulo four, while for odd $d$, $2$ has a multiplicative inverse, denoted $2^{-1}$),
and the controlled addition (also known as the CNOT gate for $d=2$)
\begin{align*}
  \operatorname{CADD}=\sum_{a,b\in\ZZ_d} \ket{a,a+b}\bra{a,b}.
\end{align*}
To establish the lemma we will prove the claim for each generator (cf.~\cite{nebe2006self}).

The Fourier transform~$H$ is a one-qudit gate, so it suffices to show that $[H^{\ot t},r(T)]=0$ for every~$T\in\Sigma_{t,t}(d)$.
Indeed:
\begin{align*}
  H^{\ot t} r(T) H^{\dagger, \ot t}
&= d^{-t} \sum_{\vec a, \vec b\in\ZZ_d^t} \sum_{(\vec x,\vec y)\in T} \omega^{\vec a\cdot\vec x-\vec b\cdot\vec y} \ket{\vec a}\bra{\vec b}
= d^{-t} \sum_{\vec a, \vec b\in\ZZ_d^t} \sum_{(\vec x,\vec y)\in T} \omega^{\mathfrak b((\vec a,\vec b),(\vec x,\vec y))} \ket{\vec a}\bra{\vec b} \\
&= \sum_{(\vec a, \vec b)\in T^\perp} \ket{\vec a}\bra{\vec b} = r(T).
\end{align*}
In the second step and third steps, we used the notation~$\mathfrak b$ and $T^\perp$ from \cref{rem:bilinear form}, respectively, as well as that $\dim T=t$.
The last step holds since $T=T^\perp$, as $T$ is a Lagrangian subspace.

Next, we consider the phase gate, which is likewise a single-qudit gate.
For $d=2$, we have that
\begin{align*}
  P^{\ot t} r(T) P^{\dagger,\ot t}
= \sum_{(\vec x, \vec y)\in T} i^{\vec x\cdot\vec x-\vec y\cdot\vec y} \ket{\vec x}\bra{\vec y}
= r(T)
\end{align*}
since $T$ is totally isotropic.
For odd $d$, we instead compute
\begin{align*}
  P^{\ot t} r(T) P^{\dagger,\ot t}
= \sum_{(\vec x, \vec y)\in T} \omega^{2^{-1} \sum_j x_j(x_j-1) - y_j(y_j-1)} \ket{\vec x}\bra{\vec y}
= \sum_{\vec v=(\vec x, \vec y)\in T} \omega^{2^{-1} \mathfrak b(\vec v,\vec v-\vec 1_{2t})} \ket{\vec x}\bra{\vec y}
= r(T),
\end{align*}
since $T$ is totally isotropic and stochastic (so $\vec w=\vec v-\vec 1_{2t}\in T$ and $\mathfrak b(\vec v,\vec w)=0$ for every $\vec v\in T$).

Lastly, we consider the controlled addition gate, which is a two-qudit gate:
\begin{align*}
&\quad \operatorname{CADD}^{\ot t} r(T)^{\ot 2} \operatorname{CADD}^{\dagger,\ot t}
= \sum_{(\vec x,\vec y)\in T} \sum_{(\vec x',\vec y')\in T} \operatorname{CADD}^{\ot t} \ket{\vec x,\vec x'}\bra{\vec y,\vec y'} \operatorname{CADD}^{\dagger,\ot t} \\
&= \sum_{(\vec x,\vec y)\in T} \sum_{(\vec x',\vec y')\in T} \ket{\vec x,\vec x+\vec x'}\bra{\vec y,\vec y+\vec y'}
= \sum_{(\vec x,\vec y)\in T} \sum_{(\vec x',\vec y')\in T} \ket{\vec x,\vec x'}\bra{\vec y,\vec y'}
\end{align*}
where we only used that $T$ is a subspace.
\end{proof}

We now show that the operators $R(T)$ are linearly independent as soon as $n\geq t-1$.
For this, we introduce the following useful notation:

\begin{dfn}[Vectorization]\label{dfn:vectorization}
The \emph{vectorization} operator $\vecmap$ is defined by its action in the computational basis via
\begin{equation*}
  \vecmap (\ket{\vec x} \bra{\vec y}) = \ket{\vec x} \otimes \ket{\vec y} = \ket{\vec x,\vec y}.
\end{equation*}
\end{dfn}

\begin{lem}\label{lem:indep}
  If $n\geq t-1$ then operators $R(T)$ are linearly independent.
\end{lem}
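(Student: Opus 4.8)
The plan is to reduce the statement, via the vectorization map of \cref{dfn:vectorization}, to the linear independence of a family of indicator functions of subspaces, and then to separate these by evaluating at carefully chosen points, using the stochasticity condition built into \cref{dfn:lagrangian stochastic}.

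First I would record the effect of vectorization. For any subspace $T\subseteq\ZZ_d^t\oplus\ZZ_d^t$ one has $\vecmap(r(T)) = \sum_{(\vec x,\vec y)\in T}\ket{\vec x,\vec y}$, which under the identification $(\CC^d)^{\ot 2t}\cong\CC[\ZZ_d^{2t}]$ is precisely the indicator vector $\mathbbm 1_T$ of the subspace~$T$. Since $R(T)=r(T)^{\ot n}$, applying $\vecmap$ followed by one fixed reshuffling of the $2tn$ tensor factors (the same permutation for every~$T$, depending only on $t$ and $n$) sends $R(T)$ to $\mathbbm 1_T^{\ot n}$, i.e.\ to the function on $(\ZZ_d^{2t})^n$ given by $(\vec v_1,\dots,\vec v_n)\mapsto\prod_{i}\mathbbm 1_T(\vec v_i)$, which equals $1$ exactly when $\Span\{\vec v_1,\dots,\vec v_n\}\subseteq T$. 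Because $\vecmap$ and the reshuffling are bijective, it suffices to prove that $\{\mathbbm 1_T^{\ot n} : T\in\Sigma_{t,t}(d)\}$ is linearly independent.

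Now suppose $\sum_{T\in\Sigma_{t,t}(d)}c_T\,\mathbbm 1_T^{\ot n}=0$. Evaluating at an $n$-tuple that spans a subspace $W\subseteq\ZZ_d^{2t}$ (which is possible as soon as $\dim W\le n$; take a basis of $W$ padded with repetitions) yields the linear relation $\sum_{T\supseteq W}c_T=0$. The crux, which is exactly where $n\ge t-1$ enters, is that every $T_0\in\Sigma_{t,t}(d)$ has a hyperplane $W\subset T_0$ with $\dim W=t-1\le n$ and $\vec 1_{2t}\notin W$, and that $T_0$ is then the \emph{only} element of $\Sigma_{t,t}(d)$ containing~$W$. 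Such a $W$ exists because $\vec 1_{2t}\ne\vec 0$ (as $d\ge2$) while $\vec 1_{2t}\in T_0$ by stochasticity, so the kernel of any linear functional on $T_0$ that is nonzero at $\vec 1_{2t}$ does the job (the degenerate case $t=1$ is trivial, since $\Sigma_{1,1}(d)$ is a singleton). For the uniqueness: if $T\in\Sigma_{t,t}(d)$ satisfies $W\subseteq T$ and $T\ne T_0$, then $W\subseteq T\cap T_0\subsetneq T_0$ and $\dim(T\cap T_0)\le t-1=\dim W$, forcing $T\cap T_0=W$; but $\vec 1_{2t}$ lies in both $T$ and $T_0$ by stochasticity, hence in $T\cap T_0=W$, contradicting the choice of~$W$. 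Feeding this particular $W$ into $\sum_{T\supseteq W}c_T=0$ gives $c_{T_0}=0$, and since $T_0$ was arbitrary all $c_T$ vanish.

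I do not expect a genuine obstacle here: once the problem is rephrased in terms of indicator functions, everything is elementary linear algebra together with the stochasticity condition. The only points that require care are the bookkeeping in the vectorization step---making sure the tensor-factor permutation is independent of $T$, so that linear independence is truly preserved---and confirming that the chosen $(t-1)$-dimensional subspace $W$ can indeed be realized as the span of $n\ge t-1$ vectors, which is what makes the evaluation argument legitimate and is precisely the role of the hypothesis $n\ge t-1$.
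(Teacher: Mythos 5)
Your proof is correct and is essentially the paper's own argument: both vectorize $R(T)$ into $\ket{T}^{\ot n}$ (equivalently $\mathbbm 1_T^{\ot n}$) and then pair against $t-1$ vectors spanning a complement of $\ZZ_d\vec 1_{2t}$ in $T_0$, padded with zeros, using stochasticity to conclude that only $T_0$ survives. Your explicit justification of the uniqueness step (via $T\cap T_0=W$ and $\vec 1_{2t}\in W$) just spells out what the paper leaves implicit in its $\delta_{\vec v_1,\dots,\vec v_{t-1}\in T'}=\delta_{T,T'}$.
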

\begin{proof}
  For each $T\in\Sigma_{t,t}(d)$, consider the vectorization of $r(T)$, which we denote by $\ket{T}\coloneqq \vecmap{\left(r(T)\right)} = \sum_{\vec v \in T} \ket{\vec  v} \in (\CC^d)^{\ot 2t}$.
  Note that $\braket{\vec v | T} = \delta_{\vec v \in T}$.
  Clearly, $\vecmap{\left( R(T)\right)} = \vecmap{\left(r(T) \right)}^{\ot n} = \ket T ^{\ot n}$.
  Therefore, we want to show that the vectors $\ket T^{\ot n}$ are linearly independent as soon as $n\geq t-1$.
  But each $T$ is $t$-dimensional and contains the vector $\vec 1_{2t}$.
  Extend it by $\vec v_1,\dots,\vec v_{t-1}$ to a basis of $T$.
  Then, if $T'$ is another subspace:
  \begin{align*}
  \bra{\vec v_1}\dots\bra{\vec v_{t-1}}\bra{0}^{\ot n-(t-1)}\ket{T'}^{\ot n}
  = \bra{\vec v_1}\dots\bra{\vec v_{t-1}}\ket{T'}^{\ot (t-1)}
  = \delta_{\vec v_1,\dots,\vec v_{t-1}\in T'}
  = \delta_{T,T'}
  \end{align*}
  This concludes the proof.
\end{proof}

So far, we have accomplished the task of finding a large set of linearly independent operators in the commutant of $\Cliff(n,d)^{\ot t}$, one for each element of~$\Sigma_{t,t}(d)$.
In the remainder of this section we will compute the dimension of the commutant as well as the cardinality of $\Sigma_{t,t}(d)$, and show that the two numbers agree precisely.
We will use the \emph{Gaussian binomial coefficients}, which are defined by
\begin{align*}
  \binom n k_d = \frac{[n]_d [n-1]_d \cdots [n-k+1]_d}{[k]_d [k-1]_d \cdots [1]_d},
  \quad
  \text{ where }
  [k]_d = \sum_{i=0}^{k-1} d^i,
\end{align*}
It is well-known that $\binom n k_d$ equals the number of $k$-dimensional subspaces in $\ZZ_d^n$.
The Gaussian binomial coefficients satisfy the following analogs of Pascal's rule,
\begin{align}\label{eq:q binomial pascal}
  \binom n k_d = d^k \binom{n-1}k_d + \binom{n-1}{k-1}_d,
\end{align}
and of the binomial formula,
\begin{align}\label{eq:q binomial formula}
  \sum_{k=0}^n d^{k(k-1)/2} \binom n k_d t^k = \prod_{k=0}^{n-1} \left(d^k t + 1\right).
\end{align}

We now compute the dimension of the commutant.
This has previously been done for $t\leq 4$ by Zhu~\cite{zhu2015multiqubit} and before that for $d=2, n=1$ by van~den~Nest et al~\cite{van2005invariants}.

We start with the following result from~\cite{zhu2015multiqubit}, which reduces the dimension computation to a counting problem.
Zhu arrived at this result by computing the \emph{frame potential} of the Clifford group~--~essentially, the norm squared of the character of the representation $U\mapsto U^{\otimes k}$.
In contrast, we will follow the approach by van~den~Nest et al, who considered the action of the Clifford average (also known as the \emph{twirl operation} or \emph{Reynolds operator}) on the basis of Weyl operators, correcting a glitch in~\cite{van2005invariants} along the way.%
\footnote{In Ref.~\cite{van2005invariants}, the relative phases~$\omega^{f_\Gamma(\vec x)}$ that appear in our \cref{eq:true signs} are all taken to be trivial (for qubits).
The origin seems to lie in their Section~II, where it is stated---in their language---that $\alpha_1\alpha_2\alpha_3=1$.
But this holds only if their~$\pi$ is cyclic.
Clifford operations inducing non-cyclic permutations do, however, exist.
We thank Huangjun Zhu for identifying the root of the apparent contradiction.}

\begin{lem}[\cite{zhu2015multiqubit,van2005invariants}]
The dimension of the commutant of $\Cliff(n,d)^{\ot t}$ is equal to the number of orbits for the diagonal action of the symplectic group~$\Sp(2n,d)$ on $t-1$ copies of the phase space~$\ZZ_d^{2n}$, i.e., for the action
\begin{align}\label{eq:commutant action}
  \Gamma \cdot (\vec x_1,\dots,\vec x_{t-1})
  =
  (\Gamma\vec x_1,\dots,\Gamma\vec x_{t-1}),
\end{align}
where $\Gamma\in\Sp(2n,d)$ and $(\vec x_1,\dots,\vec x_{t-1})\in(\ZZ_d^{2n})^{t-1}$.
\end{lem}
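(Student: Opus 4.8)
The plan is to realize the dimension of the commutant as the trace of the Clifford twirl and to evaluate that trace in the Weyl operator basis, where the symplectic structure of the Clifford action does all the work. Recall that the Reynolds operator $\mathcal T(X) = \lvert\Cliff(n,d)\rvert^{-1}\sum_{U} U^{\ot t} X (U^\dagger)^{\ot t}$ is idempotent, self-adjoint with respect to the Hilbert--Schmidt inner product, and has image exactly the commutant of $\Cliff(n,d)^{\ot t}$; hence it is the orthogonal projection onto the commutant, so its trace equals the dimension of the commutant. (Equivalently one could use the frame-potential identity $\dim = \lvert\Cliff(n,d)\rvert^{-1}\sum_U \lvert\tr U\rvert^{2t}$, the route taken in~\cite{zhu2015multiqubit}; I instead follow~\cite{van2005invariants}.) I would compute $\tr\mathcal T$ in the orthonormal basis $\{d^{-tn/2}\hat W_{\vec x}\}$ of $\mathrm{End}(\mathcal H_n^{\ot t})$, where $\hat W_{\vec x} \coloneqq W_{x_1}\ot\cdots\ot W_{x_t}$ and $\vec x = (x_1,\dots,x_t)\in(\ZZ_d^{2n})^t$.

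To get the diagonal entry of $\mathcal T$ at $\hat W_{\vec x}$, use \cref{lem:cliff basics}: $U W_{x_j} U^\dagger = \omega^{f_U(x_j)} W_{\Gamma_U x_j}$, so $U^{\ot t}\hat W_{\vec x}(U^\dagger)^{\ot t} = \omega^{\sum_j f_U(x_j)}\,\hat W_{\Gamma_U\cdot\vec x}$ with $\Gamma_U\cdot\vec x \coloneqq (\Gamma_U x_1,\dots,\Gamma_U x_t)$, and this contributes to the $\hat W_{\vec x}$-component only when $\Gamma_U$ fixes every $x_j$. Since $f_U$ and $\Gamma_U$ depend only on $U$ modulo phases, and $\Cliff(n,d)$ modulo phases is, by \cref{lem:cliff basics}, an extension of $\Sp(2n,d)$ by the Pauli group $\ZZ_d^{2n}$ (here primality of $d$ enters), I would write a general element as $W_{\vec a} U_\Gamma$. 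A one-line computation with \cref{eq:commutation law} gives $f_{W_{\vec a}U_\Gamma}(y) = f_\Gamma(y) + [\vec a,\Gamma y]$, so performing the Pauli average first produces, on the locus where $\Gamma$ fixes each $x_j$, the factor $d^{-2n}\sum_{\vec a}\omega^{[\vec a,\,\sum_j x_j]} = \delta_{\sum_j x_j = 0}$. Thus only those $\hat W_{\vec x}$ with $\sum_j x_j = 0$ survive, and these are in bijection with $(\ZZ_d^{2n})^{t-1}$ via $(x_1,\dots,x_{t-1})\mapsto(x_1,\dots,x_{t-1},-\textstyle\sum_{j<t} x_j)$; this is the origin of the ``$t-1$''.

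The remaining point --- and the subtle one, where the computation in~\cite{van2005invariants} was incomplete --- is the leftover phase $\omega^{\sum_j f_\Gamma(x_j)}$ summed over those $\Gamma$ that fix every $x_j$. I would handle it as follows: when $\sum_j x_j \equiv 0\pmod d$, \cref{eq:projective group law} together with \cref{eq:weyl operator mod d} shows that the product $W_{x_1}\cdots W_{x_t}$ is a scalar multiple of the identity; conjugating this scalar by $U_\Gamma$ leaves it unchanged, while conjugating the factors one at a time and pulling out the scalars $\omega^{f_\Gamma(x_j)}$ gives $\omega^{\sum_j f_\Gamma(x_j)}\,(W_{x_1}\cdots W_{x_t})$, so comparing the two forces $\omega^{\sum_j f_\Gamma(x_j)} = 1$ whenever $\Gamma$ fixes all the $x_j$ and $\sum_j x_j = 0$. (This argument is uniform in $d$ and does not even use the freedom to trivialize $f_\Gamma$ for odd~$d$.) Hence the diagonal entry of $\mathcal T$ at each surviving $\hat W_{\vec x}$ equals $\lvert\mathrm{Stab}_{\Sp(2n,d)}(\vec x)\rvert / \lvert\Sp(2n,d)\rvert = 1/\lvert\mathrm{orbit}(\vec x)\rvert$, and summing over $\vec x$ --- equivalently over $(\ZZ_d^{2n})^{t-1}$ --- gives $\sum_{\text{orbits }O}\sum_{\vec x\in O} 1/\lvert O\rvert = \#\{\text{orbits}\}$ by orbit--stabilizer (Burnside's lemma). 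I expect the main obstacle to be precisely this phase bookkeeping: one must keep careful track of the mod-$d$ and mod-$D$ reductions in \cref{eq:weyl operator mod d} so that the ``$W_{x_1}\cdots W_{x_t}$ is a scalar'' step is applied legitimately; once that is nailed down, the rest is routine representation-theoretic accounting.
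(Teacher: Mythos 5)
Your proposal is correct and takes essentially the same approach as the paper: both evaluate the Clifford twirl in the Weyl operator basis, reduce to the locus $\sum_j x_j=0$ via the Pauli average, and resolve the phase issue by the identity $W_{x_1}\cdots W_{x_t}=\phi_{\vec x}I$ (your observation that $\omega^{\sum_j f_\Gamma(x_j)}=1$ on the stabilizer of $\vec x$ is exactly the paper's relation $\omega^{f_\Gamma(\vec x)}=\phi_{\vec x}/\phi_{\Gamma\vec x}$ specialized to $\Gamma\vec x=\vec x$). The only difference is the final accounting: you compute $\tr\mathcal T$ and invoke Burnside, whereas the paper exhibits one twirled Weyl operator per orbit as a basis of the commutant; both are valid.
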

\begin{proof}
We will show that the dimension of the commutant is equal to the number of orbits for the diagonal action of $\Sp(2n,d)$ on
\begin{align*}
  \mathcal W_t \coloneqq \{ (\vec x_1,\dots,\vec x_t) : \sum_{i=1}^t \vec x_i = 0 \},
\end{align*}
which is plainly an equivalent statement.

We start by noting that the Weyl operators $W_{\vec x}$  for $\vec x = (\vec x_1,\dots,\vec x_t)\in(\ZZ_d^{2n})^{t}$ form a basis of the space of operators on~$((\CC^d)^{\ot n})^{\ot t}$.
We can thus obtain a generating set of the commutant by averaging each Weyl operator $W_{\vec x}$ with respect to the tensor power action of the Clifford group.
According to \cref{lem:cliff basics}, we can for each symplectic matrix $\Gamma$ fix a Clifford unitary $U_\Gamma$ such that the set of $\{U_\Gamma W_{\vec b}\}$ equals the Clifford group, up to global phases.
Let us denote by $f_\Gamma$ the phase function corresponding to $U_\Gamma$, as in \cref{eq:cliff action prp}.
Thus, the average of the Weyl operator~$W_{\vec x}$ is, up to overall normalization, given by
\begin{align*}
  \Lambda_{\Cliff}(W_{\vec  x})
&\coloneqq d^{-2n} \sum_{\Gamma\in\Sp(2n,d)} \sum_{\vec b\in\ZZ_d^{2n}} \left( U_\Gamma W_{\vec b} \right)^{\ot t} W_{\vec x} \left( U_\Gamma W_{\vec b} \right)^{\dagger,\ot t}
\\&= d^{-2n} \sum_{\Gamma\in\Sp(2n,d)} \sum_{\vec b\in\ZZ_d^{2n}} \omega^{[\vec b,\vec x_1+\dots+\vec x_t]} U_\Gamma^{\ot t} W_{\vec x} U_\Gamma^{\dagger,\ot t}
\\&= \delta_{\vec x \in \mathcal W_t} \sum_{\Gamma\in\Sp(2n,d)} \omega^{f_\Gamma(\vec x)} W_{\Gamma\vec x},
\end{align*}
where $f_\Gamma(\vec x) = \sum_{i=1}^t f_\Gamma(\vec x_i)$ and $\Gamma\vec x\coloneqq(\Gamma \vec x_1, \dots, \Gamma \vec x_t)$.

When $d$ is odd, the phase function $f$ can be chosen to vanish (\cref{lem:cliff basics}).
Thus, the averaged operator is equal to the sum of Weyl operators over the $\Sp(2n,d)$-orbit of~$\vec x$, provided $\vec x\in\mathcal W_t$, and zero otherwise.
Since distinct orbits are disjoint, it is clear that we obtain a basis of the commutant by averaging one Weyl operator for each orbit of the diagonal action of~$\Sp(2n,d)$ on~$\mathcal W_t$.

Now consider the case where~$d=2$.
To each $\vec x\in\mathcal W_t$, associate the phase~$\phi_{\vec x}$ (a power of $\tau$) such that
\begin{align*}
  W_{\vec x_1} \cdots W_{\vec x_t} = \phi_{\vec x}\, I.
\end{align*}
Then, for each $\Gamma\in\Sp(2n,d)$,
\begin{align*}
 \phi_{\vec x} \, I
= U_\Gamma W_{\vec x_1} \cdots W_{\vec x_t} U_\Gamma^\dagger
= (U_{\Gamma} W_{\vec x_1} U_{\Gamma}^\dagger) \cdots (U_{\Gamma} W_{\vec x_t} U_{\Gamma}^\dagger)
= \omega^{f_\Gamma(\vec x)} W_{\Gamma \vec x_1} \cdots W_{\Gamma\vec x_t}
= \omega^{f_\Gamma(\vec x)} \phi_{\Gamma \vec x}.
\end{align*}
It follows that the phase function $f_\Gamma(\vec x)$ depends only on $\vec x$ and $\Gamma \vec x$ (rather than directly on $\Gamma$) and is given explicitly by the quotient
\begin{align}\label{eq:true signs}
  \omega^{f_\Gamma(\vec x)} = \frac{\phi_{\vec x}}{\phi_{\Gamma  \vec x}}.
\end{align}
Thus, for~$\vec x \in \mathcal W_t$,
\begin{align*}
  \Lambda_{\Cliff}(W_{\vec x})
  = \sum_{\Gamma\in\Sp(2n,d)} \frac{\phi_{\vec x}}{\phi_{\Gamma \vec x}} W_{\Gamma\vec x}
  = \phi_{\vec x} \sum_{\Gamma\in\Sp(2n,d)} \frac{W_{\Gamma\vec x}}{\phi_{\Gamma \vec x}}.
\end{align*}
In particular, if $\vec y$ is in the same $\Sp(2n,d)$-orbit as $\vec x$ then $\Lambda_{\Cliff}(W_{\vec y}) =
\frac{\phi_{\vec y}}{\phi_{\vec x}} \Lambda_{\Cliff}(W_{\vec x})$, i.e., the two averaged operators only differ by a global phase.
Thus, also for $d=2$ we obtain a basis of the commutant by averaging one Weyl operator for each orbit of the diagonal action of~$\Sp(2n,d)$ on~$\mathcal W_t$.
\end{proof}

We now derive an explicit formula for the dimension of the commutant.

\begin{thm}[Dimension of commutant]\label{thm:commutant size}
  Let $n\geq t-1$.
  Then the dimension of the commutant of $\Cliff(n,d)^{\ot t}$ is equal to $\prod_{k=0}^{t-2} (d^k+1)$.
\end{thm}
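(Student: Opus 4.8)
The plan is to combine the preceding lemma with a direct orbit count. By that lemma it suffices to count the orbits of the diagonal action of $\Sp(2n,d)$ on $(\ZZ_d^{2n})^{t-1}$; write $m\coloneqq t-1$. The tool for this is Witt's theorem for the (non-degenerate) standard symplectic form on $\ZZ_d^{2n}$: any linear isometry between two subspaces of $\ZZ_d^{2n}$ extends to an element of $\Sp(2n,d)$.

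First I would attach to each tuple $\vec v=(\vec x_1,\dots,\vec x_m)\in(\ZZ_d^{2n})^m$ two invariants: the \emph{dependency subspace} $\Lambda_{\vec v}\coloneqq\{\vec c\in\ZZ_d^m : \sum_i c_i\vec x_i=0\}$ and the \emph{Gram matrix} $B_{\vec v}\in\ZZ_d^{m\times m}$ with entries $[\vec x_i,\vec x_j]$. Since $\vec c\in\Lambda_{\vec v}$ forces $\sum_i c_i[\vec x_i,\vec x_j]=0$ for all $j$, one has $\Lambda_{\vec v}\subseteq\operatorname{rad}(B_{\vec v})$, so $B_{\vec v}$ descends to an alternating form $\bar B_{\vec v}$ on $\ZZ_d^m/\Lambda_{\vec v}$. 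I claim $\vec v\mapsto(\Lambda_{\vec v},\bar B_{\vec v})$ induces a bijection between $\Sp(2n,d)$-orbits and pairs $(\Lambda,\bar B)$ with $\Lambda\subseteq\ZZ_d^m$ a subspace and $\bar B$ an alternating form on $\ZZ_d^m/\Lambda$. The map is clearly $\Sp(2n,d)$-invariant. Injectivity on orbits: if two tuples share a pair, then $\vec x_i\mapsto\vec y_i$ is a well-defined linear isomorphism of their spans (well-defined since the dependency subspaces agree) and is an isometry (since the Gram matrices, being the common pullback of $\bar B$, agree), hence extends to $\Sp(2n,d)$ by Witt. Surjectivity (realizability): any $\ZZ_d$-space $W$ with $\dim W\le m$ carrying an alternating form embeds isometrically into $\ZZ_d^{2n}$ once $n\ge m$ — decompose $W=R\oplus W_0$ with $R=\operatorname{rad}$ and $W_0$ non-degenerate symplectic of dimension $2g$, embed $W_0$ as a symplectic subspace and $R$ as a totally isotropic subspace of its symplectic complement, the count $\dim R+g=\dim W-g\le\dim W\le m\le n$ leaving enough room. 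This is exactly where the hypothesis $n\ge t-1$ enters (and it is the source of the stabilization phenomenon).

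It then remains to count the pairs. There are $\binom m j_d$ subspaces $\Lambda\subseteq\ZZ_d^m$ of dimension $j$, and $d^{\binom{r}{2}}$ alternating bilinear forms on a $\ZZ_d$-space of dimension $r$ (for $d=2$ these are the symmetric matrices with vanishing diagonal, for odd $d$ the antisymmetric matrices; in either case the $\binom{r}{2}$ strictly-upper-triangular entries are free). Therefore the dimension of the commutant is
\[
  \sum_{j=0}^{m}\binom m j_d\,d^{\binom{m-j}{2}}=\sum_{k=0}^{m}d^{k(k-1)/2}\binom m k_d ,
\]
where we substituted $k=m-j$ and used $\binom m j_d=\binom m{m-j}_d$. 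Specializing the Gaussian binomial identity~\eqref{eq:q binomial formula} at $t=1$ turns the right-hand side into $\prod_{k=0}^{m-1}(d^k+1)$, and with $m=t-1$ this is $\prod_{k=0}^{t-2}(d^k+1)$, as claimed.

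The main obstacle is making the Witt-theoretic bijection airtight: one must invoke the correct form of the symplectic Witt extension theorem (extending isometries between possibly degenerate subspaces of a non-degenerate symplectic space) and be careful with the realizability bound, which is precisely the step that fails for $n<t-1$. An alternative route would be to set up a recursion in $m$ by conditioning on whether $\vec x_1=0$ and otherwise passing to the stabilizer of $\vec x_1$ (an extension of $\Sp(2n-2,d)$), then solving the recursion with the Pascal-type rule~\eqref{eq:q binomial pascal}; the direct count above seems shorter, but either works.
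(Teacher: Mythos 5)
Your proposal is correct, and it reaches the same intermediate count $d^{k(k-1)/2}\binom{t-1}{k}_d$ for the orbits whose span has dimension $k$, but it organizes the orbit count differently from the paper. Both arguments start from the same reduction (the dimension of the commutant equals the number of $\Sp(2n,d)$-orbits on $(\ZZ_d^{2n})^{t-1}$) and both rest on Witt's extension theorem for the alternating form, valid in characteristic two as well. The paper then proceeds by a recursion in $t$: it conditions on whether the last vector lies in the span of the previous ones, uses Witt's theorem to see that only the symplectic products with a basis of that span matter, and solves the resulting $q$-Pascal recursion. You instead exhibit a complete invariant in one shot --- the dependency subspace $\Lambda_{\vec v}$ together with the induced alternating Gram form $\bar B_{\vec v}$ on $\ZZ_d^{t-1}/\Lambda_{\vec v}$ --- prove injectivity on orbits via Witt and surjectivity via an explicit isometric embedding $R\oplus W_0\hookrightarrow\ZZ_d^{2n}$, and then count pairs directly; the final summation via the Gaussian binomial identity is the same in both proofs. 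Your route has the advantage of making the full orbit invariant explicit and of isolating precisely where the hypothesis $n\geq t-1$ enters (realizability of every pair, i.e., $\dim R+g\leq n$); the paper's recursion avoids having to verify global surjectivity of the invariant map and produces the $\ell$-graded count $\lvert\Omega_t^\ell\rvert$ in a form that is reused verbatim in the proof of the cardinality of $\Sigma_{t,t}(d)$. Your closing remark correctly identifies the recursion as the alternative the paper actually takes.
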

\begin{proof}
To count the number of orbits of the action~\eqref{eq:commutant action}, we will associate to any orbit~$O$ an invariant, the \emph{dimension}, defined by $\dim(O) = \dim \Span~\{ \vec x_1,\dots,\vec x_{t-1} \}$,
where $(\vec x_1,\dots,\vec x_{t-1})$ is any point in the orbit.
We write $\Omega_t$ for the set of all orbits and $\Omega_t^\ell$ for the set of orbits with dimension~$\ell$.
We will establish and solve the following recursion relation:
\begin{align}\label{eq:orbit recursion relation}
  |\Omega_t^\ell| = |\Omega_{t-1}^\ell| d^\ell + |\Omega_{t-1}^{\ell-1}| d^{\ell-1}
\end{align}
To see why this is true, suppose $(\vec x_1,\dots,\vec x_{t-1})\in\Omega_t^\ell$.
Then there are two cases:
\begin{enumerate}
\item $x_{t-1} \in \Span~\{\vec x_1,\dots,\vec x_{t-2}\}$:
Then the orbit through $(\vec x_1,\dots,\vec x_{t-2})$ is in $\Omega_{t-1}^\ell$, and there are $d^\ell$ ways to choose $\vec x_{t-1}\in\Span\{\vec x_1,\dots,\vec x_{t-2}\}$.
Together, this contributes $|\Omega_{t-1}^\ell| d^\ell$ many orbits to~$\Omega_t^\ell$.
\item $x_{t-1} \not\in \Span~\{\vec x_1,\dots,\vec x_{t-2}\}$:
Then the orbit through $(\vec x_1,\dots,\vec x_{t-2})$ is in $\Omega_{t-1}^{\ell-1}$, and we have to count the number of ways that we can add a new vector $\vec x_{t-1}$ to $\Span~\{\vec x_1,\dots,\vec x_{t-2}\}$ such that we get different orbits.
By Witt's theorem, which also holds for alternating forms in characteristic two~\cite{wilson2009finite}, the only invariants are the inner products between $\vec x_{t-1}$ and a basis of $\Span~\{\vec x_1,\dots,\vec x_{t-2}\}$.
By assumption, the latter space has dimension $\ell-1\leq t-2<n$, so we have $d^{\ell-1}$ options for $\vec x_{t-1}$.
Together, this contributes $|\Omega_{t-1}^{\ell-1}| d^{\ell-1}$ many orbits to $\Omega_t^\ell$.
\end{enumerate}
We have thus established the recursion relation~\eqref{eq:orbit recursion relation}.
Since $\Omega_{2,0} = \{ \{0\} \}$ and $\Omega_{2,1} = \{ \vec x_1 \neq 0 \}$, we find the initial conditions $|\Omega_{2,0}|=|\Omega_{2,1}|=1$.
The solution to the recursion relation is
\begin{align*}
  |\Omega_t^\ell| = d^{\ell(\ell-1)/2} \binom{t-1}\ell_d,
\end{align*}
as can be verified by using Pascal's rule~\eqref{eq:q binomial pascal}.
Using the binomial formula~\eqref{eq:q binomial formula}, we conclude that
\begin{equation}\label{eq:use q binomial formula}
  |\Omega_t|
= \sum_{\ell=0}^{t-1} |\Omega_t^\ell|
= \sum_{\ell=0}^{t-1} d^{\ell(\ell-1)/2} \binom{t-1}\ell_d
= \prod_{k=0}^{t-2} (d^k+1).
\end{equation}
This establishes the desired formula for the dimension of the commutant.
\end{proof}

Next, we count the number of stochastic Lagrangian subspaces.
To this end, define the ``diagonal subspace''
\begin{align*}
  \Delta = \{ (\vec x, \vec x) \,|\, \vec x\in \ZZ_d^t \}\subset \ZZ_d^{2t}.
\end{align*}

\begin{thm}[Cardinality of $\Sigma_{t,t}$]\label{thm:sigma size}
  We have $\left\lvert\Sigma_{t,t}(d)\right\rvert=\prod_{k=0}^{t-2} (d^k+1)$.
\end{thm}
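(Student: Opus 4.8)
The plan is to reduce the count to a classical one: the number of maximal totally isotropic subspaces of a \emph{split} quadratic space over~$\FF_d$. First I would record the elementary but crucial fact that $\vec 1_{2t}$ is always isotropic, both for the quadratic form ($\mathfrak q(\vec 1_{2t}) = t - t = 0$) and for the associated bilinear form of \cref{rem:bilinear form} ($\mathfrak b(\vec 1_{2t},\vec 1_{2t}) = 0$). Next, for any $T\in\Sigma_{t,t}(d)$, total $\mathfrak q$-isotropy forces $T$ to be $\mathfrak b$-self-orthogonal, $T\subseteq T^\perp$; since $\mathfrak b$ is non-degenerate on $\ZZ_d^{2t}$ and $\dim T = t$, in fact $T = T^\perp$. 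Combined with stochasticity ($\vec 1_{2t}\in T$) this gives $\langle\vec 1_{2t}\rangle \subseteq T \subseteq \vec 1_{2t}^\perp$. So every stochastic Lagrangian passes to a subspace $\bar T := T/\langle\vec 1_{2t}\rangle$ of $W := \vec 1_{2t}^\perp/\langle\vec 1_{2t}\rangle$, a space of dimension $2(t-1)$ carrying a descended non-degenerate bilinear form $\bar{\mathfrak b}$ and a descended quadratic form $\bar{\mathfrak q}$ refining it: for odd $d$ one puts $\bar{\mathfrak q}([\vec v]) = \mathfrak q(\vec v)$, which is well-defined on $W$ by \cref{eq:bilinear vs quadratic form} (using $\mathfrak q(\vec 1_{2t}) = 0$ and $\vec v\in\vec 1_{2t}^\perp$); for $d=2$, where $\mathfrak q$ is $\ZZ_4$-valued, one first observes that $\mathfrak q$ takes only the values $\{0,2\}$ on $\vec 1_{2t}^\perp$ (since $\mathfrak q\bmod 2 = \mathfrak b(\,\cdot\,,\vec 1_{2t})$) and puts $\bar{\mathfrak q}([\vec v]) = \tfrac12\mathfrak q(\vec v)\in\FF_2$. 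In either case a subspace between $\langle\vec 1_{2t}\rangle$ and $\vec 1_{2t}^\perp$ is totally $\mathfrak q$-isotropic iff its image in $W$ is totally $\bar{\mathfrak q}$-isotropic.

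This sets up a bijection between $\Sigma_{t,t}(d)$ and the set of $(t-1)$-dimensional totally $\bar{\mathfrak q}$-isotropic subspaces of $W$: the map $T\mapsto\bar T$ is well-defined by the above, and its inverse sends an isotropic $\bar T$ to its preimage under $\vec 1_{2t}^\perp \twoheadrightarrow W$, a $t$-dimensional totally $\mathfrak q$-isotropic subspace containing $\vec 1_{2t}$. Moreover $W$ is split: the image of $T_0 := \{(\vec y,\vec y):\vec y\in\ZZ_d^t\}$ is totally isotropic of dimension $t-1 = \tfrac12\dim W$ (note $\vec 1_{2t}\in T_0\subseteq\vec 1_{2t}^\perp$). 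Hence $t-1$ is already the maximal isotropic dimension in $W$, and what remains is to count the maximal totally isotropic subspaces of a split quadratic space of rank $2(t-1)$ over~$\FF_d$.

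Finally I would invoke the classical enumeration of these subspaces. Fixing two transverse maximal isotropic subspaces and stratifying the remaining ones by the codimension $\ell$ of their intersection with the first — using Witt's theorem (valid over $\FF_d$, including characteristic two, as in \cref{thm:commutant size} and \cite{wilson2009finite}) to see that a split space of rank $2\ell$ has exactly $d^{\ell(\ell-1)/2}$ maximal isotropic subspaces transverse to a fixed one, with $\binom{t-1}{\ell}_d$ choices for the intersection — one obtains
\[
|\Sigma_{t,t}(d)| = \sum_{\ell=0}^{t-1} \binom{t-1}{\ell}_d\, d^{\ell(\ell-1)/2} = \prod_{k=0}^{t-2}(d^k+1),
\]
the last equality being the Gaussian binomial identity \cref{eq:q binomial formula} — indeed this is literally the computation already carried out in \cref{eq:use q binomial formula}.

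The main obstacle I anticipate is the characteristic-two case: one must check carefully that the $\ZZ_4$-valued form $\mathfrak q$ descends to an honest non-degenerate $\FF_2$-quadratic form on $W$ of plus type, so that total $\mathfrak q$-isotropy (mod $4$) is faithfully captured by $\bar{\mathfrak q}$-isotropy on $W$, and that the classical count then applies verbatim; the transversal-subspace count $d^{\ell(\ell-1)/2}$ must likewise be justified uniformly (over odd $d$ it counts alternating bilinear forms on an $\ell$-space, over $\FF_2$ the quadratic-form constraint again singles out the alternating ones, so the count is the same). As a consistency check, the inequality $|\Sigma_{t,t}(d)| \le \prod_{k=0}^{t-2}(d^k+1)$ is in any case immediate from \cref{lem:commutant,lem:indep,thm:commutant size}, since the $R(T)$ are linearly independent elements of a commutant of that dimension.
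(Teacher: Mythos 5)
Your proposal is correct, but it reaches the count by a genuinely different route than the paper. The paper works directly in $\ZZ_d^{2t}$: it stratifies $\Sigma_{t,t}(d)$ by $\dim(T\cap\Delta)=t-\ell$ and counts each stratum by explicit parametrization --- choose the $(t-\ell)$-dimensional subspace $X\ni\vec 1_t$ underlying $T\cap\Delta$ ($\binom{t-1}{\ell}_d$ ways), then parametrize a complement of $T\cap\Delta$ inside $T$ by a matrix $A$ whose lower-triangular part is forced by self-orthogonality and whose diagonal is forced by $\mathfrak q$-isotropy ($d^{\ell(\ell-1)/2}$ ways), with a separate $\ZZ_4$-computation for $d=2$. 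You instead observe that stochastic Lagrangians are exactly the maximal $\mathfrak q$-isotropic subspaces squeezed between $\langle\vec 1_{2t}\rangle$ and $\vec 1_{2t}^\perp$, pass to the quotient $W=\vec 1_{2t}^\perp/\langle\vec 1_{2t}\rangle$, and identify $\Sigma_{t,t}(d)$ with the maximal isotropics of a split quadratic space of rank $2(t-1)$, whereupon $\prod_{k=0}^{t-2}(d^k+1)$ is the classical enumeration. The two stratifications coincide (your codimension $\ell$ of $\bar T\cap\bar\Delta$ is the paper's $\ell$, and the transverse count $d^{\ell(\ell-1)/2}$ of graphs of alternating forms is exactly what the paper's matrix $A$ realizes in coordinates), but your quotient step is new relative to the paper and buys real clarity: stochasticity disappears automatically, the answer is exposed as the index of a parabolic in $O^+_{2(t-1)}(\FF_d)$, and the $d=2$ bookkeeping reduces to the standard descent of the $\ZZ_4$-valued form to an $\FF_2$-quadratic form of plus type, which you correctly justify via $\mathfrak q\bmod 2=\mathfrak b(\cdot,\vec 1_{2t})$. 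What the paper's hands-on version buys in exchange is the explicit stratum cardinalities $\lvert\Sigma^\ell_{t,t}(d)\rvert$, which are reused later (e.g.\ in \cref{rem:sum trace} to evaluate $\sum_T\tr R(T)$); your argument recovers these as well, but only if you carry out the stratified count rather than merely citing the closed-form total.
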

\begin{proof}
  Let $\Sigma_{t,t}^\ell(d)$ denote the set of subspaces~$T\in\Sigma_{t,t}(d)$ such that $\dim(T\cap\Delta)=t-\ell$.
  We will show that
  \begin{align}\label{eq:sigma ell claim}
    \left\lvert\Sigma_{t,t}^\ell(d)\right\rvert
  = d^{\ell(\ell-1)/2} \binom{t-1}{\ell}_d,
  \end{align}
  which implies the claim by the same calculation as in \cref{eq:use q binomial formula}.
  To start, consider a subspace $T\in\Sigma_{t,t}^\ell(d)$ and consider
  \begin{align*}
    T_\Delta \coloneqq T \cap \Delta = \{ (\vec x,\vec x) : \vec x \in X \},
  \end{align*}
  with $X$ a $(t-\ell)$-dimensional subspace that is uniquely determined by $T$.
  Since $T$ is stochastic, we know that $\vec 1_t\in X$.
  Fix a basis $\vec x_1,\dots,\vec x_{t-\ell}$ of~$X$ and extend it by vectors $\vec z_1,\dots,\vec z_\ell$ to a basis of~$\ZZ_d^t$.
  Denote the dual basis with respect to the ordinary dot product by $\hat{\vec x}_1,\dots,\hat{\vec x}_{t-\ell}$, $\hat{\vec z}_1,\dots,\hat{\vec z}_\ell$.
  Now, any vector in $\ZZ_d^{2t}$, so particularly in~$T$, can be written uniquely in the form $(\vec a+\vec b,\vec b)$.
  The subspace of vectors where $\vec b$ is a linear combination of $\vec z_1,\dots,\vec z_\ell$ forms a complement of $T_\Delta \subseteq T$, which we shall denote by $T_N$.
  Since $T_N \cap \Delta = \{0\}$, we know that $\vec a\neq0$ for any nonzero vector in~$T_N$.
  The condition that $T$ is self-orthogonal implies that $\vec a \cdot \vec x_i = 0$ for all $i=1,\dots,t-\ell$, so that $\vec a$ is a linear combination of $\hat{\vec z}_1,\dots,\hat{\vec z}_\ell$.
  Since also $\dim T_N=\ell$, this implies that $T_N$ has a \emph{unique} basis of the form $(\hat{\vec z}_1+\vec w_1,\vec w_1)$, \dots, $(\hat{\vec z}_\ell+\vec w_\ell,\vec w_\ell)$, where each $\vec w_i$ is of the form $\vec w_i = \sum_{j=1}^\ell A_{ij} \vec z_j$.
  We still need to implement the condition that $T_N$ is self-orthogonal.
  In terms of the matrix $A=(A_{ij})$, this means that
  \begin{align}\label{eq:constraints on A}
    0 = (\hat{\vec z}_i+\vec w_i) \cdot (\hat{\vec z}_j+\vec w_j) - \vec w_i \cdot \vec w_j = \hat{\vec z}_i\cdot\hat{\vec z}_j + A_{ij} + A_{ji} \pmod d
  \end{align}
  for any $i,j$.
  This means that the lower triangular part of~$A$ is uniquely determined by the upper triangular part.

  For $d\neq2$, \eqref{eq:constraints on A} furthermore implies that the diagonal entries of~$A$ are fixed, so there are in total $d^{\ell(\ell-1)/2}$ many options for~$A$.
  We have thus implemented all conditions for~$T$ to be a subspace in~$\Sigma_{t,t}^\ell(d)$ since, according to \cref{rem:bilinear form}, for $d\neq2$, any self-orthogonal $T$ is automatically totally isotropic.
  The set of $(t-\ell)$-dimensional subspaces of~$\ZZ_d^t$ that contain~$\vec 1_t$ are in bijection with the $(t-\ell-1)$-dimensional subspaces in~$\ZZ_d^t/\ZZ_d \vec 1_t$, hence there are $\binom{t-1}{t-\ell-1}_d=\binom{t-1}\ell_d$ many choices for $X$.
  Together, we obtain~\eqref{eq:sigma ell claim}.

  For $d=2$, \eqref{eq:constraints on A} gives no constraint about the diagonal entries of~$A$.
  Instead, it asserts that $\hat{\vec z}_i \cdot \hat{\vec z}_i = 0$ or, equivalently, that $\hat{\vec z}_i \cdot \vec 1_t = 0$ for $i=1,\dots,\ell$, which is automatically satisfied since $\vec 1_t\in X$.
  We will now show that there is a unique choice for the diagonal entries of~$A$ such that $T$ is totally isotropic with respect to the $\ZZ_4$-valued quadratic form~$\mathfrak q$.
  By the discussion in \cref{rem:bilinear form}, since $T$ is self-orthogonal, it suffices to consider $T_\Delta$ and its complement $T_N$ separately.
  But the vectors in $T_\Delta$ are automatically isotropic, while for $T_N$ total isotropy amounts to the condition that
  \begin{align*}
    0 = (\hat{\vec z}_i+\vec w_i) \cdot (\hat{\vec z}_i+\vec w_i) - \vec w_i \cdot \vec w_i = \hat{\vec z}_i\cdot\hat{\vec z}_i + 2 A_{ii} \pmod 4,
  \end{align*}
  which fixes the~$A_{ii}$ uniquely.
  We thus obtain~\eqref{eq:sigma ell claim} by the same counting as above.
\end{proof}


We finally obtain \cref{thm:commutant} as a consequence of the preceding results.

\begin{proof}[Proof of \cref{thm:commutant}]
By combining \cref{lem:commutant,thm:sigma size,thm:commutant size}, we see that the operators $R(T)$ form a basis of the commutant of $\Cliff(n,d)^{\ot t}$ on $((\CC^d)^{\ot n})^{\ot t}$.
\end{proof}

It is interesting to note that all elements $R(T)$ of our basis of the commutant of $\Cliff(n,d)^{\ot s}$ have the property that $\braket{S^{\ot t} | R(T) | S^{\ot t}}=1$ for every stabilizer state~$\ket S$.
Indeed, if $T\in \Sigma_{t,t}(d)$ and $\ket S=U \ket 0^{\ot n}$ for some Clifford unitary $U$, then
\begin{align}\label{eq:not quite eigenvectors}
  \braket{S^{\ot t} | R(T) | S^{\ot t}}=\braket{S^{\ot t} | R(T) U^{\ot t} | 0^{\ot tn}}
  = \braket{S^{\ot t} | U^{\ot t} R(T) | 0^{\ot tn}}
  = \braket{0^{\ot tn} | R(T) | 0^{\ot tn}} = 1,
\end{align}
where we used that $\vec 0\in T$ (see also \cref{eq:eigenvectors} below).

\subsection{Structure of the commutant}
\Cref{thm:commutant} is in the spirit of Schur-Weyl duality in that it establishes a natural basis of the commutant of the tensor power action of the Clifford group (a subgroup of the unitary group), generalizing the permutation operators.
Yet, in contrast to the permutation group, $\Sigma_{t,t}(d)$ is \emph{not} in general a group and the operators $R(T)$ for $T\in\Sigma_{t,t}(d)$ are not always invertible.
In this section we show that $\Sigma_{t,t}(d)$ has a rich algebraic structure.

We first observe that there is a maximal subset of $\Sigma_{t,t}(d)$ that carries a group structure such that the $R(T)$ form a (unitary) representation.
The following definition and lemma identify these elements:

\restateDfnOt

\noindent
To see that $O_t(d)$ forms a group we only need to observe that $O^{-1} = O^T$ is again in $O_t(d)$.
The following remark is completely analogous to \cref{rem:bilinear form}.

\begin{rem}\label{rem:O_t}
Recall that a linear map is an \emph{isometry} with respect to a quadratic form $q$ if $q(O\vec x)=q(\vec x)$ for all $\vec x\in\ZZ_d^t$.
This justifies our terminology in \cref{dfn:O_t}.
As before, we note that $q$ is a $\ZZ_D$-valued quadratic form associated to the $\ZZ_d$-bilinear form $\vec x\cdot\vec y$ in the sense of~\cite{wood1993witt}, namely,
\begin{align}\label{eq:bilinear vs quadratic form O}
  q(\vec x + \vec y) = q(\vec x) + q(\vec y) + 2 \vec x \cdot \vec y \pmod D.
\end{align}
In particular, any $O\in O_t(d)$ is an orthogonal matrix in the ordinary sense that $O^T O = I \pmod d$, i.e., $O\vec x \cdot O\vec y=\vec x\cdot\vec y$ for all $\vec x, \vec y\in\ZZ_d^t$.
If $d$ is odd then $q(\vec x)=\vec x\cdot \vec x$, so any orthogonal matrix is automatically a $q$-isometry.

If $d=2$ then \cref{eq:bilinear vs quadratic form O} implies that an orthogonal matrix~$O$ is a $q$-isometry provided that $q(\vec x)=1\pmod4$ for every column of~$O$ or, equivalently, for every row of~$O$ (since $O^T = O^{-1}$).
In particular, any $q$-isometry is automatically stochastic.
\end{rem}

\noindent
The significance of \cref{dfn:O_t} is the following observation.

\begin{lem}\label{lem:O_t maximal subgroup}
For every $O\in O_t(d)$, the subspace
\begin{align*}
  T_O\coloneqq \{(O\vec x,\vec x)\,:\, \vec x \in \ZZ_d^t\}
\end{align*}
is an element of $\Sigma_{t,t}(d)$ and the operators
\begin{align}\label{eq:O_t action}
  r(O) \coloneqq r(T_O) = \sum_{\vec x} \ket{O\vec x}\bra{\vec x}, \quad
  R(O) \coloneqq r(O)^{\ot n} = R(T_O)
\end{align}
are unitary.
Conversely, if $R(T)$ is invertible then $T=T_O$ for some $O\in O_t(d)$.
Moreover, the operators $R(O)$ define a unitary representation of $O_t(d)$ on $(\CC^d)^{\ot tn}$.
\end{lem}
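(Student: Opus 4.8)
The plan is to verify the four assertions in turn, each being a short computation once the right structure is identified. First I would check that $T_O\in\Sigma_{t,t}(d)$ for $O\in O_t(d)$ by going through the three conditions of \cref{dfn:lagrangian stochastic}. Total $\mathfrak q$-isotropy is immediate from the isometry property: $\mathfrak q(O\vec x,\vec x)=(O\vec x)\cdot(O\vec x)-\vec x\cdot\vec x=q(O\vec x)-q(\vec x)=0\pmod D$. The map $\vec x\mapsto(O\vec x,\vec x)$ is injective (it is a section of the second-coordinate projection), so $\dim T_O=t$, the maximal possible value. Finally, $\vec 1_{2t}=(\vec 1_t,\vec 1_t)=(O\vec 1_t,\vec 1_t)\in T_O$ because $O$ is stochastic, so $T_O$ is stochastic; hence $T_O\in\Sigma_{t,t}(d)$.

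Next, unitarity. An element $O\in O_t(d)$ satisfies $O^T O=I\bmod d$ (see \cref{rem:O_t}), so $O$ is a bijection of $\ZZ_d^t$; therefore $r(O)=\sum_{\vec x}\ket{O\vec x}\bra{\vec x}$ is the permutation matrix implementing $\vec x\mapsto O\vec x$, which is unitary, and $R(O)=r(O)^{\ot n}$ is unitary as well. For the representation property I would simply compute $r(O_1)r(O_2)=\sum_{\vec y}\ket{O_1 O_2\vec y}\bra{\vec y}=r(O_1 O_2)$, using $\braket{\vec x|O_2\vec y}=\delta_{\vec x,O_2\vec y}$; taking $n$-th tensor powers gives $R(O_1)R(O_2)=R(O_1 O_2)$, and $R(I)=I$, so $O\mapsto R(O)$ is a unitary representation of $O_t(d)$ on $(\CC^d)^{\ot tn}$.

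It remains to prove the converse: if $R(T)$ is invertible then $T=T_O$ for some $O\in O_t(d)$. Since $R(T)=r(T)^{\ot n}$ with $n\geq1$, invertibility of $R(T)$ forces invertibility of $r(T)$ (a tensor power of a singular operator is singular). Now observe that the column of $r(T)$ indexed by $\vec y\in\ZZ_d^t$ equals $\sum_{\vec x:(\vec x,\vec y)\in T}\ket{\vec x}$, which vanishes unless $\vec y$ lies in the image of the second-coordinate projection $\pi_2\colon T\to\ZZ_d^t$; likewise the rows of $r(T)$ are controlled by the first-coordinate projection $\pi_1$. Hence both $\pi_1$ and $\pi_2$ must be surjective, and since $\dim T=t$ they are bijective. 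Setting $O\coloneqq\pi_1\circ\pi_2^{-1}$, an invertible linear map, we obtain $T=T_O=\{(O\vec y,\vec y):\vec y\in\ZZ_d^t\}$. Translating the two Lagrangian conditions on $T$ back into conditions on $O$: total $\mathfrak q$-isotropy gives $q(O\vec y)-q(\vec y)=\mathfrak q(O\vec y,\vec y)=0\pmod D$, i.e., $O$ is a $q$-isometry; and stochasticity of $T$ gives $(\vec 1_t,\vec 1_t)\in T_O$, i.e., $O\vec 1_t=\vec 1_t$. Thus $O\in O_t(d)$, completing the lemma.

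I do not anticipate a serious obstacle; the lemma is mostly bookkeeping. The only point needing mild care is the converse, where one must argue that invertibility rules out a degenerate coordinate projection, and then double-check (particularly for $d=2$, where $D=2d$) that the $\ZZ_D$-valued isotropy condition $\mathfrak q(O\vec y,\vec y)=0$ is exactly the $\ZZ_D$-valued isometry condition $q(O\vec y)=q(\vec y)$ --- which it is, by the identity $\mathfrak q(\vec x,\vec y)=q(\vec x)-q(\vec y)$ defining $\mathfrak q$.
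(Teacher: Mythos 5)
Your proposal is correct and follows essentially the same route as the paper: the forward directions are the same direct verifications, and your converse argument (invertibility of $r(T)$ forces both coordinate projections of $T$ to be surjective, hence bijective since $\dim T = t$, so $T = T_O$ with $O\in O_t(d)$) is exactly the paper's "full row and column rank" observation, spelled out in more detail.
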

\begin{proof}
Only the converse needs justification.
Note that in order for $R(T)$ to be invertible, both subspaces $\{ \vec x : (\vec x,\vec y)\in T \}$ and $\{ \vec y : (\vec x,\vec y)\in T\}$ of $\ZZ_d^t$ should be $t$-dimensional (corresponding to $r(T)$ having full row and column rank).
The claim now follows easily.
\end{proof}

\noindent
We will often regard $O_t(d)$ as a subset of $\Sigma_{t,t}(d)$ via the assignment $O \mapsto T_O$.
Note that any permutation matrix satisfies the conditions of \cref{dfn:O_t}, so we can consider $S_t$ as a subgroup of $O_t(d)$ for every value of~$d$, and hence as a subset of $\Sigma_{t,t}(d)$.

\begin{rem}
The Clifford group is a $t$-design (for $n\geq t-1$) if and only if $S_t = O_t(d) = \Sigma_{t,t}(d)$, i.e., if and only if
\begin{align*}
  t! = |S_t| = |\Sigma_{t,t}(d)| = \prod_{k=0}^{t-2} (d^k+1).
\end{align*}
This identity always holds up to $t=2$, and up to $t=3$ precisely in the case of qubits ($d=2$).
Thus the multiqubit Clifford group is a 3-design (but not a 4-design), while in higher dimensions the Clifford group is only a 2-design (but not a 3-design), reproducing prior beautiful results~\cite{zhu2015multiqubit,webb2016clifford}.
\end{rem}

For $O\in O_t(d)$, \cref{eq:not quite eigenvectors} implies that
\begin{align}\label{eq:eigenvectors}
  R(O) \ket S^{\ot t} = \ket S^{\ot t}
\end{align}
for every stabilizer state $\ket S$.
That is, stochastic isometries stabilize any stabilizer tensor power.
We will return to discussing the implications of this important fact in \cref{sec:moments} below.

Next, we note that the group $O_t(d)$ naturally acts on the elements of $\Sigma_{t,t}(d)$ from left and right, suggesting that it is the natural symmetry group of $\Sigma_{t,t}(d)$.

\begin{dfn}[Left and right action on subspaces]\label{dfn:O on Sigma}
Consider a subspace $T\in \Sigma_{t,t}(d)$ and a matrix $O \in O_t(d)$.
We define the \emph{left action} of $O$ on $T$ as follows:
\begin{align}\label{eq:left action}
  OT=\{(O \vec x,\vec y)\, : \, (\vec x,\vec y) \in T\}.
\end{align}
Similarly, the \emph{right action} of $O$ on $T$ is defined as:
\begin{align}\label{eq:right action}
  TO=\{( \vec x,O^T \vec y)\, : \, (\vec x,\vec y) \in T\}.
\end{align}
It is easy to check that $OT, TO \in \Sigma_{t,t}(d)$.
\end{dfn}
Note that this action is consistent with the composition of the operators $R(T)$ and $R(O)$:
For all $T\in\Sigma_{t,t}(d)$ and $O,O'\in O_t(d)$ we have that
\[ R(O) R(T) R(O')=R(OTO'). \]
We can therefore decompose $\Sigma_{t,t}(d)$ into a disjoint union of \emph{double cosets} with respect to the left and right action:
\begin{equation}\label{eq:double cosets}
\Sigma_{t,t}(d) = O_t(d) T_1 O_t(d) \cup \dots \cup O_t(d) T_k O_t(d),
\end{equation}
where $T_1,\dots,T_k$ are choices of subspaces in $\Sigma_{t,t}(d)$ that represent the different cosets.
We note that $O_t(d)$ is always one of the double cosets in \cref{eq:double cosets}, corresponding to, e.g., the choice $T_1=\Delta$.

\bigskip

We will now derive a complete classification of the double cosets.
We start with the central definition.
Recall the quadratic form $q\colon \ZZ_d^t \to \ZZ_D$, $q(\vec x) \coloneqq \vec x\cdot\vec x$ from \cref{dfn:O_t}.

\begin{dfn}[Defect subspaces]\label{dfn:defect subspaces}
A \emph{defect subspace} is a subspace $N\subseteq\ZZ_d^t$ with the following properties:
\begin{enumerate}
\item $N$ is \emph{totally $q$-isotropic}: i.e., $q(\vec x)=0\pmod D$ for all $\vec x\in N$.
\item $N$ is \emph{co-stochastic}: $\vec 1_t\in N^\perp$, i.e., $\vec x\cdot\vec 1_t=0\pmod d$ for every $\vec x\in N$.
\end{enumerate}
The quotient $N^\perp/N$ inherits a $\ZZ_D$-valued quadratic form, which we also denote by~$q([\vec y]) \coloneqq q(\vec y)$.

Given two defect subspaces $N$ and $M$, we write $\Iso(N,M)$ for the set of \emph{defect isomorphisms}, by which we mean invertible linear maps~$J \colon N^\perp/N \to M^\perp/M$ with the following two properties:
\begin{enumerate}
\item $J$ is a \emph{$q$-isometry}: i.e., $q(J[\vec y]) = q([\vec y])$ for all $[\vec y]\in N^\perp/N$.
\item $J$ is \emph{stochastic}: $J[\vec 1_t] = [\vec 1_t]$.
\end{enumerate}
The inverse of $J$ is again a map in $\Iso(M,N)$.
\end{dfn}

This definition is central as it allows us to construct elements in $\Sigma_{t,t}(d)$.
If $N$, $M$ are defect subspaces and $J\in\Iso(M,N)$, then
\begin{equation}\label{eq:T from defect spaces}
\begin{aligned}
  T &= \{ (\vec x + \vec z, \vec y + \vec w) : [\vec y]\in M^\perp/M, \, [\vec x]=J[\vec y], \, \vec z\in N, \, \vec w\in M \} \\
  &= \{ (\vec x, \vec y) : \vec y\in M^\perp, \, \vec x \in J[\vec y] \}
\end{aligned}
\end{equation}
is an element in $\Sigma_{t,t}(d)$.
Note that, necessarily, $\dim N=\dim M$ (since $J$ is invertible) and $\vec 1_t\in N$ if and only if $\vec 1_t\in M$ (since $J$ is also stochastic).
We now show that \emph{all} elements of $\Sigma_{t,t}(d)$ can be obtained in this way.

\begin{prp}\label{prp:defect subspaces}
Let $T\in\Sigma_{t,t}(d)$.
\begin{enumerate}
  \item The subspaces $T_{LD} \coloneqq \{\vec x : (\vec x,0)\in T\}$ and $T_{RD} \coloneqq \{\vec y : (0,\vec y)\in T\}$ are defect subspaces.
  We call them the \emph{left} and \emph{right defect subspaces} of $T$, respectively.
  \item $\dim T_{LD}=\dim T_{RD}$ and $\vec 1_t\in T_{LD}$ if and only if $\vec 1_t\in T_{RD}$.
  \item $T_{LD}^\perp = T_L \coloneqq \{ \vec x : (\vec x,\vec y) \in T\}$ and $T_{RD}^\perp = T_R \coloneqq \{ \vec y : (\vec x,\vec y) \in T\}$,
  \item For every $\vec y\in T_{RD}^\perp$, choose some~$\vec x(\vec y)$ such that $(\vec x(\vec y), \vec y)\in T$.
  Then $T_J\colon [\vec y]\mapsto[\vec x(\vec y)]$ is a well-defined defect isomorphism, i.e., an element in $\Iso(T_{RD}, T_{LD})$.
  \item The data $(T_{LD}, T_{RD}, T_J)$ is uniquely determined by $T$.
  \item $T$ is of the form~\eqref{eq:T from defect spaces}, with $T_{LD}=N$, $T_{RD}=M$, and $T_J=J$.
\end{enumerate}
\end{prp}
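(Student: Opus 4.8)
The plan is to exploit the fact, already used in the proof of \cref{lem:commutant}, that a stochastic Lagrangian subspace satisfies $T = T^\perp$ with respect to the non-degenerate bilinear form $\mathfrak b$ of \cref{rem:bilinear form}, together with the totally $\mathfrak q$-isotropic and stochastic conditions. Everything is then linear algebra over the field $\ZZ_d$, organized around a single ``adjunction'' between the coordinate projections of $T$ and its defect subspaces.

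First I would record the duality $T_L^\perp = T_{LD}$ and $T_R^\perp = T_{RD}$, where the perps are taken with respect to the ordinary dot product on $\ZZ_d^t$. Indeed, $\vec u\cdot\vec x = 0$ for all $\vec x\in T_L$ says exactly that $\mathfrak b\big((\vec u,\vec 0),(\vec x,\vec y)\big) = 0$ for all $(\vec x,\vec y)\in T$, i.e.\ $(\vec u,\vec 0)\in T^\perp = T$, i.e.\ $\vec u\in T_{LD}$; the statement for $T_R$ is the same computation with $(\vec 0,\vec u)$. Taking perps once more (valid over a field) gives $T_{LD}^\perp = T_L$ and $T_{RD}^\perp = T_R$, which is item~(3). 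Item~(1) follows immediately: isotropy of $T_{LD}$ and $T_{RD}$ is the totally $\mathfrak q$-isotropic condition applied to $(\vec x,\vec 0)$ and $(\vec 0,\vec y)$, and the co-stochastic property holds because $\vec 1_t\in T_L = T_{LD}^\perp$ (since $(\vec 1_t,\vec 1_t)\in T$), and likewise on the right. For item~(2), I would compute $\dim T_L$ in two ways --- as $t - \dim T_{RD}$ from the kernel of the projection $p_1|_T$, and as $t - \dim T_{LD}$ from $T_L^\perp = T_{LD}$ --- and conclude $\dim T_{LD} = \dim T_{RD}$; subtracting $(\vec 0,\vec 1_t)$ from $(\vec 1_t,\vec 1_t)\in T$, or conversely, shows $\vec 1_t\in T_{LD}\iff\vec 1_t\in T_{RD}$.

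Next I would set up the defect isomorphism. For $\vec y\in T_R = T_{RD}^\perp$, any choice of $\vec x(\vec y)$ with $(\vec x(\vec y),\vec y)\in T$ has $\vec x(\vec y)\in T_L = T_{LD}^\perp$, so $[\vec x(\vec y)]$ is a well-defined class in $T_{LD}^\perp/T_{LD}$, independent of the choice: if $(\vec x,\vec y),(\vec x',\vec y)\in T$ then $(\vec x - \vec x',\vec 0)\in T$, hence $\vec x - \vec x'\in T_{LD}$. Linearity is inherited from $T$. The assignment vanishes on $T_{RD}$ (take $\vec x(\vec y) = \vec 0$ when $(\vec 0,\vec y)\in T$), so it descends to a linear map $T_R/T_{RD}\to T_L/T_{LD}$; it is injective because $\vec x(\vec y)\in T_{LD}$ forces $(\vec 0,\vec y)\in T$ and hence $\vec y\in T_{RD}$, and therefore bijective since both quotients have dimension $t - 2\dim T_{RD} = t - 2\dim T_{LD}$ by item~(2). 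The $q$-isometry property $\vec x(\vec y)\cdot\vec x(\vec y)\equiv\vec y\cdot\vec y\pmod D$ is precisely the totally $\mathfrak q$-isotropic condition on $(\vec x(\vec y),\vec y)\in T$, compatibly with the well-defined $\ZZ_D$-valued quadratic forms on the quotients supplied by \cref{dfn:defect subspaces}; and $T_J[\vec 1_t] = [\vec 1_t]$ follows from the admissible choice $\vec x(\vec 1_t) = \vec 1_t$, using $(\vec 1_t,\vec 1_t)\in T$. This is item~(4), and item~(5) is then automatic since $T_{LD},T_{RD}$ are read off directly from $T$ and $T_J$ has just been shown independent of the choices $\vec x(\vec y)$.

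Finally, for item~(6) I would verify the two inclusions between $T$ and the subspace of \eqref{eq:T from defect spaces} associated with $(T_{LD},T_{RD},T_J)$. Any $(\vec x,\vec y)\in T$ has $\vec y\in T_R = T_{RD}^\perp$ and $\vec x\in\vec x(\vec y) + T_{LD} = T_J[\vec y]$, so it lies in that subspace; conversely $(\vec x(\vec y) + \vec n,\vec y)$ with $\vec n\in T_{LD}$ equals $(\vec x(\vec y),\vec y) + (\vec n,\vec 0)\in T$. (Alternatively, one just notes that this subspace has dimension $\dim T_R + \dim T_{LD} = t = \dim T$ and contains $T$.) I do not expect a genuine obstacle here; the only thing demanding care is keeping straight in which space each orthogonal complement is taken, together with the distinction between reducing modulo $d$ and modulo $D$ --- the even/odd-$d$ subtlety --- both of which are already handled by the bilinear-versus-quadratic-form bookkeeping of \cref{rem:bilinear form} and the quotient form supplied by \cref{dfn:defect subspaces}.
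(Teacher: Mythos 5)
Your proof is correct and follows essentially the same route as the paper's: coordinate projections of $T$, rank--nullity, and the Lagrangian property $T=T^\perp$ with respect to $\mathfrak b$, followed by the same quotient construction for $T_J$ and the same two-inclusion (or dimension-count) argument for item~(6). The only, harmless, difference is that you obtain item~(3) by computing $T_L^\perp = T_{LD}$ directly from $T=T^\perp$ and then double-perping, whereas the paper first establishes only $T_L\subseteq T_{LD}^\perp$ and forces equality by adding two dimension inequalities; your version is marginally more economical but rests on the same facts.
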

\begin{proof}
The first claim is clear from \cref{dfn:lagrangian stochastic}.
Since $T$ is stochastic, so $(\vec 1_t,0)\in T$ if and only if $(0,\vec 1_t) = \vec 1_{2t}-(\vec 1_t,0)\in T$, which proves half of the second claim.
Next, consider the maps
\begin{align*}
  \pi_L\colon T\to\ZZ_d^t, \; (\vec x, \vec y)\mapsto\vec x \quad\text{and}\quad
  \pi_R\colon T\to\ZZ_d^t, \; (\vec x, \vec y)\mapsto\vec y.
\end{align*}
Then $T_{LD}\cong\ker\pi_R$ and $T_{RD}\cong\ker\pi_L$, while $T_L=\ran\pi_L$ and $T_R=\ran\pi_R$.
Note that $T_L\subseteq T_{LD}^\perp$ and $T_R\subseteq T_{RD}^\perp$, since $T$ is totally isotropic.
Using the rank-nullity theorem,
\begin{align*}
  \dim T_{LD}^\perp &= \dim T - \dim T_{LD} = \dim T/\ker\pi_R = \dim\ran\pi_R = \dim T_R \leq \dim T_{RD}^\perp, \\
  \dim T_{RD}^\perp &= \dim T - \dim T_{RD} = \dim T/\ker\pi_L = \dim\ran\pi_L = \dim T_L \leq \dim T_{LD}^\perp.
\end{align*}
Adding the two inequalities we see that they must both be equalities, hence $\dim T_{LD} = \dim T_{RD}$ as well as $T_L=T_{LD}^\perp$, $T_R=T_{RD}^\perp$.
This establishes the second and third claim.

For the fourth claim, first recall from above that $T_{LD}^\perp=T_L$ and $T_{RD}^\perp=T_R$, which means that for any $\vec y\in T_{RD}^\perp$ there exists some $\vec x\in T_{LD}^\perp$ such that $(\vec x,\vec y)\in T$.
Next, suppose that $(\vec x,\vec y),(\vec x',\vec y')\in T$ such that $\vec y-\vec y'\in T_{RD}$.
Then $(0,\vec y-\vec y')\in T$, so
\begin{align*}
  (\vec x-\vec x',0) = (\vec x,\vec y) - (0,\vec y-\vec y') - (\vec x',\vec y') \in T,
\end{align*}
which means that $\vec x-\vec x'\in T_{LD}$.
As a consequence, $[\vec y]\mapsto[\vec x(\vec y)]$ is well-defined as a map from $T_{RD}^\perp/T_{RD}$ to $T_{LD}^\perp/T_{LD}$.
Using \cref{dfn:lagrangian stochastic}, it is not hard to see that it defines an element of~$\Iso(T_{RD}, T_{LD})$.

The fifth claim is clear by construction of $T_{LD}$ and $T_{RD}$ and from the fact that $[\vec y]\mapsto[\vec x(\vec y)]$ is well-defined.
And the last claim can be seen to hold since the right-hand side of~\eqref{eq:T from defect spaces} is clearly a subset of~$T$ for our choice of $N$, $M$, and $J$, but also of dimension~$t$.
\end{proof}

Thus we can cleanly decompose a subspace $T$ into the two defect subspaces $T_{LD}$ and $T_{RD}$ as well as the defect isomorphism $T_J\colon [\vec y]\mapsto[\vec x(\vec y)]$.
When $T$ corresponds to a stochastic isometry $O\in O_t(d)$, i.e., $T = T_O= \{(O\vec y,\vec y)\}$, then both defect subspaces are trivial and the defect isomorphism $T_J$ can be identified with~$O$ itself.

\medskip

According to \cref{prp:defect subspaces}, for every $T\in\Sigma_{t,t}(d)$, the left and right defect subspaces $T_{LD}$ and $T_{RD}$ necessarily have the same dimension and $\vec 1_t\in T_{LD}$ if and only if $\vec 1_t\in T_{RD}$.
Note that if $T\in\Sigma_{t,t}(d)$ and $O, O'\in O_t(d)$, then $T' = O T O'$ has defect subspaces
\begin{align}\label{eq:defect space action}
  T'_{LD} = O T_{LD} \quad\text{and}\quad T'_{RD} = (O')^T T_{RD}
\end{align}
and defect isomorphism
\begin{align}\label{eq:defect iso action}
  T'_J \colon [\vec y] \mapsto [O \vec x(O' \vec y)].
\end{align}
Which elements $T'\in\Sigma_{t,t}(d)$ can be obtained in this way?
Clearly, the left-right action by $O_t(d)$ preserves the common dimension of the defect subspaces and whether the all-ones vector is contained.
We will now show that these are the only two invariants.

\begin{lem}\label{lem:witt first}
Let $N, M\subseteq\FF_d^t$ be two defect subspaces with $\dim N=\dim M$ and $\vec 1_t\in N$ if and only if $\vec 1_t\in M$.
Then there exists $O\in O_t(d)$ such that $ON=M$.
\end{lem}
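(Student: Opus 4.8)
The plan is to reduce the statement to Witt's extension theorem for the $\ZZ_D$-valued quadratic form $q(\vec x)=\vec x\cdot\vec x$ on $\FF_d^t$, incorporating the stochasticity constraint $O\vec 1_t=\vec 1_t$ by working relative to the all-ones vector. First I would set $N_+ \coloneqq N+\langle\vec 1_t\rangle$ and $M_+ \coloneqq M+\langle\vec 1_t\rangle$. Since $\dim N=\dim M$ and $\vec 1_t\in N$ if and only if $\vec 1_t\in M$, the two subspaces $N_+$ and $M_+$ have the same dimension (namely $\dim N$ if $\vec 1_t\in N\cap M$, and $\dim N+1$ otherwise).

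Next I would produce a linear isomorphism $\phi\colon N_+\to M_+$ with $\phi(\vec 1_t)=\vec 1_t$ and $\phi(N)=M$. Pick any linear isomorphism $\phi_0\colon N\to M$; if $\vec 1_t\notin N$, then $N_+=N\oplus\langle\vec 1_t\rangle$ and $M_+=M\oplus\langle\vec 1_t\rangle$, and we extend $\phi_0$ by declaring $\phi(\vec 1_t)=\vec 1_t$; if $\vec 1_t\in N$ we may already choose $\phi_0$ to fix $\vec 1_t$, since $\vec 1_t\in N\cap M$ and the dimensions agree. It then remains to check that $\phi$ is a $q$-isometry. For $\vec v=\vec n+c\vec 1_t\in N_+$ with $\vec n\in N$ and $c\in\ZZ_d$, \cref{eq:bilinear vs quadratic form O} gives $q(\vec v)=q(\vec n)+c^2 q(\vec 1_t)+2c\,(\vec n\cdot\vec 1_t)\equiv c^2 t \pmod D$, because $q(\vec n)=0$ by total $q$-isotropy of $N$ and $2c\,(\vec n\cdot\vec 1_t)\equiv 0\pmod D$ since $\vec n\cdot\vec 1_t\equiv 0\pmod d$ by co-stochasticity. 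The identical computation for $\phi(\vec v)=\phi_0(\vec n)+c\vec 1_t\in M_+$ (using total $q$-isotropy and co-stochasticity of $M$) yields $q(\phi(\vec v))\equiv c^2 t\pmod D$, so $q\circ\phi=q$ on $N_+$, as needed.

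Finally I would invoke the Witt extension theorem for the non-degenerate $\ZZ_D$-valued form $q$: the isometry $\phi$ between the subspaces $N_+$ and $M_+$ extends to a global $q$-isometry $O$ of $\FF_d^t$, and by construction $O\vec 1_t=\vec 1_t$ and $ON=\phi(N)=M$. By \cref{rem:O_t}, a $q$-isometry of $\FF_d^t$ fixing $\vec 1_t$ is exactly an element of $O_t(d)$ (for $d=2$ a $q$-isometry is automatically stochastic), so $O\in O_t(d)$ is the required stochastic isometry. I expect the main obstacle to be citing Witt's theorem in the correct generality when $d=2$: there the ambient ring is $\ZZ_4$ rather than a field, and the bilinear form $2(\vec x\cdot\vec y)$ attached to $q$ is non-degenerate but not alternating, so one must appeal to the version of Witt's theorem for such $\ZZ_4$-valued quadratic forms (the same type of input used in the counting argument of \cref{thm:commutant size}, cf.~\cite{wood1993witt,nebe2006self,wilson2009finite}); the remaining steps are elementary linear algebra together with routine $\bmod D$ versus $\bmod d$ bookkeeping.
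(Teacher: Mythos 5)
Your proposal is correct and follows essentially the same route as the paper: enlarge $N$ and $M$ by the all-ones vector, build an isometry of the enlarged subspaces that fixes $\vec 1_t$ and maps $N$ to $M$, and extend it via Witt's theorem (the classical symmetric-bilinear-form version for odd $d$, Wood's $\ZZ_4$-valued quadratic-form version for $d=2$). The only point you leave unverified is the extra hypothesis of Wood's theorem in the $d=2$ case --- that $(N+\langle\vec 1_t\rangle)\cap I^\perp=(M+\langle\vec 1_t\rangle)\cap I^\perp$ for $I=\{\vec y:\vec y\cdot\vec y=0\bmod 2\}$ --- which the paper checks explicitly (it reduces to $I^\perp=\ZZ_2\vec 1_t$ together with $\vec 1_t$ lying in both enlarged subspaces); on the other hand you are slightly more careful than the paper in arranging $\phi(N)=M$ rather than merely an isomorphism of the enlarged subspaces fixing $\vec 1_t$.
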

\begin{proof}
Let $\tilde N \coloneqq N+\ZZ_d\vec 1_t$ and $\tilde M \coloneqq M+\ZZ_d\vec 1_t$.
The assumption implies that $\dim \tilde N=\dim \tilde M$.
Choose any linear isomorphism $\tilde O\colon \tilde N\to \tilde M$ such that $\tilde O \vec 1_t = \vec 1_t$.
Since both $N$ and $M$ are totally isotropic and co-stochastic, $\tilde O$ is an isometry with respect to the symmetric bilinear form $\vec x\cdot\vec y$.

If $d>2$, we can directly apply the usual version of Witt's lemma for symmetric bilinear forms of odd characteristic~\cite{wilson2009finite} to see that $\tilde O$ extends to an isometry map~$O$ which by construction is also stochastic, i.e., $O\in O_t(d)$.

For $d=2$, we appeal to the version of Witt's lemma from~\cite{wood1993witt} for the $\ZZ_4$-valued quadratic form~$q(\vec x)=\vec x\cdot\vec x\pmod4$ from \cref{rem:O_t}.
Here we need to verify two conditions:
(i) $\tilde O$ should be an isometry with respect to~$q$, i.e., $q(\tilde O\vec x) = q(\vec x)$ for every $\vec x\in \tilde N$.
Since we already know that $\tilde O$ is orthogonal, it suffices to check this condition on a generating set.
By construction, $\tilde O\vec 1_t=\vec 1_t$, so the condition is clearly true for $\vec x=\vec 1_t$.
On the other hand, both defect subspaces are totally isotropic, which means that $q(\tilde O\vec x)=0=q(\vec x)\pmod4$ for every $\vec x\in N$.
Thus, the first condition is satisfied.
(ii) We also need to check is that $\tilde N \cap I^\perp = \tilde M \cap I^\perp$, where~$I \coloneqq \{ \vec y \in \ZZ_2^t : \vec y\cdot\vec y=0 \pmod2 \}$.
But $I = \vec 1_t^\perp$ and hence $I^\perp=\ZZ_2 \vec 1_t$.
By construction, $\vec 1_t \in \tilde N$ and $\vec 1_t\in \tilde M$, so the second condition is also satisfied.
We conclude that $\tilde O$ extends to an isometry~$O$ with respect to the quadratic form~$q$, which implies that $O\in O_t(2)$ (\cref{rem:O_t}).
\end{proof}

\begin{cor}
Let $N, M\subseteq\FF_d^t$ be two defect subspaces with $\dim N=\dim M$ and $\vec 1_t\in N$ if and only if $\vec 1_t\in M$.
Then there exists $T\in\Sigma_{t,t}(d)$ such that $T_{LD}=N$ and $T_{RD}=M$.
\end{cor}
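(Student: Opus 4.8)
The plan is to deduce the statement from \cref{lem:witt first} together with the construction of elements of $\Sigma_{t,t}(d)$ out of defect data recorded in \cref{eq:T from defect spaces} and \cref{prp:defect subspaces}. Concretely, I want to produce a defect isomorphism between $M^\perp/M$ and $N^\perp/N$, and then simply read off the corresponding subspace $T$.

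First I would apply \cref{lem:witt first}: since $\dim N=\dim M$ and $\vec 1_t\in N$ if and only if $\vec 1_t\in M$, there is an $O\in O_t(d)$ with $ON=M$. Because $O^{-1}=O^T$ also lies in $O_t(d)$, applying $O^{-1}$ to both sides gives $O^T M=N$, and since $O^T$ is orthogonal with respect to the form $\vec x\cdot\vec y$ (\cref{rem:O_t}) it also satisfies $O^T M^\perp=(O^T M)^\perp=N^\perp$. Hence $O^T$ descends to a well-defined linear bijection $\bar J\colon M^\perp/M\to N^\perp/N$, $[\vec y]\mapsto[O^T\vec y]$; it is a $q$-isometry because $q(O^T\vec y)=q(\vec y)$, and it is stochastic because $O^T\vec 1_t=\vec 1_t$. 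Thus $\bar J\in\Iso(M,N)$ in the sense of \cref{dfn:defect subspaces}. Feeding the triple $(N,M,\bar J)$ into \cref{eq:T from defect spaces} then yields a subspace $T\in\Sigma_{t,t}(d)$, and by item~6 of \cref{prp:defect subspaces} its left and right defect subspaces are exactly $T_{LD}=N$ and $T_{RD}=M$, as required. Equivalently, one may take the ``diagonal'' element $T_0\in\Sigma_{t,t}(d)$ obtained from \cref{eq:T from defect spaces} with $N=M$ and $\bar J$ the identity (so that $T_{0,LD}=T_{0,RD}=N$), and then set $T\coloneqq T_0\,O^T$ using the right action of \cref{dfn:O on Sigma}; by \cref{eq:defect space action} this has $T_{RD}=(O^T)^T N=ON=M$ while $T_{LD}=N$.

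I expect no serious obstacle here: the only substantive ingredient is \cref{lem:witt first} itself (the Witt-type extension argument over $\FF_d$, valid also in characteristic two via the $\ZZ_4$-valued form), which is already available, and everything else is routine bookkeeping with the definitions of defect subspace and defect isomorphism. The single point that needs a line of care is the well-definedness of $\bar J$ on the quotients, which follows precisely from $O^T M=N$ together with $O^T M^\perp=N^\perp$.
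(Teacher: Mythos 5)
Your proof is correct and takes essentially the same route as the paper's: invoke \cref{lem:witt first} to obtain $O\in O_t(d)$ with $ON=M$, observe that $[\vec y]\mapsto[O^T\vec y]$ defines a defect isomorphism in $\Iso(M,N)$, and feed this data into the construction of \cref{eq:T from defect spaces}. Your alternative phrasing via the right action of \cref{dfn:O on Sigma} is just a repackaging of the same argument.
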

\begin{proof}
  Take $O\in O_t(d)$ as in \cref{lem:witt first}.
  Then, $O^T M = N$, $O^T M^\perp=N^\perp$, and hence $J\colon [\vec y] \mapsto [O^T \vec y]$ is a defect isomorphism.
  Then the subspace~\eqref{eq:T from defect spaces} has the desired defect spaces.
\end{proof}

\begin{lem}\label{lem:witt second}
Let $J\colon N^\perp/N \to M^\perp/M$ be a defect isomorphism.
Then there exists an $O\in O_t(d)$ inducing $J$, i.e., $ON=M$ and $[O\vec x]=J[\vec x]$ for every $[\vec x]\in N^\perp$.
\end{lem}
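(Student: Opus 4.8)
The strategy is to upgrade a defect isomorphism $J\colon N^\perp/N \to M^\perp/M$ to an honest stochastic isometry $O\in O_t(d)$, by combining a concrete linear-algebraic lift with the version of Witt's extension theorem already invoked in the proof of \cref{lem:witt first}. First I would reduce to the case $N=M$: by \cref{lem:witt first} there is an $O_0\in O_t(d)$ with $O_0 N = M$, and then $O_0^{-1}$ induces a defect isomorphism $M^\perp/M \to N^\perp/N$; composing with $J$ gives a defect \emph{automorphism} $J'$ of $N^\perp/N$, and an $O\in O_t(d)$ inducing $J$ is obtained by pre-composing an $O'$ inducing $J'$ with $O_0^{-1}$ (equivalently, $O = O_0 O'$ after suitably matching conventions). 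So from now on assume $M = N$ and $J\colon N^\perp/N \to N^\perp/N$ is a $q$-isometry fixing $[\vec 1_t]$.

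Next I would choose an explicit set-theoretic lift of $J$ to a \emph{partial} linear map defined on a complement. Fix a subspace $C\subseteq N^\perp$ with $N^\perp = N \oplus C$, and pick the basis of $C$ so that $\vec 1_t\in C$ if $\vec 1_t \notin N$ (which is legitimate since $J$ is stochastic); for each basis vector $\vec c_i$ of $C$ choose a representative $\vec c_i' \in N^\perp$ of $J[\vec c_i]$, with $\vec c_1' = \vec 1_t$ in the case $\vec 1_t\in C$. Define the linear map $\tilde O\colon N \oplus C \to N^\perp$ by $\tilde O|_N = \mathrm{id}_N$ and $\tilde O(\vec c_i) = \vec c_i'$. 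A short computation using that $J$ is a $q$-isometry of the quotient, that $N$ is totally isotropic, and that $N \subseteq N^\perp = (N^\perp)^\perp{}^\perp$-style orthogonality relations hold, shows that $\tilde O$ is an isometry of $N^\perp$ onto $N^\perp$ with respect to the \emph{bilinear} form $\vec x\cdot\vec y$ (and, for $d=2$, with respect to the $\ZZ_4$-valued quadratic form $q$ on the relevant generators, by the same case analysis as in \cref{lem:witt first}); moreover $\tilde O$ fixes $\vec 1_t$ by construction. One must be slightly careful: the representatives $\vec c_i'$ need not be exactly orthogonal on the nose, only modulo $N$, so I would adjust each $\vec c_i'$ by an element of $N$ (using that $N$ pairs trivially with $N^\perp$) to make the pairings on $C$ match those on the $\vec c_i$ exactly — this is the one genuinely fiddly bookkeeping step.

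Now I would invoke Witt's extension theorem in the same form already used above: for $d$ odd, the classical version for symmetric bilinear forms in odd characteristic~\cite{wilson2009finite}; for $d=2$, the version of~\cite{wood1993witt} for $\ZZ_4$-valued quadratic forms, for which one checks (i) $\tilde O$ is a $q$-isometry on $N^\perp$ — true by the generator check above — and (ii) $N^\perp \cap \vec 1_t^\perp$ is preserved, which holds because $\tilde O\vec 1_t = \vec 1_t$ and $\tilde O$ is orthogonal. The theorem extends $\tilde O$ to an isometry $O$ of all of $\ZZ_d^t$, which is then stochastic (it fixes $\vec 1_t$) and hence lies in $O_t(d)$ by \cref{rem:O_t}. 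By construction $O N = N$ and, for $[\vec x]\in N^\perp$, $[O\vec x] = [\tilde O\vec x] = J[\vec x]$, so $O$ induces $J$. Undoing the reduction ($O \mapsto O_0 O$) finishes the general case.

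The main obstacle I anticipate is \textbf{not} the Witt extension itself (that is a black box we already use) but the careful construction of $\tilde O$: one must verify that a set-theoretic lift of a quotient isometry can be corrected, by adding elements of $N$, into an honest isometry of $N^\perp$ that simultaneously respects the stochasticity constraint $\tilde O\vec 1_t = \vec 1_t$, and that in characteristic two it respects the $\ZZ_4$-quadratic refinement on the needed generating set. Once $\tilde O$ is in hand and its hypotheses for Witt are checked, the rest is formal.
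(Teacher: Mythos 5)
Your proof is correct and follows essentially the same route as the paper's (which is only a sketch): lift $J$ to a linear isometry $N^\perp \to M^\perp$ that fixes $\vec 1_t$ and maps $N$ to $M$, then extend to all of $\ZZ_d^t$ by the same version of Witt's theorem used for \cref{lem:witt first}. One small simplification: the ``fiddly'' adjustment of the representatives $\vec c_i'$ by elements of $N$ is unnecessary, since the bilinear form already descends to $N^\perp/N$ (because $N$ pairs trivially with $N^\perp$) and is preserved by any $q$-isometry of the quotient, so the pairings match for any choice of representatives.
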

\begin{proof}[Proof (Sketch)]
Note that the existence of $J$ implies that $\dim M=\dim N$ as well as $\vec 1_t\in M$ if and only if $\vec 1_t\in N$.
This means that we can choose a linear isomorphism $\tilde J\colon N^\perp\to M^\perp$ that fixes $\vec 1_t$, sends $N$ to $M$ and which restricts to~$J$.
As in the proof of \cref{lem:witt first}, we can use the appropriate version of Witt's lemma to obtain the existence of an extension $O\in O_t(d)$.
\end{proof}

\begin{cor}[Equivalence of double cosets]\label{cor:cosets}
Let $T, T'\in\Sigma_{t,t}(d)$.
Then, $T' \in O_t(d) T O_t(d)$ if and only if $\dim T_{LD} = \dim T'_{LD}$ and $\vec 1_t\in T_{LD}$ iff $\vec 1_t\in T'_{RD}$.
In particular, $\Sigma_{t,t}(d)$ consists of no more than $t$ double cosets.
\end{cor}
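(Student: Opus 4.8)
The plan is to assemble the corollary directly from the three structural results just proved. First I would establish the ``only if'' direction. Suppose $T' \in O_t(d)\, T\, O_t(d)$, say $T' = O T O'$ with $O, O' \in O_t(d)$. By \cref{eq:defect space action}, the defect subspaces transform as $T'_{LD} = O T_{LD}$ and $T'_{RD} = (O')^T T_{RD}$. Since $O$ and $(O')^T$ are invertible linear maps, this immediately gives $\dim T'_{LD} = \dim T_{LD}$ and $\dim T'_{RD} = \dim T_{RD}$; combined with \cref{prp:defect subspaces}(2) (which says the left and right defect dimensions always agree) we get the first invariant. For the all-ones vector: since $O$ is stochastic, $O \vec 1_t = \vec 1_t$, so $\vec 1_t \in T_{LD}$ iff $\vec 1_t \in O T_{LD} = T'_{LD}$; similarly $(O')^T$ is stochastic (being the inverse of $O' \in O_t(d)$), so $\vec 1_t \in T_{RD}$ iff $\vec 1_t \in T'_{RD}$. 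Using \cref{prp:defect subspaces}(2) once more to pass freely between left and right, the stated ``iff'' condition holds.

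For the ``if'' direction, suppose $T, T' \in \Sigma_{t,t}(d)$ satisfy $\dim T_{LD} = \dim T'_{LD}$ and the all-ones condition. By \cref{prp:defect subspaces}(2) applied to both $T$ and $T'$, all four defect subspaces $T_{LD}, T_{RD}, T'_{LD}, T'_{RD}$ have the same dimension, and $\vec 1_t$ is contained in one of them iff it is contained in any other. Hence by \cref{lem:witt first} there exist stochastic isometries $O_1, O_2 \in O_t(d)$ with $O_1 T_{LD} = T'_{LD}$ and $O_2^T T_{RD} = T'_{RD}$. Consider $T'' \coloneqq O_1 T O_2$; by \cref{eq:defect space action} its defect subspaces are exactly $T''_{LD} = T'_{LD}$ and $T''_{RD} = T'_{RD}$, i.e. $T''$ and $T'$ have identical left and right defect subspaces. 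It then remains to fix up the defect isomorphisms: both $T''_J$ and $T'_J$ are elements of $\Iso(T'_{RD}, T'_{LD})$, so $T'_J \circ (T''_J)^{-1}$ is a defect isometry of $N^\perp/N$ with $N = T'_{LD}$ (equivalently one can work on the $M$-side). By \cref{lem:witt second} this lifts to some $O_3 \in O_t(d)$ inducing it, and by \cref{eq:defect iso action} conjugating $T''$ on the left by $O_3$ (which does not disturb the right defect subspace, since $O_3$ fixes $N = T''_{LD}$ and hence acts trivially on the relevant quotient-level data apart from the intended correction) produces a subspace with the same defect subspaces as $T'$ and the same defect isomorphism. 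By the uniqueness in \cref{prp:defect subspaces}(5), this subspace equals $T'$, so $T' \in O_t(d)\, T\, O_t(d)$.

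Finally, the count of double cosets: the two invariants are the common dimension $r \coloneqq \dim T_{LD} = \dim T_{RD} \in \{0, 1, \dots, t\}$ together with the Boolean ``$\vec 1_t \in T_{LD}$''. When $r = 0$ both defect subspaces are trivial, so $\vec 1_t \notin T_{LD}$ is forced and this Boolean carries no extra information; likewise $r = t$ would force $T_{LD} = \ZZ_d^t \ni \vec 1_t$, but such a $T$ would have to equal all of $\ZZ_d^t \oplus \ZZ_d^t$, impossible for a $t$-dimensional stochastic Lagrangian, so in fact $r$ ranges only over a subset of $\{0, \dots, t-1\}$ with at most two coset-labels for each positive $r$. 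A crude bound already suffices: the number of possible invariant pairs is at most $t$, since when $r = 0$ there is exactly one class and for each of the remaining at most $t-1$ values of $r$ there are at most two classes, but a finer bookkeeping (noting e.g. that $r$ and $r+1$ cannot both contribute two full classes in the small-$t$ regimes, and matching against $|\Sigma_{t,t}(d)|$) yields the claimed bound of $t$. I would simply observe that each double coset is pinned down by a pair from a set of size $\leq t$, giving the ``no more than $t$'' statement.

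The main obstacle I anticipate is the last step of the ``if'' direction: correctly checking that the left-conjugation by $O_3$ which adjusts the defect isomorphism does not inadvertently change the left or right defect subspace. This requires being careful that the $O_3$ produced by \cref{lem:witt second} can be chosen to act as the identity on $T''_{LD}$ (so that, via \cref{eq:defect space action}, $T''_{LD}$ is preserved) while still inducing the needed correction $T'_J \circ (T''_J)^{-1}$ on the quotient $T''^\perp_{LD}/T''_{LD}$ — this is exactly the content of \cref{lem:witt second}, so the obstacle is really just one of careful invocation rather than a genuine gap. The counting refinement to get precisely $t$ (rather than, say, $2t$) is the other place where one must argue slightly more carefully about which $(r, \text{Boolean})$ pairs are actually realized.
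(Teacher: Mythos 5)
Your proof of the equivalence itself is correct and follows essentially the same route as the paper: necessity via \cref{eq:defect space action} and stochasticity, sufficiency by first matching the defect subspaces with \cref{lem:witt first}, then correcting the defect isomorphism with \cref{lem:witt second} and concluding by the uniqueness statement of \cref{prp:defect subspaces}. Your worry about the final left-multiplication by $O_3$ is also resolved the way you suspect: \cref{lem:witt second} produces an $O_3$ with $O_3 T''_{LD} = T''_{LD}$, and a purely left action never touches the right defect subspace, so nothing is disturbed.

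The genuine gap is in the coset count. The invariant pair you describe---the common dimension $r$ of the defect subspaces together with the Boolean ``$\vec 1_t \in T_{LD}$''---a priori gives up to $2$ classes for each admissible $r$, and the range of $r$ you allow ($\{0,\dots,t-1\}$, or even $\{0,\dots,t\}$) yields a bound of roughly $2t-1$, not $t$. Your appeal to ``finer bookkeeping'' and to $r$ and $r+1$ not both contributing full classes is not an argument; and your reason for excluding $r=t$ is also off (if $T_{LD}=\ZZ_d^t$ then $T=\ZZ_d^t\oplus\{0\}$ by dimension count, so $T_{RD}=\{0\}$, contradicting $\dim T_{LD}=\dim T_{RD}$---not because $T$ would have to be all of $\ZZ_d^{2t}$). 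The missing observation is that a defect subspace is totally $q$-isotropic, hence self-orthogonal with respect to the dot product by \cref{eq:bilinear vs quadratic form O}, so $\dim T_{LD}\leq t/2$. Moreover the two extreme values each admit only one class: for $r=0$ the subspace cannot contain $\vec 1_t$, and for $r=t/2$ it is Lagrangian with respect to the dot product and therefore must contain $\vec 1_t$ (since $\vec 1_t\in T_{LD}^\perp$ by co-stochasticity and $T_{LD}=T_{LD}^\perp$). This gives $1+2(t/2-1)+1=t$, which is the paper's argument.
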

\begin{proof}
It is clear that the two conditions are necessary.
We will now argue that they are sufficient.
First, use \cref{lem:witt first} to find $O$ and $O'$ such that $O T'_{LD} = T_{LD}$ and $O' T_{RD} = T'_{RD}$.
Then $T'' \coloneqq O T' O'$ is such that $T''_{LD} = T_{LD}$ and $T''_{RD} = T_{RD}$ (\cref{eq:defect space action}).
Next, use \cref{lem:witt second} to obtain some $O''$ that induces the defect isomorphism $T_J (T''_J)^{-1}\colon T_{LD}^\perp/T_{LD}\to T_{LD}^\perp/T_{LD}$.
Then $O'' T'' = T$, concluding the proof.
For the last remark, note that the dimension of a totally isotropic subspace is never larger than~$t/2$.
If the dimension is zero, then it cannot contain~$\vec 1_t$, while if the dimension is $t/2$ then it is Lagrangian, hence must contain~$\vec 1_t$.
Hence the number of possible dimensions is at most $1 + (t/2-1)2 + 1 = t$.
\end{proof}

\begin{rem}
We can also restrict to either the left or the right action.
In this case, the resulting cosets are classified by the right and left defect subspace, respectively, which is an arbitrary defect subspace in the sense of \cref{dfn:defect subspaces}.
\end{rem}

Next, we give an explicit description of the operators $R(T) = r(T)^{\ot n}$ in terms of the defect subspaces and the defect isomorphism.
If $N$ is a defect subspace, define the \emph{coset states}
\begin{align*}
  \ket{N,[\vec x]} \coloneqq |N|^{-1/2} \sum_{\vec z\in N} \ket{\vec x + \vec z},
\end{align*}
which form an orthonormal family for $[\vec x]\in N^\perp/N$.

\begin{lem}\label{lem:form of r(T)}
Let $T\in\Sigma_{t,t}(d)$, with defect subspaces $T_{LD}$, $T_{RD}$ and defect isomorphism $T_J\colon T_{LD}^\perp/T_{LD}\to T_{RD}^\perp/T_{RD}$.
Then:
\begin{align}\label{eq:r(T) via defect spaces}
  \frac 1{|T_{LD}|} r(T) = \!\!\!\sum_{[\vec y]\in T_{RD}^\perp/T_{RD}}\!\!\!\!\!\! \ket{T_{LD},T_J [\vec y]} \bra{T_{RD},[\vec y]}.
\end{align}
Thus, $r(T)$ is proportional to a partial isometry, and $\rank r(T)=|T_{LD}^\perp/T_{LD}|=|T_{RD}^\perp/T_{RD}|$.
\end{lem}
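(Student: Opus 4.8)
The plan is to read the structure of $r(T)$ directly off the decomposition of $T$ into its two defect subspaces and the defect isomorphism, as supplied by \cref{prp:defect subspaces}. Write $N = T_{LD}$, $M = T_{RD}$ and $J = T_J\in\Iso(M,N)$, i.e.\ the well-defined map $[\vec y]\mapsto[\vec x(\vec y)]$ sending $T_{RD}^\perp/T_{RD}$ to $T_{LD}^\perp/T_{LD}$. By \cref{prp:defect subspaces}(6), $T$ is exactly the subspace \cref{eq:T from defect spaces} built from $N$, $M$, $J$. Concretely, for each class $[\vec y]\in M^\perp/M$ I fix a representative $\vec y_0$ together with a vector $\vec x_0$ such that $(\vec x_0,\vec y_0)\in T$ and $[\vec x_0]=J[\vec y]$ (possible by \cref{prp:defect subspaces}(3)--(4)). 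Then I would check that $([\vec y],\vec z,\vec w)\mapsto(\vec x_0+\vec z,\ \vec y_0+\vec w)$, with $\vec z\in N$ and $\vec w\in M$, is a bijection onto $T$: membership in $T$ holds since $(\vec z,0),(0,\vec w)\in T$ by definition of the defect subspaces and $T$ is a subspace; surjectivity uses $T_R=M^\perp$, $T_L=N^\perp$ (\cref{prp:defect subspaces}(3)) plus a one-line subtraction; and bijectivity follows by the dimension count $\dim(M^\perp/M)+\dim N+\dim M=(t-2\dim M)+\dim N+\dim M=t=\dim T$, using $\dim N=\dim M$ (\cref{prp:defect subspaces}(2)).

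With this parametrization the computation is immediate: substituting into $r(T)=\sum_{(\vec x,\vec y)\in T}\ket{\vec x}\bra{\vec y}$ and factoring the sums over $\vec z$ and $\vec w$ gives $r(T)=\sum_{[\vec y]}\bigl(\sum_{\vec z\in N}\ket{\vec x_0+\vec z}\bigr)\bigl(\sum_{\vec w\in M}\bra{\vec y_0+\vec w}\bigr)=|N|^{1/2}|M|^{1/2}\sum_{[\vec y]}\ket{N,J[\vec y]}\bra{M,[\vec y]}$, recognizing the inner sums as scaled coset states. Since $|N|=|M|=|T_{LD}|$, the prefactor is exactly $|T_{LD}|$, which is \cref{eq:r(T) via defect spaces}.

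For the remaining claims: the coset states $\{\ket{N,[\vec x]}\}_{[\vec x]}$ form an orthonormal family, and likewise the $\{\ket{M,[\vec y]}\}_{[\vec y]}$; since $J$ is a bijection from $M^\perp/M$ onto $N^\perp/N$, the operator $\tfrac1{|T_{LD}|}r(T)$ carries the orthonormal basis $\{\ket{M,[\vec y]}\}$ of $\Span\{\ket{M,[\vec y]}\}$ isometrically onto the orthonormal set $\{\ket{N,J[\vec y]}\}$ and annihilates the orthogonal complement, so it is a partial isometry. Its rank is the cardinality $|M^\perp/M|=|T_{RD}^\perp/T_{RD}|$, and this coincides with $|T_{LD}^\perp/T_{LD}|$ because $\dim N=\dim M$ forces $\dim(N^\perp/N)=t-2\dim N=t-2\dim M=\dim(M^\perp/M)$.

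I do not expect a genuine obstacle: essentially all the content has already been extracted in \cref{prp:defect subspaces}, and what remains is bookkeeping. The only points needing a little care are keeping the normalization constants $|N|^{1/2}$, $|M|^{1/2}$ straight, verifying that the parametrization of $T$ is bijective via the dimension count, and not confusing the direction of the defect isomorphism $T_J$, which goes $T_{RD}^\perp/T_{RD}\to T_{LD}^\perp/T_{LD}$, consistent with the indexing in \cref{eq:r(T) via defect spaces}.
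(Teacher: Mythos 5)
Your proof is correct and follows essentially the same route as the paper, which simply cites \cref{prp:defect subspaces} and \cref{eq:T from defect spaces} and leaves the bookkeeping implicit; you have filled in exactly that bookkeeping (the bijective parametrization of $T$, the factorization into coset states, and the normalization $|N|=|M|=|T_{LD}|$) correctly.
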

\begin{proof}
We obtain \eqref{eq:r(T) via defect spaces} directly from \cref{prp:defect subspaces} and~\eqref{eq:T from defect spaces}.
Since the coset states form two orthonormal families and $T_J$ is a bijection, the formula for the rank follows at once.
\end{proof}

Now consider the case when the left and right defect subspaces coincide and the defect isomorphism is trivial.
That is,
\begin{equation}\label{eq:css T}
\begin{aligned}
  T
  &= \{ (\vec x+\vec z,\vec x+\vec w) \;:\; [\vec x]\in N^\perp/N, \, \vec z, \vec w\in N \} \\
  &= \{ (\vec x, \vec y) \;:\; \vec y \in N^\perp, \vec x\in[\vec y] \}
  = \{ (\vec x, \vec y) \;:\; \vec x \in N^\perp, \vec y\in[\vec x] \},
\end{aligned}
\end{equation}
where $N \coloneqq T_{LD} = T_{RD}$ is an arbitrary defect subspace.
In view of \cref{cor:cosets}, any double coset contains a subspace of this form.
In this case, $r(T)$ and $R(T)$ are related to a well-known family of codes in quantum information theory.
To state the result, define the Weyl operators of, respectively, shift and multiply type:
\begin{align*}
  Z_{\vec p} = W_{\vec p,\vec 0} \qquad\text{ and }\qquad X_{\vec q} = W_{\vec 0,\vec q}.
\end{align*}
Given any totally isotropic subspace $N\subseteq\ZZ_d^t$, the set 
\begin{align*}
  \CSS(N)\coloneqq\{ Z_{\vec p} X_{\vec q} : \vec q, \vec p \in N \}
\end{align*}
forms a stabilizer group of cardinality $\lvert N\rvert^2$ (since $N$ is self-orthogonal, the Weyl operators commute).
Such codes are a simple variant of \emph{Calderbank-Shor-Sloane (CSS) codes}~\cite{steane1996error,calderbank1996good,steane1996multiple}.
The projection onto the code space can be written as
\begin{align}\label{eq:css projector}
  P_{\CSS(N)} = \frac1{|N|^2} \sum_{\vec q,\vec p\in N} Z_{\vec p} X_{\vec q}
\end{align}
By taking the trace of \cref{eq:css projector}, one finds the dimension of the code is given by $d^{t-2\dim N}=|N^\perp/N|$.
One can readily confirm that the coset states $\ket{N,[\vec z]}$ for $[\vec z]\in N^\perp/N$ form an orthonormal basis, so
\begin{align}\label{eq:P css}
  P_{\CSS(N)} = \!\!\!\sum_{[\vec x]\in N^\perp/N}\!\!\!\!\!\! \ket{N,[\vec x]} \bra{N,[\vec x]},
\end{align}
In particular, all this applies in the situation of \cref{eq:css T}.
It follows that $r(T)$ and $R(T)$ are proportional to orthogonal projections onto CSS codes associated with the defect subspaces:

\begin{thm}[CSS codes]\label{thm:css}
  Suppose that $T$ is of the form~\eqref{eq:css T}, i.e., its left and right defect subspaces coincide and that the defect isomorphism is trivial.
  Let $N \coloneqq T_{LD} = T_{RD}$.
  Then,
  \begin{align*}
    r(T)
    = |N| \, P_{\CSS(N)}
    = d^{\dim N} P_{\CSS(N)}.
  \end{align*}
  Conversely, if $T\in\Sigma_{t,t}(d)$ is such that $r(T)$~is an orthogonal projection, then $T$ is of the form~\eqref{eq:css T}.
\end{thm}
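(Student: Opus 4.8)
The plan is to prove both directions by direct computation, using the defect-subspace decomposition from \cref{prp:defect subspaces} and the explicit formula for $r(T)$ in \cref{lem:form of r(T)}.

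\textbf{Forward direction.} Suppose $T$ is of the form~\eqref{eq:css T} with $N \coloneqq T_{LD} = T_{RD}$ and trivial defect isomorphism $T_J = \mathrm{id}$. Then \cref{lem:form of r(T)} gives
\begin{align*}
  \frac{1}{|N|} r(T) = \sum_{[\vec y]\in N^\perp/N} \ket{N,[\vec y]}\bra{N,[\vec y]},
\end{align*}
and by \cref{eq:P css} this sum is exactly $P_{\CSS(N)}$. Hence $r(T) = |N| \, P_{\CSS(N)} = d^{\dim N} P_{\CSS(N)}$, as claimed. (If one prefers not to invoke \cref{lem:form of r(T)} directly, the same identity follows by expanding $r(T) = \sum_{(\vec x,\vec y)\in T}\ket{\vec x}\bra{\vec y}$ using the second description in~\eqref{eq:css T}: grouping the sum by the coset $[\vec y]\in N^\perp/N$ and writing $\vec x = \vec y + \vec z$, $\vec x' $ ranging over $[\vec y]$, one recognizes $\bigl(\sum_{\vec z\in N}\ket{\vec y+\vec z}\bigr)\bigl(\sum_{\vec w\in N}\bra{\vec y+\vec w}\bigr) = |N|\,\ket{N,[\vec y]}\bra{N,[\vec y]}$.) Taking the $n$-th tensor power gives $R(T) = r(T)^{\ot n} = d^{n\dim N} P_{\CSS(N)}^{\ot n}$, which is proportional to the projector onto the $n$-fold CSS code; this is worth stating even though the theorem as phrased only concerns $r(T)$.

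\textbf{Converse direction.} Suppose $T\in\Sigma_{t,t}(d)$ and $r(T)$ is an orthogonal projection. By \cref{lem:form of r(T)}, $r(T)/|T_{LD}|$ is a partial isometry with orthonormal source family $\{\ket{T_{RD},[\vec y]}\}$ and orthonormal target family $\{\ket{T_{LD}, T_J[\vec y]}\}$. A partial isometry $W$ is such that $\frac{1}{|T_{LD}|} r(T)$ is a projection only if it is self-adjoint and idempotent; the self-adjointness of $r(T)$ forces the source and target families to span the same subspace and $T_J$ to act as the identity on it. More concretely: $r(T)$ being a projection means $r(T)^2 = r(T) = r(T)^\dagger$. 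From $r(T)^\dagger = r(T)$ and the formula~\eqref{eq:r(T) via defect spaces} we get $|T_{LD}| \sum_{[\vec y]} \ket{T_{RD},[\vec y]}\bra{T_{LD}, T_J[\vec y]} = |T_{LD}| \sum_{[\vec y]}\ket{T_{LD},T_J[\vec y]}\bra{T_{RD},[\vec y]}$; comparing supports of the two sides of this operator identity (each $\ket{T_{RD},[\vec y]}$ appears on the left, each $\ket{T_{LD},[\vec x]}$ on the right) forces, first, that $T_{LD}$ and $T_{RD}$ have the same image in the sense that the coset states $\ket{T_{LD},\cdot}$ and $\ket{T_{RD},\cdot}$ span the same subspace of $(\CC^d)^{\ot t}$, and one then checks this is equivalent to $T_{LD} = T_{RD}$ (two distinct subspaces $N\neq M$ of $\ZZ_d^t$ yield coset-state subspaces that differ). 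Writing $N \coloneqq T_{LD} = T_{RD}$, the self-adjointness then reads $\sum_{[\vec y]}\ket{N,T_J[\vec y]}\bra{N,[\vec y]} = \sum_{[\vec y]}\ket{N,[\vec y]}\bra{N,T_J[\vec y]}$; since $\{\ket{N,[\vec y]}\}$ is orthonormal and $T_J$ is a bijection of $N^\perp/N$, this forces $T_J^2 = \mathrm{id}$ and $T_J = T_J^{-1}$, and combined with idempotency $r(T)^2 = |N|^2 \sum_{[\vec y]}\ket{N,T_J^2[\vec y]}\bra{N,[\vec y]} = |N|\, r(T)$ we get $|N| \cdot (\text{proj}) = (\text{proj})$ forcing $T_J = \mathrm{id}$. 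By \cref{prp:defect subspaces}(vi), $T$ equals the subspace built from $(N,N,\mathrm{id})$, which is precisely~\eqref{eq:css T}.

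\textbf{Main obstacle.} The routine part is the forward direction; the delicate step in the converse is the passage from ``$r(T)$ self-adjoint'' to ``$T_{LD} = T_{RD}$ and $T_J = \mathrm{id}$''. The potential subtlety is that a partial isometry $\sum_i \ket{\phi_i}\bra{\psi_i}$ (with each family orthonormal) can be self-adjoint even when $\{\phi_i\} \neq \{\psi_i\}$ as sets, provided the spans coincide and the induced map is an involution. So I expect to spend the bulk of the argument carefully showing: (a) self-adjointness forces $\Span\{\ket{T_{LD},[\vec x]}\} = \Span\{\ket{T_{RD},[\vec y]}\}$, which by a short lemma on coset states (two CSS code subspaces of the same dimension coincide iff the underlying isotropic subspaces coincide — essentially because $P_{\CSS(N)}$ determines $N$ via its nonzero Weyl coefficients, cf.~\eqref{eq:css projector}) gives $T_{LD} = T_{RD}$; and then (b) with $N \coloneqq T_{LD}=T_{RD}$, the operator is $|N|\sum_{[\vec y]}\ket{N,T_J[\vec y]}\bra{N,[\vec y]}$, whose self-adjointness plus idempotency pin down $T_J = \mathrm{id}$ and $|N|\cdot(\text{rank-}|N^\perp/N|\text{ projector}) = $ projector, i.e.\ that the proportionality constant already in \cref{lem:form of r(T)} must be~$1$ — wait, that would force $|N|=1$; rather, idempotency of $r(T)$ itself (not $r(T)/|N|$) gives $r(T)^2 = |N| \sum \ket{N,T_J^2[\vec y]}\bra{N,[\vec y]} = r(T)$ only if $T_J^2 = \mathrm{id}$, and then self-adjointness upgrades $T_J^2=\mathrm{id}$ to $T_J = \mathrm{id}$ as above. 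I would state the coset-state lemma as a small standalone claim before the proof to keep the argument clean.
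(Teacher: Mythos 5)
Your forward direction is exactly the paper's: compare \eqref{eq:r(T) via defect spaces} with \eqref{eq:P css}. For the converse, your overall strategy---read everything off the normal form of \cref{lem:form of r(T)}---is also the paper's; the difference is only in packaging (the paper applies the projection to a basis of its range and bounds the resulting coset-state overlaps by $|T_{LD}\cap T_{RD}|/|T_{LD}|$, whereas you split the hypothesis into self-adjointness and idempotency). Your route works, and your coset-state lemma (the code space determines $N$ via the Weyl support of $P_{\CSS(N)}$) is a clean way to get $T_{LD}=T_{RD}$. But one assertion in your final paragraph is false and should be removed: self-adjointness does \emph{not} upgrade $T_J^2=\mathrm{id}$ to $T_J=\mathrm{id}$. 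For any nontrivial involution $T_J$ of $N^\perp/N$, the operator $|N|\sum_{[\vec y]}\ket{N,T_J[\vec y]}\bra{N,[\vec y]}$ is $|N|$ times a symmetric $0$--$1$ matrix in the orthonormal coset basis, hence Hermitian. Triviality of $T_J$ must come from idempotency: $r(T)^2=|N|^2\sum_{[\vec y]}\ket{N,T_J^2[\vec y]}\bra{N,[\vec y]}$ (note the exponent $2$ on $|N|$, which flip-flops between your two paragraphs), and matching the pattern of nonzero entries against $r(T)=|N|\sum_{[\vec y]}\ket{N,T_J[\vec y]}\bra{N,[\vec y]}$ gives $T_J^2=T_J$, hence $T_J=\mathrm{id}$ since $T_J$ is invertible. (Matching the values as well forces $|N|=1$; this is an artifact of reading ``is an orthogonal projection'' literally rather than ``is proportional to one,'' which is the reading the paper uses later in \cref{thm:minimal test}---your argument delivers the stated conclusion under either reading.)
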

\begin{proof}
  The formula for $r(T)$ follows directly by comparing \cref{eq:P css,eq:r(T) via defect spaces}.

  Conversely, suppose that $r(T)$ is an orthogonal projection.
  We see from \cref{eq:r(T) via defect spaces} that the range of $r(T)$ is spanned by the $\ket{T_{LD},[\vec x]}$, so we must have
  \begin{align*}
    \ket{T_{LD},[\vec x]}
  = r(T) \ket{T_{LD},[\vec x]}
  = \!\!\!\!\!\!\sum_{[\vec y]\in T_{RD}^\perp/T_{RD}}\!\!\!\!\!\! \ket{T_{LD},T_J [\vec y]} \braket{T_{RD},[\vec y] \,|\, T_{LD},[\vec x]}.
  \end{align*}
  Since the coset states $\ket{T_{LD},[\vec x]}$ form a basis, it follows that
  \begin{align*}
    \braket{T_{RD},[\vec y] \,|\, T_{LD},[\vec x]} = \delta_{[\vec x],T_J[\vec y]}
  \end{align*}
  for all $\vec x\in T_{LD}$ and $\vec y\in T_{RD}$.
  When $[\vec x]=T_J[\vec y]$, then $\braket{T_{RD},[\vec y] \;|\; T_{LD},[\vec x]} = 1$, which implies that $T_{LD} = T_{RD}$ (the inner product is at most $|T_{LD}\cap T_{RD}|/|T_{LD}|$ in absolute value).
  Denoting the common defect subspace by $N$, it follows that $\delta_{[\vec x],[\vec y]} = \braket{N,[\vec y] | N,[\vec x]} = \delta_{[\vec x],T_J[\vec y]}$, so $T_J$ is trivial.
\end{proof}

Finally, we can equip the set of subspaces $\Sigma_{t,t}(d)$ with a \emph{semigroup structure}, denoted by $\circ$, such that the assignment $T \mapsto R(T)$ becomes a representation, i.e.,
\begin{align}\label{eq:semigroup rule}
  R(T_1) R(T_2)
= |N_1 \cap N_2|^n \, R(T_1\circ T_2)
= d^{n\dim (N_1 \cap N_2)} \, R(T_1\circ T_2).
\end{align}
First, if $T_1$ or $T_2$ are associated to a stochastic isometry in~$O_t(d)$, then we can simply define $T_1 \circ T_2$ as in \cref{dfn:O on Sigma}.
In particular, the diagonal subspace is the identity element.

Next, consider the case that $T_1$ and $T_2$ are of the form~\eqref{eq:css T}, associated to defect subspaces~$N_1$ and $N_2$.
Then we may define $T_1 \circ T_2$ as the Lagrangian stochastic subspace~$T$ with
\begin{equation}\label{eq:simple semigroup}
\begin{aligned}
  T_{LD} &= (N_1 + N_2) \cap N_1^\perp = N_1^\perp\cap N_2 + N_1, \\
  T_{RD} &= (N_1 + N_2) \cap N_2^\perp = N_1\cap N_2^\perp + N_2, \\
  T_J &\colon T_{RD}^\perp/T_{RD} \to T_{LD}^\perp/T_{LD}, \quad [\vec y]\mapsto[\vec x]
\end{aligned}
\end{equation}
where $\vec x$ is such that $\vec x - \vec y \in N_1 + N_2$.

\begin{lem}
The data in~\eqref{eq:simple semigroup} defines a subspace in~$T\in\Sigma_{t,t}(d)$ such that \cref{eq:semigroup rule} holds.
\end{lem}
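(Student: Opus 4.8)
The plan is to verify three things: (i) that the data in~\eqref{eq:simple semigroup} really does specify a well-defined element $T\in\Sigma_{t,t}(d)$ via \cref{prp:defect subspaces}, (ii) that the stated operator identity~\eqref{eq:semigroup rule} holds, and (iii) that this gluing is consistent with the already-defined action of $O_t(d)$ so that $\circ$ extends to all of $\Sigma_{t,t}(d)$ by \cref{cor:cosets}. For (i) I would first check that $T_{LD}$ and $T_{RD}$ as defined are genuine defect subspaces in the sense of \cref{dfn:defect subspaces}: total $q$-isotropy follows because $N_1+N_2$ is spanned by isotropic vectors from the two self-orthogonal spaces and cross terms $\vec u\cdot\vec v$ with $\vec u\in N_1\cap N_2^\perp$, $\vec v\in N_2$ vanish (and similarly on the other side), so one only has to be careful about the $\ZZ_D$ vs $\ZZ_d$ distinction for $d=2$ using the bilinear-vs-quadratic relation~\eqref{eq:bilinear vs quadratic form O}; co-stochasticity is inherited from $\vec 1_t\in N_i^\perp$. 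Next I would compute $T_{LD}^\perp$ and $T_{RD}^\perp$: using the standard identities $(A+B)^\perp=A^\perp\cap B^\perp$ and $(A\cap B)^\perp=A^\perp+B^\perp$ one gets $T_{LD}^\perp=(N_1+N_2)\cap N_1^\perp$-complement $=N_1^\perp\cap(N_1+N_2)^\perp+\dots$; more directly $T_{LD}^\perp = N_1^\perp\cap(N_1^\perp\cap N_2+N_1)^\perp$, which after simplification equals $N_1^\perp\cap(N_2^\perp+N_1)=N_1+N_1^\perp\cap N_2^\perp$. One then checks the dimension count $\dim T_{LD}=\dim T_{RD}$ and that $\vec 1_t\in T_{LD}\iff\vec 1_t\in T_{RD}$, both of which reduce to symmetric statements in $N_1,N_2$. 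Finally, well-definedness of $T_J$: if $\vec x-\vec y\in N_1+N_2$ and also $\vec x'-\vec y\in N_1+N_2$ then $\vec x-\vec x'\in(N_1+N_2)\cap T_{LD}^\perp=T_{LD}$, so $[\vec x]$ only depends on $[\vec y]$; that $T_J$ is a $q$-isometry is because $q(\vec x)-q(\vec y)=2\vec y\cdot(\vec x-\vec y)+q(\vec x-\vec y)$ and $\vec x-\vec y\in N_1+N_2$ pairs trivially (mod $D$) with $\vec y\in T_{RD}^\perp$; stochasticity is immediate. By construction $\vec x-\vec y\in N_1+N_2$ is symmetric, so the map from the $T_1$-side is the inverse.

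For (ii), the operator identity, I would work with $r(T_1) r(T_2)$ directly at the level of $t$-qudit operators (the $\ot n$ then follows by taking powers and noting $|N_1\cap N_2|^n=d^{n\dim(N_1\cap N_2)}$). By \cref{thm:css}, $r(T_i)=d^{\dim N_i} P_{\CSS(N_i)}$, so $r(T_1)r(T_2)=d^{\dim N_1+\dim N_2} P_{\CSS(N_1)}P_{\CSS(N_2)}$. Using the Weyl-operator form~\eqref{eq:css projector} and the projective group law~\eqref{eq:projective group law}, $P_{\CSS(N_1)}P_{\CSS(N_2)}$ expands as a sum over $(\vec p_1,\vec q_1)\in N_1^2$, $(\vec p_2,\vec q_2)\in N_2^2$ of phases times $W$'s; the key computation is that this collapses to $d^{-\dim(N_1+N_2)}$ times a sum of Weyl operators over the isotropic subgroup generated by $N_1+N_2$ in shift and multiply indices -- i.e. it is $(|N_1\cap N_2|/?)\,P_{\CSS(N_1+N_2)}$ up to the appropriate power of $d$. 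Alternatively, and more cleanly, I would just compute the matrix elements $\bra{\vec a}r(T_1)r(T_2)\ket{\vec b}=\sum_{\vec c}\bra{\vec a}r(T_1)\ket{\vec c}\bra{\vec c}r(T_2)\ket{\vec b}=|\{\vec c:(\vec a,\vec c)\in T_1,\ (\vec c,\vec b)\in T_2\}|$: this is nonzero iff $\vec a\in T_{1,L}$, $\vec b\in T_{2,R}$ and there is a solution $\vec c$, and the number of such $\vec c$ is $|T_{1,R}\cap T_{2,L}$ translated appropriately$|$. Matching this against $d^{n\dim(N_1\cap N_2)}\bra{\vec a}r(T_1\circ T_2)\ket{\vec b}=d^{\dim(N_1\cap N_2)}\,[\,(\vec a,\vec b)\in T_1\circ T_2\,]$ (with $n=1$ here) and invoking the description~\eqref{eq:r(T) via defect spaces} of $r(T_1\circ T_2)$ finishes it; the counting constant works out precisely because $\dim(\text{fiber})=\dim(N_1\cap N_2)+(\text{contribution absorbed into the coset-state normalization of }r(T_1\circ T_2))$.

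For (iii), I would show the two partial definitions of $\circ$ are compatible with the $O_t(d)$-action: if $O\in O_t(d)$, then $T_O\circ T$ should equal $OT$ in the sense of \cref{dfn:O on Sigma}, and $(O_1TO_2)\circ(O_3T'O_4)$ should agree with whatever the CSS-representatives give after conjugating back, which is forced by the already-established relation $R(O)R(T)R(O')=R(OTO')$ together with \cref{eq:semigroup rule} on representatives. Since by \cref{cor:cosets} every $T\in\Sigma_{t,t}(d)$ lies in $O_t(d)\,T_N\,O_t(d)$ for some CSS-type $T_N$, one defines $T_1\circ T_2$ in general by writing $T_1=O_1 S_1 O_1'$, $T_2=O_2 S_2 O_2'$ with $S_i$ of CSS type, and setting $T_1\circ T_2 = O_1 S_1 (O_1'O_2) S_2 O_2'$, reducing the general case to one more application of \eqref{eq:simple semigroup} after absorbing the middle isometry $O_1'O_2$; associativity and well-definedness (independence of the chosen decomposition) then follow from the operator identity, since $R$ is injective on $\Sigma_{t,t}(d)$ (\cref{lem:indep}, for $n\geq t-1$) -- so any ambiguity in $\circ$ would contradict the equality of the corresponding products of $R$'s. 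I expect the main obstacle to be the bookkeeping in step (ii): getting the power of $d$ in front exactly right, and in particular tracking how the normalization $1/|T_{LD}|$ in~\eqref{eq:r(T) via defect spaces} for $T_1\circ T_2$ interacts with the intersection dimensions $\dim N_1$, $\dim N_2$, $\dim(N_1\cap N_2)$, $\dim(N_1+N_2)$; the inclusion–exclusion $\dim(N_1+N_2)=\dim N_1+\dim N_2-\dim(N_1\cap N_2)$ is what makes the constant $|N_1\cap N_2|^n$ appear, and I would double-check the $d=2$ case separately because of the $\ZZ_4$-quadratic-form subtlety in verifying that $T_{LD},T_{RD}$ are totally $q$-isotropic and that $T_J$ preserves $q$ mod $4$, not merely mod $2$.
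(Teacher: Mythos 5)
Your proposal is correct and follows essentially the same route as the paper: your parts (i) and (ii) match the paper's verification that $T_{LD}$, $T_{RD}$ are defect subspaces and that $T_J$ is a well-defined stochastic $q$-isometry, followed by the direct matrix-element computation $\bra{\vec a}r(T_1)r(T_2)\ket{\vec b}=\lvert(\vec a+N_1)\cap(\vec b+N_2)\rvert$, which equals $\lvert N_1\cap N_2\rvert$ precisely when $\vec b-\vec a\in N_1+N_2$ (note the fibers are translates of the defect subspaces $N_1=T_{1,RD}$ and $N_2=T_{2,LD}$, not of $T_{1,R}=N_1^\perp$ and $T_{2,L}=N_2^\perp$ as your notation briefly suggests). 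Your part (iii), extending $\circ$ to all of $\Sigma_{t,t}(d)$ via the $O_t(d)$-action, is not required for this lemma, which concerns only the CSS-type case of \eqref{eq:simple semigroup}; the paper treats that extension in the surrounding text.
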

\begin{proof}
We first verify that $T_{LD}$ is a defect subspace.
Thus, let $\vec n_1+\vec n_2\in N_1^\perp$, where $\vec n_1\in N_1$ and $\vec n_2\in N_2$.
Then,
\begin{align*}
  q(\vec n_1+\vec n_2) = q(\vec n_1) + q(\vec n_2) + 2\vec n_1\cdot \vec n_2 = 2 \vec n_1\cdot \vec n_2 = 0\pmod D,
\end{align*}
The second step holds since $N_1$ and $N_2$ are defect subspaces, so $q(\vec n_1) = q(\vec n_2) = 0\pmod D$, and the third step holds since $\vec n_2 \in N_1 + N_1^\perp = N_1^\perp$, so $\vec n_1\cdot \vec n_2=0\pmod d$.
Moreover, $\vec 1_t \in N_1^\perp \cap N_2^\perp \subseteq T_{LD}^\perp$, so $T_{LD}$ is also co-stochastic.
Similarly, one can check that $T_{RD}$ is defect subspace.

Next, we verify that $T_J$ is a well-defined defect space isomorphism.
Note that
\begin{align*}
  T_{LD}^\perp &= (N_1 + N_2^\perp) \cap N_1^\perp = N_1^\perp \cap N_2^\perp + N_1, \\
  T_{RD}^\perp &= (N_1^\perp + N_2) \cap N_2^\perp = N_1^\perp \cap N_2^\perp + N_2.
\end{align*}
which shows that for every $\vec y\in T_{RD}^\perp$ there exists $\vec x\in T_{LD}^\perp$ such that $\vec x-\vec y\in N_1 + N_2$.
The same holds vice versa, so the map
\begin{align}\label{eq:pre defect space iso}
  T_{RD}^\perp \to T_{LD}^\perp/T_{LD}, \quad \vec y \mapsto [\vec x]
\end{align}
is surjective provided it is well-defined.
Assume that $\vec x,\vec x'\in T_{LD}^\perp$ are two vectors such that $\vec x-\vec y, \vec x'-\vec y\in N_1 + N_2$.
Then, $\vec x - \vec x' \in T_{LD}^\perp \cap (N_1 + N_2) = T_{LD}$, which shows that~\eqref{eq:pre defect space iso} is indeed well-defined.
Note that its kernel is given by $T_{RD}^\perp \cap (N_1 + N_2) = T_{RD}$.
Thus, the induced map, which is precisely $T_J$ from~\eqref{eq:simple semigroup}, is a well-defined invertible linear map.
We still need to verify that $T_J$ is an isometry and stochastic.
The latter is clear, since $\vec 1_t - \vec 1_t = 0 \in N_1 + N_2$.
For the former, consider $[\vec x]\in T_{LD}^\perp/T_{LD}$ and $[\vec y]\in T_{RD}^\perp/T_{RD}$ such that $\vec y - \vec x = \vec n_1 + \vec n_2$, where $\vec n_1\in N_1$ and $\vec n_2\in N_2$.
Without loss of generality, we may assume that $\vec x, \vec y \in N_1^\perp\cap N_2^\perp$, so in particular $\vec x - \vec y \perp \vec y$ and $\vec n_2 \in N_1^\perp$.
Since $N_1$ and $N_2$ are totally isotropic, it follows that
$q(\vec x - \vec y) = 2\vec n_1\cdot\vec n_2 = 0 \pmod D$
and $q(\vec x) = q(\vec y) + q(\vec x - \vec y) + 2\vec y\cdot(\vec x-\vec y)=0\pmod D$.

Finally, we will establish that \cref{eq:semigroup rule} holds with $T_1 \circ T_2 = T$.
It sufices to prove the claim for $n=1$:
\begin{align*}
  r(T_1) r(T_2)
&= \sum_{\vec x\in N_1^\perp} \sum_{\vec y\in N_2^\perp} \bigl|[\vec x] \cap [\vec y]\bigr| \, \ket{\vec x}\bra{\vec y}
= |N_1 \cap N_2| \sum_{\vec x\in N_1^\perp} \sum_{\vec y\in N_2^\perp} \delta_{\vec y-\vec x \in N_1 + N_2} \ket{\vec x}\bra{\vec y}
\\&= |N_1 \cap N_2| \sum_{\vec y\in T_{RD}^\perp} \sum_{\vec x \in T_{LD}^\perp} \delta_{\vec y-\vec x \in N_1 + N_2} \ket{\vec x}\bra{\vec y}
= |N_1 \cap N_2| \sum_{\vec y\in T_{RD}^\perp} \sum_{\vec x\in T_J[\vec y]} \ket{\vec x}\bra{\vec y}
\\&= |N_1 \cap N_2| \, r(T).
\end{align*}
In the third step, we used that for $\vec x\in N_1^\perp$ and $\vec y\in N_2^\perp$, the condition that $\vec y-\vec x\in N_1+N_2$ implies that $\vec x\in T_{LD}^\perp$ and $\vec y \in T_{RD}^\perp$, and in the fourth step we used the definition of~$T_J$.
\end{proof}

Finally, if $T_1$ and $T_2$ are arbitrary subspaces in~$\Sigma_{t,t}(d)$ then we can always left and right multiply $T_1$ and $T_2$ by suitable stochastic isometries, thereby reducing to the preceding two cases (cf.~\cref{eq:defect space action}).
The semigroup structure is highly useful for calculations (see \cref{subsec:examples} below and~\cite{rajamsc}).
We believe that the projections exhibited by \cref{thm:css} and the semigroup structure of \cref{eq:semigroup rule} will be instrumental in understanding the fine-grained decomposition of $\mathcal H_n^{\ot t}$ into irreducible representations of $\Cliff(n,d)$ and $O_t(d)$, generalizing the results discussed below.
First results in this direction will be reported in~\cite{montealegre2019rank}, with a full analysis being a direction for future work.

\subsection{Examples}\label{subsec:examples}
It is instructive to compute the commutant for small values of~$t$.
One can verify that every subspace in $\Sigma_{2,2}(d)$, as well as in $\Sigma_{3,3}(2)$ corresponds to a permutation.
That is, in this case, $\Sigma_{t,t}(d) = O_t(d) = S_t$.
This is consistent with the fact that the Clifford group is always a unitary $2$-design, and even a $3$-design in the case of qubits~\cite{zhu2015multiqubit}.

For certain larger values of $d$ and $t$, it is still true that $\Sigma_{t,t}(d) = O_t(d)$, e.g., for $t=3$ and $d\equiv2\pmod3$~\cite{nezami2016multipartite}.
In this case, the double commutant theorem implies that we have a proper duality akin to Schur-Weyl duality:
\begin{align}\label{eq:howe schematic}
  ((\CC^d)^{\ot n})^{\ot t} = \bigoplus_\lambda V_{\Cliff(n,d),\lambda} \ot V_{O_t(d),\lambda},
\end{align}
where the $V_{\Cliff(n,d),\lambda}$ and $V_{O_t(d),\lambda}$ are pairwise inequivalent irreducible representations of $\Cliff(n,d)$ and of~$O_t(d)$, respectively.
It would be interesting to identify these representations further.
In fact, it would be more appropriate to call \cref{eq:howe schematic} a form of a \emph{Howe duality}, of which an example are well-known dualities between metaplectic and orthogonal groups.

In general, however, $O_t(d)$ is a proper subset of $\Sigma_{t,t}(d)$, and it is an open problem to obtain a complete duality theory in positive characteristic~\cite{howe1973invariant,gurevich2016small}.
We now discuss some explicit examples.

\begin{exa}[$d=3$, $t=3$]\label{eq:third moment qutrits}
In this case, $\mathcal O_3(3) = S_3$, and we have that~\cite{nezami2016multipartite}
\begin{align}\label{eq:cosets 3 3}
\Sigma_{3,3}(3) = S_3 \cup S_3
\left[\begin{array}{ccc|ccc}
  1&-1&0 & 1&-1&0 \\
  \hline
  0&0&0 & 1&1&1 \\
  \hline
  1&1&1 & 0&0&0
\end{array}\right]
S_3
\end{align}
where we identify the matrix with its row space, a Lagrangian stochastic subspace~$T$.
The double coset of $T$ contains only two elements, $T$ and $(1 2)T$.
In total, $\Sigma_{3,3}(3)$ contains $6+2=8$ elements, which is in agreement with \cref{thm:sigma size}.

Next, we note that $T$ corresponds to a CSS code as in \cref{eq:css T,thm:css}, with defect subspace~$N$ spanned by the all-ones vectors~$\vec 1_3$.
Thus, $R(T) = 3^n P$, where
\begin{align}\label{eq:projector P third moment qutrits}
  P \coloneqq P_{\CSS}(N) = p^{\ot n}, \quad p \coloneqq \sum_{x=0}^2 \left( \frac1{\sqrt3} \sum_{y=0}^2 \ket{x+y,y-x,y} \right) \left( \frac1{\sqrt3} \sum_{z=0}^2 \bra{x+z,z-x,z} \right)
\end{align}
is a projector of rank $3^n$ (\cref{eq:P css}).

It is now straightforward to derive the decomposition of $((\CC^3)^{\ot n})^{\ot 3}$ into irreducible representations of the Clifford group (for $n\geq2$).
We start with Schur-Weyl duality, which asserts that
\begin{align}\label{eq:schur weyl third moment qutrits}
  ((\CC^3)^{\ot n})^{\ot 3}
&= \bigoplus_{\lambda\vdash3} V_{U(3^n),\lambda} \ot V_{S_3,\lambda} \\
&= \Sym^3((\CC^3)^{\ot n}) \;\op\; U^{3^n}_{(2,1)} \ot V_{U(3^n),(2,1)} \;\op\; \Alt^3((\CC^3)^{\ot n}),
\end{align}
where $\lambda$ runs over all partitions of 3.
By \cref{eq:cosets 3 3}, the commutant is generated by $S_3$ and the projection~$P$.
Since $P$ commutes with all permutations, it follows that
\begin{align*}
  P_+ \coloneqq \Pi^{\text{sym}}_3 P \Pi^{\text{sym}}_3 &= \frac{3^{-n}}2 \left( R(T) + R((1 2) T) \right), \\
  P_- \coloneqq \Pi^{\text{alt}}_3 P \Pi^{\text{alt}}_3 &= \frac{3^{-n}}2 \left( R(T) + R((1 2) T) \right)
\end{align*}
are orthogonal projections onto subrepresentations of the Clifford group.
We can compute their dimensions readily by using the formula $\tr[R(S)] = d^{n \dim (S \cap \Delta)}$:
\begin{align}\label{eq:dims third moment qutrits}
  \dim W_\pm
= \tr[P_\pm]
= \frac{3^n \pm 1}2
\end{align}
Thus we can decompose the symmetric and anti-symmetric subspaces further into four subrepresentations:
\begin{align*}
  \Sym^3((\CC^3)^{\ot n}) &\cong W_+ \op W_+^\perp, \\
  \Alt^3((\CC^3)^{\ot n}) &\cong W_- \op W_-^\perp.
\end{align*}
Since the commutant has dimension $\lvert\Sigma_{3,3}(3)\rvert=8$, these four representations along with $V_{U(3^n),(2,1)}$ (which appears twice in~\eqref{eq:schur weyl third moment qutrits}) are necessarily irreducible and pairwise inequivalent.
We have thus fully decomposed~$((\CC^3)^{\ot n})^{\ot 3}$ into irreducible representations of~$\Cliff(n,3) \times S_3$.
\end{exa}

\noindent Next, we discuss some multi-qubit examples.

\begin{exa}[$d=2$, $t=4$]
As before, we find that $\mathcal O_4(2)=S_4$.
In addition to the $4!=24$ permutation subspaces, there exist $6$ more Lagrangian subspaces in $\Sigma_{4,4}(2)$ -- making a total of $30$, which is known to be the dimension of the commutant of the multi-qubit Clifford group for $n\geq3$~\cite[(10)]{zhu2015multiqubit}.
We can decompose~$\Sigma_{4,4}(2)$ into two double cosets in a form that is completely analogous to \cref{eq:cosets 3 3}:
\begin{align}\label{eq:cosets 2 4}
  \Sigma_{4,4}(2) = S_4 \cup S_4
\left[\begin{array}{cccc|cccc}
1&0&0&1 & 1&0&0&1 \\
0&1&0&1 & 0&1&0&1 \\
\hline
0&0&0&0 & 1&1&1&1 \\
\hline
1&1&1&1 & 0&0&0&0
\end{array}\right]
S_4
\end{align}
The given matrix is the generator matrix of a Lagrangian subspace which we denote by~$T_4$.
Similarly to above, the operator~$R(T_4)$ is proportional to a projector onto a CSS code, with defect subspace spanned by the all-ones vector~$\vec 1_4$.
This projector is given by \cref{eq:projection Pi_0}, and it can be used to decompose $((\CC^2)^{\ot n})^{\ot 4}$ into irreducible representations of the Clifford group, as explained in~\cite{zhu2016clifford}.
\end{exa}

\begin{exa}[$d=2$, $t=5$]
Likewise, for $t=5$, it is not hard to see that (cf.~\cite{rajamsc})
\begin{align}\label{eq:cosets 2 5}
  \Sigma_{5,5}(2) = S_5 \cup S_5
\left[\begin{array}{ccccc|ccccc}
1&0&0&1&0 & 1&0&0&1&0 \\
0&1&0&1&0 & 0&1&0&1&0 \\
0&0&1&1&0 & 0&0&1&1&0 \\
0&0&0&0&1 & 0&0&0&0&1 \\
\hline
0&0&0&0&0 & 1&1&1&1&0 \\
\hline
1&1&1&1&0 & 0&0&0&0&0
\end{array}\right]
S_5.
\end{align}
The displayed matrix corresponds to a subspace~$T_5$ of the form \cref{eq:css T}, with defect subspace spanned by the vector~$(1,1,1,1,0)$, and the operator~$R(T_5)$ is proportional to a projector onto a CSS code.
Indeed, we have
\begin{align*}
  R(T_5) = R(T_4) \ot I_2^{\ot n}.
\end{align*}
\end{exa}

We now discuss some interesting elements in the groups~$O_t(d)$.
For qubits, we have the class of anti-permutations introduced previously in \cref{eq:anti identity}.

\begin{dfn}[Anti-permutation]\label{dfn:anti-permutation}
  Let $\pi\in S_t$.
  We define the \emph{anti-permutation} $\bar\pi$ as the binary complement of the corresponding permutation matrix.
  Formally, it is the $t\times t$-matrix
  \begin{align*}
    \bar\pi = \vec 1_t \vec 1_t^T - \pi
  \end{align*}
  with entries in $\FF_2$, where we identify $\pi$ with the corresponding permutation matrix.
\end{dfn}

\begin{lem}
  Let $\pi\in S_t$.
  If $t\equiv2\pmod4$ then $\bar\pi \in O_t(d)$.
\end{lem}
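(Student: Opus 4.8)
The plan is to verify the two defining conditions of \cref{dfn:O_t} directly from the explicit form $\bar\pi = \vec 1_t\vec 1_t^T - \pi$, using the reformulation of \cref{rem:O_t} for $d=2$. (Since an anti-permutation is by definition an $\FF_2$-matrix, the claim is really about membership in $O_t(2)$.) Concretely, it suffices to show that $\bar\pi$ is orthogonal modulo~$2$ and that every row of $\bar\pi$ has $q$-value $\equiv 1 \pmod 4$; by \cref{rem:O_t} this already implies that $\bar\pi$ is a $q$-isometry, and hence automatically stochastic, so that $\bar\pi\in O_t(2)$.

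First I would establish orthogonality via the matrix identity
\[
  \bar\pi^T\bar\pi
  = (\vec 1_t\vec 1_t^T - \pi^T)(\vec 1_t\vec 1_t^T - \pi)
  = t\,\vec 1_t\vec 1_t^T - \vec 1_t\vec 1_t^T - \vec 1_t\vec 1_t^T + I
  = (t-2)\,\vec 1_t\vec 1_t^T + I,
\]
where I used $\vec 1_t^T\vec 1_t = t$ together with the fact that every row and column of a permutation matrix sums to one, so that $\vec 1_t^T\pi = \vec 1_t^T$ and $\pi^T\vec 1_t = \vec 1_t$, as well as $\pi^T\pi = I$. Since $t\equiv 2\pmod 4$ is in particular even, $(t-2)\vec 1_t\vec 1_t^T\equiv 0\pmod 2$, and therefore $\bar\pi^T\bar\pi = I \pmod 2$: the anti-permutation is orthogonal in the ordinary sense over $\FF_2$.

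Next I would examine the row weights. Each row of $\bar\pi$ is the binary complement of a standard basis vector, hence has exactly $t-1$ nonzero entries, so $q(\vec r) = \vec r\cdot\vec r = t-1$ for every row $\vec r$ of $\bar\pi$. The hypothesis $t\equiv 2\pmod 4$ gives $t-1\equiv 1\pmod 4$, which is exactly the condition appearing in \cref{rem:O_t}. Combining this with the orthogonality established above, \cref{rem:O_t} yields that $\bar\pi$ is a $q$-isometry; as noted there, a $q$-isometry for $d=2$ is automatically stochastic (which one also sees directly: $\bar\pi\vec 1_t = (t-1)\vec 1_t \equiv \vec 1_t \pmod 2$). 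Hence $\bar\pi\in O_t(2)$, as claimed.

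No individual step here is technically demanding; the one point to be careful about is the bookkeeping of moduli, since orthogonality is a condition modulo~$2$ whereas the isometry condition must be checked modulo~$4$. It is precisely this mod-$4$ row-weight condition that forces $t\equiv 2\pmod 4$ rather than merely $t$ even: for $t\equiv 0\pmod 4$ the rows of $\bar\pi$ have $q$-value $\equiv 3\pmod 4$, so $\bar\pi\notin O_t(2)$ in that case. This is consistent with $t=6$ being the smallest non-degenerate copy number at which the anti-identity~\eqref{eq:anti identity} appears as a genuine element of the commutant, underlying the six-copy qubit stabilizer test.
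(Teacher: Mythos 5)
Your proof is correct and follows essentially the same route as the paper's: the same expansion $\bar\pi^T\bar\pi=(t-2)\vec 1_t\vec 1_t^T+I\equiv I\pmod 2$ for orthogonality, and the same count $t-1\equiv1\pmod4$ ones per row (the paper phrases it per column, which \cref{rem:O_t} notes is equivalent) for the $q$-isometry condition. The closing observations about stochasticity being automatic and about $t\equiv0\pmod4$ failing are accurate but not needed.
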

\begin{proof}
  By \cref{rem:O_t}, it suffices to check that $\bar\pi$ is orthogonal and that $q(\vec x)=1$ for each column.
  The latter holds since each column of $\bar\pi$ contains $t-1\equiv1\pmod4$ ones.
  For the former,
  \begin{align*}
    \bar\pi^T \bar\pi
  = (\vec 1_t \vec 1_t^T - \pi^T) (\vec 1_t \vec 1_t^T - \pi)
  = (t-2) \vec 1_t \vec 1_t^T + I \equiv I \pmod 2,
  \end{align*}
  where we used that ordinary permutation matrices are orthogonal and stochastic, as well as that $t$ is even.
\end{proof}

\begin{rem}
More generally, the entrywise binary complement maps any $A\in O_t(2)$ to an element $\bar A \in O_t(2)$ provided that the rows of~$A$ each have Hamming weight $w$ such that $t\equiv2w\pmod 4$.
\end{rem}

\noindent For $t\geq6$, the anti-permutations are distinct from the permutations, so in particular $O_t(2) \supsetneq S_t$.
(For $t=2$, the two sets coincide.)

\begin{exa}[$d=2$, $t=6$]
  The \emph{anti-identity} $\bar\id\in\mathcal O_6(2)$ and the corresponding subspace $T_{\bar\id} \subseteq \ZZ_2^6\op\ZZ_2^6$ are given by  (cf.\ \cref{eq:anti identity,eq:anti identity six}).
  \begin{align*}
    \bar\id = \begin{pmatrix}
      0&1&1&1&1&1 \\
      1&0&1&1&1&1 \\
      1&1&0&1&1&1 \\
      1&1&1&0&1&1 \\
      1&1&1&1&0&1 \\
      1&1&1&1&1&0
    \end{pmatrix}, \quad
    T_{\bar\id} = \left[\begin{array}{cccccc|cccccc}
      0&1&1&1&1&1 & 1&0&0&0&0&0\\
      1&0&1&1&1&1 & 0&1&0&0&0&0 \\
      1&1&0&1&1&1 & 0&0&1&0&0&0 \\
      1&1&1&0&1&1 & 0&0&0&1&0&0 \\
      1&1&1&1&0&1 & 0&0&0&0&1&0 \\
      1&1&1&1&1&0 & 0&0&0&0&0&1
    \end{array}\right].
  \end{align*}
\end{exa}

The anti-permutations admit several possible generalizations to odd primes~$d$.
One class of generalizations is given as follows.
For $\pi\in S_t$ with $d\nmid t$, define
\begin{align}\label{eq:anti perm qudit}
  \bar\pi = 2 t^{-1} \vec 1_t \vec 1_t^T - \pi,
\end{align}
where $t^{-1}$ denotes the multiplicative inverse of $t$ in $\FF_d$.
It is easy to verify that~$\bar\pi \in O_t(d)$.
Moreover, $\bar\pi$ is the only nontrivial linear combination of $\vec 1_t \vec 1_t^T$ and $\pi$ with this property.

Another class of generalizations is given by the formula in \cref{eq:orthogonal and stochastic}.
Let $\vec p\in\FF_d^t$ be vector with entries in $\{\pm1\}$ that is `balanced', i.e., $\vec p \cdot \vec 1_t = 0$ (this requires that $t$ is even).
If $d \nmid t$ and $\pi\in S_t$ is a permutation that stabilizes $\vec p$ up to a sign, i.e., $\pi\vec p = \pm\vec p$, then
\begin{align}\label{eq:antiperm 2}
  \tilde\pi = \pi \mp 2 t^{-1} \vec p \vec p^T
\end{align}
is an element in $O_t(d)$.
In particular, this yields a large family of `anti-identities' for odd~$d$.

\medskip

Another non-trivial example of a stochastic isometry can be constructed from the adjacency matrix $A$ of the edge-vertex graph of the icosahedron.
The icosahedron has $12$ vertices, so $A$ is a $12 \times 12$ binary matrix.
Any two vertices share either zero or two neighbors, which implies that $A$ is orthogonal.
Moreover, each vertex has $5$ neighbors, which implies $q(\vec x)=1$ for each column~$\vec x$ of $A$.
By \cref{rem:O_t}, it follows that $A\in O_{12}(2)$.
The space $T_{\bar A}$ generated by the element-wise complement of $A$ is just the extended Golay code $\mathcal{G}_{24}$.
The latter plays an important role in the invariant theory of the Clifford group as detailed in~\cite{nebe2006self}.
Note, however, that unlike $A$ and $T_A$, it is \emph{not} the case that $\bar A \in O_{12}(2)$ or  $T_{\bar A}\in\Sigma_{12,12}(2)$.
Working out the precise connection between $T_A$, the extended Golay code, and their respective roles in the representation theory of the Clifford group is an interesting problem we leave open.
Likewise, we leave open the question of whether $R(A)$ can be given a physical interpretation, as was the case for the anti-identity.

\section{Statistical properties of stabilizer states}\label{sec:moments}
In this section we discuss the statistical properties of the stabilizer states.
We use the techniques that we developed in the last section to prove an explicit formula for the $t$-th moment of random stabilizer states, which vastly generalizes previous results in the quantum information literature~\cite{zhu2015multiqubit,kueng2015qubit,webb2016clifford,zhu2016clifford,helsen2016representations}.
Throughout this section, $d$ is assumed to be a prime.

\subsection{Moments of random stabilizer states}\label{subsec:moments}
We start by studying the operator-valued \emph{$t$-th moment} of the uniform distribution over all stabilizer states in $(\CC^d)^{\ot n}$:
\begin{equation}\label{eq:defn moment}
  \EE \left[\ket S \bra S ^{\ot t} \right] \coloneqq \frac1{\lvert \Stab(n,d) \rvert} \sum_{\ket S\bra S \in \Stab(n,d)} \ket S \bra S^{\ot t}
\end{equation}
Clearly this operator can be used to calculate the average value of any polynomial of degree $t$ in the coefficients of the wavefunction of a random stabilizer state.

Note that the operator $\EE[\ket S \bra S^{\ot t}]$ is invariant under conjugation by Clifford operators.
This is because the set of stabilizer states is a single orbit of the Clifford group.
Thus $\EE[\ket S \bra S ^{\ot t}]$ is in the commutant of the Clifford group and, assuming $n\geq t-1$, can be written in terms of the basis $R(T)$ from \cref{thm:commutant},
\begin{equation}\label{eq:avg stab with coeffs}
\EE \left[\ket S \bra S ^{\ot t} \right]=\sum_{T\in \Sigma_{t,t}(d)}{\gamma_T \, R(T)},
\end{equation}
for certain coefficients $\gamma_T\in \CC$.
In this section, we will show that these coefficients are all equal and establish an explicit formula for the $t$-th moment of a random stabilizer state which holds for all values of $t$ and $n$.
We start with some useful lemmas.

\begin{rem}[Sum of traces of the $R(T)$]\label{rem:sum trace}
Recall that in order to establish \cref{thm:sigma size}, we determined the cardinality of the set~$\Sigma^\ell_{t,t}(d)$, whose elements are the subspaces $T\in\Sigma_{t,t}(d)$ with $\dim(T\cap\Delta)=t-\ell$.
The significance of the parameter~$\ell$ is that
\begin{align*}
  \tr R(T) = \left(\tr r(T)\right)^n = d^{(t-\ell) n}
\end{align*}
for every subspace $T\in\Sigma_{t,t}^\ell(d)$. 
  Thus, we can e.g.\ compute the sum of the traces of all $R(T)$ by using \cref{eq:sigma ell claim} and the Gaussian binomial formula \cref{eq:q binomial formula}:
\begin{align*}
  \sum_{T\in\Sigma_{t,t}(d)} \tr R(T)
  = \sum_{\ell=0}^{t-1} \bigl\lvert \Sigma_{t,t}^\ell(d)\bigr\rvert d^{(t-\ell)n} 
  = d^{nt} \prod_{k=0}^{t-2} (1+d^{k-n})
  = d^{n} \prod_{k=0}^{t-2} (d^k+d^{n}).
\end{align*}
  This number can be expressed in terms of the \emph{$q$-Pochhammer symbol} as 
  \begin{align*}
    \sum_{T\in\Sigma_{t,t}(d)} \tr R(T)
	=
	d^{nt} (-d^{-n}; d)_{t-1}.
  \end{align*}
  For $n=0$, we recover the cardinality of $\Sigma_{t,t}(d)$, in agreement with \cref{thm:sigma size}.
\end{rem}

Next, we prove a formula that relates moments of stabilizer states for different numbers of qudits.

\begin{lem}\label{lem:bootstrap}
Let $N\geq n>0$. Then:
\begin{align*}
\left( I^{\ot nt} \ot \bra{0}^{\ot (N-n)t} \right) \EE_{\Stab(N,d)} \left[\ket S \bra S^{\ot t} \right] \left( I^{\ot nt} \ot \ket{0}^{\ot (N-n)t} \right)
\;\propto\;
\EE_{\Stab(n,d)} \left[\ket S \bra S^{\ot t} \right],
\end{align*}
and both operators are nonzero.
\end{lem}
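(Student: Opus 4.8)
The plan is to compute both sides directly using the structure of stabilizer states. First I would recall that a stabilizer state on $N$ qudits is parametrized by a Lagrangian subspace $M\subseteq \mathcal V_N$ together with a phase function, and write $\EE_{\Stab(N,d)}[\ket S\bra S^{\ot t}]$ as an average over all such data. The key observation is that conjugating/projecting by $\ket 0^{\ot (N-n)t}$ on the last $N-n$ qudits of each of the $t$ copies picks out, for each stabilizer state $\ket S$ on $N$ qudits, the component of $\ket S$ that lies in the subspace spanned by $\ket{\vec x}\ot\ket{0}^{\ot(N-n)}$. Concretely, $(\,I^{\ot n}\ot\bra 0^{\ot N-n}\,)\ket S$ is proportional to a (possibly zero) stabilizer state on $n$ qudits: measuring the last $N-n$ qudits of a stabilizer state in the computational basis and postselecting on outcome $\vec 0$ yields a stabilizer state on the remaining $n$ qudits (this is a standard fact of the stabilizer formalism, following from the normal form in \cref{eq:normal form stabilizer group} and \cref{subsec:stabs}).

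Next I would argue that the resulting operator on $((\CC^d)^{\ot n})^{\ot t}$ is, up to normalization, again Clifford-invariant: the reason is that the map sending a stabilizer state on $N$ qudits to its postselected reduction on the first $n$ qudits is equivariant with respect to the Clifford group $\Cliff(n,d)$ acting on the first $n$ qudits (embedded in $\Cliff(N,d)$ by tensoring with the identity on the last $N-n$ qudits, which fixes $\ket 0^{\ot N-n}$). Since the set of stabilizer states on $N$ qudits is permuted by this subgroup, the averaged, postselected operator commutes with $U^{\ot t}$ for every $U\in\Cliff(n,d)$. By \cref{thm:commutant} (for $n\geq t-1$; the cases of small $n$ can be handled separately or the claim restricted accordingly), both $\EE_{\Stab(n,d)}[\ket S\bra S^{\ot t}]$ and the postselected operator lie in the span of the $R(T)$ for $T\in\Sigma_{t,t}(d)$.

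To pin down that the two are actually proportional (not merely both in the commutant), I would examine a distinguishing set of matrix elements or, more efficiently, use the fact established later in the section that $\EE_{\Stab(n,d)}[\ket S\bra S^{\ot t}]$ has \emph{all} coefficients $\gamma_T$ equal --- so it suffices to check that the postselected operator also has this property, or simply to compute one nonzero diagonal matrix element of each side in the computational basis and verify the ratio is independent of the subspace. A clean way: test against $\bra{\vec 0}^{\ot nt}(\,\cdot\,)\ket{\vec 0}^{\ot nt}$. On the left this counts the fraction of $N$-qudit stabilizer states $\ket S$ with $\braket{0^N|S}\neq 0$, weighted by $\lvert\braket{0^N|S}\rvert^{2t}$; on the right the analogous quantity for $n$ qudits. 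Both equal a known constant times an $\ell_{2t}$-type sum, and the combinatorics of how many $N$-qudit stabilizer groups restrict (after postselection) to a given $n$-qudit stabilizer group gives exactly the proportionality factor --- which is a ratio of cardinalities $\lvert\Stab(N,d)\rvert / \lvert\Stab(n,d)\rvert$ times a counting factor for the postselected qudits. Finally, nonvanishing of both operators is immediate: the diagonal matrix element $\braket{0^{nt}|\EE_{\Stab(n,d)}[\ket S\bra S^{\ot t}]|0^{nt}} = \EE[\lvert\braket{0^n|S}\rvert^{2t}] > 0$ since $\ket{0^n}$ is itself a stabilizer state, and likewise on the $N$-qudit side, so the postselection does not kill everything.

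The main obstacle I expect is the bookkeeping in the postselection step: verifying carefully that $(\,I^{\ot n}\ot\bra 0^{\ot N-n}\,)\ket S$ is always proportional to a stabilizer state (including the cases where it vanishes, which happen when the stabilizer group of $\ket S$ contains a $Z$-type element acting nontrivially only on the last $N-n$ qudits with a nontrivial phase), and getting the exact proportionality constant right by counting how the Lagrangian subspace $M\subseteq\mathcal V_N$ together with its phase function restricts. This is a finite, purely symplectic-linear-algebra computation, but it requires care because of the even-$d$ phase subtleties recalled in \cref{subsec:stabs}. Since the statement only claims proportionality (with an unspecified constant) and nonvanishing, I would keep the constant implicit and emphasize the equivariance + \cref{thm:commutant} argument as the conceptual core, relegating the restriction-of-stabilizer-groups fact to a short lemma or citation to \cref{subsec:stabs}.
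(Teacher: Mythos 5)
Your setup matches the paper's: postselecting $\ket 0^{\ot(N-n)}$ on each copy sends every $N$-qudit stabilizer state to zero or to a multiple of an $n$-qudit stabilizer state, and the postselected average commutes with $U^{\ot t}$ for all $U\in\Cliff(n,d)$. The gap is in how you try to convert this into proportionality. Invoking ``the fact established later in the section that all coefficients $\gamma_T$ are equal'' is circular: that fact is \cref{thm:t-th moment}, and the present lemma is an input to its proof. And checking a single functional such as $\bra{0}^{\ot nt}(\,\cdot\,)\ket{0}^{\ot nt}$ cannot establish proportionality of two operators: both sides live in the commutant of $\Cliff(n,d)^{\ot t}$, which by \cref{thm:commutant} has dimension $\prod_{k=0}^{t-2}(d^k+1)>1$ for $t\geq2$, so ``equivariance $+$ \cref{thm:commutant}'' only places both operators in a multi-dimensional span and does not force them onto the same ray. (Note also that \cref{thm:commutant} requires $n\geq t-1$, whereas the lemma is needed for all $n$ precisely in order to extend the moment formula to small $n$; restricting the claim would defeat its purpose.)

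The missing move --- which is the entire content of the paper's proof --- is to combine your two ingredients directly rather than passing through the $R(T)$ basis. You already have that the postselected operator equals $\sum_{S'\in\Stab(n,d)}\alpha_{S'}\ket{S'}\bra{S'}^{\ot t}$ with $\alpha_{S'}\geq 0$, and that it is invariant under conjugation by $U^{\ot t}$ for every $U\in\Cliff(n,d)$. Average this identity over the Clifford group: the left-hand side is unchanged by invariance, while on the right each $\ket{S'}\bra{S'}^{\ot t}$ is replaced by its Clifford average, which by transitivity of the Clifford action on $\Stab(n,d)$ equals $\EE_{\Stab(n,d)}[\ket S\bra S^{\ot t}]$. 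Hence the postselected operator equals $\bigl(\sum_{S'}\alpha_{S'}\bigr)\,\EE_{\Stab(n,d)}[\ket S\bra S^{\ot t}]$, with no need for the counting of which $N$-qudit stabilizer groups restrict to a given one, nor for the even-$d$ phase bookkeeping you flag as the main obstacle. Your nonvanishing argument is fine: every term in the postselected average is positive semidefinite and the term for $S=\ket 0^{\ot N}$ is nonzero.
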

\begin{proof}
If $\ket S\in\Stab(N,d)$ is a stabilizer state then the partial projection $\left(I \ot \bra 0^{\ot (N-n)}\right) \ket S$ is either zero or proportional to a stabilizer state in $\Stab(n,d)$ (see, e.g.,~\cite[App.~G]{hayden2016holographic}).
Thus, there exist coefficients $\alpha_{S'}$ for $S'\in\Stab(n,d)$ such that
\begin{align*}
\left( I^{\ot nt} \ot \bra{0}^{\ot (N-n)t} \right) \EE_{\Stab(N,d)} \left[\ket S \bra S^{\ot t} \right] \left( I^{\ot nt} \ot \ket{0}^{\ot (N-n)t} \right)
= \sum_{S'\in\Stab(n,d)} \alpha_{S'} \ket{S'}\bra{S'}^{\ot t}
\end{align*}
It is clear that the left-hand side operator is nonzero and invariant under conjugation by $U^{\ot t}$ for any Clifford unitary $U\in\Cliff(n,d)$.
Thus, we can replace the right-hand side by its Clifford average, which is plainly proportional to $\EE_{\Stab(n,d)} \left[\ket S \bra S^{\ot t} \right]$.
\end{proof}

\begin{thm}[$t$-th moment]\label{thm:t-th moment}
Let $n,t\geq1$.
Then the $t$-th moment of a random stabilizer state in $(\CC^d)^{\ot n}$ is given by the formula
\begin{equation}\label{eq:avg stab}
\EE_{\Stab(n,d)} \left[\ket S \bra S ^{\ot t} \right]=\frac1{Z_{n,d,t}} \sum_{T\in \Sigma_{t,t}(d)} R(T),
\end{equation}
  where $Z_{n,d,t}=d^n\prod_{k=0}^{t-2} (d^k+d^n)=d^{nt} (-d^{-n}; d)_{t-1}$.
\end{thm}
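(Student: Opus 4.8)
The plan is to reduce \cref{eq:avg stab} to the inner‑product identity $\braket{S^{\ot t} | R(T) | S^{\ot t}}=1$ of \cref{eq:not quite eigenvectors} together with a Witt‑type homogeneity property of the set $\Sigma_{t,t}(d)$. I would first treat the case $n\geq t-1$, where by \cref{thm:commutant} the operators $R(T)$, $T\in\Sigma_{t,t}(d)$, form a basis of the commutant of $\Cliff(n,d)^{\ot t}$. Since the stabilizer states form a single Clifford orbit, $\EE\coloneqq\EE_{\Stab(n,d)}[\ket S\bra S^{\ot t}]$ lies in this commutant and hence equals $\sum_{T}\gamma_T R(T)$; the goal is $\gamma_T\equiv Z_{n,d,t}^{-1}$. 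Put $\EE'\coloneqq Z_{n,d,t}^{-1}\sum_{T}R(T)$; both $\EE$ and $\EE'$ are self‑adjoint, using for $\EE'$ that $r(T)^\dagger=r(T^*)$ where $T^*\coloneqq\{(\vec y,\vec x):(\vec x,\vec y)\in T\}$ is again in $\Sigma_{t,t}(d)$ and $T\mapsto T^*$ is an involution of $\Sigma_{t,t}(d)$. Averaging \cref{eq:not quite eigenvectors} over stabilizer states gives $\tr[R(T_0)\EE]=1$ for every $T_0\in\Sigma_{t,t}(d)$, while a direct computation gives $\tr[R(T_0)R(T)]=\big(\tr[r(T_0)r(T)]\big)^n=\lvert T_0\cap T^*\rvert^n=d^{n\dim(T_0\cap T^*)}$, so after reindexing $T\mapsto T^*$ one finds $\tr[R(T_0)\EE']=Z_{n,d,t}^{-1}\sum_{T}d^{n\dim(T_0\cap T)}$. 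The key point is therefore the identity $\sum_{T\in\Sigma_{t,t}(d)}d^{n\dim(T_0\cap T)}=Z_{n,d,t}$ for every $T_0\in\Sigma_{t,t}(d)$. Granted this, $\tr[R(T_0)(\EE-\EE')]=0$ for all $T_0$, and writing the self‑adjoint commutant element $C\coloneqq\EE-\EE'=\sum_T\beta_T R(T)$ one gets $\lVert C\rVert_{\mathrm{HS}}^2=\tr[C^\dagger C]=\sum_T\beta_T\tr[R(T)C]=0$, hence $\EE=\EE'$.

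To prove the key identity I would use that $\mathfrak q(\vec x,\vec y)=\vec x\cdot\vec x-\vec y\cdot\vec y$ from \cref{dfn:lagrangian stochastic} is a nondegenerate $\ZZ_D$‑valued quadratic form of maximal Witt index on $\ZZ_d^{2t}$, for which the diagonal $\Delta=\{(\vec a,\vec a)\}$ and any $T_0\in\Sigma_{t,t}(d)$ are both maximal totally isotropic subspaces containing the isotropic vector $\vec 1_{2t}$. Choosing any linear isomorphism $T_0\to\Delta$ fixing $\vec 1_{2t}$ — which is automatically a $\mathfrak q$‑isometry, as $\mathfrak q$ vanishes identically on $T_0$ and on $\Delta$ — and extending it to an isometry $g$ of $(\ZZ_d^{2t},\mathfrak q)$ by the appropriate version of Witt's extension theorem, exactly as in the proof of \cref{lem:witt first} (for $d=2$ the Wood hypotheses reduce, as there, to $T_0\cap\ZZ_2\vec 1_{2t}=\Delta\cap\ZZ_2\vec 1_{2t}$, which holds since both subspaces contain $\vec 1_{2t}$), yields a bijection $g$ of $\ZZ_d^{2t}$ that fixes $\vec 1_{2t}$ and hence maps $\Sigma_{t,t}(d)$ onto itself. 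Since $\dim(T_0\cap T)=\dim(gT_0\cap gT)=\dim(\Delta\cap gT)$, reindexing $T\mapsto gT$ gives $\sum_{T}d^{n\dim(T_0\cap T)}=\sum_{T}d^{n\dim(\Delta\cap T)}=\sum_{T}\tr R(T)=Z_{n,d,t}$, where the last equality is \cref{rem:sum trace}.

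For $n<t-1$ I would extend the result using the bootstrap relation of \cref{lem:bootstrap} with any $N\geq\max(n,t-1)$: projecting $R_N(T)=r(T)^{\ot N}$ onto $\ket 0$ on the last $N-n$ qudits of each of the $t$ copies replaces it by $r(T)^{\ot n}=R_n(T)$, since $\bra{\vec 0}r(T)\ket{\vec 0}=1$ (because $\vec 0\in T$). Hence the corresponding partial projection of $\EE_{\Stab(N,d)}[\ket S\bra S^{\ot t}]=Z_{N,d,t}^{-1}\sum_{T}R_N(T)$ equals $Z_{N,d,t}^{-1}\sum_{T}R_n(T)$, which by \cref{lem:bootstrap} is a nonzero scalar multiple of $\EE_{\Stab(n,d)}[\ket S\bra S^{\ot t}]$; comparing traces and using \cref{rem:sum trace} again pins the scalar to $Z_{n,d,t}^{-1}$. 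I expect the main obstacle to be the Witt‑theoretic homogeneity of $\Sigma_{t,t}(d)$, and in particular verifying the hypotheses of the characteristic‑two version of Witt's extension theorem for the $\ZZ_4$‑valued form $\mathfrak q$; the remaining steps are routine bookkeeping with the Hilbert–Schmidt inner products $\tr[R(T_0)R(T)]=d^{n\dim(T_0\cap T^*)}$ and with the bootstrap projection.
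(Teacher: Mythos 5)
Your proposal is correct, but it reaches the conclusion by a genuinely different route than the paper for the central step. The paper first uses the bootstrap relation of \cref{lem:bootstrap} to show that the expansion coefficients factorize as $\gamma_{n,T}=\alpha_n\beta_T$ uniformly in $n$, and then pins down the $\beta_T$ asymptotically: pairing with $R(T)^\dagger$ gives $1=\alpha_n\sum_{T'}\beta_{T'}d^{n\dim(T\cap T')}=\alpha_n d^{nt}(\beta_T+O(d^{-n}))$, and letting $n\to\infty$ forces all $\beta_T$ to coincide. You instead work at a single fixed $n\geq t-1$ and replace the limit by the exact combinatorial identity $\sum_{T}d^{n\dim(T_0\cap T)}=Z_{n,d,t}$ for every $T_0\in\Sigma_{t,t}(d)$, which you derive from transitivity of the stochastic $\mathfrak q$-isometries of $\ZZ_d^{2t}$ on $\Sigma_{t,t}(d)$ (Witt extension applied to the Lagrangians $T_0$ and $\Delta$, with the characteristic-two hypotheses checked exactly as in \cref{lem:witt first}); combined with $\tr[R(T_0)\,\EE\,]=1$ from \cref{eq:not quite eigenvectors} and a Hilbert--Schmidt argument this gives the result directly, and the two treatments of $n<t-1$ via \cref{lem:bootstrap} are essentially identical. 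Your route buys a structural fact of independent interest --- the full isometry group of $(\ZZ_d^{2t},\mathfrak q)$ fixing $\vec 1_{2t}$ acts transitively on $\Sigma_{t,t}(d)$, refining the double-coset decomposition of \cref{cor:cosets} --- and avoids any asymptotics; the paper's route avoids having to invoke Witt's theorem for the $\ZZ_4$-valued form on the doubled space, which is the one place in your argument that genuinely needs care (and which you correctly identify and verify). Two cosmetic points: your Gram-matrix computation $\tr[R(T_0)R(T)]=d^{n\dim(T_0\cap T^\ast)}$ and the reindexing by the involution $T\mapsto T^\ast$ are right, and the Hilbert--Schmidt step goes through even without invoking self-adjointness of $C$, since $\tr[C^\dagger C]=\sum_T\overline{\beta_T}\,\tr[R(T^\ast)C]$ and $T^\ast$ again ranges over $\Sigma_{t,t}(d)$.
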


\begin{proof}
It suffices to argue that the left-hand side and right-hand side are proportional, since the formula for the proportionality constant~$Z_{n,d,t}$ follows immediately from \cref{rem:sum trace} and comparing traces.
Fixing $d$ and $t$, we will proceed in two steps.
First, we will argue that there exist $\alpha_n\in\CC$ and $\beta_T\in\CC$ for each $T\in\Sigma_{t,t}(d)$ such that
\begin{align}\label{eq:decoupled moment}
  \EE_{\Stab(n,d)} \left[\ket S \bra S^{\ot t} \right] = \alpha_n \sum_{T\in\Sigma_{t,t}(d)} \beta_T \, r(T)^{\ot n}.
\end{align}
We will show this first for $n\geq t-1$ and then for all~$n$.
Afterwards, we will find that the $\beta_T$ are necessarily equal, which as just discussed implies the claim.

Let us first assume that $n\geq t-1$.
By \cref{thm:commutant}, there exist coefficients $\gamma_{n,T}\in\CC$ such that
\begin{align}\label{eq:gamma expansion}
\EE_{\Stab(n,d)} \left[\ket S \bra S^{\ot t} \right] = \sum_{T\in\Sigma_{t,t}(d)} \gamma_{n,T} \, r(T)^{\ot n},
\end{align}
since the left-hand side commutes with arbitrary $t$-th tensor powers of Clifford unitaries.
It follows that, for every $N\geq n$,
\begin{equation}\label{eq:gamma projection}
\begin{aligned}
&\left( I^{\ot nt} \ot \bra{0}^{\ot (N-n)t} \right) \EE_{\Stab(N,d)} \left[\ket S \bra S^{\ot t} \right] \left( I^{\ot nt} \ot \ket{0}^{\ot (N-n)t} \right)
\\&\qquad = \sum_{T\in\Sigma_{t,t}(d)} \gamma_{N,T} \, r(T)^{\ot n} \braket{0^{\ot t} | r(T) | 0^{\ot t}}^{N-n}
= \sum_{T\in\Sigma_{t,t}(d)} \gamma_{N,T} \, r(T)^{\ot n},
\end{aligned}
\end{equation}
since $\vec 0\in T$ for every subspace~$T$.
From \cref{lem:bootstrap} we know that \cref{eq:gamma expansion,eq:gamma projection} are proportional and nonzero.
Since the operators~$r(T)^{\ot n}$ are also linearly independent for~$n\geq t-1$, it follows that there exist $\alpha_n$ and $\beta_T$ such that $\gamma_{T,n} = \alpha_n \beta_T$ for all $n\geq t-1$ (e.g., we can choose $\beta_T \coloneqq \gamma_{T,t-1}$).
Thus, we have established \cref{eq:decoupled moment} for $n\geq t-1$.
To extend its validity to all values of~$n$, we observe that \cref{eq:gamma projection} holds also when $N\geq t-1>n$.
Together with \cref{lem:bootstrap}, we find that, indeed,
\begin{align*}
  \EE_{\Stab(n,d)} \left[\ket S \bra S^{\ot t} \right]
\propto \sum_{T\in\Sigma_{t,t}(d)} \gamma_{N,T} \, r(T)^{\ot n}
\propto \sum_{T\in\Sigma_{t,t}(d)} \beta_T \, r(T)^{\ot n},
\end{align*}
which shows that there exist constants $\alpha_n$ and $\beta_T$ such that \cref{eq:decoupled moment} holds for all values of~$n$.

We will now argue that, in this case, the $\beta_T$ are necessarily all equal.
For this, we compute the expectation value of an operator $R(T)^\dagger = r(T)^{\ot n,\dagger}$.
On the one hand, by \cref{eq:not quite eigenvectors},
\begin{align*}
  \tr\left[ R(T)^\dagger \EE_{\Stab(n,d)} \left[\ket S \bra S^{\ot t} \right] \right]
= \EE_{\Stab(n,d)} \braket{S^{\ot t} | R(T)^\dagger | S^{\ot t}} = 1.
\end{align*}
On the other hand,
\begin{align*}
&\tr\left[ R(T)^\dagger \EE_{\Stab(n,d)} \left[\ket S \bra S^{\ot t} \right] \right]
= \alpha_n \sum_{T'\in\Sigma_{t,t}(d)} \beta_{T'} \, \left( \tr r(T)^\dagger r(T') \right)^n
\\&= \alpha_n \sum_{T'\in\Sigma_{t,t}(d)} \beta_{T'} \, d^{n \dim (T \cap T')}
= \alpha_n d^{nt} \left( \beta_T + O(d^{-n}) \right)
\end{align*}
in the limit of large $n$.
Thus, all $\beta_T$ must be equal, and the statement of the theorem follows.
\end{proof}

\begin{rem}
\Cref{thm:t-th moment} is reminiscent of the following average formula for the tensor power of a Haar-random pure state $\psi$ in~$\CC^D$, which follows from Schur's lemma and Schur-Weyl duality:
\begin{equation}\label{eq:avg state haar}
\EE_{\psi\!\text{ Haar}} \left[\ket\psi\bra\psi ^{\ot t} \right]] = \frac1{\prod_{k=0}^{t-1} (k+D)} \sum_{\pi \in S_t} R(\pi).
\end{equation}
\end{rem}

\begin{rem}\label{rem:known design stuff}
When $d$ is an odd prime, $\Sigma_{t,t}(d)=S_t$ for $t\leq2$, but not for $t\geq3$.
Thus \cref{eq:avg stab,eq:avg state haar} match for $t\leq2$ and deviate for $t\geq3$.
Since the operators $R(T)$ are linearly independent for sufficiently large~$n$, this shows that stabilizer states in odd prime dimension are 2-designs, but not 3-designs or higher (provided $n\geq2$).
Similarly, for $d=2$, $\Sigma_{t,t}(2)=S_t$ for $t\leq3$, but not or $t\geq4$, which shows that multiqubit stabilizer states form 3-designs, but not 4-designs or higher~\cite{kueng2015qubit} (provided $n\geq3$).
\end{rem}

Remarkably, the theory developed in this section allows us to design complex projective $t$-designs for any order~$t$ from the Clifford group orbits of a finite number of fiducial states.
We explain this in \cref{sec:designs} below.

\subsection{Minimal projections for stabilizer testing}\label{subsec:stabilizer testing revisited}
We now return to the problem of stabilizer testing; we revisit our solution from \cref{sec:stabilizer testing} and characterize minimal stabilizer tests with perfect completeness.

In \cref{sec:stabilizer testing}, we found that perfectly complete stabilizer tests were in any local dimension~$d$ given by the following accepting POVM element on $t=2s$ copies of $(\CC^d)^{\ot n}$,
\begin{align}\label{eq:p accept restated}
  \Pi_{s,\text{accept}} = \frac12 \left( I + V_s \right),
\end{align}
where $V_s$ is the Hermitian unitary defined in \cref{eq:V for qudits} and $(d,s)=1$.
This means that $V_s$ is a unitary operator with the property that $2s$-th tensor powers of every pure stabilizer state $\ket S$ are contained in its $+1$ eigenspace:
\begin{equation}\label{eq:accept stabilizer}
V_s \ket S^{\ot 2s} =\ket S^{\ot 2s}
\end{equation}
for any pure stabilizer states $\ket S$.
Our soundness result implies that, conversely, these are the only tensor power states with this property.

Note that $V_s$ is an operator in the commutant of the Clifford action.
This is immediate by comparing \cref{eq:O_t action,eq:V_s as R(T)}, which also shows that $V_s$ is precisely the operator $R(\tilde\id)$ associated with the `anti-identity'~$\tilde\id$ defined in \cref{eq:orthogonal and stochastic}!

In fact, \emph{any} $R(O)$ stabilizes the $t$-th tensor powers of stabilizer states:
For all $O\in O_t(d)$,
\begin{align}\label{eq:eigenvectors restated}
  R(O) \ket S^{\ot t} = \ket S^{\ot t}.
\end{align}
We proved this in \cref{eq:eigenvectors}.
As we just saw, $V_s$ is such an operator, so \cref{eq:eigenvectors restated} generalizes \cref{eq:accept stabilizer}.

Note that, since $\tilde\id$ squares to the identity, it generates a subgroup of $O_t(d)$ that contains two elements: $\{\id, \tilde\id\}$.
We can thus interpret the projector~\eqref{eq:p accept restated} as the projector onto the invariant subspace for the action of this subgroup.
This suggest that we look more generally at the invariant subspaces associated with subgroups of $O_t(d)$.
Larger subgroups corresponds to projectors onto smaller invariant subspaces.
In particular, the minimal projector corresponds to the full group $O_t(d)$, i.e.,
\begin{align}\label{eq:minimal projector}
  \Pi^{\min}_t \coloneqq \frac1 {\lvert O_t(d) \rvert }\sum_{O\in O_t(d)} R(O).
\end{align}
By \cref{eq:eigenvectors restated}, $\Pi^{\min}_t$ accepts all stabilizer tensor powers.
Remarkably, it is the \emph{minimal} projector with this property, as follows from the following theorem.

\restateThmMinimalTest
\begin{proof}
Note that the $t$-th moment $\rho\coloneqq\EE \left[\ket S \bra S ^{\ot t} \right]$ defined in \cref{eq:defn moment} is a density operator that is exactly supported on the span of the stabilizer tensor powers.
By the preceding discussion, it remains to prove that the support of $\Pi^{\min}_t$ is contained in the support of $\rho$.
We start with \cref{eq:avg stab}:
\begin{align*}
\EE_{\Stab(n,d)} \left[\ket S \bra S ^{\ot t} \right]=\frac1{Z_{n,d,t}} \sum_{T\in \Sigma_{t,t}(d)} R(T)
\end{align*}
Recall from \cref{eq:double cosets} that we can decompose $\Sigma_{t,t}(d)$ into $k$ double cosets,
\begin{align*}
\Sigma_{t,t}(d) = O_t(d) T_1 O_t(d) \cup \dots \cup O_t(d) T_k O_t(d).
\end{align*}
One of the double cosets is just $O_t(d)$, say the first, corresponding to $T_1=\Delta$ and $R(T_1)=I$.
As a consequence of \cref{cor:cosets}, we can choose each representative~$T_i$ to be of the form~\eqref{eq:css T}.
Then, \cref{thm:css} shows that $R(T_i)$ is proportional to an orthogonal projection (by a positive proportionality constant) so in particular $R(T_i)\geq0$.
On the other hand, we can compute the sum over each double coset by
\begin{align*}
  \sum_{T \in O_t(d) T_i O_t(d)} R(T)
= \frac{|O_t(d) T_i O_t(d)|}{|O_t(d)| \times |O_t(d)|} \sum_{O, O' \in O_t(d)} R(O) R(T_i) R(O')
= c_i \, \Pi^{\min}_t R(T_i) \Pi^{\min}_t
\end{align*}
where $c_i \coloneqq |O_t(d) T_i O_t(d)| > 0$.
Together, we obtain that
\begin{align*}
\rho
&= \frac1{Z_{n,d,t}} \sum_{T\in\Sigma_{t,t}(d)} R(T)
= \frac1{Z_{n,d,t}} \sum_{i=1}^k c_i \, \Pi^{\min}_t R(T_i) \Pi^{\min}_t \\
&= \frac{c_1}{Z_{n,d,t}} \left( \Pi^{\min}_t + \sum_{i=2}^k \frac {c_i} {c_1} \, \Pi^{\min}_t R(T_i) \Pi^{\min}_t \right)
\geq \frac{c_1}{Z_{n,d,t}} \Pi^{\min}_t,
\end{align*}
which shows that the support of $\rho$ indeed contains the support of $\Pi^{\min}_t$.
\end{proof}

Note that there is no condition on $t$ in \cref{thm:minimal test}.
Indeed, while the theorem identifies the projector onto the span of stabilizer tensor powers precisely, it makes no assertion about whether this subspace contains other tensor power states than stabilizer states or not.
It therefore complements our results on stabilizer testing, from which we can read off values of $t$ such that the projector $\Pi_t$ and hence $\Pi^{\min}_t \leq \Pi_t$ contains only stabilizer tensor powers.

It is also interesting to ask about the minimal number of copies necessary for there to exist a perfectly complete stabilizer test that is dimension-independent.
For $d=2$, it is possible to show that $t=4,5$ copies of a random stabilizer state become on average indistinguishable from a Haar-random pure state as $n\to\infty$.
This can be done by an explicit calculation of the 4th and 5th moments using our \cref{thm:t-th moment,eq:cosets 2 4,eq:cosets 2 5} (for $t=4$, this has been carried out in~\cite{rajamsc}).
Thus, $t=6$ copies as in our \cref{thm:main qubits} are indeed optimal for multiqubit stabilizer testing.
For odd~$d$ (prime or not), we know from \cref{thm:main qudits} that~$t=4$ copies always suffice. 
For $d\equiv1,5\pmod6$ it follows from \cref{thm:main three copies} below that even~$t=3$ copies suffice (and are optimal).
For $d\equiv3\pmod6$, we leave the question of minimal~$t$ open.

\section{Construction of designs}\label{sec:designs}
Next we describe a construction of projective $t$-designs for arbitrary $t$ based on weighted Clifford orbits.
As in \cref{sec:algebra,sec:moments}, we assume that $d$ is prime.

First, we derive expressions for the average tensor powers of the Clifford orbits of arbitrary states.
For any pure state $\ket\Psi$, the average $\EE_{U\in \Cliff(n,d)}[(U\ket\Psi\bra\Psi U^\dagger)^{\ot t}]$ commutes with $\Cliff(n,d)^{\ot t}$.
By \cref{thm:commutant}, and assuming that $n\geq t-1$, it can therefore be expressed as
\begin{align}\label{eq:average clifford orbit}
  \EE_{U\in \Cliff(n,d)}\left[\left(U\ket\Psi\bra\Psi U^\dagger\right)^{\ot t}\right] = \sum_{T\in \Sigma_{t,t}(d)} \alpha'_T R(T)
\end{align}
for some $\alpha_T' \in \CC$.
Not all of the $\alpha'_T$ are independent.
This is because each $(U\ket\Psi\bra\Psi U^\dagger)^{\ot t}$ is invariant under the action of $S_t \times S_t$ (acting from the left and from right), and also under taking the conjugate transpose.
This motivates the following definition:

\begin{dfn}[Equivalence relation $\sim_S$]
We define an \emph{equivalence relation $\sim_S$} on $\Sigma_{t,t}(d)$ in the following way:
$T \sim_S T'$ if and only if there exist $\pi$, $\pi' \in S_t$ such that $T'=\pi T \pi'$ or $T'=\pi T^t \pi'$, where the transposed subspace $T^t$ is defined by~$T^t=\{(\vec y,\vec x)\,:\, (\vec x,\vec y)\in T \}$.
\end{dfn}

\noindent We correspondingly decompose $\Sigma_{t,t}(d)$ into equivalence classes:
\begin{align*}
  \Sigma_{t,t}(d) = \bigcup_{i=1}^{M_{t,d}}\mathcal F_{t,i}(d)
\end{align*}
For convenience, we choose $\mathcal F_{t,1}(d)$ to be the set of subspaces corresponding to the permution group~$S_t$ (these form a single equivalence class).
We also define
\begin{align*}
  \mathcal R_i\coloneqq\sum_{T \in \mathcal F_{t,i}(d)} R(T).
\end{align*}
We note that the operators $\mathcal R_i$ are Hermitian and linearly independent.

Since the $R(T)$ are linearly independent and $R(T)^\dagger = R(T^t)$, it follows that the coefficients $\alpha'_T$ in \cref{eq:average clifford orbit} must be the same for the elements of each equivalence class.
Thus,
\begin{align}\label{eq:t-th moment of clifford orbit}
  \EE_{U\in \Cliff(n,d)}\left[\left(U\ket\Psi\bra\Psi U^\dagger\right)^{\ot t}\right] = \sum_{i=1}^{M_{t,d}} \alpha_i\mathcal R_i
\end{align}
for some coefficients $\alpha_i$.
Note that $\alpha_i \in \RR$ because the $\mathcal R_i$ are Hermitian.

\restateThmOrbitDesign
Importantly, the number~$M_{t,d}$ of Clifford orbit is independent of $n$, the number of qudits.
\begin{proof}
We start the proof by taking an arbitrary finite $t$-design given by an ensemble $\{p_j,\Psi_j\}_{j=1}^K$.
Such designs exist (see for example the early work~\cite{seymour1984averaging}), but $K$ can be very large.
If we replace each $\Psi_j$ by a random element in its Clifford orbit then the resulting ensemble still forms a projective $t$-design.
This means that
\begin{align*}
  \EE_{j \sim p} \EE_{U\in\Cliff(n,d)}\left[ (U \ket{\Psi_j}\bra{\Psi_j} U^\dagger)^{\ot t} \right] \propto \mathcal R_1.
\end{align*}
Thus, if we define $\alpha_i^{(j)}$ as the coefficient of $\mathcal R_i$ in the Clifford average of the fiducial state~$\Psi_j$,
\begin{align*}
  \EE_{U\in \Cliff(n,d)}\left[\left(U\ket{\Psi_j}\bra{\Psi_j} U^\dagger\right)^{\ot t}\right] = \sum_{i=1}^{M_{t,d}} \alpha_i^{(j)} \mathcal R_i,
\end{align*}
then
\begin{align}\label{eq:design conditions}
  \EE_{j \sim p}\left[\alpha_i^{(j)}\right] = 0 \quad \text{ for } i=2,\dots,M_{t,d}.
\end{align}
Conversely, if $\{p_j\}$ is an arbitrary probability distribution that satisfies \cref{eq:design conditions} then the ensemble obtained by first choosing a random fiducial state  according to this distribution and then a random state in its Clifford orbit is a projective $t$-design.
We will now explain how to modify the probabilities~$p_j$ step by step, setting more and more probabilities to zero while ensuring that \cref{eq:design conditions} continues to hold -- until all but~$M_{t,d}$ of them are zero.
Without loss of generality, assume that $p_1>0$.

Each step proceeds as follows:
Suppose that there exist indices $2\leq j_1 < j_2 < \dots < j_{M_{t,d}}\leq K$ such that $p_{j_m} > 0$ for all $m=1,\dots,M_{t,d}$.
(If no such indices exist then we are done.)
Consider the linear system
\begin{align*}
  \sum_{m=1}^{M_{t,d}} q_m \alpha_i^{(j_m)} = 0 \quad \text{ for } i=2,\dots,M_{t,d}
\end{align*}
in the indeterminates $\{q_m\}_{m=1}^{M_{t,d}}$.
This system is real, homogeneous, and underconstrained, so there always exists a nontrivial real solution $q \in \RR^{M_{t,d}}$.
We can also assume that some component of $q$ is positive (otherwise replace $q$ by $-q$).
Now consider $p_{j_m} - x q_m$ for $x\in\RR$.
At $x=0$, all $p_{j_m}$ are strictly positive.
At some critical~$x=x_c$, one of the values $p_{j_m} - x_c q_m$ becomes zero, while all other ones are still non-negative.
Thus, if we modify the probabilities $p_{j,m}$ by the rule
\begin{align*}
  p_{j_m} \mapsto p_{j_m} - x_c q_m \quad \text{ for } m=1,\dots,M_{t,d}
\end{align*}
then it still holds true that $\sum_{j=1}^{M_{t,d}} p_j \alpha_i^{(j)} = 0$, but there is now at least one additional zero among the $p_{j,m}$.
This continues to hold if we further normalize the $\{p_j\}$ to be a probability distribution, i.e.,
\begin{align*}
  p_j \mapsto \frac {p_j} {\sum_{j'} p_{j'}} \quad \text{ for } j=1,\dots,K,
\end{align*}
which is always possible since $p_1>0$.
Thus, we obtain a probability distribution $\{p_j\}$ with strictly smaller support satisfying \cref{eq:design conditions} and $p_1>0$.

We can repeat this process until there are at most $M_{t,d}-1$ nonzero probabilities among the $\{p_j\}_{j=2}^K$.
By including $p_1$, we arrive at an ensemble of at most $M_{t,d}$ fiducial vectors.
The corresponding probabilities satisfy \cref{eq:design conditions}, which is necessary and sufficient for the ensemble of Clifford orbits to be a design.
This completes the proof.
\end{proof}

\begin{rem}
A simple upper bound for $M_{t,d}$ is $\lvert\Sigma_{t,t}(d)\rvert=\prod_{k=0}^{t-2} (d^k+1)$ from \cref{thm:sigma size}.
However, in general this is a rather pessimistic estimate.
For example, consider $d=t=3$.
Then, $\lvert\Sigma_{t,t}(d)\rvert=8$, while there are just $M_{t,d}=2$ equivalence classes, as follows from \cref{eq:cosets 3 3}.
One of them is the set of permutations $S_3$, with $6$ elements, and the other one has 2 elements.

For $d=3$ and $t=4$, $\lvert\Sigma_{t,t}(d)\rvert=80$, while $M_{t,d}=3$.
Again, one of the equivalence classes is the permutation group with $4!=24$ elements.
The second equivalence class is the class of the \emph{anti-permutations} as defined in \cref{eq:anti perm qudit}, which is represented by the row space of the matrix
\begin{align*}
\left[
\begin{array}{cccc|cccc}
1&2&2&2 & 1&0&0&0 \\
2&1&2&2 & 0&1&0&0 \\
2&2&1&2 & 0&0&1&0 \\
2&2&2&1 & 0&0&0&1
\end{array}\right]
\end{align*}
(the Lagrangian subspace corresponding to the qutrit anti-identity $\bar\id$).
This equivalence class again has $24$ elements.
The last equivalence class can be represented by
\begin{align*}
\left[
\begin{array}{cccc|cccc}
1&1&1&1 & 1&1&1&1 \\
0&1&2&0 & 0&2&1&0 \\
\hline
0&0&0&0 & 1&1&1&0 \\
\hline
1&1&1&0 & 0&0&0&0
\end{array}\right],
\end{align*}
and it has $32$ elements.
\end{rem}

\begin{rem}
The criterion used in the proof of \cref{thm:orbit design} can also be used to determine fiducial states that generate a projective $t$-design.
For example, the Clifford orbit through a single fiducial state $\Psi$ forms a projective $t$-design if and only if the coefficients $\alpha_i$ in \cref{eq:t-th moment of clifford orbit} vanish for $i\neq1$.

Let us illustrate this strategy by showing that, for any $n\geq2$, there exists a qutrit state $\ket\psi\in\CC^3$ such that the Clifford orbit of $\Psi = \psi^{\ot n}$ forms a projective 3-design.
We note that this state \emph{cannot} be a stabilizer state, since we know that the ensemble of qutrit stabilizer states does not form a $3$-design (\cref{rem:known design stuff})!
Instead of with $\mathcal R_1$ and $\mathcal R_2$, we will work with their multiples $\Pi^{\text{sym}}_3 \propto \mathcal R_1$ and $P_+ \coloneqq \Pi^{\text{sym}}_3 P \Pi^{\text{sym}}_3 \propto \mathcal R_2$, where $P$ is the projector defined in \cref{eq:projector P third moment qutrits}.
Now consider the third moment
\begin{align*}
  \rho_3 \coloneqq \EE_{U\in\Cliff(n,3)}\left[(U \ket\Psi\bra\Psi U^{\dagger})^{\ot3}\right],
\end{align*}
and expand it as
\begin{align*}
  \rho_3 = \alpha(\psi) \, \Pi^{\text{sym}}_3 + \beta(\psi) \, P_+
\end{align*}
for coefficients $\alpha(\psi)$, $\beta(\psi)\in\RR$ which depend on the choice of fiducial state.
We wish to argue that for every~$n$ there exists a single-qutrit state~$\ket\psi$ such that $\beta(\psi) = 0$.
For this, we note that the coefficients can be computed as follows:
\begin{align*}
  1
&= \tr[\rho_3]
= \alpha(\psi) \tr[\Pi^{\text{sym}}_3] + \beta(\psi) \tr[P_+] \\
\braket{\psi^{\ot3}|r(T)|\psi^{\ot3}}^n
&= \tr[R(T) \rho_3]
= 3^n \alpha(\psi) \, \tr[P_+] + 3^n \beta(\psi) \, \tr[P_+].
\end{align*}
It follows that $\beta(\psi)=0$ if and only if
\begin{align*}
  \braket{\psi^{\ot3}|r(T)|\psi^{\ot3}}^n
= 3^n \frac{\tr[P_+]}{\tr[\Pi^{\text{sym}}_3]}
= \frac{3}{3^n+2},
\end{align*}
where we used \cref{eq:dims third moment qutrits}.
Thus, the Clifford orbit through $\psi^{\ot n}$ forms a projective 3-design if and only if
\begin{align}\label{eq:qutrit three design condition}
  \braket{\psi^{\ot3}|r(T)|\psi^{\ot3}} = \left(\frac{3}{3^n+2}\right)^{1/n} \in \left[\tfrac13,\tfrac35\right]
\end{align}
But the the left-hand side is equal to one if $\psi$ is a stabilizer state, e.g., $\ket\psi=\ket0$ (\cref{eq:not quite eigenvectors}), while is vanishes for, e.g., the non-stabilizer state $\ket\psi = \frac1{\sqrt2}\left( \ket0 - \ket 1 \right)$.
By continuity it follows that there always exists a single-qutrit state $\ket\psi$ satisfying \cref{eq:qutrit three design condition}.
It is easy to find such an $\ket\psi$ explicitly, e.g., by considering the one-parameter family of states $\ket{\psi(\theta)} = \cos(\theta) \ket0 - \sin(\theta) \ket1$ and solving \cref{eq:qutrit three design condition} for $\theta\in[0,\frac\pi2]$.
\end{rem}

\section{De Finetti theorems for stabilizer symmetries}\label{sec:de finetti}
In this section we establish a direct connection between our results on stabilizer testing and the celebrated \emph{quantum de Finetti theorems}, which play an important role in characterizing entanglement and correlations in quantum states with permutation symmetry (cf.~discussion in \cref{subsec:de finetti intro}).

We first recall the finite quantum de Finetti theorem from~\cite{christandl2007one}.
Let $\rho$ be a quantum state on~$(\CC^\ell)^{\ot t}$ that commutes with all permutations (i.e., $[r_\pi,\rho]=0$ for all $\pi\in S_t$).
Then there exists a probability measure~$\mu$ on the space of mixed states on $\CC^\ell$ such that
\begin{align}\label{eq:christandl mixed de finetti}
  \frac12 \left\lVert \rho_{1\dots{}s} - \int d\mu(\sigma) \sigma^{\ot s} \right\rVert_1 \leq 2\ell^2\frac{s}{t}.
\end{align}
Since any quantum state that commutes with permutations admits a purification on the symmetric subspace, \cref{eq:christandl mixed de finetti} follows directly from a similar result for the symmetric subspace, namely, that for every $\ket\Psi\in\Sym^t(\CC^\ell)$ there exists a probability measure~$\mu$ on \emph{pure states} on $\CC^\ell$ such that
\begin{align}\label{eq:christandl pure de finetti}
  \frac12 \left\lVert \Psi_{1\dots{}s} - \int d\mu(\phi) \phi^{\ot s} \right\rVert_1 \leq 2\ell\frac{s}{t}.
\end{align}

In this section, we prove de Finetti theorems adapted to stabilizer states.
The key idea is to extend the permutation symmetry to invariance under a larger group:
\begin{enumerate}
\item the stochastic orthogonal group~$O_t(d)$ (for qudits in any prime dimension~$d$), or
\item the group generated by the permutations and the anti-identity~\eqref{eq:anti identity six} (for qubits).
\end{enumerate}
These symmetries are natural since they are carried by the tensor powers of any stabilizer state, as we proved in \cref{eq:eigenvectors}.

In both cases, our theorems show that the reduced density matrices are close to convex combinations of \emph{tensor powers of stabilizer states}.
In the first case, we find that the reduced state is in fact \emph{exponentially} (in the number of traced out systems) close to a state of this form, which is a much stronger guarantee than provided by the finite de Finetti theorems of \cref{eq:christandl mixed de finetti,eq:christandl pure de finetti} (cf.~\cite{renner2007symmetry,koenig2009most}).
In the second case, we obtain power law convergence but the symmetry requirements are drastically reduced.
We establish our results first for pure states (\cref{subsec:de finetti qubits,subsec:de finetti qudits}) and then extend them by a standard purification argument to mixed states (\cref{subsec:de finetti mixed}).

\subsection{Exponential stabilizer de Finetti theorem}\label{subsec:de finetti qudits}
Let $d$ be an arbitrary prime.
We start with the observation that, for any two distinct stabilizer states,
\begin{align}\label{eq:stab neq overlap}
  \lvert\braket{S|S'}\rvert^2\leq \frac1d
\end{align}
(this can be seen from, e.g., \cref{eq:characteristic function stabilizer}).
It follows that, for fixed~$d$ and~$n$, the stabilizer tensor powers~$\ket S^{\otimes t}$ approach orthonormality as $t\to\infty$.
The following lemma makes this precise.

\begin{lem}\label{lem:gram matrix}
  Let $d$ be a prime and $n,t\geq1$.
  Consider the Gram matrix $G_{S,S'} = \braket{S|S'}^t$, where $S,S'\in\Stab(n,d)$.
  If
  \begin{align*}
    \eps \coloneqq d^{\frac12((n+2)^2-t)} < \frac12
  \end{align*}
  then the following holds:
  \begin{enumerate}
  \item
  The Gram matrix is $\eps$-close to the identity matrix in operator norm: $\lVert G - I \rVert_\infty \leq \eps$.
  In particular, the stabilizer tensor powers $\ket S^{\ot t}$ are linearly independent.
  \item
  The nonzero eigenvalues of $Q \coloneqq \sum_S \ket S^{\ot t}\bra S^{\ot t}$ and its pseudoinverse $Q^{+}$ lie in the interval $1\pm2\eps$.
  \item
  The vectors $(Q^+)^{1/2} \ket{S}^{\otimes t}$ for $S\in\Stab(n,d)$ are orthonormal.
  \end{enumerate}
\end{lem}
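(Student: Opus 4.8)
The plan is to bound the off-diagonal entries of the Gram matrix and then apply a Gershgorin-type argument. First I would use the key estimate~\eqref{eq:stab neq overlap}: for distinct stabilizer states $|\braket{S|S'}|^2\leq 1/d$, so $|G_{S,S'}| = |\braket{S|S'}|^t \leq d^{-t/2}$ for $S\neq S'$, while $G_{S,S}=1$. The number of stabilizer states of $n$ qudits in prime dimension $d$ is $d^n\prod_{k=1}^{n}(d^k+1)$, which is bounded above by $d^{n}\cdot d^{\sum_{k=1}^n(k+1)} = d^{n + n(n+3)/2} \leq d^{(n+2)^2}$ (a crude but sufficient bound; I would verify the exact exponent but any bound of the form $d^{O(n^2)}$ suffices, and $(n+2)^2$ is comfortable). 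Hence each row of $G-I$ has $\ell_1$-norm at most $|\Stab(n,d)|\cdot d^{-t/2} \leq d^{(n+2)^2 - t/2}$. Wait — I should be careful about the exponent convention: the statement uses $\eps = d^{\frac12((n+2)^2 - t)} = d^{\frac12(n+2)^2}\cdot d^{-t/2}$, so the bound I want is $\lVert G-I\rVert_\infty \leq |\Stab(n,d)|\cdot d^{-t/2}$ and I need $|\Stab(n,d)|\leq d^{\frac12(n+2)^2}$; I would check that this cruder bound indeed holds (it does, with room to spare). Then by the fact that $\lVert M\rVert_\infty$ is bounded by the maximum absolute row sum for symmetric $M$, item~(1) follows, and linear independence of the $\ket S^{\ot t}$ is immediate since $G$ is then invertible (its eigenvalues lie in $[1-\eps,1+\eps]\subseteq(1/2,3/2)$).

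For item~(2), I would use the standard fact that $Q = \sum_S \ket S^{\ot t}\bra S^{\ot t}$ and the Gram matrix $G$ have the same nonzero spectrum: writing $B$ for the matrix whose columns are the $\ket S^{\ot t}$, we have $Q = BB^\dagger$ and $G = B^\dagger B$, which share nonzero eigenvalues. Since $\lVert G - I\rVert_\infty\leq\eps$, the eigenvalues of $G$ — hence the nonzero eigenvalues of $Q$ — lie in $[1-\eps,1+\eps]$. For the pseudoinverse $Q^+$, its nonzero eigenvalues are the reciprocals, lying in $[(1+\eps)^{-1},(1-\eps)^{-1}]$; since $\eps<1/2$, one checks $(1+\eps)^{-1}\geq 1-\eps\geq 1-2\eps$ and $(1-\eps)^{-1}\leq 1+2\eps$ (the latter from $(1-\eps)(1+2\eps) = 1+\eps-2\eps^2\geq 1$ for $\eps\leq 1/2$). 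So both $Q$ and $Q^+$ have nonzero eigenvalues in $1\pm 2\eps$, as claimed.

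For item~(3), I would compute the Gram matrix of the vectors $v_S \coloneqq (Q^+)^{1/2}\ket S^{\ot t}$ directly: $\braket{v_S|v_{S'}} = \bra S^{\ot t}(Q^+)^{1/2}\cdot(Q^+)^{1/2}\ket{S'}^{\ot t} = \bra S^{\ot t}Q^+\ket{S'}^{\ot t}$. Now $Q^+ Q$ is the orthogonal projector onto $\ran Q = \Span\{\ket S^{\ot t}\}$, which contains every $\ket{S'}^{\ot t}$, so $Q^+ Q\ket{S'}^{\ot t} = \ket{S'}^{\ot t}$; substituting $Q = \sum_{S''}\ket{S''}^{\ot t}\bra{S''}^{\ot t}$ gives $\sum_{S''}\bra S^{\ot t}Q^+\ket{S''}^{\ot t}\braket{S''|S'}^t = \braket{S|S'}^t$, i.e.\ in matrix form $(B^\dagger Q^+ B)\,G = G$, hence $B^\dagger Q^+ B = I$ since $G$ is invertible. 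Thus $\braket{v_S|v_{S'}} = \delta_{S,S'}$, establishing orthonormality. The main obstacle I anticipate is purely bookkeeping: making sure the crude cardinality bound $|\Stab(n,d)|\leq d^{\frac12(n+2)^2}$ is actually correct with the stated exponent (and in particular that the $\frac12$ is placed as in the lemma statement), and handling the pseudoinverse carefully on the kernel of $Q$ — everything else is a short chain of standard linear-algebra manipulations.
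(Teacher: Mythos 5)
Your argument is correct and follows essentially the same route as the paper's proof: the off-diagonal bound $\lvert\braket{S|S'}\rvert^2\le 1/d$ together with the cardinality of $\Stab(n,d)$ gives $\lVert G-I\rVert_\infty\le\eps$ by a row-sum bound; writing $B$ for the matrix with columns $\ket S^{\ot t}$, the operators $Q=BB^\dagger$ and $G=B^\dagger B$ share their nonzero spectrum; and your identity $B^\dagger Q^+ B=I$ is the same fact the paper extracts by expanding $\ket S^{\ot t}=QQ^+\ket S^{\ot t}$ and invoking linear independence. The one step you flagged but did not verify is the only one that actually needs care: the crude estimate $d^k+1\le d^{k+1}$ yields $\lvert\Stab(n,d)\rvert\le d^{(n^2+5n)/2}$, which exceeds the required $d^{\frac12(n+2)^2}=d^{(n^2+4n+4)/2}$ as soon as $n\ge5$, so that route does not deliver the stated constant. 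The required bound is nevertheless true---write $\lvert\Stab(n,d)\rvert=d^{n+n(n+1)/2}\prod_{i=1}^n\left(1+d^{-i}\right)$ and bound the last product by $e^{1/(d-1)}\le d^2$---and the paper simply quotes it, so this is a bookkeeping fix rather than a genuine gap.
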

\begin{proof}
  1. The first claim follows directly from the element-wise bound~\eqref{eq:stab neq overlap}:
  \[ 
  \lVert G - I \rVert_\infty 
  \leq
  \lVert G - I \rVert_{\ell_2}
  \leq 
  \lVert G - I \rVert_{\ell_\infty}\,\lvert\Stab(n,d)\rvert
  \leq \left( \max _{S\neq S'} \lvert\braket{S|S'}\rvert^t \right)\, \lvert\Stab(n,d)\rvert
  \leq 
  d^{\frac12 ((n+2)^2-t)}=\eps,\]
  where we used the bound
  \begin{align}\label{eq:stab cardinality}
    \lvert\Stab(n,d)\rvert=d^n \prod_{i=1}^n (d^i+1) \leq d^{(n+2)^2/2}.
  \end{align}
  The cardinality of the set of stabilizer states has been computed in \cite[Prop.~2]{aaronson2004improved} for $d=2$ and in~\cite[Cor.~21]{gross2006hudson} for odd~$d$.
  Since $\eps<1$, the statement about the Gram matrix implies that the stabilizer tensor powers are linearly independent.

  2. Now define
  \begin{align*}
    H = \sum_{S\in\Stab(n,d)} \ket S^{\ot t}\bra{e_S},
  \end{align*}
  where $\ket{e_S}$ denotes an orthonormal basis labeled by the set of stabilizer states $\Stab(n,d)$.
  Then,
  \begin{align*}
    G = H^\dagger H
    \qquad\text{ and }\qquad
    Q = H H^\dagger,
  \end{align*}
  and thus the nonzero eigenvalues of $G$ and $Q$ are both identical (to the squared singular values of $H$).
  By part 1, the eigenvalues of $G$ lie in the interval $1\pm\eps$, hence the same is true for the nonzero eigenvalues of $Q$.
  Since we assumed that $\eps<1/2$, it follows that the nonzero eigenvalues of the pseudoinverse $Q^+$ are in the interval $1\pm2\eps$.
  This establishes the second claim.

  3. By the first claim, the stabilizer tensor powers are linearly independent.
  On the other hand,
  \begin{align*}
    \ket S^{\otimes t}
    = Q Q^{+}\ket S^{\otimes t}
    =  \sum_{S'} \ket{S'}^{\otimes t} \bra{S'}^{\otimes t} Q^{+} \ket{S}^{\otimes t}.
  \end{align*}
  Thus, the linear independence implies that the vectors $(Q^+)^{1/2} \ket S^{\otimes t}$ are orthonormal.
\end{proof}

\begin{thm}[Pure-state exponential stabilizer de Finetti theorem]\label{thm:expodef}
  Let $d$ be a prime and~$\ket\Psi\in\Sym^t((\CC^d)^{\ot n})$ a pure quantum state that is left invariant by the action of $O_t(d)$.
  Let $1\leq s\leq t$.
  Then there exists a probability distribution~$p$ on $\Stab(n,d)$, the set of pure stabilizer states of $n$~qudits, such that
  \begin{align*}
  \frac12\left\lVert \Psi_{1\dots{}s} - \sum_S p(S) \ket S^{\ot s}\bra S^{\ot s} \right\rVert_1
  \leq 2 d^{\frac12(n+2)^2} d^{-\frac12(t-s)}.
  \end{align*}
\end{thm}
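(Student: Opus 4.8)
The plan is to use the algebraic machinery of \cref{sec:algebra}: the $O_t(d)$-symmetry will force $\ket\Psi$ into the span of stabilizer tensor powers, and from the (near-orthonormal) expansion of $\ket\Psi$ in that span I will read off the de~Finetti approximation directly. The key will be to bound the \emph{off-diagonal} part of the reduced density operator as a single trace-norm block; this is what produces the exponential rate.

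\emph{Reduction to the stabilizer span.} Invariance of $\ket\Psi$ under the action of $O_t(d)$ means $R(O)\ket\Psi=\ket\Psi$ for every $O\in O_t(d)$, hence $\Pi^{\min}_t\ket\Psi=\ket\Psi$ for the projector of \cref{eq:minimal projector}. By \cref{thm:minimal test}, $\Pi^{\min}_t$ is the orthogonal projection onto $\Span\{\ket S^{\ot t}:\ket S\bra S\in\Stab(n,d)\}$, so I may write $\ket\Psi=\sum_S c_S\ket S^{\ot t}$. Put $\eta\coloneqq d^{\frac12((n+2)^2-(t-s))}$, so the asserted bound is $2\eta$; if $\eta\geq\tfrac12$ the bound is $\geq1$ and there is nothing to prove, so assume $\eta<\tfrac12$, whence also $\eps\coloneqq d^{\frac12((n+2)^2-t)}\leq\eta<\tfrac12$. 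Then \cref{lem:gram matrix} applies both at tensor power $t$ and at tensor power $t-s$: the $\ket S^{\ot t}$ are linearly independent (so the $c_S$ are unique), the level-$(t-s)$ Gram matrix $G_{S,S'}=\braket{S|S'}^{t-s}$ obeys $\lVert G-I\rVert_\infty\leq\eta$, and the level-$t$ Gram estimate together with $\lVert\Psi\rVert^2=1$ gives $\bigl|\,\lVert c\rVert^2-1\,\bigr|\leq 2\eps\leq 2\eta$, where $\lVert c\rVert^2=\sum_S\lvert c_S\rvert^2$.

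\emph{Expansion and main estimate.} Tracing out the last $t-s$ blocks,
\[
  \Psi_{1\dots s}=\sum_{S,S'}\overline{c_S}\,c_{S'}\,\braket{S|S'}^{t-s}\,\ket{S'}^{\ot s}\bra S^{\ot s},
\]
and, since $\braket{S|S}^{t-s}=1$, pulling out the diagonal gives $\Psi_{1\dots s}=\lVert c\rVert^2\,\tau+B$, where $\tau\coloneqq\sum_S p(S)\,\ket S^{\ot s}\bra S^{\ot s}$ with $p(S)\coloneqq\lvert c_S\rvert^2/\lVert c\rVert^2$ (a probability distribution on $\Stab(n,d)$), and $B$ collects the off-diagonal terms. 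Thus $\Psi_{1\dots s}-\tau=(\lVert c\rVert^2-1)\tau+B$, the first summand having trace norm $\leq 2\eta$. For $B$, I would introduce an auxiliary orthonormal basis $\{\ket{e_S}\}$ indexed by stabilizer states and $C\coloneqq\sum_S c_S\ket S^{\ot s}\bra{e_S}$; then $B=C\,(G-I)^{T}\,C^\dagger$, so H\"older's inequality for Schatten norms ($\lVert AB\rVert_1\leq\lVert A\rVert_2\lVert B\rVert_2$) and submultiplicativity of $\lVert\cdot\rVert_\infty$ give $\lVert B\rVert_1\leq\lVert C\rVert_2^2\,\lVert G-I\rVert_\infty\leq\lVert c\rVert^2\,\eta\leq(1+2\eta)\eta$, using $\lVert C\rVert_2^2=\tr[C^\dagger C]=\lVert c\rVert^2$. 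Adding up, $\lVert\Psi_{1\dots s}-\tau\rVert_1\leq 2\eta+(1+2\eta)\eta<4\eta$ because $\eta<\tfrac12$, i.e.\ $\tfrac12\lVert\Psi_{1\dots s}-\tau\rVert_1<2\,d^{\frac12(n+2)^2}d^{-\frac12(t-s)}$.

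\emph{Main obstacle.} The delicate point is obtaining a bound that is \emph{linear} in $\eta$ --- the genuinely exponential rate --- rather than $\sqrt\eta$. The textbook de~Finetti route (measure the last $t-s$ copies with a POVM built from the orthonormalised stabilizer tensor powers furnished by \cref{lem:gram matrix}, then compare each post-measurement branch with $\ket S^{\ot s}$) yields branch fidelities $1-O(\eta)$, and the ensuing concavity step $\sum_S p(S)\sqrt{1-x_S}\leq\sqrt{1-\sum_S p(S)x_S}$ costs a square root. Avoiding this requires estimating the off-diagonal block $B$ of the reduced state directly in trace norm, as above, where the H\"older step is what recovers the missing square root. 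The only other bookkeeping is the trivial regime $\eta\geq\tfrac12$ and the simultaneous application of \cref{lem:gram matrix} at the two tensor powers $t$ and $t-s$.
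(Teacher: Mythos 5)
Your proposal is correct and follows essentially the same route as the paper: reduce to the span of stabilizer tensor powers via $\Pi^{\min}_t$ and \cref{thm:minimal test}, expand $\ket\Psi=\sum_S c_S\ket S^{\ot t}$, normalize via \cref{lem:gram matrix}, and bound the diagonal-normalization error and the off-diagonal cross terms of $\Psi_{1\dots s}$ separately. The only (cosmetic) difference is that you bound the cross terms by the operator-H\"older estimate $\lVert C(G-I)^TC^\dagger\rVert_1\leq\lVert c\rVert^2\lVert G-I\rVert_\infty$ with the Gram lemma applied at power $t-s$, whereas the paper bounds them element-wise via $\bigl(\sum_S\lvert\alpha_S\rvert\bigr)^2 d^{-(t-s)/2}$ and Cauchy--Schwarz against $\lvert\Stab(n,d)\rvert$ --- both reduce to the same count and give the same constant.
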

\begin{proof}
  By assumption, $\Pi^{\min}_t \ket\Psi = \ket\Psi$, where $\Pi^{\min}_t$ is the minimal projector from \cref{eq:minimal projector}.
  \Cref{thm:minimal test} shows that $\ket\Psi$ is contained in the span of stabilizer tensor powers, i.e.,
  \begin{align*}
    \ket\Psi = \sum_{S\in\Stab(n,d)} \alpha_S \ket S^{\ot t}
  \end{align*}
  for certain coefficients $\alpha_S\in\CC$.
  We now use the third and second claim of \cref{lem:gram matrix} to see that
  \begin{align}\label{eq:alpha coeff interval}
    \sum_S \lvert\alpha_S\rvert^2 = \lVert (Q^+)^{1/2} \ket\Psi\rVert^2 \in [1-2\eps,1+2\eps]
  \end{align}
  where $\eps \coloneqq d^{\frac12((n+2)^2-t)}$.
  Here we have assumed that $\eps<1/2$, for otherwise the statement of the theorem is vacuous.
  We now compute the partial trace over all but~$s$ subsystems:
  \begin{align*}
    \Psi_{1\dots{}s}
  = \sum_S \lvert\alpha_S\rvert^2 \ket S^{\ot s}\bra S^{\ot s} \;+\; \sum_{S\neq S'} \alpha_S \overline\alpha_{S'} \ket S^{\ot s} \bra{S'}^{\ot s} \braket{S'|S}^{t-s}.
  \end{align*}
  The norm of the cross terms is small:
  \begin{align*}
  &\quad \left\lVert \sum_{S\neq S'} \alpha_s \overline\alpha_{S'} \ket S^{\ot s} \bra{S'}^{\ot s} \braket{S'|S}^{t-s} \right\rVert_1
  \leq \sum_{S\neq S'} \lvert\alpha_S\rvert \lvert\alpha_{S'}\rvert d^{-(t-s)/2}
  \leq \left( \sum_S \lvert\alpha_S\rvert \right)^2 d^{-(t-s)/2} \\
  &\leq \left( \sum_S \lvert\alpha_S\rvert^2 \right) d^{(n+2)^2/2} d^{-(t-s)/2}
  \leq \left( 1 + 2\eps\right) d^{(n+2)^2/2} d^{-(t-s)/2}
  \leq 2 d^{(n+2)^2/2} d^{-(t-s)/2};
  \end{align*}
  the first inequality uses \cref{eq:stab neq overlap},
  the third inequality is \cref{eq:stab cardinality},
  the fourth inequality is the upper bound in \cref{eq:alpha coeff interval},
  and the last step uses that $\eps< 1/2$.
  Finally, define $p(S) \coloneqq \lvert\alpha_S\rvert^2 / \sum_{S'} \lvert\alpha_{S'}\rvert^2$ (the denominator is positive by the lower bound in \cref{eq:alpha coeff interval} and $\eps<1/2$).
  Then:
  \begin{align*}
  &\quad\left\lVert \Psi_{1\dots{}s} - \sum_S p(S) \ket S^{\ot s}\bra S^{\ot s} \right\rVert_1
  \leq \left\lVert \Psi_{1\dots{}s} - \sum_S \lvert\alpha_S\rvert^2 \ket S^{\ot s}\bra S^{\ot s} \right\rVert_1
  + \sum_S \left\lvert \lvert\alpha_S \rvert^2 - p(S) \right\rvert \\
  &\leq 2 d^{(n+2)^2/2} d^{-(t-s)/2} + \left\lvert 1 - \sum_{S'} \lvert\alpha_{S'}\rvert^2 \right\rvert
  \leq 2 d^{(n+2)^2/2} d^{-(t-s)/2} + 2\eps \\
  &\leq 4 d^{(n+2)^2/2} d^{-(t-s)/2}.
  \qedhere
  \end{align*}
\end{proof}
\subsection{Stabilizer de Finetti theorem for the anti-identity}\label{subsec:de finetti qubits}

We now prove a stabilizer de Finetti theorem with reduced symmetry requirements.
For concreteness, we restrict to the multi-qubit case ($d=2$) and to tensor powers that are multiples of six.
Neither restriction is essential.
The following theorem shows that the reduced states of an arbitrary permutation-symmetric quantum state that is invariant under the anti-identity operator~$V=R(\bar\id)$ from~\cref{eq:anti identity six}, but not necessarily under other stochastic isometries, are well-approximated by convex mixtures of tensor powers of stabilizer states.

\begin{thm}[Pure-state stabilizer de Finetti theorem for the anti-identity]\label{thm:antidef}
Let $\ket\Psi \in \Sym^t((\CC^2)^{\ot n})$ be a quantum state that is left invariant by the action of the anti-identity~\eqref{eq:anti identity six} on some (and hence every) subsystem consisting of six $n$-qubit blocks.
Let $s<t$ be a multiple of six.
Then there exists a probability distribution~$p$ on $\Stab(n,2)$, the set of pure stabilizer states of $n$~qubits, such that
\begin{align*}
\frac12\left\lVert \Psi_{1\dots{}s} - \sum_S p(S) \ket S^{\ot s}\bra S^{\ot s} \right\rVert_1
\leq 6 \sqrt{2^{n+1}} \sqrt{\frac st}.
\end{align*}
\end{thm}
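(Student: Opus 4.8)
The plan is to combine the ordinary finite quantum de~Finetti theorem with the \emph{soundness} of the anti-identity stabilizer test (\cref{thm:main qubits}), exploiting the observation that invariance of $\ket\Psi$ under the anti-identity~$V=R(\bar\id)$ on a block of six $n$-qubit subsystems is exactly the statement that \cref{algo:qubits} accepts $\ket\Psi$ with certainty on that block. Concretely, since $\ket\Psi\in\Sym^t((\CC^2)^{\ot n})$, I would first invoke the pure-state finite de~Finetti theorem~\eqref{eq:christandl pure de finetti} with $\ell=2^n$ to obtain a probability measure $\mu$ on pure $n$-qubit states with $\tfrac12\lVert \Psi_{1\dots s} - \int d\mu(\phi)\,\phi^{\ot s}\rVert_1 \leq 2^{n+1} s/t$, writing $\phi=\ket\phi\bra\phi$; taking a further partial trace, the same measure approximates the six-copy marginal, $\tfrac12\lVert \Psi_{1\dots6} - \int d\mu(\phi)\,\phi^{\ot6}\rVert_1 \leq 2^{n+1}s/t$.

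Next I would use the anti-identity hypothesis to show that $\mu$ is concentrated near stabilizer states. By assumption $V_{1\dots6}\ket\Psi=\ket\Psi$, so $\tr[\Psi_{1\dots6}\,\Pi_\text{accept}]=1$ for $\Pi_\text{accept}=\tfrac12(I+V)$. Transferring this to the de~Finetti approximation of the six-copy marginal (and using $\lVert\Pi_\text{accept}\rVert\leq1$) yields $\int d\mu(\phi)\,p_\text{accept}(\phi)\geq 1-O(2^n s/t)$, where $p_\text{accept}(\phi)=\tr[\phi^{\ot6}\Pi_\text{accept}]$. The quantitative soundness of the test in the form~\eqref{eq:saner quantitative bound}, namely $\max_S\lvert\braket{S|\phi}\rvert^2\geq 4p_\text{accept}(\phi)-3$, then forces the ``stabilizer defect'' $f(\phi)\coloneqq 1-\max_S\lvert\braket{S|\phi}\rvert^2 \leq 4\bigl(1-p_\text{accept}(\phi)\bigr)$ to satisfy $\int d\mu(\phi)\,f(\phi)\leq O(2^n s/t)$.

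The final step is to round $\mu$ onto stabilizer states. For each $\phi$ choose (measurably) a stabilizer state $\ket{S(\phi)}$ with $\lvert\braket{S(\phi)|\phi}\rvert^2=1-f(\phi)$, and let $p$ be the pushforward of $\mu$ under $\phi\mapsto S(\phi)$. Since $\tfrac12\lVert\phi^{\ot s}-S(\phi)^{\ot s}\rVert_1=\sqrt{1-(1-f(\phi))^s}$, the triangle inequality, convexity of the trace norm, and Jensen's inequality give
\[
\tfrac12\Bigl\lVert \Psi_{1\dots s} - \textstyle\sum_S p(S)\,\ket S^{\ot s}\bra S^{\ot s}\Bigr\rVert_1
\;\leq\; 2^{n+1}\tfrac st \;+\; \int d\mu(\phi)\,\sqrt{1-(1-f(\phi))^s}
\;\leq\; 2^{n+1}\tfrac st \;+\; \sqrt{\textstyle s\!\int d\mu(\phi)\,f(\phi)}\,,
\]
and substituting the bound from the previous step produces an estimate of the advertised shape $\sim\sqrt{2^n}\cdot(\text{small})$. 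The remaining work is bookkeeping: one optimizes over how many copies are fed to the de~Finetti step versus the six reserved for the test, and sharpens the rounding by using that the states $\ket{S(\phi)}^{\ot s}$ are pairwise nearly orthogonal — their Gram matrix is close to the identity (cf.\ \cref{lem:gram matrix}), since $\lvert\braket{S|S'}\rvert\leq 2^{-1/2}$ for $S\neq S'$ — so that the normalization of the resulting mixture costs nothing. The cases where $t$ is not a multiple of six, and $s<t$ with $s$ a multiple of six, are handled by reserving an arbitrary six-block disjoint from the first $s$ systems for the test, which is possible precisely under those hypotheses.

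The main obstacle is extracting the \emph{square-root} dimension factor $\sqrt{2^n}$ rather than the $2^n$ that a black-box use of de~Finetti would deliver. This hinges on the fact that the anti-identity condition only requires the test to be \emph{passed}: the soundness relation~\eqref{eq:saner quantitative bound} is essentially linear in $1-p_\text{accept}$, so it couples the de~Finetti error linearly to $\int f$, and then the Jensen step turns $\int f\lesssim 2^n s/t$ into $\sqrt{s\int f}\lesssim\sqrt{2^n}\cdot s/\sqrt t$. Arranging the bookkeeping so that the final exponents of $s$ and $t$ combine exactly into $\sqrt{s/t}$ — choosing which marginal the de~Finetti measure controls and how many copies to spend testing, and possibly splitting $\mu$ into a ``good'' part (tiny $f$) and a ``bad'' part of small mass via Markov's inequality before converting — is the part of the argument that demands the most care, and is where I expect the real technical content to lie.
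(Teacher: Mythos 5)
Your overall strategy (ordinary de~Finetti $\to$ soundness of the anti-identity test $\to$ round $\mu$ onto stabilizer states) is the same as the paper's, but as written your argument proves a bound of order $\sqrt{2^{n+1}}\,s/\sqrt{t}$, not the claimed $\sqrt{2^{n+1}}\sqrt{s/t}$ --- a loss of $\sqrt{s}$ --- and the "bookkeeping" you defer is exactly where this loss must be repaired. The problem is that you run the test on a \emph{single} six-block, which only controls the per-copy defect, $\int d\mu(\phi)\,f(\phi)\leq O(2^{n}s/t)$ with $f(\phi)=1-\max_S\lvert\braket{S|\phi}\rvert^2$. Converting this to the $s$-fold fidelity via $1-(1-f)^s\leq sf$ then costs a factor of $s$ before the Jensen step, giving $\sqrt{s\int f}\lesssim \sqrt{2^n}\,s/\sqrt t$. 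Your proposed fixes do not close this: a Markov good/bad split yields an exponent-$1/3$ bound $(2^ns^2/t)^{1/3}$, which is incomparable to and generally weaker than the target, and the near-orthogonality of the $\ket{S(\phi)}^{\ot s}$ is irrelevant here since the pushforward of $\mu$ is automatically normalized.

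The paper's resolution is to apply the test to \emph{all} $s/6$ disjoint six-blocks of the $s$-copy marginal simultaneously: invariance under the anti-identity on every six-block gives $\tr\bigl[\Psi_{1\dots s}\,\Pi_n^{\ot(s/6)}\bigr]=1$ with $\Pi_n=\tfrac12(I+R(\bar\id))$, so the trace-distance duality yields $\int d\mu(\phi)\,\bigl(1-\tr[\Pi_\text{accept}\phi^{\ot6}]^{s/6}\bigr)\leq 2^{n+1}s/t$ --- i.e.\ the \emph{$(s/6)$-th power} of the acceptance probability is what is controlled, and this already carries the $s$-dependence you are missing. Soundness is then transferred to the power via the elementary inequality $(4p-3)^k\geq 4p^k-3$ (\cref{lem:technical inequality}) together with $1-p^6\leq 6(1-p)$, giving $\int d\mu(\phi)\,\bigl(1-\lvert\braket{S_\phi|\phi}\rvert^{2s}\bigr)\leq 24\cdot 2^{n+1}s/t$ directly --- the integrand is precisely the squared $s$-fold trace distance, with no extra factor of $s$ --- after which concavity of the square root finishes the proof. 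Without this step your argument establishes a correct but strictly weaker statement.
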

\begin{proof}
  By the ordinary finite quantum de Finetti theorem~\eqref{eq:christandl pure de finetti}, there exists a probability measure~$d\mu(\phi)$ on the set of pure states such that
  \begin{align}\label{eq:from ordinary de finetti}
    \frac12 \left\lVert \Psi_{1\dots{}s} - \int d\mu(\phi) \phi^{\ot s} \right\rVert_1 \leq 2^{n+1} \frac st.
  \end{align}
  Let $\Pi_n=(I+R(\bar\id))/2$ denote the projector onto the $+1$-eigenspace of the $6\times 6$-anti identity for $n$ qubits.
  By assumption, $(\Pi_n^{\ot(s/6)} \ot I^{\ot(t-s)}) \ket\Psi=\ket\Psi$, and hence $\tr[\Psi_{1\dots{}s} \Pi_n^{\ot s/6}]=1$.
  Since the trace distance satisfies~$\frac12\lVert \rho - \sigma\rVert_1 = \max_{0\leq Q\leq I} \tr[(\rho-\sigma)Q]$, it follows that
  \begin{align*}
  \int d\mu(\phi) \left( 1- \tr\left[\Pi_n^{\otimes(s/6)} \phi^{\ot s}\right] \right)
  \leq 2^{n+1} \frac st.
  \end{align*}
  Now recall from \cref{eq:accepting povm} that the accepting POVM element for qubit stabilizer testing is given by $\Pi_\text{accept} = \frac12 \left( I + U \right)$, where $U = V(I^{\ot4} \ot \FF)$, where $\FF=R((1 2))$ is the operator that swaps two blocks of $n$~qubits (see discussion above \cref{eq:V for qubits}).
  Since tensor powers of pure states are permutation-symmetric,
  \begin{align}\label{eq:anti de finetti from trace dist}
  \int d\mu(\phi) \left( 1- \tr\left[\Pi_\text{accept} \phi^{\ot6}\right]^{s/6} \right)
  \leq 2^{n+1} \frac st
  \end{align}
  According to \cref{thm:main qubits,eq:saner quantitative bound} and using \cref{lem:technical inequality} below, for each pure state~$\phi$ there exists a pure stabilizer state~$S_\phi$ such that
  \begin{align*}
    \lvert\braket{S_\phi|\phi}\rvert^{2(s/6)}
  \geq 4 \tr\left[\Pi_\text{accept} \phi^{\ot6}\right]^{s/6} - 3.
  \end{align*}
  Using the estimate~$1-p^6 \leq 6(1-p)$, which holds for all~$p \in [0,1]$, we obtain
  \begin{align*}
    1 - \lvert\braket{S_\phi|\phi}\rvert^{2s}
  \leq 6\left( 1 - \lvert\braket{S_\phi|\phi}\rvert^{2(s/6)} \right)
  \leq 24 \left(1 - \tr\left[\Pi_\text{accept} \phi^{\ot 6}\right]^{s/6} \right).
  \end{align*}
  Combining this estimate with \cref{eq:anti de finetti from trace dist}, we get
  \begin{align*}
  \int d\mu(\phi) \left( 1- \lvert\braket{S_\phi|\phi}\rvert^{2s} \right)
  \leq 24 \cdot 2^{n+1} \frac st.
  \end{align*}
  It follows that replacing each pure state~$\phi$ by the nearby stabilizer state~$S_\phi$ incurs only a small error:
  \begin{align*}
  &\quad \frac12\left\lVert \int d\mu(\phi) \phi^{\ot s} - \int d\mu(\phi) S_\phi^{\ot s} \right\rVert_1
  \leq \int d\mu(\phi) \frac12 \left\lVert \phi^{\ot s} -  S_\phi^{\ot s} \right\rVert_1
  \leq \int d\mu(\phi) \sqrt{1 - \lvert\braket{\phi|S_\phi}^{2s}} \\
  &\leq \sqrt{\int d\mu(\phi) \left( 1 - \lvert\braket{\phi|S_\phi}^{2s} \right)}
  \leq \sqrt{24 \cdot 2^{n+1} \frac st},
  \end{align*}
  where we have used the triangle inequality, the relation between the trace distance and the fidelity between pure states, and the concavity of the square root.
  Together with \cref{eq:from ordinary de finetti}, we obtain
  \begin{align*}
    \frac12\left\lVert \Psi_{1\dots{}s} - \int d\mu(\phi) S_\phi^{\ot s} \right\rVert_1
  \leq 2^{n+1} \frac st + \sqrt{24\cdot 2^{n+1} \frac st}
  \leq 6 \sqrt{2^{n+1}} \sqrt{\frac st}
  \end{align*}
  where we have assumed without loss of generality that~$2^{n+1} \frac st\leq1$ (otherwise, the right-hand side is larger than one so the resulting bound is trivially true).
\end{proof}

\begin{lem}\label{lem:technical inequality}
The following bound holds for all $k\geq 1$ and $p\in[0,1]$ such that the right-hand side is non-negative:
  \begin{equation*}
    (4p-3)^k \geq 4p^k-3.
  \end{equation*}
\end{lem}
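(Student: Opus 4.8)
The plan is to reduce the inequality to the convexity of the power function $x \mapsto x^k$ on $[0,\infty)$. I would substitute $q \coloneqq 4p-3$, so that the left-hand side becomes $q^k$ and the desired estimate reads $q^k + 3 \geq 4p^k$. The key observation is that $p$ can be written as the convex combination $p = \tfrac14 q + \tfrac34 \cdot 1$ of the two numbers $q$ and $1$; applying Jensen's inequality for $x \mapsto x^k$ then gives $p^k \leq \tfrac14 q^k + \tfrac34$, which after multiplication by $4$ and rearranging is exactly the claim.

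The one point requiring care is to verify that this is a genuine convex combination, i.e.\ that the weight $\tfrac14$ multiplies a \emph{non-negative} quantity $q = 4p-3$. For this I would use the hypothesis: the right-hand side $4p^k - 3$ being non-negative means $p^k \geq \tfrac34$, and since $p \in [0,1]$ and $k \geq 1$ we have $p^k \leq p$, so $p \geq p^k \geq \tfrac34$ and hence $q \geq 0$. (This also ensures $q^k$ is well-defined for arbitrary real $k \geq 1$, not merely integers, and it is the only place where the hypothesis enters.)

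With $q \geq 0$ established, the coefficients $\tfrac14, \tfrac34$ are non-negative and sum to $1$, so convexity of $x \mapsto x^k$ on $[0,\infty)$ applied to the points $q, 1 \in [0,\infty)$ yields
\begin{equation*}
  p^k = \Bigl(\tfrac14 q + \tfrac34\Bigr)^k \leq \tfrac14 q^k + \tfrac34 ,
\end{equation*}
and multiplying by $4$ and substituting $q = 4p-3$ back gives $4p^k \leq (4p-3)^k + 3$, as desired. I do not anticipate any real obstacle: the only subtlety is the sign of $q$, which is dispatched by the elementary bound $p^k \leq p$ for $p \in [0,1]$; the case $k=1$ is the equality case of convexity and is consistent with both sides being equal.
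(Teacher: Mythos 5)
Your proof is correct, and it takes a genuinely different (and arguably cleaner) route than the paper's. The paper proves the inequality on the whole interval $p\in[\tfrac34,1]$ by a calculus argument: the two sides agree at $p=1$, and the derivative of their difference is $\leq 0$ there because $0\leq 4p-3\leq p$ implies $(4p-3)^{k-1}\leq p^{k-1}$. You instead write $p=\tfrac14 q+\tfrac34\cdot 1$ with $q=4p-3$ and apply convexity of $x\mapsto x^k$ on $[0,\infty)$ directly, which gives $p^k\leq\tfrac14 q^k+\tfrac34$ and hence the claim in one line. Both arguments hinge on the same elementary facts ($p\geq\tfrac34$ under the hypothesis, and monotonicity/convexity of the power function for $k\geq1$), but yours avoids differentiation entirely, makes the role of the hypothesis transparent (it is needed only to ensure $q\geq0$, so that the convex combination is legitimate and $(4p-3)^k$ is well-defined for non-integer $k$), and your derivation of $q\geq 0$ from $p^k\geq\tfrac34$ via $p^k\leq p$ is exactly right. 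The paper's version proves slightly more in that it establishes the inequality for all $p\in[\tfrac34,1]$ rather than only where the right-hand side is non-negative, but that extra generality is not used.
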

\begin{proof}
  We will prove the inequality for all~$k\geq1$ and~$p\in[\frac34,1]$.
  For this, note that the two expressions coincide for~$p=1$ and that the derivative of their difference is negative for all~$p\in[\frac34,1]$.
  Indeed,
  \begin{align*}
    \partial_p \left( (4p-3)^k - (4p^k-3) \right) \leq 0
    \quad\Leftrightarrow\quad
    (4p-3)^{k-1} \leq p^{k-1}.
  \end{align*}
  In the interval that we are considering, $0 \leq 4p-3 \leq p$, so the right-hand side condition holds.
\end{proof}

\subsection{Extension to mixed states}\label{subsec:de finetti mixed}
In this section, we extend~\cref{thm:antidef,thm:expodef} to the case of mixed density matrices.
This is done using a standard purification argument as used to derive the ordinary quantum de Finetti theorem for mixed states from the version for pure states (i.e., \cref{eq:christandl mixed de finetti} from \cref{eq:christandl pure de finetti}).
For the next lemma, recall the vectorization operation from \cref{dfn:vectorization}.

\begin{lem}[Purification and symmetries]\label{lem:gspl}
  Let $\rho$ be positive semi-definite, $\ket\Psi = \vecmap(\rho^{1/2})$ its standard purification, and $O$ a unitary with real matrix elements in the computational basis.
  Then the following conditions are equivalent:
  \begin{enumerate}
  \item\label{it:puri 1} $(O\ot O)\ket\Psi=\ket\Psi$.
  \item\label{it:puri 2} $[\rho,O]=0$.
  \end{enumerate}
\end{lem}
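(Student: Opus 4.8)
The plan is to use the standard identities relating the vectorization of a matrix product to tensor operators acting on the vectorized form. Recall that for the vectorization $\vecmap(A) = \sum_{\vec x,\vec y} A_{\vec x \vec y} \ket{\vec x, \vec y}$, one has the identities $(B \ot I)\vecmap(A) = \vecmap(BA)$ and $(I \ot C)\vecmap(A) = \vecmap(A C^T)$. Here $\ket\Psi = \vecmap(\rho^{1/2})$, and since $\rho$ is positive semi-definite, $\rho^{1/2}$ is Hermitian, i.e.\ $(\rho^{1/2})^\dagger = \rho^{1/2}$, so $(\rho^{1/2})^T = \overline{\rho^{1/2}}$.

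First I would compute $(O \ot O)\ket\Psi$ using these identities: $(O \ot O)\vecmap(\rho^{1/2}) = \vecmap(O \rho^{1/2} O^T)$. Since $O$ has real matrix elements, $O^T = O^\dagger = O^{-1}$, so $(O \ot O)\ket\Psi = \vecmap(O \rho^{1/2} O^{-1})$. Now condition~\ref{it:puri 1} says $\vecmap(O \rho^{1/2} O^{-1}) = \vecmap(\rho^{1/2})$, and since $\vecmap$ is injective (it is a linear isomorphism onto $(\CC^d)^{\ot tn}\ot(\CC^d)^{\ot tn}$, or whatever the relevant doubled space is), this is equivalent to $O \rho^{1/2} O^{-1} = \rho^{1/2}$, i.e.\ $[\rho^{1/2}, O] = 0$.

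It then remains to argue that $[\rho^{1/2}, O] = 0$ is equivalent to $[\rho, O] = 0$. One direction is trivial: if $[\rho^{1/2}, O]=0$ then $O$ commutes with $\rho = (\rho^{1/2})^2$. For the converse, if $[\rho, O] = 0$ then $O$ commutes with every spectral projector of $\rho$, hence with any function of $\rho$, in particular with $\rho^{1/2}$ (which is a well-defined function of $\rho$ since $\rho \geq 0$); alternatively one can invoke that $\rho^{1/2}$ is a norm-limit of polynomials in $\rho$. This chain of equivalences gives \ref{it:puri 1} $\Leftrightarrow$ \ref{it:puri 2}.

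I do not anticipate a serious obstacle here; the only points requiring mild care are (i) being explicit that $O$ having real matrix elements is exactly what makes $O^T = O^{-1}$ so that the conjugation comes out as $O(\cdot)O^{-1}$ rather than $O(\cdot)O^T$ with an unwanted transpose, and (ii) the standard fact that commuting with $\rho$ is equivalent to commuting with $\rho^{1/2}$, which is where positivity of $\rho$ is used. Everything else is a direct application of the vectorization identities and injectivity of $\vecmap$.
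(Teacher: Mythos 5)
Your proposal is correct and follows essentially the same route as the paper: both reduce condition~1 to $O\rho^{1/2}O^{-1}=\rho^{1/2}$ via the vectorization identity $(O\ot O)\vecmap(\rho^{1/2})=\vecmap(O\rho^{1/2}O^{T})=\vecmap(O\rho^{1/2}O^{-1})$ using that $O$ is real and unitary, and then pass between $[\rho^{1/2},O]=0$ and $[\rho,O]=0$ by squaring in one direction and by uniqueness of the positive semi-definite square root (equivalently, your functional-calculus argument) in the other.
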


\begin{proof}
  We observe that
  \begin{equation*}
    (O \otimes O) \ket\Psi
    =
    \left(O \otimes (O^{-1})^T\right) \ket\Psi
    =
    \vecmap (O \rho^{1/2} O^{-1} ).
  \end{equation*}
  Thus, condition~\ref{it:puri 1} is equivalent to $O\rho^{1/2}O^{-1} = \rho^{1/2}$.
  It follows that condition~\ref{it:puri 1} implies condition~\ref{it:puri 2} by squaring.
  Conversely, assuming condition~\ref{it:puri 2},
  \begin{equation*}
    \rho
    = O \rho O^{-1}
    = \bigl(O \rho^{1/2} O^{-1}\bigr) \bigl( O \rho^{1/2} O^{-1}\bigr).
  \end{equation*}
  Hence $O \rho^{1/2} O^{-1}$ is a positive semi-definite square root of $\rho$.
  Since such square roots are unique, this implies that $O \rho^{-1/2} O^{-1} = \rho^{1/2}$ and hence condition~\ref{it:puri 1}.
\end{proof}

Clearly, the operators~$R(O)$ for $O\in O_t(d)$ have real matrix elements in the computational basis.
In fact, they are given by permutation matrices in the computational basis, as can be seen from the formula given in \cref{eq:O_t action}.
Thus they satisfy the conditions of~\cref{lem:gspl}.
We use this now to extend our de Finetti theorems to mixed states.

\restateThmExpDeFinetti
\begin{proof}
Using \cref{lem:gspl}, we can find a purification $\ket\Psi \in ((\CC^d)^{\ot2n}))^{\ot t} \cong (\CC^d)^{\ot n})^{\ot t} \ot (\CC^d)^{\ot n})^{\ot t}$ of~$\rho$, which is invariant under the action of $O \in O_t(d)$.
(In particular, $\ket\Psi$ is an elemenrt of the symmetric subspace.)
Here we crucially use that the operators $R(O)$ for~$2n$~qudits are just the second tensor powers of the corresponding operators for $n$ qudits, as is clear from \cref{eq:O_t action}.
Thus we can apply \cref{thm:expodef} to the state $\ket\Psi$.
Since the local Hlibert space now contains $2n$ qudits, we obtain that there exists a probability distribution $p$ over pure stabilizer states on $(\CC^d)^{\ot 2n}$ such that
\begin{align*}
  \frac12\left\lVert \Psi_{1\dots{}s} - \sum_S p(S) \, \ket S\bra S^{\ot s} \right\rVert_1
\leq 2 d^{\frac12(2n+2)^2} d^{-\frac12(t-s)}.
\end{align*}
Taking the partial trace over the purifying systems does not increase the trace distance.
Since reduced density matrices of pure stabilizer states are mixed stabilizer states, we obtain the result.
\end{proof}

\noindent
The very same argument yields the following version of \cref{thm:antidef} for mixed states:

\restateThmAntiDeFinetti

\section{Robust Hudson theorem}\label{sec:hudson}
The methods developed in \cref{sec:stabilizer testing} also allow us to prove a robust version of the finite-dimensional Hudson theorem.
Recall that from \cref{eq:wigner function odd stabilizer} that, for odd $d$, the Wigner function of a pure stabilizer state is necessarily nonnegative.
Hudson theorem states that, for pure states, this condition is also sufficient, i.e., the Wigner function of a pure quantum state is non-negative if and only if the state is a stabilizer state~\cite{gross2006hudson}.
We will show in \cref{thm:robust hudson} that if the \emph{Wigner} or \emph{sum-negativity}
\begin{align*}
  \sn(\psi)
\coloneqq \sum_{\vec x : w_\psi(\vec x)<0} \lvert w_\psi(\vec x) \rvert
= \frac12 \left( \sum_{\vec x} \lvert w_\psi(\vec x) \rvert - 1 \right).
\end{align*}
is small then the state is close to a stabilizer state.

The Wigner negativity is immediately related to the \emph{mana}~$\mathcal M(\psi)=\log (2\sn(\psi)+1)$, a monotone that plays an important role in the resource theory of stabilizer computation~\cite{gottesman1997stabilizer}.
Throughout this section we assume that $d$ is odd, so that the Wigner function is well-behaved (cf. \cref{subsec:phase space}).

\subsection{Exact Hudson theorem}
We first give a new and succinct proof of the finite-dimensional Hudson theorem.
For pure states, $1=\tr\psi^2
=\sum_{\vec x} d^n w_\psi(\vec x)^2$.
Thus we can define a probability distribution based on the Wigner function,
\begin{align*}
  q_\psi(\vec x) = d^n w_\psi(\vec x)^2,
\end{align*}
similar to the $p_\psi$ distribution that we defined in \cref{eq:p distribution} via the characteristic function.
Note that $0\leq q_\psi(\vec x)\leq d^{-n}$, since $\lvert w_\psi(\vec x)\rvert \leq d^{-n}$.

We now consider the sum of the absolute value of the Wigner function,
\[ \lVert\psi\rVert_W \coloneqq \sum_{\vec x} \lvert w_\psi(\vec x) \rvert= d^{-n/2} \sum_{\vec x} q_\psi(\vec x)^{1/2}. \]
It holds that $\lVert\psi\rVert_W \geq \sum_{\vec x} w_\psi(\vec x) = 1$, with equality if and only if $w_\psi(x)\geq0$ for all $x$.
By the H\"older inequality (with $p_1=p_2=p_3=3$, so $\sum_k 1/p_k=1$):
\begin{align*}
1 = \sum_{\vec x}  q_\psi(\vec x) = \sum_{\vec x} q_\psi(\vec x)^{1/6} q_\psi(\vec x)^{1/6} q_\psi(\vec x)^{2/3}
\leq \left( \sum_{\vec x} q_\psi({\vec x})^{1/2} \right)^{2/3} \left( \sum_{\vec x} q_\psi(\vec x)^2 \right)^{1/3}.
\end{align*}
Thus we obtain the following fundamental bound:
\begin{equation}\label{eq:average vs wigner}
\sum_{\vec x} q_\psi(\vec x)^2 \geq \frac 1 {d^n \lVert \psi \rVert_W^2}.
\end{equation}
Crucially, we can interpret the left-hand side as the average of the function $q_\psi$ with respect to the same probability distribution, $E_{\vec x\sim q_\psi} q_\psi(\vec x)$.
Now suppose that the Wigner function is nowhere negative, so that the bound simplifies to $\sum_{\vec x} q_\psi(\vec x)^2 \geq d^{-n}$.
But $q_\psi(\vec x)\leq d^{-n}$ for all $\vec x$, so we conclude that the function $q_\psi$ must be equal to $d^{-n}$ on its support.
In other words, $q_\psi(\vec x)$ is the uniform distribution on a subset of cardinality~$d^n$.
This gives a rather direct proof of the finite-dimensional Hudson theorem:

\begin{thm}[Finite-dimensional Hudson theorem,~\cite{gross2006hudson}]\label{thm:exact hudson}
Let $d$ be an odd integer and $\psi$ a pure quantum state of $n$ qu$d$its.
Then the Wigner function of $\psi$ is everywhere nonnegative, $w_\psi(\vec x)\geq0$, if and only if $\psi$ is a stabilizer state.
\end{thm}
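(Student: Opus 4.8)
The plan is to prove both implications, leaning on the machinery assembled just before the statement. The ``only if'' direction requires nothing new: \cref{eq:wigner function odd stabilizer} already displays the Wigner function of an arbitrary pure stabilizer state as $w_S(\vec x)=d^{-n}$ on a coset $\vec a+M$ of a (symplectic) Lagrangian subspace and $0$ elsewhere, which is manifestly nonnegative. So essentially all the effort goes into the converse.

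For the converse, suppose $w_\psi(\vec x)\geq0$ for all $\vec x$. Then $\lVert\psi\rVert_W=\sum_{\vec x}\lvert w_\psi(\vec x)\rvert=\sum_{\vec x}w_\psi(\vec x)=1$, so the fundamental bound \cref{eq:average vs wigner} reads $\sum_{\vec x}q_\psi(\vec x)^2\geq d^{-n}$. Combining this with the pointwise estimate $q_\psi(\vec x)\leq d^{-n}$, which gives $\sum_{\vec x}q_\psi(\vec x)^2\leq d^{-n}\sum_{\vec x}q_\psi(\vec x)=d^{-n}$, both inequalities must be saturated; since $\sum_{\vec x}q_\psi(\vec x)\bigl(q_\psi(\vec x)-d^{-n}\bigr)=0$ with every summand $\leq0$, each summand vanishes, so $q_\psi$ is the uniform distribution on the set $M\coloneqq\{\vec x:q_\psi(\vec x)\neq0\}$, which therefore has exactly $d^n$ elements. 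Equivalently, $w_\psi(\vec x)=d^{-n}$ for $\vec x\in M$ and $0$ otherwise. In particular $\tr[A_{\vec x}\psi]=d^n w_\psi(\vec x)=1$ for every $\vec x\in M$; as the $A_{\vec x}$ are Hermitian with spectrum in $\{\pm1\}$, this forces $\ket\psi$ to be a $+1$-eigenvector of each $A_{\vec x}$, $\vec x\in M$.

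Then I would argue that such a $\psi$ is a stabilizer state, and the key tool is the three-point identity \cref{eq:three points}. For $\vec x,\vec y,\vec z\in M$ we have $A_{\vec x}A_{\vec y}A_{\vec z}\ket\psi=\ket\psi$, while \cref{eq:three points} rewrites the left-hand side as $\omega^{2[\vec z-\vec x,\vec y-\vec x]}A_{\vec x-\vec y+\vec z}\ket\psi$. Hence $A_{\vec x-\vec y+\vec z}\ket\psi=\omega^{-2[\vec z-\vec x,\vec y-\vec x]}\ket\psi$; since $A_{\vec x-\vec y+\vec z}$ has eigenvalues $\pm1$ and $d$ is odd (so no nontrivial power of $\omega$ equals $-1$), the scalar must be $1$, which forces $[\vec z-\vec x,\vec y-\vec x]\equiv0\pmod d$ and $A_{\vec x-\vec y+\vec z}\ket\psi=\ket\psi$, i.e.\ $w_\psi(\vec x-\vec y+\vec z)=d^{-n}\neq0$ and thus $\vec x-\vec y+\vec z\in M$. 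Therefore $M$ is closed under $(\vec x,\vec y,\vec z)\mapsto\vec x-\vec y+\vec z$: fixing any $\vec m_0\in M$, the translate $L\coloneqq M-\vec m_0$ is a linear subspace of $\ZZ_d^{2n}$ with $\lvert L\rvert=d^n$, and the vanishing of $[\vec z-\vec x,\vec y-\vec x]$ translated to $L$ says the symplectic form vanishes identically on $L$, so $L$ is totally isotropic; as isotropic subspaces have size at most $d^n$, $L$ is Lagrangian. Consequently $w_\psi$ equals $d^{-n}$ times the indicator of the coset $\vec m_0+L$ of a symplectic Lagrangian subspace, which by \cref{eq:wigner function odd stabilizer} is precisely the Wigner function of one of the stabilizer states $\ket{L,f}$ (the $d^n$ cosets of $L$ being in bijection with these $d^n$ states); since the Wigner function determines the operator, $\psi$ coincides with that stabilizer state. (Alternatively, one checks via \cref{eq:three points} that $A_{\vec x}A_{\vec y}$ is a scalar multiple of $W_{2(\vec x-\vec y)}$, so $\ket\psi$ is a joint eigenvector of the commuting family $\{W_{\vec z}:\vec z\in L\}$, whence $\psi$ is a stabilizer state by the discussion in \cref{subsec:stabs}.)

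The bulk of the work -- and the part most likely to hide subtleties -- is this last step: turning the three-point relation into the statement that $M$ is a coset of a Lagrangian subspace, including the phase-triviality argument, which genuinely uses that $d$ is odd. Everything upstream of ``$\ket\psi$ is a $+1$-eigenvector of $A_{\vec x}$ for $\vec x\in M$'' is already carried out in the text, and everything downstream of ``$M$ is a Lagrangian coset'' is routine given \cref{subsec:phase space,subsec:stabs}.
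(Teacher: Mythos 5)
Your proposal is correct and follows essentially the same route as the paper's proof: the H\"older bound \cref{eq:average vs wigner} plus $q_\psi\leq d^{-n}$ forces $q_\psi$ to be uniform on a set of size $d^n$, whence $\ket\psi$ is a common $+1$-eigenvector of the corresponding $A_{\vec x}$, and the three-point identity \cref{eq:three points} (with oddness of $d$ killing the phase) shows the support is an affine Lagrangian subspace. You merely spell out a couple of steps the paper leaves implicit (closure of $M$ under $\vec x-\vec y+\vec z$ and the final identification of $\psi$ with the stabilizer state via its Wigner function), which is fine.
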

\begin{proof}
In view of \cref{eq:wigner function odd stabilizer} we only need to show that if $w_\psi(\vec x)\geq0$ for all $\vec x$ then $\psi$ is a stabilizer state.
By the preceding discussion, we know that $w_\psi(\vec x) = d^{-n} \id_T(\vec x)$, where $\id_T$ denotes the indicator function of some subset $T\subseteq\mathcal V_n$ of cardinality $d^n$.
In other words, $\braket{\psi | A_{\vec x} | \psi} = \id_T(\vec x)$ and so $A_{\vec x}\ket\psi=\ket\psi$ for all $\vec x\in T$.

It remains to show that $T$ is of the form $T=\vec a+M$, where $M$ is a maximal isotropic subspace.
For this, consider any three points $\vec x,\vec y,\vec z\in T$ and use \cref{eq:three points}, which asserts that
$A_{\vec x} A_{\vec y} A_{\vec z} = \omega^{2 [\vec z-\vec x,\vec y-\vec x]} A_{\vec x-\vec y +\vec z}$.
Because $\ket\psi$ is an eigenvector of $A_{\vec x},A_{\vec y},A_{\vec z}$, with eigenvalue $+1$, we obtain
\begin{align*}
  1 = \braket{\psi|A_{\vec x} A_{\vec y} A_{\vec z}|\psi} = \omega^{2 [\vec z-\vec x,\vec y-\vec x]} \braket{\psi|A_{\vec x-\vec y +\vec z}|\psi}.
\end{align*}
But $A_{\vec x-\vec y +\vec z}$ is Hermitian, so this is impossible unless $[\vec z-\vec x,\vec y-\vec x]=0$.
Therefore, $T$ is the translate of an totally isotropic set $M$ of cardinality $d^n$.
Since the maximal size of a totally isotropic subspace is also $d^n$~\cite[App.~C]{gross2006hudson}, $T$ is necessarily a maximal isotropic subspace.
We conclude that $\psi$ is a stabilizer state.
\end{proof}

\subsection{Robust Hudson theorem}
To obtain a robust version of the Hudson theorem, we will, similarly as in our approach to stabilizer testing, combine \cref{eq:average vs wigner} with an uncertainty relation that generalizes the proof of \cref{thm:exact hudson}.

\restateLemPointOpUncertainty
\begin{proof}
Note that the assumption implies that
\begin{align*}
  \lVert A_{\vec x} \ket\psi - \ket\psi \rVert < \sqrt{2\left(1-\sqrt{1-\frac{1}{2d^2}}\right)} \leq \frac1d,
\end{align*}
and likewise for $\vec y$ and $\vec z$.
Thus we obtain the following inequalities:
\begin{align*}
  \lVert A_{\vec x} \ket\psi - \ket\psi \rVert < \frac1d, \quad
  \lVert A_{\vec y} \ket\psi - \ket\psi \rVert < \frac1d, \quad
  \lVert A_{\vec z} \ket\psi - \ket\psi \rVert < \frac1d .
\end{align*}
As a consequence of the triangle inequality, and using $\lVert A_{\vec z} \rVert \leq 1$, along with \cref{eq:three points}, we obtain
\begin{align*}
\quad \lVert A_{\vec x - \vec y+\vec z} \ket\psi - \omega^{-2[\vec z-\vec x,\vec y-\vec x]}\ket\psi \rVert
= \lVert A_{\vec x} A_{\vec y} A_{\vec z} \ket\psi - \ket\psi \rVert < \frac3d.
\end{align*}
We can simply expand this relation and see that it is equivalent to \[1-\frac{9}{2d^2} < \bra \psi  A_{\vec x - \vec y+\vec z} \ket \psi \cos( 2[\vec z-\vec x, \vec y-\vec x] \frac{2\pi}d).\]
If $[\vec z-\vec x, \vec y-\vec x]\neq 0$, then
\begin{align*}
1-\frac{9}{2d^2} &< \bra \psi A_{\vec x - \vec y+\vec z} \ket \psi \cos( 2[\vec z-\vec x, \vec y-\vec x] \frac{2\pi}d)\\
&\leq - \cos(\frac{d-1}{2}\cdot\frac{2\pi}{d})= \cos(\frac{\pi}{d}).
\end{align*}
But, one can see that exactly for $d\geq 3$, $1-\frac{9}{2d^2} \geq \cos(\frac{\pi}{d})$, which is contradiction.
This shows that $[\vec z-\vec x, \vec y-\vec x]=0$.
\end{proof}

\begin{cor}\label{cor:point op uncertainty}
Let $d$ be an odd integer and $\psi$ a pure state of $n$ qu$d$its.
Then $T\coloneqq\{ \vec x \in\mathcal V_n : w_\psi(\vec x) > d^{-n} \sqrt{1-1/2d^2}\}$ is a subset of an affine totally isotropic subspace.
\end{cor}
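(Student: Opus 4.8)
The plan is to deduce the corollary directly from \cref{lem:point op uncertainty}. First I would rewrite the defining condition of $T$ in terms of the phase-space point operators: since $w_\psi(\vec x) = d^{-n}\tr[A_{\vec x}\psi]$ by \cref{eq:wigner function}, a point $\vec x$ lies in $T$ precisely when $\tr[\psi A_{\vec x}] > \sqrt{1 - 1/2d^2}$. Consequently \cref{lem:point op uncertainty} applies to any three points $\vec x,\vec y,\vec z \in T$ (the lemma does not require them to be distinct), yielding $[\vec z - \vec x,\vec y - \vec x] = 0$ for every such triple. This is the $\emph{flatness}$ property that will force $T$ to sit inside an affine totally isotropic subspace.

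Next I would exploit this flatness with a fixed base point. If $T = \emptyset$ there is nothing to prove, so fix some $\vec x_0 \in T$ and pass to the translate $T_0 \coloneqq T - \vec x_0$, which contains $\vec 0$. Applying the previous paragraph with $\vec x = \vec x_0$ shows that any two elements $\vec u = \vec y - \vec x_0$ and $\vec v = \vec z - \vec x_0$ of $T_0$ satisfy $[\vec u,\vec v] = 0$; that is, $T_0$ is pairwise symplectically orthogonal.

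Finally I would set $M \coloneqq \Span T_0$. Expanding arbitrary elements of $M$ as $\ZZ_d$-linear combinations of elements of $T_0$ and using bilinearity of the symplectic form, together with the fact that $[\vec w,\vec w] = 0$ holds automatically for an alternating form, one gets $[\vec u',\vec v'] = 0$ for all $\vec u',\vec v' \in M$. Hence $M$ is a totally isotropic subspace of $\mathcal V_n$, and $T \subseteq \vec x_0 + M$ is the desired affine totally isotropic subspace.

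I do not expect a genuine obstacle here: the uncertainty relation \cref{lem:point op uncertainty} does the real work, and what remains is exactly the elementary linear-algebra step already used in the proof of the exact Hudson theorem (\cref{thm:exact hudson})---pairwise orthogonality with respect to an alternating form propagates to the whole span. The only point deserving a little care is to invoke the lemma with a \emph{common} base point $\vec x_0$, so that translating by $\vec x_0$ yields a genuine (linear) totally isotropic subspace rather than merely an affine one, and to keep in mind that for non-prime $d$ one is working with a $\ZZ_d$-submodule, which the paper's convention already subsumes under the word ``subspace''.
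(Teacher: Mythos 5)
Your argument is correct and is exactly the route the paper intends: the corollary is stated as an immediate consequence of \cref{lem:point op uncertainty}, and the translation-plus-span step you spell out is the same elementary linear-algebra argument already used in the proof of \cref{thm:exact hudson}. Nothing is missing; fixing a common base point $\vec x_0$ and using bilinearity of the alternating form to pass from pairwise orthogonality of $T-\vec x_0$ to total isotropy of its span is all that is required.
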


We now prove the main result of this section:

\restateThmRobustHudson
\begin{proof}
Suppose that $\sn(\psi)\leq\eps$.
Then $\lVert\psi\rVert_W\leq1+2\eps$ and we find from \cref{eq:average vs wigner} that
\begin{equation*}
  \sum_{\vec x} q_\psi(\vec x)^2 \geq \frac 1 {d^n (1+2\eps)^2},
\end{equation*}
i.e.,
\begin{equation}\label{eq:q_psi mean}
\sum_{\vec x} q_\psi(\vec x) \left( d^{-n}-q_\psi(\vec x) \right)
\leq d^{-n} \left(1 - \frac 1{(1+2\eps)^2}\right)
\leq 4\eps d^{-n}.
\end{equation}
We would like to show that the probability of the set $T$ from \cref{cor:point op uncertainty} with respect to the probability distribution $q_\psi$ is close to one.
First, though, let us consider
\begin{align*}
\tilde T
= \left\{ \vec x : \lvert w_\psi(\vec x)\rvert > d^{-n}\sqrt{1-1/{2d^2}} \right\}
= \left\{ \vec x : q_\psi(\vec x) > d^{-n}\left(1-1/{2d^2}\right) \right\}
\end{align*}
which is defined just like $T$ but for the absolute value of the Wigner function!
Then, using Markov's inequality and \cref{eq:q_psi mean}, we have
\begin{align*}
\sum_{\vec x \in \tilde T} q_\psi(\vec x)
\geq 1 - \frac{\sum_{\vec x} q_\psi(\vec x) \left( d^{-n}-q_\psi(\vec x) \right)}{d^{-n}\cdot 1/{2d^2}}
\geq 1-8d^2 \eps.
\end{align*}
But then $T$ is likewise a high-probability subset:
\begin{align*}
\sum_{\vec x \in T} q_\psi(\vec x)
\geq \sum_{\vec x \in \tilde T} q_\psi(\vec x) - \sum_{w_\psi(\vec x) < 0} q_\psi(\vec x)
\geq \sum_{\vec x \in \tilde T} q_\psi(\vec x)- \sum_{w_\psi(\vec x) < 0} \lvert w_\psi(\vec x)\rvert \geq 1-8d^2 \eps-\eps,
\end{align*}
where we used $q_\psi(\vec x)=d^n \rvert w_\psi(\vec x)\lvert^2\leq\lvert w_\psi(\vec x)\rvert$.

As a result of \cref{cor:point op uncertainty}, $T$ is a subset of some affine totally isotropic subspace $\vec a+M$.
If $\ket S$ denotes the corresponding stabilizer state then
\begin{align*}
  \lvert\braket{\psi|S}\rvert^2
&= d^n \sum_{\vec x} w_\psi(\vec x) w_S(\vec x)
= \sum_{\vec x\in\vec a+M} w_\psi(\vec x)
= \sum_{\vec x\in T} w_\psi(\vec x) + \sum_{\vec x\in (\vec a + M) \setminus T} w_\psi(\vec x)
\geq \sum_{\vec x\in T} w_\psi(\vec x) - \eps \\
&\geq \sum_{\vec x\in T} q_\psi(\vec x) - \eps
\geq 1-(8d^2+2)\eps > 1-9d^2 \eps.
\end{align*}
In the fifth step we used that, for $\vec x\in T$, $w_\psi(\vec x)=\lvert w_\psi(\vec x)\rvert\geq d^n \rvert w_\psi(\vec x)\lvert^2= q_\psi(\vec x)$.
\end{proof}

\subsection{Stabilizer testing revisited: minimal number of copies}\label{sec:three copy test}
We will now revisit stabilizer testing from the perspective of the Wigner function and show that for $d\equiv1,5\pmod6$ it is in fact possible to perform stabilizer testing with just three copies of the state.
This is clearly optimal, since the set of stabilizer states forms a projective $2$-design.

We start with the phase space point operators $A_{\vec x}$ from~\eqref{eq:phase space point op definition},
Consider the operator
\begin{align}\label{eq:V for phase space}
V \coloneqq d^{-n} \sum_{\vec x} A_{\vec x}^{\ot 3} = d^{-2n} \sum_{\vec y_1+\vec y_2+\vec y_3=0} W_{\vec y_1} \ot W_{\vec y_2} \ot W_{\vec y_3}.
\end{align}
We remark that it is clear from \cref{eq:clifford phase space} that~$V$ is an element of the commutant.
Moreover, we have the following analog of \cref{lem:hermitian unitary}:

\begin{lem}\label{lem:hermitian unitary for phase space}
For $d\equiv1,5\pmod6$, the operator $V$ defined in~\cref{eq:V for phase space} is a Hermitian unitary.
\end{lem}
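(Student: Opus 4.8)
\textbf{Proof plan for \cref{lem:hermitian unitary for phase space}.}
The strategy mirrors closely the proof of \cref{lem:hermitian unitary}. Hermiticity is essentially immediate: writing $V = d^{-n}\sum_{\vec x} A_{\vec x}^{\ot 3}$, the phase-space point operators are Hermitian for odd~$d$ (recalled in \cref{subsec:phase space}), so $A_{\vec x}^{\ot 3}$ is Hermitian and hence so is the real linear combination~$V$. Equivalently, one can use the Weyl-operator form $V = d^{-2n}\sum_{\vec y_1+\vec y_2+\vec y_3=0} W_{\vec y_1}\ot W_{\vec y_2}\ot W_{\vec y_3}$ and note that taking the adjoint sends $(\vec y_1,\vec y_2,\vec y_3)\mapsto(-\vec y_1,-\vec y_2,-\vec y_3)$, which is a bijection of the index set, combined with the fact that $W_{\vec y}^\dagger$ is proportional to $W_{-\vec y}$ with the phase cancelling across the three factors since $\vec y_1+\vec y_2+\vec y_3 = 0$ (here one uses that $d$ is odd, so $\tau=\omega^{2^{-1}}$ and the phases are $\omega$-powers). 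Either route is routine.

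For unitarity I would compute $V^2$ directly. Using \cref{eq:three points}, $A_{\vec x}A_{\vec y}A_{\vec x} = \omega^{2[\vec x-\vec x,\vec y-\vec x]}A_{\vec x-\vec y+\vec x}$ — actually the cleaner computation is $V^2 = d^{-2n}\sum_{\vec x,\vec y} (A_{\vec x}A_{\vec y})^{\ot 3}$, and one needs the product rule for two phase-space point operators, namely $A_{\vec x}A_{\vec y} = d^{-n}\sum_{\vec z}\omega^{2([\vec z-\vec x,\vec y-\vec x])}\cdots$ — more precisely I would use that $A_{\vec x}A_{\vec y}$ expands as a sum over $\vec z$ of $A_{\vec z}$ with a character in $\vec z$. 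Substituting and exchanging sums, $V^2$ becomes $d^{-3n}\sum_{\vec z}\bigl(\sum_{\vec x,\vec y}\omega^{(\text{quadratic/bilinear in }\vec x,\vec y,\vec z)}\bigr)A_{\vec z}^{\ot 3}$, and the inner sum should collapse to a multiple of $\delta_{\vec z,\vec 0}$ provided a certain quadratic form in $(\vec x,\vec y)$ over $\ZZ_d^{2n}$ is non-degenerate. Tracking the exponents carefully using \cref{eq:three points} and the relation $A_{\vec x}A_{\vec y}A_{\vec z} = \omega^{2[\vec z-\vec x,\vec y-\vec x]}A_{\vec x-\vec y+\vec z}$ applied twice, the Gauss-sum evaluation will require that $6$ (equivalently $2$ and $3$) is invertible modulo~$d$: the coefficient $2[\vec z-\vec x,\vec y-\vec x]$ contributes a factor of $2$, and iterating the three-point identity a second time introduces a factor of $3$. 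This is exactly the hypothesis $d\equiv 1,5\pmod 6$, i.e.\ $\gcd(d,6)=1$. The conclusion $V^2 = I$ then follows after the inner character sum yields $d^{2n}\delta_{\vec z,\vec 0}$ and one normalizes.

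The main obstacle is the bookkeeping of the phases in the iterated application of \cref{eq:three points} when squaring $V$ — keeping track of which symplectic pairings appear and verifying that the resulting quadratic form on $(\ZZ_d^{2n})^2$ (or the relevant linear form once one variable is fixed) is non-degenerate precisely when $\gcd(d,6)=1$. A cleaner alternative that sidesteps some of this is to recognize $V$ as an element of the commutant associated, via \cref{thm:commutant}, to a subspace $T\in\Sigma_{3,3}(d)$ (indeed $V$ equals $R(T)$ for the subspace appearing in the three-copy test of \cref{thm:main three copies}), and to argue that $T$ corresponds to a stochastic isometry $O\in O_3(d)$ rather than a CSS-type subspace exactly when $d$ is coprime to~$6$; then $R(O)$ is automatically unitary by \cref{lem:O_t maximal subgroup}. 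I would present the direct Gauss-sum computation as the primary proof since it is self-contained, and perhaps remark on the algebraic interpretation afterwards.
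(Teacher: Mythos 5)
Your plan is correct and lands on the same core idea as the paper: Hermiticity is immediate from the Hermiticity of the $A_{\vec x}$, and unitarity is a direct Gauss-sum evaluation of $V^2$ in which invertibility of $2$ and $3$ modulo $d$ is exactly what makes the character sum collapse to a delta. The difference is one of bookkeeping: the paper works in the Weyl basis, writing $V = d^{-2n}\sum_{\vec y_1+\vec y_2+\vec y_3=0} W_{\vec y_1}\ot W_{\vec y_2}\ot W_{\vec y_3}$ and squaring via the substitution $\vec a_i=\vec y_i+\vec z_i$, $\vec b_i=\vec y_i-\vec z_i$ (which needs $2^{-1}$), after which the sum over the $\vec b_i$ forces $\vec a_1=\vec a_2=\vec a_3$ and the constraint $\sum_i\vec a_i=0$ then forces $3\vec a_i=0$, hence $\vec a_i=0$ when $3$ is invertible; you instead stay in the phase-point basis and re-expand $A_{\vec x}A_{\vec y}=d^{-n}\sum_{\vec z}\omega^{2[\vec y-\vec z,\vec x-\vec z]}A_{\vec z}$, which works equally well — the three tensor factors produce an exponent $6[\vec y,\vec x]-2[\vec y,\vec s]-2[\vec s,\vec x]$ with $\vec s=\sum_i\vec z_i$, and the sum over $\vec x$ yields $\delta_{6\vec y,2\vec s}$, so the required hypothesis $\gcd(d,6)=1$ appears in one stroke. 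Your attribution of the factor of $3$ to ``iterating the three-point identity a second time'' is slightly off — it really comes from the three tensor factors each contributing $2[\vec y,\vec x]$ (equivalently, from $3\vec a=0$ in the paper's version) — but this does not affect the validity of the plan. Your closing remark that $V$ should equal $R(O)$ for a stochastic isometry $O\in O_3(d)$ (the qudit anti-identity $2\cdot3^{-1}\vec 1_3\vec 1_3^T-\id$, which exists precisely when $3^{-1}$ does) is also a legitimate alternative route via \cref{lem:O_t maximal subgroup}, though it would require explicitly verifying that identification, which you rightly defer to a remark.
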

\begin{proof}
Since the operators~$A_{\vec x}$ are Hermitian, $V$ is Hermitian as well.
Thus it remains to prove that $V$ is unitary.
For this we compute:
\begin{align*}
V^2
&= d^{-4n} \sum_{\vec y_1+\vec y_2+\vec y_3=0 ,\,\vec z_1+\vec z_2+\vec z_3=0 }{ W_{\vec y_1} W_{\vec z_1}\ot W_{\vec y_2}  W_{\vec z_2} \ot W_{\vec y_3}W_{\vec z_3}}\\
&=d^{-4n} \sum_{\vec a_1+\vec a_2+\vec a_3=0 ,\,\vec b_1+\vec b_2+\vec b_3=0 }{W_{\vec a_1} \ot W_{\vec a_2} \ot W_{\vec a_3}}\omega^{\frac18 [\vec a_1 +\vec b_1,\vec a_1-\vec b_1] +\frac18 [\vec a_2 +\vec b_2,\vec a_2-\vec b_2] +\frac18 [\vec a_3 +\vec b_3,\vec a_3-\vec b_3] }\\
&=d^{-4n} \sum_{\vec a_1+\vec a_2+\vec a_3=0 ,\,\vec b_1,\vec b_2 }{W_{\vec a_1} \ot W_{\vec a_2} \ot W_{\vec a_3}}\omega^{- \frac14[\vec a_1 -\vec a_3,\vec b_1] -\frac14[\vec a_2 -\vec a_3,\vec b_2] }\\
&= \sum_{\vec a_1+\vec a_2+\vec a_3=0  }{\delta_{\vec a_1,\vec a_3}\delta_{\vec a_2,\vec a_3}\,\,W_{\vec a_1} \ot W_{\vec a_2} \ot W_{\vec a_3}}=I,
\end{align*}
where, for the second equality, we used the change of variables $\vec a_i=\vec y_i+\vec z_i$ and $\vec b_i=\vec y_i-\vec z_i$.
\end{proof}

We now consider the binary POVM measurement with accepting projector
\begin{align*}
  \Pi_\text{accept} = \frac12 (I+V).
\end{align*}

\restateThmMainThreeCopies
\begin{proof}
We first note that
\begin{align*}
  p_\text{accept}
=\tr\left[\Pi_\text{accept} \psi^{\ot 3}\right]
=\frac12 \left( 1 + d^{-n} \sum_{\vec x} \tr\left[A_{\vec x}^{\ot 3} \psi^{\ot 3}\right] \right)
=\frac12 \left( 1 + d^{2n} \sum_{\vec x} w^3_\psi(\vec x) \right),
\end{align*}
where $w_\psi$ denotes the Wigner function defined in \cref{eq:wigner function}.
It is clear from \cref{eq:wigner function odd stabilizer} that if $\psi$ is a stabilizer state then $p_\text{accept}=1$.

Now assume that $\psi$ is an arbitrary pure state.
Since $q(\vec x) = d^n w_\psi^2(\vec x)$ is a probability distribution, we can rewrite the above as
\begin{align*}
  \sum_{\vec x} q_\psi(\vec x) \left(1 - d^n w_\psi(\vec x) \right) = 2 \left(1 - p_\text{accept} \right).
\end{align*}
Moreover, $q(\vec x)\leq d^{-n}$ for all $\vec x$, so we can use Markov's probability for the set
\begin{align*}
  T = \left\{ \vec x \in \mathcal V_n : w_\psi(\vec x) > d^{-n} \sqrt{1 - 1/2d^2} \right\}
\end{align*}
to see that
\begin{align}\label{eq:phase space T bound}
  \sum_{\vec x \in T} q_\psi(\vec x)
\geq 1 - \frac{2 \left(1 - p_\text{accept} \right)}{1 - \sqrt{1 - 1/2d^2}}
\geq 1 - 8d^2 \left(1 - p_\text{accept} \right).
\end{align}
We now argue similarly as in the proof of the robust Hudson theorem (\cref{thm:robust hudson}).
From~\cref{cor:point op uncertainty} below we know that there exists an affine Lagrangian subspace $\vec a+M$ that contains $T$.
Let $\ket S$ denote the corresponding stabilizer state.
Then,
\begin{align*}
  \lvert\braket{\psi|S}\rvert^2
= d^n \sum_{\vec x} w_\psi(\vec x) w_S(\vec x)
= \sum_{\vec x\in\vec a+M} w_\psi(\vec x)
= \sum_{\vec x\in T} w_\psi(\vec x) + \sum_{\vec x\in (\vec a+M)\setminus T} w_\psi(\vec x)
\end{align*}
For $x\in T$, $w_\psi(\vec x)\geq0$ and so $w_\psi(\vec x) \geq q_\psi(\vec x)$.
Thus the first sum can be lower-bounded by using \cref{eq:phase space T bound}.
For the second sum, we note that \cref{eq:phase space T bound} also implies that $1 - 8d^2 \left(1 - p_\text{accept} \right) \leq d^{-n} \lvert T\rvert$, since $q_\psi(\vec x)\leq d^{-n}$, and so
\begin{align*}
  \sum_{\vec x\in (\vec a+M)\setminus T} w_\psi(\vec x)
  \geq -d^{-n} \lvert(\vec a+M) \setminus T\rvert
  = -d^{-n} \left(d^n - \lvert T\rvert\right)
  \geq -8d^2 \left(1 - p_\text{accept} \right).
\end{align*}
Together, we obtain that $\lvert\braket{\psi|S}\rvert^2 \geq 1 - 16d^2 \left(1 - p_\text{accept} \right)$, or $p_\text{accept} \leq 1 - \eps^2/16d^2$, which is what wanted to show.
\end{proof}

\addcontentsline{toc}{section}{Acknowledgments}
\section*{Acknowledgments}
We thank Paula Belzig, Raja Damanik, Patrick Hayden, Ashley Montanaro, Felipe Montealegre, Gabriele Nebe, Goetz Pfander, Zeljka Stojanac, Mario Szegedy, and Ronald de Wolf for interesting discussions.
MW acknowledges support by the NWO through Veni grant no.~680-47-459 and grant OCENW.KLEIN.267, and AFOSR grant FA9550-16-1-0082.
DG's work has been supported by the Excellence Initiative of the German Federal and State Governments (Grant ZUK 81), the ARO under contract W911NF-14-1-0098 (Quantum Characterization, Verification, and Validation),
Universities Australia and DAAD's Joint Research Co-operation Scheme (using funds provided by the German Federal Ministry of Education and Research),
the DFG (SPP1798 CoSIP, project B01 of CRC 183), and
Germany's Excellence Strategy -- Cluster of Excellence \emph{Matter and Light for Quantum Computing (ML4Q)}, EXC2004/1.
This research was supported in part by the National Science Foundation under Grant No.~NSF PHY11-25915.

\addcontentsline{toc}{section}{References}
\bibliographystyle{alphaurl}
\bibliography{stabtest}

\end{document}